\DeclareMathOperator*{\var}{Var} 
\DeclareMathOperator{\dist}{dist}
\DeclareMathOperator{\T}{T}
\newcommand*{\cA}{\mathcal{A}}
\newcommand*{\dC}{\mathbb{C}}
\newcommand*{\cD}{\mathcal{D}}
\newcommand*{\cE}{\mathcal{E}}
\newcommand*{\F}{\mathds{F}}
\newcommand*{\rF}{\mathrm{F}}
\newcommand*{\cG}{\mathcal{G}}
\newcommand*{\cI}{\mathcal{I}}
\newcommand*{\dI}{\mathbb{I}}
\newcommand*{\cJ}{\mathcal{J}}
\newcommand*{\cK}{\mathcal{K}}
\newcommand*{\cN}{\mathcal{N}}
\newcommand*{\cM}{\mathcal{M}}
\newcommand*{\cO}{\mathcal{O}}
\newcommand*{\cP}{\mathcal{P}}
\newcommand*{\cS}{\mathcal{S}}
\newcommand*{\fS}{\mathfrak{S}}
\newcommand*{\bS}{\mathbf{S}}
\newcommand*{\bU}{\mathbb{U}}
\newcommand*{\cX}{\mathcal{X}}
\DeclareMathOperator{\TV}{TV}
\DeclareMathOperator{\W}{Wg}
\DeclareMathOperator{\Ber}{Bern}
\DeclareMathOperator{\Haar}{Haar}
\DeclareMathOperator{\KL}{KL}
\newcommand*{\eps}{\varepsilon}
\newcommand{\bigzero}{\mbox{\normalfont\huge $0$}}
\newcommand{\bigeye}{\mbox{\normalfont\huge $\dI$}}
\newcommand*{\dd}{d_{\diamond}}
\newcommand*{\din}{d_{\mathrm{in}}}
\newcommand*{\dout}{d_{\mathrm{out}}}
\newcommand*{\danc}{d_{\mathrm{anc}}}
\newcommand*{\td}{d_{\tr}}
\newcommand*{\id}{\mathrm{id}}
\newcommand*{\tr}{\mathrm{Tr}}
\newcommand*{\spr}[2]{\langle #1 | #2 \rangle}
\newcommand*{\pr}[1]{\mathds{P}\left(#1 \right)}
\newcommand*{\ex}[1]{\mathds{E}\left(#1 \right)}
\newtheorem{theorem}{Theorem}[section]
\newtheorem{lemma}[theorem]{Lemma}
\newtheorem{proposition}[theorem]{Proposition}
\newtheorem{definition}[theorem]{Definition}
\title{Quantum Channel Certification with Incoherent Strategies}
\author[1]{Omar Fawzi}
\author[2]{Nicolas Flammarion}
\author[3]{Aurélien Garivier}
\author[1]{Aadil Oufkir}
\affil[1]{\small{Univ Lyon, Inria, ENS Lyon, UCBL, LIP, Lyon, France}
}
\affil[2]{ \small{EPFL,  Lausanne, Switzerland}}
\affil[3]{\small{Univ Lyon, ENS de Lyon, UMPA UMR 5669,
 Lyon, France}}
\begin{document}

\maketitle

\begin{abstract}%
 In the problem of quantum channel certification, we have black box access to a quantum process and would like to decide if this process matches some predefined specification or is $\varepsilon$-far from this specification. The objective is to achieve this task while minimizing the number of times the black box is used.  
 Here, we focus on optimal incoherent strategies for two relevant extreme cases of channel certification. The first one is when the predefined specification is a unitary channel, e.g., a gate in a quantum circuit. 
 In this case, we show that testing whether the black box is described by a fixed unitary operator in dimension $d$ or $\varepsilon$-far from it in the trace norm requires $\Theta(d/\varepsilon^2)$ uses of the black box. The second setting we consider is when the predefined specification is a completely depolarizing channel with input dimension $d_{\text{in}}$ and output dimension $d_{\text{out}}$.
 In this case, we prove that, in the non-adaptive setting, $\Tilde{\Theta}(d_{\text{in}}^2d_{\text{out}}^{1.5}/\varepsilon^2)$ uses of the channel are necessary and sufficient to verify whether it is equal to the depolarizing channel or $\varepsilon$-far from it in the diamond norm. 
 Finally, we prove a lower bound of $\Omega(d_{\text{in}}^2d_{\text{out}}/\varepsilon^2)$ for this problem in the adaptive setting. Note that the special case $d_{\text{in}} = 1$ corresponds to the well-studied quantum state certification problem.
\end{abstract}


\section{Introduction}

We consider the problem of quantum channel certification which consists in verifying whether a quantum process to which we have black box access behaves as intended. A valid process in quantum theory is modeled by a quantum channel. A quantum channel with input dimension $\din$ and output dimension $\dout$ is a linear map $\mathbb{C}^{\din\times \din}\rightarrow \mathbb{C}^{\dout\times \dout}$ satisfying some positivity and normalization conditions (see Section~\ref{sec:prelim} for details). 
Given a complete description of a known   quantum channel $\cN_0$ and $N$ copies of an unknown quantum channel $\cN$ that can be either $\cN_0$ or $\eps$-far from it, 
 at each step $1\le t\le N$, we can choose an input quantum state, send it through the unknown process $\cN$ then measure the output quantum state. After collecting a sufficient amount of classical observations, our goal is to 
 decide in which case is the quantum channel $\cN$ with high probability and while minimizing $N$. We also call this problem testing identity to the quantum channel $\cN_0$.

This testing task is important for many reasons. 
Firstly, the building blocks of a quantum computation are unitary gates.
It is thus important to understand the complexity of checking that an unknown channel implements a given gate as specified. 
Secondly, quantum channel certification is the natural generalization of the quantum state certification. Indeed, if the channels are constant quantum states, then testing them becomes equivalent to testing those states. It might seem at first sight that using the Choi–Jamiołkowski isomorphism, quantum channel certification can be obtained by applying quantum state certification algorithms to the corresponding Choi states. 
However, there are at least two reasons this is not true: first, in models where an auxilliary system is not allowed the Choi state cannot easily be prepared from a black box implementing the channel and second, the natural notion of distance for channels does not correspond to the trace distance between Choi states.
Finally, quantum process tomography, the problem of learning completely a channel in the diamond norm is costly \citep{surawy2022projected,oufkir2023sample}, and our hope is that certification can be done with fewer copies than full tomography.

In this paper, we focus on two extreme cases, $\cN_0(\rho)=\cN_U(\rho)=U\rho U^\dagger$ is  a unitary channel where $U$ is a unitary matrix and $\cN_0(\rho)=\cD(\rho)=\tr(\rho)\frac{\dI}{\dout}$ is the completely depolarizing channel.

\textbf{Contribution.}
We propose an incoherent ancilla-free testing algorithm for testing identity to a fixed unitary channel $\cN_U$ in the trace distance using $\cO(d/\eps^2)$  measurements (here $\din=\dout=d$). 
The tester chooses a random input state and measures with the corresponding POVM conjugated by the unitary $U$. This result is stated in Thm.~\ref{thm:Test-ID}. 
The standard inequality relating the $1$-norm of a Choi state and the diamond norm of the channel only implies an upper bound $\cO(d^2/\eps^2)$. We obtain the quadratic improvement in the dimension dependency by proving a new inequality between the entanglement fidelity and the trace distance to the identity channel (Lem.~\ref{lem:fid-diamond}). 
Moreover, we establish a matching lower bound of $\Omega(d/\eps^2)$ for testing identity to a fixed unitary channel in the trace distance. This lower bound applies even for ancilla-assisted strategies.  For this, we construct a well-chosen distribution of channels $\eps$-far from the identity channel. After a sufficient number of measurements, the observations under the two hypotheses should be distinguishable, i.e., the (Kullback-Leibler) KL divergence is $\Omega(1)$. However, we can show that for this particular choice of distribution over channels, any adaptive tester can only increase the KL divergence by at most $\cO(\eps^2/d)$  after a measurement no matter the dependence on the previous observations. The lower bound is stated and proved in~Thm.~\ref{thm:LB-test-id-ind}. 

Concerning the certification of the completely depolarizing channel $\cD(\rho)=\tr(\rho)\frac{\dI}{\dout}$, we propose an incoherent ancilla-free strategy to distinguish between $\cN=\cD$ and $\cN$ is $\eps$-far from it in the diamond distance using $\cO(\din^2 \dout^{1.5}/\eps^2)$  measurements (see Thm.~\ref{thm:Test-D}). For this we show how to reduce this certification problem to the certification of the maximally mixed state (testing mixedness) $\frac{\dI}{\dout}$. 
We choose the input state $\proj{\phi}$ randomly and we compare the $2$-norm between the output state $\cN(\proj{\phi})$  and the maximally mixed state $\frac{\dI}{\dout}$. We show that with at least a probability $\Omega(1)$, we have $Y=\|\cN(\proj{\phi})-\dI/\dout\|_2^2\ge \eps^2/(4\dout\din^2)$. This inequality is sufficient to obtain the required complexity, however, it requires some work to be proved. First, we show a similar inequality  in expectation  using Weingarten calculus. Then we control the variance of the random variable $Y$ carefully in a way that this upper bound depends on the actual difficulty of the problem, mainly 
the expectation of $Y$ and the diamond distance between $\cN$ and $\cD$. Next, we obtain the anti-concentration inequality using the Paley-Zygmund inequality. 

On the other hand, we establish a lower bound of $\Omega\left( \din^2 \dout^{1.5}/(\log(\din\dout/\eps)^2\eps^2)\right)$ for testing identity to the depolarizing channel with non-adaptive strategies (Thm.~\ref{thm:LB_DEP_non-adap}).  For this, we construct a random quantum channel whose output states are almost $\cO(\eps/\din)$-close (in the $1$-norm) to the maximally mixed state $\frac{\dI}{\dout}$ except for a neighborhood of an input state chosen randomly whose output is $\eps$-far from $\frac{\dI}{\dout}$ in the $1$-norm. 
Then we use LeCam's method \citep{lecam1973convergence} as in \citep{bubeck2020entanglement} with some differences. First, we need to condition on the event that the input states chosen by the testing algorithm have very small overlaps with the best input state.  This conditioning is the main reason for the additional logarithmic factor we obtain in the lower bound. Next, with a construction using random matrices with Gaussian entries rather than $\Haar$  distributed unitaries, we can invoke hypercontractivity \citep[Proposition $5.48$]{aubrun2017alice} which  allows us to control all the  moments once we upper bound the second moment. In the special case $\din = 1$, this recovers a result of \citep{bubeck2020entanglement} while significantly simplifying their analysis.  
 Furthermore, a lower bound of $\Omega(\din^2 \dout/\eps^2)$ is proved for adaptive strategies (Thm.~\ref{thm:LB_DEP_adap}) using the same construction. In this proof, we use  Kullback-Leibler divergence instead of the Total-Variation distance. 
 We refer to Table~\ref{tab} for a summary of these results.

\begin{table}[t!]
\centering
\begin{tabular}{  c| c |  c } 
\hline
  \centering \multirow{2}{*}{\textbf{Testing identity to $\cN_0$}}  &\textbf{Lower bound} &  \textbf{Upper bound}\\
  &\textit{Ancilla-assisted} &\textit{Ancilla-free}\\
  \hline\hline
    $\cN_0=\cN_{U}$ &  \multirow{2}{*}{$\Omega\left(\frac{d}{\eps^2}\right)  $, Thm.~\ref{thm:Test-ID} } &   \multirow{2}{*}{$\cO\left(\frac{d}{\eps^2}\right)$, Thm.~\ref{thm:Test-ID}}\\
    adaptive strategies in trace distance $\td$ & &
 \\
 \hline
 $\cN_0=\cN_{U}$ & \multirow{2}{*}{$\Omega\left(\frac{d}{\eps^2}\right)  $, Thm.~\ref{thm:Test-ID}} &  \multirow{2}{*}{$\cO\left(\frac{d}{\eps^4}\right)$, Thm.~\ref{thm:Test-ID}}
 \\
  adaptive strategies in diamond distance $\dd$ && \\
 \hline\hline
   $\cN_0=\cD$ & \multirow{2}{*}{$\Omega\left(\frac{\din^{2}\dout^{1.5}}{\log(\din\dout/\eps)^2\eps^2}\right)  $, Thm.~\ref{thm:LB_DEP_non-adap}} &\multirow{2}{*}{  $\cO\left(\frac{\din^{2}\dout^{1.5}}{\eps^2}\right)$, Thm.~\ref{thm:Test-D}}
 \\
   non-adaptive  strategies &&\\
 \hline
$\cN_0=\cD$   & \multirow{2}{*}{$\Omega\left(\frac{\din^{2}\dout+ \dout^{1.5}}{\eps^2}\right)$, Thm.~\ref{thm:LB_DEP_adap} }  &\multirow{2}{*}{  $\cO\left(\frac{\din^{2}\dout^{1.5} }{\eps^2}\right)$, Thm.~\ref{thm:Test-D}}
 \\
  adaptive strategies &&\\
   \hline
\end{tabular}
\caption{Lower  and upper bounds for testing identity of quantum channels in the diamond and trace distances using incoherent  strategies. $\cN_U$ is the unitary quantum channel $\cN_U(\rho)=U\rho U^\dagger$ and $\cD$ is the depolarizing channel $\cD(\rho)= \tr(\rho)\frac{\dI}{\dout}$. Our proposed algorithms (upper bounds) are ancilla-free while the lower bounds hold even for ancilla-assisted algorithms.
}
\label{tab}
\end{table}

\textbf{Related work.} Testing identity to a unitary channel can be  seen as a generalization of the usual testing identity problem for discrete distributions \citep{valiant2016instance} and quantum states \citep{chen2022toward}. However, 
in the worst-case setting, testing identity to the identity channel requires $\Omega(d/\eps^2)$ measurements in contrast to testing identity to a rank $1$ quantum state or a Dirac distribution which can be done with only $\cO(1/\eps^2)$ measurements/samples. Also, in the definition of testing identity to a unitary channel problem, we don't require the \textit{unknown} tested channel to be unitary. In this latter setting, efficient tests can be designed easily if an auxiliary system is allowed. This can be found along with other tests on properties of unitary channels in \citep{wang2011property}. We also refer to the survey \citep{montanaro2013survey} for other examples of tests on unitary channels.
Since a unitary channel has a Choi rank equal to $1$ and the depolarizing channel has a Choi rank equal to $\din\dout$, this work is a first step to obtain instance-optimal quantum channel certification as for the classical case \citep{valiant2016instance} or quantum states \citep{chen2022toward}. On the other hand, testing identity to the completely depolarizing channel is a generalization of identity testing of distributions \citep{diakonikolas2017optimal} and testing mixedness of states \citep{bubeck2020entanglement,chen2022tight-mixedness}.  
In particular, if the input dimension is $\din=1$, the channels are constant and the problem reduces to a testing mixedness of states of dimension $\dout$. 
In this case, we recover the optimal complexity of \citep{bubeck2020entanglement, chen2022tight-mixedness}. 

Another noteworthy work is unitarity estimation of \citep{chen2022unitarity}. In this work, it is shown that ancilla-free non-adaptive strategies  could  estimate $\tr(\cJ_\cN^2)$ to within $\eps$ using $\cO(d^{0.5}/\eps^2)$  measurements where $\cJ_\cN$ is the Choi state of the channel $\cN$. In particular, this estimation can be used to distinguish between $\cN=\cN_U$ for which $\tr(\cJ_\cN^2)=1$ and $\cN=\cD$ for which $\tr(\cJ_\cN^2)=1/d^2$. A matching lower bound  (in $d$) is given for adaptive strategies in \citep{chen2022unitarity} improving the previous lower bound of \citep{chen2022exponential}. 
This complexity may seem to contradict our results but this is not the case. Indeed, such a test  cannot be used for testing identity to a fixed unitary channel for instance since we can have two unitary channels that are $\eps$-far in the diamond distance.
Finally, we note that testing problems for states were also studied for the class of coherent or entangled strategies where one can use arbitrary entangled measurements: see e.g.,~\citep{buadescu2019quantum} for state certification in this setting. In this article, we focus on incoherent strategies where entangled inputs to the different uses of the channel or entanglement measurements are not allowed.

\section{Preliminaries} 
\label{sec:prelim}
We consider quantum channels of input dimension $\din$ and output dimension $\dout$. 
We adopt the bra-ket notation: a column vector is denoted $\ket{\phi}$ and its adjoint is denoted $\bra{\phi}=\ket{\phi}^\dagger$. With this notation, $\spr{\phi}{\psi}$ is the dot product  of the vectors $\phi$ and $\psi$ and, for a unit vector $\ket{\phi}\in \bS^d$,  $\proj{\phi}$ is the projector on the space spanned by $\phi$. The canonical basis $\{e_i\}_{i\in [d]}$ is denoted $\{\ket{i}\}_{i\in [d]}:=\{\ket{e_i}\}_{i\in [d]}$. A quantum state is a positive semi-definite Hermitian matrix of trace $1$. The fidelity between two quantum states is defined $\rF(\rho,\sigma)=\left( \tr\sqrt{\sqrt{\rho}\sigma \sqrt{\rho}}\right)^2$. It is symmetric and admits the simpler expression if one of the quantum states $\rho$ or $\sigma$ has rank $1$: $\rF(\rho,\proj{\phi})= \bra{\phi}\rho\ket{\phi}$. 
 A $(\din, \dout)$-dimensional quantum channel is a map $\cN: \mathbb{C}^{\din\times \din}\rightarrow \mathbb{C}^{\dout\times \dout}$ of the form $\cN(\rho)=\sum_{k}A_k \rho A_k^\dagger$ (Kraus decomposition) where the Kraus operators $\{A_k\}_k$ satisfy $\sum_k A_k^\dagger A_k=\dI$. For instance, the identity map $\id_d(\rho)=\rho$ admits the Kraus operator $\{\dI_d\}$ and the completely depolarizing channel $\cD(\rho)=\tr(\rho)\frac{\mathds{I}}{\dout}$ admits the Kraus operators $\left\{\frac{1}{\sqrt{\dout}}\ket{i}\bra{j}\right\}_{j\in[\din], i\in [\dout]}$. Given a unitary matrix $U$, the corresponding unitary channel $\cN_U(\rho)=U\rho U^\dagger$ admits the Kraus operator $\{U\}$. We denote by $\Haar(d)$ the $\Haar$ probability  measure over the compact group of unitary $d \times d$ matrices. A $\Haar$ random vector is then any column vector of a $\Haar$ distributed unitary.

 We define the trace distance between two quantum channels $\cN$ and $\cM$ as the trace norm of their difference: $\td(\cN,\cM):= \max_{\rho }\|(\cN-\cM)(\rho) \|_1$ where the maximization is over quantum states and the Schatten $p$-norm of a matrix $M$ is defined as $\|M\|_p^p=\tr\left(\sqrt{M^\dagger M}^p\right)$. In some situations, it can be helpful to allow an auxiliary system and apply the identity on it. \sloppy In this case, we obtain the diamond distance which is defined formally as $\dd(\cN,\cM):= \max_{\rho   }\|\id_d\otimes(\cN-\cM)(\rho) \|_1$ where the maximization is over quantum states $\rho \in \mathbb{C}^{d\times d} \otimes \mathbb{C}^{\din\times \din}$. Note that because of the Schmidt decomposition we can always suppose that $d=\din$. The trace/diamond distance can be thought of as a worst-case distance, while we can define an average case distance by the Schatten $2$-norm between the corresponding Choi states. We define the Choi state of the channel $\cN$ as $\cJ_{\cN}= \id\otimes\cN(\proj{\Psi})$ where $\ket{\Psi}=\frac{1}{\sqrt{\din}}\sum_{i=1}^{\din}\ket{i}\otimes \ket{i}$ is the maximally entangled state. 
The map $\cJ : \cN \mapsto \id\otimes\cN(\proj{\Psi}) $  is an isomorphism called the Choi–Jamiołkowski isomorphism \citep{choi1975completely,jamiolkowski1972linear}. Note that for any quantum channel $\cN$, $\cJ_\cN$ is positive semi-definite and satisfy $\tr_2(\cJ_\cN)=\frac{\dI}{\din}$. Moreover, any $\cK$ satisfying these conditions is called a Choi state and we can construct a quantum channel $\cN$ such that  $\cJ_\cN=\cK$.

We consider the problem of testing identity to a fixed channel.  Given a fixed quantum channel $\cN_0$ and a precision  parameter $\eps>0$, the goal is to test whether an unknown  quantum channel $\cN$ is exactly $\cN_0$ or $\eps$-far from it  with at least a probability $2/3$:
\begin{align*}
   H_0: \cN=\cN_0  ~~~~~\text{ vs. }~~~~~ H_1: \dist(\cN,\cN_0)\ge \eps
\end{align*}
where $\dist\in \{\dd, \td\}$.
 $H_0$ is called the null hypothesis while $H_1$ is called the alternate hypothesis. 
 An algorithm $\cA$ is $1/3$-correct for this problem if it outputs $H_1$ while $H_0$ is true with a probability at most $1/3$ and  outputs $H_0$ while $H_1$ is true with a probability at most $1/3$. If $0<\dist(\cN,\cN_0)< \eps$, the algorithm $\cA$ can output any hypothesis. 
The natural figure of merit for this test is the diamond (resp. trace)  distance because it characterizes the minimal error probability to distinguish between two quantum channels when auxiliary systems are allowed (resp. not allowed) \citep{watrous2018theory}. 

A testing algorithm can only extract classical information from the unknown quantum channel $\cN$ by performing a measurement on the output state. In this article we only consider  incoherent strategies. That is, the testing algorithm  can only use one copy of the channel at each step. In other words, for a $(\din, \dout)$ dimensional channel, the testing algorithm could only use $\din$ dimensional input states and $\dout$ dimensional measurement devices.
Precisely, a $d$-dimensional measurement is defined by a POVM (positive operator-valued measure) with a finite number of elements: this is a set of positive semi-definite matrices $\mathcal{M}=\{M_x\}_{x\in \cX}$ acting on the Hilbert space $\mathbb{C}^{d}$ and satisfying $\sum_{x\in \cX} M_x=\dI$. Each element $M_x$ in the POVM $\mathcal{M}$ is associated with the outcome $x\in \cX$. The tuple $\{\tr(\rho M_x)\}_{x\in \cX}$ is non-negative and sums to $1$: it thus defines a probability. Born's rule \citep{1926ZPhy...37..863B} says that the probability that the measurement on a quantum state $\rho$ using the POVM $\cM$ will output $x$ is exactly $\tr(\rho M_x)$. 
Depending on whether an auxiliary system is allowed to be used, we distinguish two types of incoherent strategies.

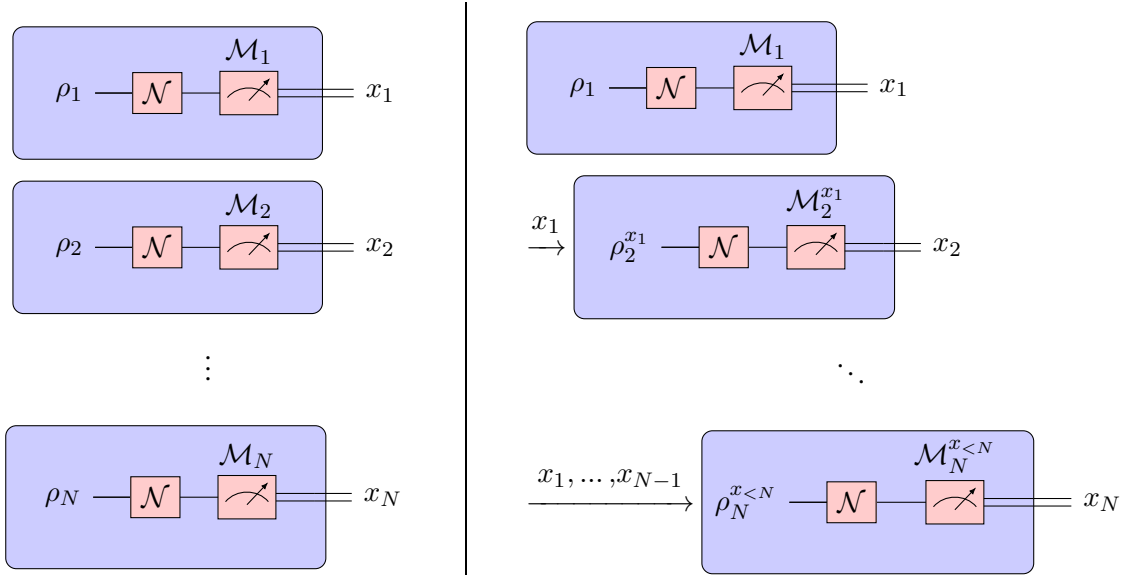
\begin{figure}
\centering
   \begin{minipage}[c]{.40\linewidth}
   \centering
   \tikzset{
     operator/.append style={fill=red!20},
     meter/.append style={fill=red!20}
    }
        \begin{quantikz}[thin lines] 
\gategroup[wires=1,steps=4,style={rounded corners,fill=blue!20,inner sep=13pt},background]{}&\lstick{$\rho_{1}$} & \gate{\cN} \qw  &\meter{$\cM_1$}   &\cw &\cw\rstick{ $x_1$}
\end{quantikz} \\
\begin{quantikz}[thin lines] 
\gategroup[wires=1,steps=4,style={rounded corners,fill=blue!20,inner sep=13pt},background]{}&\lstick{$\rho_{2}$} & \gate{\cN} \qw&\meter{$\cM_2$}   & \cw &\cw \rstick{$x_2$} 
\end{quantikz}
\begin{center}$\vdots$\end{center}
\begin{quantikz}[thin lines] 
\gategroup[wires=1,steps=4,style={rounded corners,fill=blue!20,inner sep=15pt},background]{}&\lstick{$\rho_{N}$}& \gate{\cN} \qw &\meter{$\cM_N$}  &\cw  & \rstick{ $x_N$} \cw\end{quantikz}
   \end{minipage}
   \vline \hspace*{20pt}
   \begin{minipage}[c]{.52\linewidth}
   \tikzset{
     operator/.append style={fill=red!20},
        meter/.append style={fill=red!20}
    }
\begin{quantikz}[thin lines] 
\gategroup[wires=1,steps=4,style={rounded corners,fill=blue!20,inner sep=13pt},background]{}&\lstick{$\rho_{1}$} & \gate{\cN} \qw  &\meter{$\cM_1$}   &\cw &\rstick{ $x_1$} \cw
\end{quantikz}
\\$\underset{\xrightarrow{\hspace{0.3cm}}}{x_1}$ 
\begin{quantikz}[thin lines] 
\gategroup[wires=1,steps=4,style={rounded corners,fill=blue!20,inner sep=15pt},background]{}&\lstick{$\rho_{2}^{x_1}$} & \gate{\cN} \qw&\meter{$\cM_2^{x_1}$}   &\cw& \rstick{ $x_2$} \cw 
\end{quantikz}
\begin{center}$\ddots$\end{center}
$\underset{\xrightarrow{\hspace{2cm}}}{x_1, \dots,x_{N-1}}$
\begin{quantikz}[thin lines] 
\gategroup[wires=1,steps=4,style={rounded corners,fill=blue!20,inner sep=15pt},background]{}&\lstick{$\rho_{N}^{x_{<N}}$} & \gate{\cN} \qw &\meter{$\cM_N^{x_{<N}}$} &\cw   & \rstick{ $x_N$} \cw\end{quantikz}
   \end{minipage}
\caption{Illustration of an ancilla-free incoherent  non-adaptive (left) and adaptive (right) strategies for testing identity of quantum channels. 
The observations $(x_1, \dots, x_{N})$ are then processed with a classical algorithm to answer $H_0$ or $H_1$.}
\label{Fig: (Non)Adap-ancill-free}
\end{figure}

\begin{figure}[t!]
\centering
   \begin{minipage}[c]{.40\linewidth}
   \centering
   \tikzset{
     operator/.append style={fill=red!20},
     meter/.append style={fill=red!20}
    }
        \begin{quantikz}[thin lines] 
		\gategroup[wires=3,steps=4,style={rounded corners,fill=blue!20,inner sep=13pt},background]{}&\lstick[wires=3, nwires={2}]{$\rho_{1}$} & \qw &\gate[wires=3 , nwires={2}]{\cM_1 }  \qw  &      &&
		\\ & &  &  &  \cw &\cw\rstick{$x_{1}$} 
		\\&& \gate{\cN} \qw &   & &
		\end{quantikz}\\
\begin{quantikz}[thin lines] 
		\gategroup[wires=3,steps=4,style={rounded corners,fill=blue!20,inner sep=13pt},background]{}&\lstick[wires=3, nwires={2}]{$\rho_{2}$} & \qw &\gate[wires=3 , nwires={2}]{\cM_2 }  \qw  &      &&
		\\ & &  &  &  \cw &\cw\rstick{$x_{2}$} 
		\\&& \gate{\cN} \qw &   & &
		\end{quantikz}
\begin{center}$\vdots$\end{center}
\begin{quantikz}[thin lines] 
		\gategroup[wires=3,steps=4,style={rounded corners,fill=blue!20,inner sep=13pt},background]{}&\lstick[wires=3, nwires={2}]{$\rho_{N}$} & \qw &\gate[wires=3 , nwires={2}]{\cM_N }  \qw  &      &&
		\\ & &  &  &  \cw &\cw\rstick{$x_{N}$} 
		\\&& \gate{\cN} \qw &   & &
		\end{quantikz}
   \end{minipage}
   \vline \hspace*{20pt}
   \begin{minipage}[c]{.52\linewidth}
   \tikzset{
     operator/.append style={fill=red!20},
        meter/.append style={fill=red!20}
    }
\begin{quantikz}[thin lines] 
		\gategroup[wires=3,steps=4,style={rounded corners,fill=blue!20,inner sep=13pt},background]{}&\lstick[wires=3, nwires={2}]{$\rho_{1}$} & \qw &\gate[wires=3 , nwires={2}]{\cM_1 }  \qw  &      &&
		\\ & &  &  &  \cw &\cw\rstick{$x_{1}$} 
		\\&& \gate{\cN} \qw &   & &
		\end{quantikz}
\\$\underset{\xrightarrow{\hspace{0.3cm}}}{x_1}$ 
\begin{quantikz}[thin lines] 
		\gategroup[wires=3,steps=4,style={rounded corners,fill=blue!20,inner sep=13pt},background]{}&\lstick[wires=3, nwires={2}]{$\rho_{2}^{x_1}$} & \qw &\gate[wires=3 , nwires={2}]{\cM_2^{x_1} }  \qw  &      &&
		\\ & &  &  &  \cw &\cw\rstick{$x_{2}$} 
		\\&& \gate{\cN} \qw &   & &
		\end{quantikz}
\begin{center}$\ddots$\end{center}
$\underset{\xrightarrow{\hspace{2cm}}}{x_1, \dots,x_{N-1}}$
\begin{quantikz}[thin lines] 
		\gategroup[wires=3,steps=4,style={rounded corners,fill=blue!20,inner sep=17pt},background]{}&\lstick[wires=3, nwires={2}]{$\rho_{N}^{x_{<N}}$} & \qw &\gate[wires=3 , nwires={2}]{\cM_N^{x_{<N}} }  \qw  &      &&
		\\ & &  &  &  \cw &\cw\rstick{$x_{N}$} 
		\\&& \gate{\cN} \qw &   & &
		\end{quantikz}
   \end{minipage}
\caption{Illustration of an ancilla-assisted incoherent  non-adaptive (left) and adaptive (right) strategies for testing identity of quantum channels. 
The observations $(x_1, \dots, x_{N})$ are then processed with a classical algorithm to answer $H_0$ or $H_1$.}
\label{Fig: (Non)Adap-ancill-ass}
\end{figure}

\begin{itemize}
    \item \textbf{Ancilla-free strategies.} At each step $t$, the learner would choose an input $\din$-dimensional state $\rho_t$ and a $\dout$-dimensional measurement device $\cM_t=\{M_x^t\}_{x\in \cX_t} $. It thus sees the outcome $x_t \in \cX_t$ with a probability $\tr(\cN(\rho_t) M_{x_t}^t)$. If the choice of the state $\rho_t$ and measurement device $\cM_t$ can depend on the previous observations $(x_1,\dots, x_{t-1})$ the strategy is called \textbf{adaptive}, otherwise it is called \textbf{non-adaptive} (see Fig.~\ref{Fig: (Non)Adap-ancill-free} for an illustration).
    \item \textbf{Ancilla-assisted strategies.} At each step $t$, the learner would choose an input $\danc\times \din$-dimensional state $\rho_t\in \mathbb{C}^{\danc\times \danc} \otimes\mathbb{C}^{\din\times \din} $ (for some arbitrary $\danc$) and a $\danc\times \dout$-dimensional measurement device $\cM_t=\{M_x^t\}_{x\in \cX_t} $.  It thus sees the outcome $x_t \in \cX_t$ with a probability $\tr(\id_{\danc}\otimes\cN(\rho_t) M_{x_t}^t)$. If the choice of the state $\rho_t$ and measurement device $\cM_t$ can depend on the previous observations $(x_1,\dots, x_{t-1})$ the strategy is called \textbf{adaptive}, otherwise it is called \textbf{non-adaptive} (see Fig.~\ref{Fig: (Non)Adap-ancill-ass} for an illustration). 
\end{itemize}

Note that we can see an ancilla-free strategy as a special case of an ancilla-assisted strategy with $\danc=1$. But it turns out that ancilla-assisted strategies have an advantage over ancilla-free strategies for some problems e.g., \citep{chen2022quantum, chen2022exponential}.  However, in this article, we show that ancilla-free strategies suffice to achieve the optimal complexity for testing identity to a fixed unitary channel or to the depolarizing channel. 
\section{Testing identity to a unitary channel}\label{sec:test id to id}
In this section, we focus on the problem of testing identity to a fixed unitary channel in the diamond and trace distances.  Given a fixed 
unitary $U$ and a precision  parameter $\eps>0$, 
the goal is to distinguish between the hypotheses:
\begin{align*}
   H_0: \cN=\cN_U=U \cdot U^\dagger   ~~~~~\text{ vs. }~~~~~ H_1: \dist(\cN,\cN_U)\ge \eps
\end{align*}
 with at least a probability $2/3$ where $\dist\in \{\dd, \td\}$. Since we consider a unitary channel, the input and output dimensions should be equal $\din=\dout=d$. 
 
\textbf{Reduction to the case $U=\dI$.} Knowing the unitary channel $\cN_U$ is equivalent to knowing $U$. 
We can thus reduce every testing identity to $\cN_U$ to testing identity to $\cN_{\dI}=\id_d$ by conjugating the measurement device by $U$.   This is possible because the trace/diamond distance is unitary invariant: $\dist(\cN, \cN_U) = \dist(U^\dagger\cN U, \id)$ and $\tr(U^\dagger \cN(\rho) U M)= \tr( \cN(\rho) U MU^\dagger)$ for all $\rho$ and $M$. From now on, we only consider the case $U=\dI$ and we call this particular testing problem to $\cN_{\dI}=\id_d$ simply ‘‘\textit{testing identity to identity}''.

Given the nature of the diamond and trace distances, under the alternate hypothesis, a channel $\cN$ could be equal to the identity channel except on a neighbourhood of some state. In addition, this state is unknown to the learner.
When the algorithm is allowed to use an auxiliary system, it can prepare the Choi state $\cJ_{\cN}$ of the channel $\cN$ (which essentially captures everything  about the channel) by taking as input the maximally entangled state $\proj{\Psi}$. Under the null hypothesis $H_0$, the Choi state is exactly $\cJ_{\id} = \id\otimes \id(\proj{\Psi})=\proj{\Psi} $ while under the alternate hypothesis $H_1$, the Choi state $\cJ_{\cN}$ has a fidelity with $ \proj{\Psi}$ satisfying:
\begin{align*}
    \tr(\id\otimes \cN(\proj{\Psi} )\proj{\Psi} )=\rF(\cJ_{\cN},\proj{\Psi} ) \le 1-\frac{1}{4} \left\|\cJ_{\cN}-\cJ_{\id} \right\|_{1 }^2\le 1-\frac{\dd(\cN,\id)^2}{4d^2}
\end{align*}
where we use a Fuchs–van de Graaf inequality \citep{fuchs1999cryptographic} and the  standard inequality relating the diamond norm between two channels and the trace norm between their corresponding Choi states: $\|\cJ_\cN-\cJ_\cM \|_1\ge \frac{\dd(\cN,\cM)}{d}$ (e.g., \cite{jenvcova2016conditions}). 
Thus, a measurement using the POVM $\cM_\Psi=\{\proj{\Psi}, \dI-\proj{\Psi}\}$ can distinguish between the two situations.
However, if the tester is not allowed to use an auxiliary system it can  neither prepare the Choi state $\cJ_\cN$ nor measure using the POVM $\cM_{\Psi}$. 
Instead, we use a random $d$-dimensional rank-$1$  input state. Indeed, this choice is natural because the expected fidelity between the input state $\proj{\phi}$ and the output state $\cN(\proj{\phi})$ can be easily related to the fidelity between Choi states: $\mathds{E}_{\ket{\phi}\sim \Haar} \left[ \rF(\cN(\proj{\phi}),\proj{\phi}))\right] =\frac{1+d~\rF(\cJ_{\cN},\proj{\Psi} ) }{1+d}$ (see Lem.~\ref{lem:fid-ent-avg} for a proof). 
This Lemma is well known because it relates the average fidelity $\mathds{E}_{\ket{\phi}\sim \Haar} \left[ \rF(\cN(\proj{\phi}),\proj{\phi}))\right]$ and the entanglement fidelity $\rF(\cJ_{\cN},\proj{\Psi} )$. 
If we measure using the measurement device $\cM_{\phi}= \{\proj{\phi}, \dI-\proj{\phi} \}$, the error probability under $H_0$ is $0$, and under $H_1$ is $\mathds{E}_{\ket{\phi}\sim \Haar} [\bra{\phi} \cN(\proj{\phi})\ket{\phi}]=\mathds{E}_{\ket{\phi}\sim \Haar}  [\rF(\cN(\proj{\phi}),\proj{\phi})]$.  
The algorithm is detailed in Alg.~\ref{Alg}. 
\begin{algorithm}
\caption{Testing identity to identity  in the diamond/trace distance}\label{Alg}
\begin{algorithmic}[t!]
\STATE $N= \cO(d/\eps^4)$ (replace with $N= \cO(d/\eps^2)$ for testing in the trace distance).\\
\FOR{$k =1:N$}
 \STATE Sample $\phi_k$ a $\Haar$  random vector in $\bS^d$.
 \STATE Measure the output state $\cN(\proj{\phi_k})$ using the POVM $\{\proj{\phi_k}, \dI-\proj{\phi_k}\}$.
 \STATE Observe $X_k \sim \Ber(1-\bra{\phi_k}\cN( \proj{\phi_k}) \ket{\phi_k} )$.
\ENDFOR
\STATE \textbf{if} $\exists k : X_k=1$ \textbf{then return } $\cN$ is $\eps$-far from $\id$ \textbf{else return} $\cN=\id$.
\end{algorithmic}
\end{algorithm}

We can upper bound the error probability under $H_1$  using the well-known Lem.~\ref{lem:fid-ent-avg} and  our Lem.~\ref{lem:fid-diamond}: 

\begin{align*}
    \mathds{E}_{\ket{\phi}\sim \Haar}  \left[ \rF(\cN(\proj{\phi}),\proj{\phi})\right] &=\frac{1+d~\rF(\cJ_{\cN},\proj{\Psi} ) }{1+d}
    \le 1-\frac{\td(\cN, \id)^2}{4(d+1)}\le  1-\frac{\dd(\cN, \id)^4}{16(d+1)}.
\end{align*}
Observe that the standard inequality 
$\|\cJ_\cN-\cJ_\cM \|_1\ge \frac{\dd(\cN,\cM)}{d}$ implies: $  \mathds{E}_\phi \left[ \rF(\cN(\proj{\phi}),\proj{\phi})\right] \le 1-\frac{\dd(\cN,\id)^2}{4d(d+1)}$ 
which has a better dependency in the diamond distance but a worst dependency in the dimension $d$. 
This simple lemma (Lem.~\ref{lem:fid-diamond}), which relates the entanglement fidelity and the trace/diamond distance when one of the channels is unitary 
might be of independent interest. 
To obtain a $1/3$ correct algorithm it suffices to repeat the described procedure using $N=\cO(d/\eps^{2}) $  copies of $\proj{\phi}$ in the case of trace distance and 
$N=\cO(d/\eps^{4}) $  copies of $\proj{\phi}$ in the case of diamond distance. Indeed, for instance for the trace distance, 
the probability of error under $H_1$ can be controlled as follows:
    \begin{align*}
      \mathds{P}_{H_1}(\text{error})&= \mathds{P}_{H_1}(\forall k \in [N]: X_k=0) 
     = \prod_{k=1}^{N} \mathds{P}_{H_1}(X_k=0) 
      \\&\le \left( 1- \frac{\eps^2}{4(d+1)}  \right)^{N}
      \le \exp\left( -  \frac{\eps^2N}{4(d+1)} \right)
     \le \frac{1}{3}
    \end{align*}
    for $N=4\log(3)(d+1)/\eps^2=\cO(d/\eps^2)$ \footnote{All the logs of this paper are taken in base $e$ and the information is measured in ‘‘nats''.}. A similar proof shows that $\cO(d/\eps^4)$ copies are sufficient to test in the diamond distance.  This concludes the correctness of  Alg.~\ref{Alg}. 
A matching lower bound of $N=\Omega\left( \frac{d}{\eps^2}\right)$ is proved in App.~\ref{sec:LB}. For this we construct a random quantum channel $\eps$-far from the identity channel in the trace/diamond distance 
but looks similar to the identity channel for the majority of the input states. Then we compare the observations under the two hypotheses using the $\KL$ divergence. Interestingly, the analogous classical problem, testing identity to identity has a complexity $\Theta(d/\eps)$. Thus, for this task, when going to the quantum case, the dependence on the dimension $d$ remains the same whereas the dependency in the precision parameter $\eps$ changes from $\eps$ to $\eps^2$. In fact, obtaining the correct $\eps^2$ dependence is the main difficulty in this lower bound. It requires a carefully chosen construction 
inspired by the quantum skew divergence~\citep{skewdivergence} and a fine analysis using Weingarten calculus.
We summarize the main result of this section in the following theorem.

\begin{theorem}\label{thm:Test-ID}
  There is an incoherent ancilla-free algorithm for  testing identity to identity in the trace distance  using only $N=\cO\left( \frac{d}{\eps^2}\right)$  measurements. 
  Moreover, this algorithm can also solve the testing identity to identity problem in the diamond distance   using only $N=\cO\left( \frac{d}{\eps^4}\right)$  measurements. 
   On the other hand, any incoherent adaptive ancilla-assisted strategy   requires, in the worst case, a number of measurements satisfying $N=\Omega\left( \frac{d}{\eps^2}\right)$ to distinguish between  $\cN=\id$ and $\dd(\cN, \id)>\eps$ (or $\td(\cN, \id)>\eps$) with a probability at least $2/3$.
\end{theorem}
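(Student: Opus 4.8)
These are already essentially contained in the discussion preceding the theorem, so I would only assemble them. Algorithm~\ref{Alg} sends Haar-random pure states $\proj{\phi_k}$ through $\cN$ and measures with $\{\proj{\phi_k},\dI-\proj{\phi_k}\}$; under $H_0$ it never reports ``far'', while under $H_1$ each independent trial is inconclusive with probability $\mathbb{E}_{\phi\sim\Haar}[\rF(\cN(\proj{\phi}),\proj{\phi})]$, which Lem.~\ref{lem:fid-ent-avg} together with Lem.~\ref{lem:fid-diamond} bounds by $1-\td(\cN,\id)^2/(4(d+1))$ and by $1-\dd(\cN,\id)^4/(16(d+1))$. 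A product bound over $N$ i.i.d.\ trials then drives the error below $1/3$ as soon as $N=\cO(d/\eps^2)$ in trace distance and $N=\cO(d/\eps^4)$ in diamond distance. The real content of the theorem is the matching $\Omega(d/\eps^2)$ lower bound, which I describe next.

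\textbf{Lower bound: setup.} I would pass to a Bayesian testing problem: against $H_0:\cN=\id$ I plant a prior $\mu$ over channels that are all $\eps$-far, taken far in the trace distance so that automatically $\dd\ge\td\ge\eps$ and a single construction covers both versions of the statement. The decisive design constraint is \emph{symmetry}: $\mu$ must be chosen so that the first-order deviation of $\cN$ from the identity averages out, ideally $\mathbb{E}_{\cN\sim\mu}[\cN]=\id$ as a linear map. In the spirit of the quantum skew divergence I would take $\cN$ to be conjugation by $e^{\,i c\eps\,H}$ for a traceless Hermitian direction $H$ drawn from a sign-symmetric ensemble (so $H$ and $-H$ are equally likely), tuning the scaling of $H$ and the constant $c$ so that the worst-case distance to $\id$ is $\Theta(\eps)$ while the $H\mapsto-H$ symmetry annihilates the $\cO(\eps)$ term in $\mathbb{E}_\mu[\cN]$. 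Writing $p_0$, $p_\cN$ and $\bar p_1=\mathbb{E}_{\cN\sim\mu}[p_\cN]$ for the laws of the transcript $(x_1,\dots,x_N)$ under $\cN=\id$, under a fixed channel $\cN$, and under the prior $\mu$ respectively, any $1/3$-correct tester must achieve $\TV(p_0,\bar p_1)=\Omega(1)$, so by Pinsker it suffices to prove $\KL(\bar p_1\,\|\,p_0)=o(1)$ whenever $N=o(d/\eps^2)$.

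\textbf{Lower bound: per-round information.} The heart of the argument is a per-round estimate: for any ancilla-extended input $\rho$ and any POVM $\{M_x\}$ I would establish $\mathbb{E}_{\cN\sim\mu}\big[\sum_x \frac{\big(\tr[(\id_{\danc}\otimes(\cN-\id))(\rho)M_x]\big)^2}{\tr(\rho M_x)}\big]=\cO(\eps^2/d)$, uniformly over $\rho$, $\{M_x\}$, and the ancilla dimension $\danc$. This is the quantity that upper-bounds the per-round KL through $\KL\le\chi^2$, and its proof is the technical core: expanding the square and taking the Haar average over the random direction with Weingarten calculus, the first-order term vanishes precisely because $\mathbb{E}_\mu[\cN]=\id$, the surviving second-order term supplies the $\eps^2$, and the $1/d$ comes from the typical overlap of a fixed input with a Haar-random direction. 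Feeding a bound of $\cO(\eps^2/d)$ into the chain rule $\KL(\bar p_1\,\|\,p_0)=\sum_{t=1}^N \mathbb{E}_{x_{<t}\sim\bar p_1}\big[\KL(\bar p_1(\cdot\mid x_{<t})\,\|\,p_0(\cdot\mid x_{<t}))\big]$ gives $\KL(\bar p_1\,\|\,p_0)=\cO(N\eps^2/d)$, hence $N=\Omega(d/\eps^2)$; since $\danc$ never enters the per-round estimate, the bound holds for adaptive ancilla-assisted strategies as claimed.

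\textbf{Main obstacle.} The difficulty is twofold. First, the entire point is the $\eps^2$ rather than $\eps$ scaling, and this rests on the symmetry of $\mu$: without $\mathbb{E}_\mu[\cN]=\id$ the surviving first-order term is of order $\eps/\sqrt d$ and the argument yields only $N=\Omega(d/\eps)$, the classical rate noted in the introduction. Forcing the linear term to cancel while keeping the channels genuinely $\eps$-far is the delicate part of designing $\mu$, and evaluating the resulting second moment is the Weingarten computation, which must track the dependence on $\rho$ and $\{M_x\}$ through the Haar average and stay valid in the presence of the ancilla. Second, the mixture under $H_1$ interacts with adaptivity: the honest round-$t$ conditional compares the posterior-mean channel $\mathbb{E}_{\mu(\cdot\mid x_{<t})}[\cN]$ against $\id$, and this posterior mean need not be the identity, so the prior-averaged per-round bound cannot simply be substituted. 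I would bridge this gap with the likelihood-ratio second-moment method using two independent copies $\cN,\cN'\sim\mu$: against the product law $p_0$ the squared likelihood ratio telescopes into an expectation of $\prod_t(1+G_t)$ whose increments are governed by the same prior-averaged second moment and are mean-zero by the symmetry, so that the martingale structure of the adaptive transcript controls the product without ever reconstructing the posterior over $\cN$. This is the step that makes ``no matter the dependence on previous observations'' rigorous, and it is where I expect the bulk of the work to lie.
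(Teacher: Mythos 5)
Your upper-bound part matches the paper's argument exactly (Haar-random input, measure with $\{\proj{\phi},\dI-\proj{\phi}\}$, Lem.~\ref{lem:fid-ent-avg} plus Lem.~\ref{lem:fid-diamond}, product over independent trials), so nothing to add there. The lower bound is where the problem lies.

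You correctly identify the two key ideas — a sign/Haar-symmetric prior so that the $\cO(\eps)$ term cancels and only an $\cO(\eps^2/d)$ per-round information increment survives, evaluated by Weingarten calculus — but your per-round estimate is false as stated, and for a structural reason rather than a technicality. The quantity $\sum_x \bigl(\tr[(\id\otimes(\cN-\id))(\rho)M_x]\bigr)^2/\tr(\rho M_x)$ is \emph{infinite} whenever some POVM element satisfies $\tr(\rho M_x)=0$ but $\tr[\cN(\rho)M_x]>0$; with your prior of genuinely unitary channels $e^{ic\eps H}\cdot e^{-ic\eps H}$ this happens for the most natural tester of all, namely the optimal algorithm itself, which on input $\proj{\phi}$ measures $\{\proj{\phi},\dI-\proj{\phi}\}$: the second outcome has probability $0$ under $H_0$ and probability $\Theta(\eps^2)$ under $H_1$. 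The same division-by-zero kills $\KL(\bar p_1\|p_0)$ (your chosen orientation requires $p_0$ to dominate $\bar p_1$, which it does not) and also the two-copy second-moment method you invoke for adaptivity, since that again divides by $p_0$. This is exactly the obstruction the paper's construction is built to remove: the alternative channels are taken to be the \emph{skew mixtures} $\cN_V=\tfrac12\,\id+\tfrac12\,U_V\cdot U_V^\dagger$ (this, not the symmetry of the direction, is what "inspired by the quantum skew divergence" refers to), which forces the likelihood ratio $\bra{\phi_{i_t}}\id\otimes\cN_V(\rho_t)\ket{\phi_{i_t}}/\bra{\phi_{i_t}}\rho_t\ket{\phi_{i_t}}\ge 1/2$ for every input and POVM. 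The paper then bounds $\mathds{E}_V\KL(P\|Q_V)$ with the \emph{null first}, uses $-\log x\le(1-x)+(x-1)^2$ on $[1/2,\infty)$, and still has to split off the event $\cG$ of small overlap $\bra{\phi_{i_t}}\rho_t\ket{\phi_{i_t}}\le\frac{\eps^2}{d^2}\tr(A_t^\dagger B_{i_t}B_{i_t}^\dagger A_t)$, on which the log term is only bounded by $\log 2$ but the event itself has probability $\cO(\eps^2/d)$. Without the mixture (or some equivalent regularization) your chain-rule decomposition yields $+\infty$ on the right-hand side and the argument gives no bound at all. A secondary issue: you do not need $\mathds{E}_\mu[\cN]=\id$ exactly — in the paper the averaged channel deviates from $\id$ by $\Theta(\eps^2/d)$, which is already at the target order — but you do need the first-order cancellation to survive the conditioning on the event $\cG$ and the adaptive dependence of $(\rho_t,\cM_t)$ on the history, which is handled in the paper by the explicit Weingarten computation of $\mathds{E}_V\tr(B^\dagger A M_V^\top)\tr(A^\dagger B\bar S_V)$ term by term, not by a martingale argument.
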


\section{Testing identity to the depolarizing channel}
In this section, we move to study the problem  of testing identity to the completely depolarizing channel $\cN_0=\cD$ in the diamond distance. Given a precision parameter $\eps>0$ and an unknown quantum channel $\cN$, we would like to  test whether  $H_0: \cN=\cD$ or $H_1:\dd(\cN,\cD)\ge \eps$ with a probability of error at most $1/3$. If $\cN=\cD$, the tester should answer the null hypothesis $H_0$ with a probability at least $2/3$ whereas if $\dd(\cN,\cD)\ge \eps$, the tester should answer the alternate hypothesis $H_1$ with a probability at least $2/3$. The alternate condition means that 
\begin{align*}
    \dd(\cN,\cD)\ge \eps &\Longleftrightarrow \exists \ket{\phi}\in \bS^{\din\times \din} : \|\id_{\din}\otimes \cN(\proj{\phi}) - \id_{\din}\otimes \cD(\proj{\phi})\|_1\ge \eps.
\end{align*}
This inequality implies a lower bound of the $1$-norm between the Choi states: $\|\cJ_\cN-\cJ_\cD\|_1\ge \frac{\eps}{\din}$. 
The simplest idea would be to use this inequality and reduce the problem of testing channels to testing states. Actually, this kind of reduction from channels to states has been used for quantum process tomography \citep{surawy2022projected} and shadow process tomography \citep{kunjummen2021shadow}. The Choi state of  the depolarizing channel $\cD$ is $\cJ_\cD= \frac{1}{\din}\sum_{i,j=1}^{\din} \ket{i}\bra{j} \otimes\tr(\ket{i}\bra{j}) \frac{\dI}{\dout}= \frac{\dI_{\din}\otimes \dI_{\dout}}{\din\dout}=\frac{\dI_{\din\dout}}{\din \dout},$ 
so by applying the previous inequality, we obtain for a quantum channel $\cN$ $\eps$-far from the depolarizing channel $\cD$: $ \left\| \cJ_\cN - \frac{\dI}{\din\dout}\right\|_1\ge \frac{\eps}{\din}.$
Then, we can apply a reduction to the testing mixedness of quantum states  \citep{bubeck2020entanglement} to design an ancilla-assisted strategy requiring  $\cO\left(\frac{\din^{1.5}\dout^{1.5}}{(\eps/\din)^2}\right)= \cO\left(\frac{\din^{3.5}\dout^{1.5}}{\eps^2}\right)$    measurements since the dimension of the states $\cJ_\cN$ and $\frac{\dI}{\din\dout}$ is $\din\dout$ and the precision parameter is $\frac{\eps}{\din}$. However, this approach has two problems. First, we need to be able to use an auxiliary system to prepare the Choi state $\cJ_\cN$, which is an additional resource. 
Next, the complexity $\cO\left(\frac{\din^{3.5}\dout^{1.5}}{\eps^2}\right)$, as we shall see later, is not optimal. If one tries to reduce to testing identity of states in the  $2$-norm (App.~\ref{app: testing states}) one obtains a slightly better bound but still not optimal and requires using an auxiliary system.

Inspired by the testing identity to identity problem (Sec.~\ref{sec:test id to id}), when we do not know one of the best input states, we choose it to be random. Let $\ket{\phi}$ be a $\Haar$  random vector. If the input state is $\proj{\phi}$, the output state under $H_0$ is $\cD(\proj{\phi})=\frac{\dI}{\dout}$ and under $H_1$ is $\cN(\proj{\phi})$. So it is natural to ask what would be the distance between $\cN(\proj{\phi})$ and $\frac{\dI}{\dout}$. Note that in general, it is much easier to compute the expectation of the $2$-norms than the $1$-norms. For this reason, we start by computing the expectation of the $2$-norm
between $\cN(\proj{\phi})$ and $\frac{\dI}{\dout}$ (see Lem.~\ref{lem-app:2-2 norms} for a proof).
\begin{lemma}\label{lem:2-2 norms}
Let  $\cM=\cN-\cD$ and   $\cJ_\cM= \id\otimes \cM(\proj{\Psi})$. We have:
\begin{align*}
   \mathds{E}_{\ket{\phi}\sim \Haar}\left(\left\|\cM(\proj{\phi}) \right\|_2^2\right)&=\frac{\left\|\cM(\dI) \right\|_2^2  +\din^2\left\|\cJ_\cM \right\|_2^2}{\din(\din+1)}   \ge \frac{\din}{\din+1}\left\|\cJ_\cM \right\|_2^2.
\end{align*}
\end{lemma}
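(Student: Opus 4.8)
The plan is to reduce everything to the swap trick together with the second–moment formula for Haar–random vectors. First I would use that $\cM=\cN-\cD$ is Hermiticity–preserving (being a difference of channels), so that $\cM(\proj{\phi})$ and $\cM(\dI)$ are Hermitian and hence $\|\cM(\proj{\phi})\|_2^2=\tr(\cM(\proj{\phi})^2)$. Writing $\mathbb{S}_{\dout}$ for the swap operator on $\dC^{\dout}\otimes\dC^{\dout}$ and using $\tr(AB)=\tr((A\otimes B)\mathbb{S}_{\dout})$, I would rewrite
\begin{align*}
\|\cM(\proj{\phi})\|_2^2 = \tr\left[(\cM\otimes\cM)(\proj{\phi}\otimes\proj{\phi})\,\mathbb{S}_{\dout}\right],
\end{align*}
so that only the factor $\proj{\phi}\otimes\proj{\phi}$ depends on $\phi$ and can be integrated out.

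Next I would take the expectation and invoke the Haar identity $\mathds{E}_{\ket{\phi}\sim\Haar}[\proj{\phi}\otimes\proj{\phi}] = \frac{\dI_{\din^2}+\mathbb{S}_{\din}}{\din(\din+1)}$, i.e.\ the normalised projector onto the symmetric subspace of $\dC^{\din}\otimes\dC^{\din}$, where $\mathbb{S}_{\din}$ is the swap on the two input copies. Since $(\cM\otimes\cM)$ and $\mathbb{S}_{\dout}$ do not depend on $\phi$, this gives
\begin{align*}
\mathds{E}_{\ket{\phi}\sim\Haar}\|\cM(\proj{\phi})\|_2^2 = \frac{1}{\din(\din+1)}\left(\tr[(\cM\otimes\cM)(\dI_{\din^2})\mathbb{S}_{\dout}] + \tr[(\cM\otimes\cM)(\mathbb{S}_{\din})\mathbb{S}_{\dout}]\right).
\end{align*}
For the first term, $\dI_{\din^2}=\dI_{\din}\otimes\dI_{\din}$ factorises, so $(\cM\otimes\cM)(\dI_{\din^2})=\cM(\dI)\otimes\cM(\dI)$ and the swap trick gives $\tr(\cM(\dI)^2)=\|\cM(\dI)\|_2^2$. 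For the second term, expanding $\mathbb{S}_{\din}=\sum_{k,l}\ket{k}\bra{l}\otimes\ket{l}\bra{k}$ and applying the output swap trick yields $\sum_{k,l}\tr[\cM(\ket{k}\bra{l})\cM(\ket{l}\bra{k})]$.

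The remaining step is to recognise this last sum as $\din^2\|\cJ_\cM\|_2^2$: writing the normalised Choi state as $\cJ_\cM=\frac{1}{\din}\sum_{i,j}\ket{i}\bra{j}\otimes\cM(\ket{i}\bra{j})$, squaring, and tracing out both systems, the inner products $\bra{j}\ket{k}=\delta_{jk}$ and the two outer traces collapse the indices so that $\tr(\cJ_\cM^2)=\frac{1}{\din^2}\sum_{i,j}\tr[\cM(\ket{i}\bra{j})\cM(\ket{j}\bra{i})]$, which is exactly $\frac{1}{\din^2}$ times the second term. Substituting both evaluations back establishes the equality, and the final inequality is immediate since $\|\cM(\dI)\|_2^2\ge 0$ forces $\frac{\|\cM(\dI)\|_2^2+\din^2\|\cJ_\cM\|_2^2}{\din(\din+1)}\ge\frac{\din}{\din+1}\|\cJ_\cM\|_2^2$.

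The computation is essentially bookkeeping once these two ingredients are in place; the one point demanding care is the separate treatment of the swaps on the input space $\dC^{\din}$ and the output space $\dC^{\dout}$, and in particular correctly matching the $\mathbb{S}_{\din}$ contribution with the Choi $2$-norm rather than with $\|\cM(\dI)\|_2^2$. Keeping track that $\cM$ is Hermiticity–preserving throughout — so that every operator in sight is Hermitian and the Schatten $2$-norm is the trace of the square — is what makes the index contractions come out cleanly.
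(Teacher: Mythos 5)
Your proof is correct, and it takes a genuinely different route from the paper's. The paper proves this lemma by expanding $\cN$ in its Kraus operators $\{A_k\}$, applying Weingarten calculus to $\sum_{k,l}\mathds{E}\,\tr(A_k\proj{\phi}A_k^\dagger A_l\proj{\phi}A_l^\dagger)$ to obtain $\frac{1}{\din(\din+1)}\bigl(\tr(\cN(\dI)^2)+\sum_{k,l}|\tr(A_kA_l^\dagger)|^2\bigr)$, separately identifying $\sum_{k,l}|\tr(A_k^\dagger A_l)|^2=\din^2\tr(\cJ_\cN^2)$, and then subtracting off the depolarizing contribution $1/\dout$ at the end. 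You instead work directly with the Hermiticity-preserving map $\cM=\cN-\cD$ (which is not itself a channel, so has no Kraus form to exploit), replace the Weingarten computation by the single identity $\mathds{E}[\proj{\phi}^{\otimes 2}]=\frac{\dI+\mathbb{S}_{\din}}{\din(\din+1)}$, and use the swap trick on the output space to split the result into the $\|\cM(\dI)\|_2^2$ and $\din^2\|\cJ_\cM\|_2^2$ pieces; your identification $\tr(\cJ_\cM^2)=\frac{1}{\din^2}\sum_{i,j}\tr[\cM(\ket{i}\bra{j})\cM(\ket{j}\bra{i})]$ mirrors the paper's Lemma~\ref{lem: eta= } but is derived without Kraus operators. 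Your version is arguably cleaner and more transparent: it makes the equality a two-line consequence of the symmetric-subspace projector formula, applies verbatim to any Hermiticity-preserving linear map (which is how the lemma is actually stated), and never requires passing through the completely positive part $\cN$. What the paper's route buys is uniformity with the rest of its analysis, where the same Weingarten machinery is pushed to fourth moments (Theorem~\ref{thm:PZ-D}) and the Kraus-operator bookkeeping is unavoidable.
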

We now need to relate the $2$-norm  between the  Choi state  of the channel $\cN$  and the Choi state of the depolarizing channel $\|\cJ_\cN-\cJ_\cD\|_2$ with the diamond distance between them $\dd(\cN, \cD)$. This is done in the following Lemma whose proof can be found in App.~\ref{app-lem:diamond-2-choi}:
\begin{lemma}\label{lem:diamond-2-choi}
Let $\cN_1$ and $\cN_2$ be two  $(\din, \dout)$-quantum channels. We have:
\begin{align*}
   \|\cJ_{\cN_1}-\cJ_{\cN_2}  \|_2 \ge \frac{d_\diamond(\cN_1,\cN_2)}{\din\sqrt{\dout}}.
\end{align*}
\end{lemma}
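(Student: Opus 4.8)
The plan is to route the diamond distance through a \emph{pure} input state and the maximally entangled state $\ket{\Psi}$, and then to bound the resulting trace norm by $\|\cJ_\cM\|_2$ via an \emph{asymmetric} application of H\"older's inequality. Set $\cM=\cN_1-\cN_2$, so that $\cJ_\cM=\cJ_{\cN_1}-\cJ_{\cN_2}$ and $\dd(\cN_1,\cN_2)=\max_{\rho}\|\id\otimes\cM(\rho)\|_1$. Since $\rho\mapsto\|\id\otimes\cM(\rho)\|_1$ is convex, its maximum over density matrices is attained at an extreme point, i.e.\ a pure state, and by the Schmidt decomposition the reference system may be taken of dimension $\din$. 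Hence $\dd(\cN_1,\cN_2)=\|\id\otimes\cM(\proj{\psi})\|_1$ for some unit vector $\ket{\psi}\in\mathbb{C}^{\din}\otimes\mathbb{C}^{\din}$.

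Next I would rewrite $\ket{\psi}$ through the maximally entangled state: every such $\ket{\psi}$ can be written $\ket{\psi}=\sqrt{\din}\,(B\otimes\dI_{\din})\ket{\Psi}$ for a $\din\times\din$ matrix $B$, and normalization of $\ket{\psi}$ forces $\tr(B^\dagger B)=\|B\|_2^2=1$. Because $B$ acts on the reference system, on which $\cM$ acts trivially, it commutes past the channel, giving the identity
\[
   \id\otimes\cM(\proj{\psi})=\din\,(B\otimes\dI_{\dout})\,\cJ_\cM\,(B^\dagger\otimes\dI_{\dout}).
\]
This reduces the lemma to bounding $\|(B\otimes\dI_{\dout})\,\cJ_\cM\,(B^\dagger\otimes\dI_{\dout})\|_1$ in terms of $\|\cJ_\cM\|_2$, where $\|B\|_2=1$.

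The key step is to split the two copies of $B$ asymmetrically. Applying $\|XY\|_1\le\|X\|_\infty\|Y\|_1$ to the left factor and $\|YZ\|_1\le\|Y\|_2\|Z\|_2$ to the right, I obtain
\[
   \big\|(B\otimes\dI_{\dout})\,\cJ_\cM\,(B^\dagger\otimes\dI_{\dout})\big\|_1\le \|B\otimes\dI_{\dout}\|_\infty\,\|\cJ_\cM\|_2\,\|B^\dagger\otimes\dI_{\dout}\|_2 .
\]
Then $\|B\otimes\dI_{\dout}\|_\infty=\|B\|_\infty\le\|B\|_2=1$ by multiplicativity of the operator norm under tensoring with the identity, while $\|B^\dagger\otimes\dI_{\dout}\|_2=\|B\|_2\,\|\dI_{\dout}\|_2=\sqrt{\dout}$ by multiplicativity of the Schatten-$2$ norm. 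Combining with the displayed identity yields $\dd(\cN_1,\cN_2)\le\din\sqrt{\dout}\,\|\cJ_\cM\|_2$, which is the claim after rearrangement.

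I expect the only real idea to be this asymmetric H\"older split: letting one factor of $B$ carry the operator norm (bounded by $1$) while the other carries the Schatten-$2$ norm (absorbing the $\sqrt{\dout}$ from $\dI_{\dout}$) is precisely what produces $\sqrt{\dout}$ instead of the wasteful $\sqrt{\din\dout}$ one would get from the naive rank bound $\|A\|_1\le\sqrt{\operatorname{rank}A}\,\|A\|_2$ applied to the $\din\dout$-dimensional matrix $\cJ_\cM$. Notably, the argument does not even use the partial-trace constraint on Choi states; everything else (convexity/purity reduction, the standard parametrization of bipartite pure states via $\ket{\Psi}$, and the two norm multiplicativities) is routine.
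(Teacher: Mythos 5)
Your proof is correct and follows essentially the same route as the paper's: reduce to a pure maximizer, write it as $(A\otimes\dI)\ket{\Psi}$, pull the conjugation through the channel to land on $\|(A\otimes\dI)\,\cJ_\cM\,(A^\dagger\otimes\dI)\|_1$, and bound this by $\|\cJ_\cM\|_2$ times $\sqrt{\dout}\,\tr(A^\dagger A)$. The only difference is cosmetic: where you use the asymmetric H\"older split $\|XYZ\|_1\le\|X\|_\infty\|Y\|_2\|Z\|_2$ together with $\|B\|_\infty\le\|B\|_2$, the paper expands $\cJ_\cM$ spectrally and applies the triangle and Cauchy--Schwarz inequalities, ending with $\tr((A^\dagger A)^2)\le(\tr(A^\dagger A))^2$ --- the same slack, the same constant.
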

  
Let $\cN$ be a channel satisfying $\dd(\cN, \cD)\ge \eps$ and $X= \left\|\cN(\proj{\phi}) -\frac{\dI}{\dout}\right\|_2^2$, we obtain from Lem.~\ref{lem:2-2 norms}, \ref{lem:diamond-2-choi}:
\begin{align}\label{ineq-expectation}
    \ex{X}= \mathds{E}_{\ket{\phi}\sim \Haar}\left(\left\|\cN(\proj{\phi}) -\frac{\dI}{\dout}\right\|_2^2\right)&\ge \frac{\din}{\din+1}\left\|\cJ_\cN -\frac{\dI}{\din\dout}\right\|_2^2\ge \frac{\eps^2}{2\din^2\dout}.
\end{align}
If we could show that $X$ is larger than  
$\Omega(\ex{X})$ with constant probability, then we can reduce our problem  to the usual testing identity of quantum states in the $2$-norm (quantum state certification in App.~\ref{app: testing states}) and obtain an incoherent \textit{ancilla-free} algorithm using  $\cO\left(\frac{\sqrt{\dout}}{(\eps/\din\sqrt{\dout})^2}\right)= \cO\left(\frac{\din^{2} \dout^{1.5}}{\eps^2}\right)$    measurements.  
Establishing this turns out to be the most technical part of the proof as is summarized in the following theorem.

\begin{theorem}\label{thm:PZ-D} Let $\ket{\phi} $ be  a  $\Haar$  distributed  vector in $\bS^d$ (or any $4$-design). Let  $X=\left\|\cN(\proj{\phi}) -\frac{\dI}{\dout}\right\|_2^2$. We have:
    \begin{align*}
    \var(X)= \cO\left( [\ex{X}]^2\right). 
\end{align*}
\end{theorem}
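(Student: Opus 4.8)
The plan is to compute $\ex{X^2}$ exactly by fourth-moment ($4$-design) calculus and bound it by $\cO\big([\ex{X}]^2\big)$; since $\var(X)\le \ex{X^2}$, this suffices. Throughout write $\cM=\cN-\cD$, so that $\cM(\proj{\phi})=\cN(\proj{\phi})-\frac{\dI}{\dout}$ is Hermitian (both $\cN(\proj{\phi})$ and $\frac{\dI}{\dout}$ are), whence $X=\tr\big(\cM(\proj{\phi})^2\big)$. Set $a:=\|\cM(\dI)\|_2^2$ and $b:=\din^2\|\cJ_\cM\|_2^2$, so that Lem.~\ref{lem:2-2 norms} reads $\ex{X}=\frac{a+b}{\din(\din+1)}$ and $[\ex{X}]^2=\frac{(a+b)^2}{\din^2(\din+1)^2}$. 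The whole statement follows once we show $\ex{X^2}=\cO\!\big((a+b)^2/\din^4\big)$.

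First I would linearize the square. Applying the swap trick $\tr(A^2)=\tr\big((A\otimes A)\mathbb{F}\big)$ twice, with $\mathbb{F}$ the flip operator, and using that $\cM(\proj{\phi})^{\otimes 4}=\cM^{\otimes 4}\big(\proj{\phi}^{\otimes 4}\big)$,
\[
X^2=\tr\Big(\cM^{\otimes 4}\big(\proj{\phi}^{\otimes 4}\big)\,(\mathbb{F}_{12}\otimes \mathbb{F}_{34})\Big).
\]
For a $\Haar$ vector (or any $4$-design) one has $\ex{\proj{\phi}^{\otimes 4}}=\binom{\din+3}{4}^{-1}\frac{1}{24}\sum_{\pi\in \fS_4}W_\pi$, where $W_\pi$ is the permutation operator on $(\dC^{\din})^{\otimes 4}$. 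Substituting and using linearity of $\cM^{\otimes 4}$ gives
\[
\ex{X^2}=\frac{1}{\din(\din+1)(\din+2)(\din+3)}\sum_{\pi\in\fS_4}T_\pi,\qquad T_\pi:=\tr\Big(\cM^{\otimes 4}(W_\pi)\,(\mathbb{F}_{12}\otimes\mathbb{F}_{34})\Big).
\]
Since the prefactor is $\Theta(\din^{-4})$, it remains to prove the single uniform bound $T_\pi=\cO\big((a+b)^2\big)$ for every $\pi\in\fS_4$; this immediately yields $\ex{X^2}=\cO\big((a+b)^2/\din^4\big)=\cO([\ex{X}]^2)$, so no cancellation between terms is even needed.

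I would organize the $24$ terms by cycle type and evaluate them as tensor-network contractions of copies of $\cM(\dI)=\din\,\tr_1(\cJ_\cM)$ and $\cJ_\cM$. The ``disconnected'' permutations already realize the leading order: the identity gives $T_{\id}=\tr(\cM(\dI)^2)^2=a^2$, the double transposition $(12)(34)$ gives $T=b^2$, and a single transposition such as $(12)$ gives $T=ab$, all manifestly $\cO((a+b)^2)$. The remaining, genuinely connected permutations (the transpositions crossing the blocks $\{1,2\},\{3,4\}$, the eight $3$-cycles and the six $4$-cycles) produce degree-$4$ contractions of $\cM(\dI)$ and $\cJ_\cM$ with the swaps. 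Each such contraction I would bound by generalized H\"older/Cauchy--Schwarz for Schatten norms together with the elementary fact $\tr(A^4)\le\|A\|_\infty^2\tr(A^2)\le \tr(A^2)^2$ for Hermitian $A$, and the channel constraints (namely $\cJ_\cM$ and $\cM(\dI)$ are traceless, $\|\cJ_\cN\|_\infty\le 1$ and $\|\cM(\dI)\|_\infty\le \din+\din/\dout$). The point is that every operator-norm factor is absorbed into the corresponding $\|\cdot\|_2^2$ factor, collapsing each connected $T_\pi$ to $\cO((a+b)^2)$ with the correct powers of $\din$.

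The main obstacle is exactly this last step: correctly evaluating the connected $T_\pi$ (the $3$- and $4$-cycles require expressing $\cM^{\otimes 3}$ and $\cM^{\otimes 4}$ of a long cycle through the Choi operator $\cJ_\cM$ and $\cM(\dI)$) and, above all, tracking the powers of $\din$ so that no connected term exceeds the order $(a+b)^2$ of the disconnected ones — equivalently, checking that the $\din$-exponent $c$ attached to each quartic Choi invariant never exceeds the $4$ available from $\din^4\|\cJ_\cM\|_2^4=b^2$. This is where the channel structure of $\cN$ is indispensable, and it is also where, as advertised, the bound is seen to be governed by $\ex{X}$ (through $a+b$) and, via Lem.~\ref{lem:diamond-2-choi}, by $\dd(\cN,\cD)$. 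Summing the finitely many $\cO((a+b)^2)$ bounds and dividing by $\din(\din+1)(\din+2)(\din+3)$ completes the proof.
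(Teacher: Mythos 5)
Your proposal is correct, and it takes a genuinely different --- and substantially cleaner --- route than the paper. The paper computes the second moment of $\tr(\cN(\proj{\phi})^2)$ in terms of the Kraus operators of $\cN$; each of the resulting $24$ terms $F(\alpha)$ then carries large ``main'' contributions (e.g.\ $\din^2/\dout^2$ or $\din^3/\dout^2$ in Table~\ref{tab:summary-F}) that survive even when $\cN=\cD$, so it must prove estimates sharp enough (Lem.~\ref{lem: M-Psi}, Lem.~\ref{lem: MM*}, the projection of $MM^\dagger$ off $\ket{\Psi}$ in Lem.~\ref{lem: F(12)(34))}) for these main terms to cancel against $\big(\ex{\tr(\cN(\proj{\phi})^2)}\big)^2$ at the very end. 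By centering first --- working with $\cM=\cN-\cD$, which sends every input to a traceless matrix --- you eliminate those main terms before the fourth moment is taken, so your uniform bound $T_\pi=\cO\big((a+b)^2\big)$ on each of the $24$ terms is indeed all that is needed, with no cancellation. Moreover, the step you flag as the main obstacle dissolves: writing $W_\pi=\sum_{i_1,\dots,i_4}\bigotimes_{k}\ket{i_{\pi^{-1}(k)}}\bra{i_k}$, each of the two trace factors in $T_\pi$ contains exactly two $\cM(\cdot)$'s, so Cauchy--Schwarz for the Hilbert--Schmidt inner product bounds it by the product of the corresponding $2$-norms; each fixed point of $\pi$ is summed first into $\cM(\dI)$ and contributes $m=\sqrt{a}$, while the non-fixed points contribute, after summing the free indices, $|\tr(C^{\ell})|\le(\tr C^2)^{\ell/2}=\eta^{\ell}$ for each cycle of length $\ell\ge 2$, where $C$ is the symmetric matrix $C_{xy}=\|\cM(\ket{x}\bra{y})\|_2$ with $\tr(C^2)=\eta^2=b$ (Lem.~\ref{lem: eta= }). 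Hence $|T_\pi|\le m^{f}\eta^{4-f}\le(m^2+\eta^2)^2=(a+b)^2$ with $f$ the number of fixed points of $\pi$ --- no operator norms, no positivity of $\cN$, and no bookkeeping of extra powers of $\din$ are required. What your approach buys is that the whole of App.~\ref{app-thm:PZ-D} collapses to a short Cauchy--Schwarz argument valid for any difference of channels, and the ``(or any $4$-design)'' clause is immediate since only $\ex{\proj{\phi}^{\otimes 4}}$ is used; what it gives up is only the explicit per-permutation information recorded in the paper's table.
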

This Theorem is the most technical part of the paper and we believe that it can be generalized for any difference of channels with a similar approach.
Moreover, applying this inequality along with the Paley-Zygmund inequality are sufficient for our reduction: 
we only need to repeat our test $\cO(1)$  times to reduce the error probability to $1/3$ for testing identity to the depolarizing channel. 

\textbf{Sketch of the proof.} We observe first that $  \var(X)= \var\left(\tr \big[\left(\cN(\proj{\phi}) \right)^2\big]\right)$. Then using Weingarten calculus (Lem.~\ref{lem:Wg},\ref{lem:wg3}), we can compute the expectation $$\ex{\left(\tr \big[\left(\cN(\proj{\phi}) \right)^2\big] \right)^2 }=\frac{1}{\din(\din+1)(\din+2)(\din+3)} \sum_{\alpha\in \fS_4} F(\alpha) $$ 
\sloppy where for $\alpha\in \fS_4$,  ${F(\alpha)= \sum_{i,j,k,l,k',l'}\tr_\alpha(A_{l'}^\dagger \ket{j}\bra{i}A_k, A_k^\dagger A_l , A_l^\dagger \ket{i}\bra{j} A_{k'} , A_{k'}^\dagger  A_{l'} )}$ and $\tr_{\alpha}(M_1,\dots,M_n)=\Pi_j \tr(\Pi_{i\in C_j} M_i)$ for $\alpha=\Pi_j C_j $ and $C_j$ are cycles. Let $m= \left\|\cN(\dI)-\frac{\din}{\dout}\dI\right\|_2$ and $\eta= \din\left\|\cJ_\cN -\frac{\dI}{\din\dout}\right\|_2$.  Next, we upper bound the function $F$ as shown in Table~\ref{tab:summary-F}. 
 \begin{table}[ht]
\centering
\begin{tabular}{  |c  |  c| c|} 
\hline
  \textbf{Permutation } $\alpha$  & \textbf{Upper bound on } $F(\alpha)$ &\textbf{Reference}\\
  \hline\hline
   $(13)$ & $ \left(\frac{\din}{\dout}+\eta^2\right)^2 $&
   \\\hline
   $\id, (132), (314), (24)(13)$ & $ \frac{\din/\dout+\eta^2}{\dout} + \frac{\eta^2}{\dout}+5\eta^4$ &(Lem.~\ref{lem: F(id)})
	\\\hline 
$(312), (134) $& $ \left(\frac{\din^2}{\dout}+m^2\right)\left(\frac{\din}{\dout}+\eta^2\right)$&
	\\\hline
$ (1234) $& $ \left(\frac{\din^2}{\dout}+m^2\right)^2$&
	\\ \hline
$ (24), (1432)$ &$ \frac{\din^2}{\dout^2}+\frac{2m^2}{\dout}+25\eta^4$ &(Lem.~\ref{lem: F(24)})
	\\\hline
 $(142), (243)$& $\frac{\din/\dout+\eta^2}{\dout} + \frac{\eta^2}{\dout}+5\eta^4$ &(Lem.~\ref{lem: F(142)})
	\\\hline
 $(14), (12), (23), (34), (1324), $&\multirow{2}{*}{$  \frac{\din^2}{\dout^2}+\frac{\din}{\dout}\eta^2+ \frac{m^2}{\dout}+5m\eta^3$}& \multirow{2}{*}{(Lem.~\ref{lem: F(14)})}\\
$(1423), (1243), (1342)$& &
		\\\hline
  $(12)(34), (14)(23), (234), (124)$& $\frac{\din^3}{\dout^2}+2\frac{\din}{\dout}m^2+m^2\eta^2 $ &(Lem.~\ref{lem: F(12)(34))})
  \\\hline
\end{tabular}
\caption{Upper bounds on the function $F$ for different input permutations.}
\label{tab:summary-F}
\end{table}
This is the hardest step of the proof and requires a fine analysis for many of the $F(\alpha)'s$. A particularly useful trick we use repeatedly is known as the \textit{replica trick} and says that if $\F=\sum_{i,j=1}^d  (\ket{i}\otimes \ket{j})(  \bra{j}\otimes \bra{i})$ is the flip operator then we have for all $A,B\in \mathbb{C}^{d\times d}: \tr((A\otimes B)  \F)=\tr(AB)$ and similarly  $\tr(A\otimes B )=\tr(A)\tr(B)$. Moreover, another trick we need frequently is to use the partial transpose to make appear the positive semi-definite matrices $M^\dagger M=\sum_{k,l} A_k^\dagger A_l\otimes A_k^{\top}\bar{A}_l$ and $MM^\dagger=\sum_{k,l} A_k A_l^\dagger\otimes \bar{A}_k A^{\top}_l$ where $M=\sum_k A_k\otimes \bar{A}_k$ is defined using the Kraus operators $\{A_k\}_k$ of the channel $\cN$.  Furthermore we can prove the approximation $\left\|M^\dagger M-  \frac{\din}{\dout}\proj{\Psi}\right\|_1
\le 5\eta^2$ where $\ket{\Psi}=\frac{1}{\sqrt{\din}}\sum_{i=1}^{\din} \ket{i}\otimes \ket{i}$ is the maximally entangled state (see Lem.~\ref{lem: M-Psi}). This is used for $\alpha= (142)$ and $(24)$. An application of the data processing inequality on the previous approximation gives:  $\left\|\tr_2(M^\dagger M)-  \frac{1}{\dout}\dI\right\|_1
\le 5\eta^2$ which is used for the permutations $\alpha= \id$ and $(14)$ . It turns out that applying such approximation will not give a sufficiently good upper bound of $F((12)(34))$ because it can be written as $F((12)(34))=\tr\left(\left( MM^\dagger \right)^{T_2}  \cdot\cM(\dI)^{\otimes 2} \F\right) +\frac{\din^3}{\dout^2}  +  2\frac{\din}{\dout}\tr(\cM(\dI)^2)  $
which depends 
instead on the matrix $MM^\dagger$. In this case we proceed by projecting $MM^\dagger$ onto the hyperplane orthogonal to $\ket{\Psi}$. This can be interpreted using representation theory. In fact,  the space spanned by $\ket{\Psi}$ and its complementary are irreducible representations that decompose the space $\left(\mathbb{C}^d\right)^{\otimes 2}$ for the action of $U\otimes \bar{U}$ where $U$ is a unitary matrix. On the other hand, changing the Kraus operators $A_k\leftrightarrow U A_k $ or $A_k\leftrightarrow A_k U $ in the expression of the channel $\cN$ does not affect the variance of $X $ because $\Haar$ measure is invariant under left and right multiplication with a unitary matrix. 
The detailed  proof  of  these  bounds can be found in App.~\ref{app-thm:PZ-D}.

We have now the required tools to design and prove the correctness of an algorithm for testing identity to the depolarizing channel.
\begin{theorem}\label{thm:Test-D}
 There is an incoherent ancilla-free algorithm requiring a number of  measurements $N=\cO\left(\frac{\din^2\dout^{1.5}}{\eps^2}\right)$
to distinguish between $\cN=\cD$ and $\dd(\cN,\cD)\ge \eps$ with a   success probability $2/3$.
\end{theorem}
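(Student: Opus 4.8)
The plan is to combine the anti-concentration of $X=\big\|\cN(\proj{\phi})-\tfrac{\dI}{\dout}\big\|_2^2$ provided by Thm.~\ref{thm:PZ-D} with a black-box reduction to state certification in the $2$-norm. First I would apply the Paley--Zygmund inequality: since $X\ge 0$, for every $0\le\theta\le 1$,
\[
\pr{X\ge \theta\,\ex{X}}\ \ge\ (1-\theta)^2\,\frac{\ex{X}^2}{\ex{X^2}}\ =\ (1-\theta)^2\,\frac{\ex{X}^2}{\var(X)+\ex{X}^2}.
\]
By Thm.~\ref{thm:PZ-D} there is an absolute constant $C$ with $\var(X)\le C\,\ex{X}^2$, hence $\ex{X^2}\le (C+1)\ex{X}^2$ and the ratio is bounded below by $1/(C+1)$. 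Taking $\theta=\tfrac12$ and using the expectation bound $\ex{X}\ge \eps^2/(2\din^2\dout)$ from \eqref{ineq-expectation} gives
\[
\pr{X\ge \frac{\eps^2}{4\din^2\dout}}\ \ge\ \pr{X\ge \tfrac12\,\ex{X}}\ \ge\ \frac{1}{4(C+1)}\ =:\ p\ =\ \Omega(1).
\]

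Next I would set $\delta:=\eps/(2\din\sqrt{\dout})$, so that $\delta^2=\eps^2/(4\din^2\dout)$, and exploit the following dichotomy. Under $H_0$ the channel is exactly depolarizing, so for \emph{every} input the output state is $\cN(\proj{\phi})=\dI/\dout$. Under $H_1$, the anti-concentration bound above says that a $\Haar$ random $\ket{\phi}$ produces an output with $\big\|\cN(\proj{\phi})-\dI/\dout\big\|_2\ge \delta$ with probability at least $p$. Thus, once $\phi$ is fixed, the task becomes: given copies of the $\dout$-dimensional state $\rho:=\cN(\proj{\phi})$, decide whether $\rho=\dI/\dout$ or $\|\rho-\dI/\dout\|_2\ge\delta$. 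Copies of $\rho$ are obtained simply by repeatedly preparing $\proj{\phi}$ and passing it through the black box $\cN$, so no auxiliary system is needed. Invoking the incoherent ancilla-free $2$-norm state-certification routine of App.~\ref{app: testing states}, this subroutine succeeds with probability at least $1-\gamma$ using $\cO\!\big(\sqrt{\dout}\,\log(1/\gamma)/\delta^2\big)=\cO\!\big(\din^2\dout^{1.5}\log(1/\gamma)/\eps^2\big)$ single-copy measurements.

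Finally I would boost the constant success probability by repetition. Run the procedure ``sample a fresh $\ket{\phi}$, then certify $\cN(\proj{\phi})$ against $\dI/\dout$ in the $2$-norm at scale $\delta$'' a total of $R=\cO(1/p)=\cO(1)$ times, with per-call error $\gamma$ a small constant so that $R\gamma\le 1/3$; answer $H_1$ as soon as one call rejects, and $H_0$ otherwise. Under $H_0$ every output is exactly $\dI/\dout$, so each call accepts except with probability $\gamma$ and a union bound gives false-positive probability $\le R\gamma\le 1/3$. Under $H_1$ each repetition independently rejects with probability at least $p(1-\gamma)=\Omega(1)$ (a ``good'' $\phi$ followed by a correct rejection), so the probability that all $R$ calls fail is at most $(1-\Omega(1))^R\le 1/3$; rejections caused by outputs lying in the $2$-norm ``don't care'' region are harmless, since they can only occur under $H_1$ and yield the correct verdict. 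Because $\gamma$ and $R$ are constants, the $\log(1/\gamma)$ factor is absorbed and the total cost is $R\cdot\cO(\din^2\dout^{1.5}/\eps^2)=\cO(\din^2\dout^{1.5}/\eps^2)$, as claimed.

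The genuine difficulty of the result sits entirely in the variance estimate of Thm.~\ref{thm:PZ-D}, which I am taking as given; within the present argument the only delicate point is the probabilistic bookkeeping, namely that under $H_1$ one does not know \emph{a priori} which input states are ``good,'' which is exactly what forces the outer $\cO(1)$ repetition loop rather than a single run of the certifier.
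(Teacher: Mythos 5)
Your proposal is correct and follows essentially the same route as the paper's proof: Paley--Zygmund applied to $X=\|\cN(\proj{\phi})-\dI/\dout\|_2^2$ via Thm.~\ref{thm:PZ-D} and the expectation bound \eqref{ineq-expectation}, reduction to $2$-norm state certification at scale $\eps/(2\din\sqrt{\dout})$, and a constant number of outer repetitions with a union bound under $H_0$. The paper simply instantiates your constants explicitly ($M=2200$, anti-concentration probability $10^{-3}$, per-call error $1/(3M)$), and your observation that the ``don't care'' outcomes are harmless because they cannot occur under $H_0$ matches the paper's error accounting.
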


As explained before, our algorithm is a reduction to the testing identity of quantum states. For the convenience of the reader we include this latter with a proof of its correctness in App.~\ref{app: testing states}. Note that we need to test quantum states in the $2$-norm which is different than the usual quantum state certification~\citep{bubeck2020entanglement}. The algorithm for testing identity to the depolarizing channel is  described  in  Alg.~\ref{alg-dep}.
\begin{algorithm}[!h]
\caption{Testing identity to the depolarizing channel  in the diamond norm}\label{alg-dep}
\begin{algorithmic}[t]
\STATE $M= 2200$.
\FOR{$k =1:M$}
 \STATE Sample $\phi_k$ a $\Haar$ random vector in $\bS^{\din}$.
 \STATE Test whether $h_0:\cN(\proj{\phi_k})= \frac{\dI}{\dout}$ or $h_1:\left\| \cN(\proj{\phi_k})- \frac{\dI}{\dout}\right\|_2\ge \frac{\eps}{2\sqrt{\dout}\din}$ using the testing identity of quantum states  Alg.~\ref{alg-states}, with an error probability $\delta=1/(3M)$, that answers the hypothesis $h_{i_k}$, $i_k\in \{0,1\}$.
\ENDFOR
\STATE \textbf{if} $\exists k : i_k=1$ \textbf{then return } $\cN$ is $\eps$-far from $\cD$ \textbf{else return} $\cN=\cD$.
\end{algorithmic}
\end{algorithm}

We remark that  Alg.~\ref{alg-dep} uses the channel only on a constant number of  random input states $\{\proj{\phi_k}\}_k$. 
One could think that querying the channel $\cN$ on more diverse inputs could lead to a more efficient algorithm. 
However, it turns out that Alg.~\ref{alg-dep} is basically optimal as we prove a matching lower bound up to a poly-logarithmic factor. 

\begin{theorem}\label{thm:LB_DEP_non-adap}
     Let $\eps\le 1/32$, $\din \ge 80$ and $\dout\ge 10$. Any ancilla-assisted non-adaptive algorithm for testing identity to the depolarizing channel (for both trace and diamond distances)  requires, in the worst case, a number of measurements satisfying:
    \begin{align*}
        N=\Omega\left(\frac{\din^2\dout^{1.5}}{\log(\din\dout/\eps)^2\eps^2}\right).
    \end{align*}
\end{theorem}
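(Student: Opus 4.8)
The plan is to prove the lower bound by LeCam's two-point method against a \emph{randomized} family of alternatives. I fix the null $\cN_0=\cD$ and a prior $\mu$ supported on channels that are $\eps$-far from $\cD$, and I aim to show that the law $P_0$ of the transcript $(x_1,\dots,x_N)$ when $\cN=\cD$ and the mixture $P_1=\mathds{E}_{\cN\sim\mu}[P_\cN]$ satisfy $\TV(P_0,P_1)\le 1/3$ whenever $N$ is below the claimed threshold; since any $1/3$-correct tester would in particular distinguish $P_0$ from $P_1$, this forces $N=\Omega(\cdots)$. Because the strategy is \emph{non-adaptive}, the inputs $\{\rho_t\}$ and POVMs $\{\cM_t\}$ are fixed in advance, so conditionally on $\cN$ the outcomes are independent and the likelihood ratio factorizes as $L(x_1,\dots,x_N)=\mathds{E}_{\cN\sim\mu}\prod_{t}\bigl(1+\delta_\cN(x_t)\bigr)$ with $\delta_\cN(x_t)=\tr((\cN-\cD)(\rho_t)M^t_{x_t})/\tr(\cD(\rho_t)M^t_{x_t})$. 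I will take $\mu$ symmetric so that $\mathds{E}_\cN[\cN-\cD]=0$, which kills the first-order terms and reduces the analysis to the second and higher moments of $\delta_\cN$.

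For the construction I draw, independently, a hidden direction $\ket{\psi}\in\bS^{\din}$ (a normalized Gaussian vector) and a mean-zero Gaussian Hermitian matrix $H$ on $\mathbb{C}^{\dout}$, scaled so that $\|H\|_1\approx\eps$ with high probability, and I define $\cN$ by $(\cN-\cD)(\rho)=\bra{\psi}\rho\ket{\psi}\,H$, i.e. $\cN(\rho)=\tr(\rho)\tfrac{\dI}{\dout}+\bra{\psi}\rho\ket{\psi}H$. This map is trace preserving since $\tr H=0$, and completely positive because its Choi operator is $\tfrac{\dI}{\din\dout}+\tfrac1\din\proj{\bar\psi}\otimes H$ and the Wigner-type $H$ has $\|H\|_{\mathrm{op}}=\cO(\eps/\dout)\le 1/\dout$ under $\eps\le 1/32$. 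Then $\cN(\proj{\psi})=\tfrac{\dI}{\dout}+H$ gives $\td(\cN,\cD)\ge\|\cN(\proj{\psi})-\tfrac{\dI}{\dout}\|_1\ge\eps$ and hence also $\dd(\cN,\cD)\ge\eps$, so the \emph{same} construction certifies the alternative for both distances; for a generic input $\proj{\phi}$ the output is $\tfrac{\dI}{\dout}+|\langle\psi|\phi\rangle|^2H$, which is only $\cO(\eps/\din)$-far from $\tfrac{\dI}{\dout}$. I would settle validity, the distance bound, and the discarding of the rare draws with $\|H\|_1<\eps$ (absorbed into the LeCam error budget) first, as these only fix normalization constants.

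The core of the argument is the transcript-level bound. I first \emph{condition} on the good event $E$ that every input has small overlap with the hidden direction, $\max_t\bra{\psi}\tr_{\mathrm{anc}}(\rho_t)\ket{\psi}\le\tau$ with $\tau\asymp\log(\din\dout/\eps)/\din$: for each fixed input the squared overlap with a random $\ket{\psi}$ concentrates at $1/\din$ with exponential tails, so a union bound over the $N$ fixed inputs gives $\pr{E^c}\le N\,e^{-\Omega(\din\tau)}$, negligible for this $\tau$. This union bound is precisely what injects the logarithmic factor and forces $\tau^{-1}\asymp\din/\log(\din\dout/\eps)$. On $E$ the deviation obeys $\|(\cN-\cD)(\rho_t)\|_2\lesssim\tau\,\eps/\sqrt{\dout}$, so the per-step second moment of $\delta_\cN$ is of order $\tau^2\eps^2/\dout$. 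A bare second-moment ($\chi^2$) estimate is lossy here, exactly as in testing mixedness: it sees only a $\dout$ and would give the weaker $\Omega(\din^2\dout/(\log(\din\dout/\eps)^2\eps^2))$. Recovering the extra $\sqrt{\dout}$ requires bounding $\TV(P_0,P_1)$ directly, and this is where the Gaussian model pays off: $\delta_\cN$ and the log-likelihood are low-degree polynomials in the Gaussian variables $(\psi,H)$, so hypercontractivity \citep[Proposition~5.48]{aubrun2017alice} controls all their moments by the second moment, which lets me show $L$ concentrates near $1$ on $E$ and conclude $\TV(P_0,P_1)\le 1/3$ as long as $N\tau^2\eps^2/\dout^{1.5}=\cO(1)$, i.e. $N=\cO(\dout^{1.5}/(\tau^2\eps^2))=\cO(\din^2\dout^{1.5}/(\log(\din\dout/\eps)^2\eps^2))$.

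The main obstacle, then, is upgrading the second-moment estimate to a genuine total-variation bound: both the conditioning on $E$ (which excises the heavy tail coming from large input overlaps) and the hypercontractive moment control (which tames the Gaussian output fluctuations $H$) are essential, and together they gain the extra $\sqrt{\dout}$. Two robustness checks remain. Allowing an ancilla within each step does not help, because the hidden direction lives in the $\din$-dimensional input while $H$ is a random $\dout$-dimensional matrix: the overlap concentration applies to the reduced inputs $\tr_{\mathrm{anc}}(\rho_t)$, and $\|\id_{\danc}\otimes(\cN-\cD)(\rho_t)\|_2\le\tau\eps/\sqrt{\dout}$ still holds, so the moment bounds are unchanged. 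Finally, specializing to $\din=1$ erases both $\psi$ and the conditioning, recovering the $\Omega(\dout^{1.5}/\eps^2)$ lower bound of \citep{bubeck2020entanglement} through this cleaner Gaussian-plus-hypercontractivity route.
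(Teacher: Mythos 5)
Your proposal is correct and follows essentially the same route as the paper: the same perturbation $\cN(\rho)=\cD(\rho)+\bra{\psi}\rho\ket{\psi}H$ with a Gaussian hidden direction and a Gaussian traceless Hermitian $H$ (the paper's $H=\tfrac{\eps}{\dout}U$ conditioned on $\|U\|_1\ge\dout$, $\|U\|_\infty\le 32$), the same LeCam mixture bound, the same conditioning on a small-overlap event for the $N$ fixed inputs whose union bound produces the logarithmic loss, and the same use of hypercontractivity of Gaussian polynomials to upgrade the second-moment estimate to full control of the likelihood ratio and gain the extra $\sqrt{\dout}$. The only cosmetic difference is that the paper's good event is phrased via a degree-$8$ polynomial $P_u^t$ in the entries of $u$ involving the POVM elements, rather than the raw overlap $\bra{\psi}\tr_{\mathrm{anc}}(\rho_t)\ket{\psi}$, but the role and the resulting bounds are the same.
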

This theorem shows that our proposed  Alg.~\ref{alg-dep} is almost optimal in the dimensions $(\din, \dout)$  and the precision parameter $\eps$  thus the complexity of testing identity to the depolarizing channel is $\Tilde{\Theta}(\din^2 \dout^{1.5}/\eps^2)$
 which is slightly surprising. Indeed, we can remark that the complexity of testing identity of discrete distributions and quantum states is the square root (for constant $\eps$) of the complexity of the corresponding learning problems in the same setting. This rule does not apply for quantum channels since we know from \citep{surawy2022projected,oufkir2023sample} that the complexity of learning quantum channels in the diamond distance with  non-adaptive incoherent strategies is  $\Tilde{\Theta}(\din^3\dout^3/\eps^2)$.

\textbf{Sketch of the proof.} Under the null hypothesis $\cN=\cD$. Under the alternate hypothesis, we construct randomly the quantum channel $\cN\sim \cP$ of the form:
 $	\cN(\rho)= \cD(\rho)  +\frac{\eps}{\dout\|u\|_2^2} \bra{u}\rho \ket{u} U$
where $\ket{u}$ is a standard Gaussian vector and $U$ has Gaussian entries: for all $i\le j\in [\dout]$,  $U_{j,i}=\bar{U}_{i,j} \sim \mathbf{1}\{i\neq j\} \cN_c(0, 16/\dout)$  conditioned on the event $\cG=\{\|U\|_1\ge \dout, \|U\|_\infty \le 32\}$. Note that the usual construction applied on the Choi state gives a sub-optimal lower bound in the trace or diamond  distances. Using a concentration inequality of Lipschitz functions of Gaussian random variables~\citep{wainwright2019high}, we show that with a high probability $\cN$ is $\eps$-far from $\cD$ in the trace and diamond  distances. 
Then we use LeCam's method to lower bound the $\TV$ distance between the distribution of the observations under the two hypotheses: $\TV\left(\mathds{P}_{\cD}^{I_1,\dots, I_N}\Big\| \mathds{E}_{\cN\sim \cP}\mathds{P}_{\cN}^{I_1,\dots, I_N}\right)\ge \frac{1}{3}$ where $I_1,\dots, I_N$ are the observations the algorithm obtains after the measurements and $N$ is a sufficient number of measurements for the correctness of the algorithm. We can suppose w.l.o.g. that the input states are pure $\rho_t=\proj{\psi_t} $ and the measurement devices are given by POVMs of the form  $\cM_t=\{\lambda_{i_t}^t \proj{\phi_{i_t}^t}\}$. Also, we can write  \[\ket{\psi_t}= A_t\otimes \dI \ket{\Psi_{\din}}, \ket{\phi_{i_t}^t}=  B_{i_t}^t\otimes \dI \ket{\Psi_{\dout}}  \quad \text{where} \quad  \ket{\Psi_d}= \frac{1}{\sqrt{d} } \sum_{i=1}^{d} \ket{ii}.\] 
Then we condition on the event $\cE$ that $u$ satisfies  $\|u\|_2\ge \sqrt{\frac{\din}{6}}$ and 
\begin{equation}\label{eq:Pu}
    \forall t\in [N]:P_u^t=\sum_{i_t,j_t}\frac{\lambda^t_{i_t}\lambda^t_{j_t}\bra{u}A_t^\dagger B_{i_t}^t  B_{j_t}^{t,\dagger}A_t  \ket{u}^4 }{\tr(B_{i_t}^{t,\dagger}A_tA_t^\dagger B_{i_t}^{t})\tr(B_{j_t}^{t,\dagger}A_tA_t^\dagger B_{j_t}^{t})}\le (7\log(N))^4\din^2\dout^3.
\end{equation}
Note that  $P_u^t$ is a polynomial of degree $8$ in the entries of $u$. We can prove that the event $\cE$ occurs with a probability at least $9/10$ using a concentration inequality deduced from the Hypercontractivity of Gaussian polynomials \cite[Corollary 5.49]{aubrun2017alice} and a union bound. Observe that the union bound here adds only a factor of $\log(N)^4$ because for a non-adaptive strategy, there are at most $N$ couples of inputs/measurements.
To carry out the analysis, we need to bound the moments of the random variables:
 \[Z_t(U,V)= \eps^2 \mathds{E}_{i_t} \Big[\Phi^{t,i_t}_{u, U}\Phi^{t,i_t}_{v, V}\Big]\quad \text{where} \quad \Phi^{t,i_t}_{u, U}=\frac{\tr(B_{i_t}^{t,\dagger}A_t  \proj{u} A_t^\dagger B_{i_t}^{t}  U)}{\|u\|_2^2\tr(B_{i_t}^{t,\dagger}A_tA_t^\dagger B_{i_t}^{t})}.\]  
 Since $Z_t$ is a polynomial of degree $2$ (in the entries of $U$ and $V$) of expectation $0$, the Hypercontractivity \cite[Proposition $5.48$]{aubrun2017alice} implies for all $k\in \{1,\dots,N\}:\ex{|Z_t|^{k}}\le k^k  \ex{Z_t^{2}}^{k/2}$. Hence, it is sufficient to upper bound the second moment which can be done using the inequalities \eqref{eq:Pu}: $\ex{Z_t^{2}}\le  \cO\left(\frac{\eps^4\log(N)^4}{\din^4\dout^3}\right)$. By a contradiction argument and grouping all these elements, we can prove that $N\log(N)^2\ge \Omega\left(\frac{\din^{2}\dout^{1.5}}{\eps^2}\right)$ and finally $N\ge \Omega\left(\frac{\din^{2}\dout^{1.5}}{\log(\din\dout/\eps)^2\eps^2}\right).$ The detailed proof can be found in App.~\ref{thm-app:LB_DEP_non-adap}.

This proof relies crucially on the non-adaptiveness of the strategy. A natural question arises then, can adaptive strategies outperform their non-adaptive counterpart? We do not settle completely this question in this article. Yet, we propose a lower bound for adaptive strategies showing that, if a separation exists, the advantage would be at most $\cO(\sqrt{\dout})$.
\begin{theorem}\label{thm:LB_DEP_adap}
      Let $\eps\le 1/32$ and $\dout\ge 10$. Any incoherent ancilla-assisted adaptive algorithm for testing identity to the depolarizing channel requires, in the worst case, $N=\Omega\left(\frac{\din^2\dout+\dout^{1.5}}{\eps^2}\right)$   measurements.
\end{theorem}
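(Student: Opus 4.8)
The stated bound is the maximum of two terms, and I would establish each separately; since $\max(a,b)\ge (a+b)/2$ this yields $N=\Omega((\din^2\dout+\dout^{1.5})/\eps^2)$. The $\dout^{1.5}/\eps^2$ term I would obtain by a reduction to testing mixedness. Restrict the alternative to constant channels $\cN_\sigma(\rho)=\tr(\rho)\sigma$; their Choi state is $\tfrac{\dI_\din}{\din}\otimes\sigma$, and one checks directly that $\dd(\cN_\sigma,\cD)=\|\sigma-\dI/\dout\|_1$. Thus distinguishing $\cN=\cD$ from $\dd(\cN,\cD)\ge\eps$ over this sub-family is exactly testing whether a $\dout$-dimensional state equals $\dI/\dout$ or is $\eps$-far in trace norm. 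Moreover an ancilla cannot help for constant channels, since $\id_\danc\otimes\cN_\sigma(\rho)=\tr_B(\rho)\otimes\sigma$ stores no information about $\sigma$ in the reference register, so any ancilla-assisted adaptive tester collapses to an unentangled adaptive mixedness tester. Invoking the known incoherent lower bound $\Omega(\dout^{1.5}/\eps^2)$ for testing mixedness \citep{bubeck2020entanglement,chen2022tight-mixedness} gives the first term; note this is also what makes the bound tight at $\din=1$, where the $\KL$ argument below only yields $\dout/\eps^2$.

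For the $\din^2\dout/\eps^2$ term I reuse the random channel $\cN\sim\cP$ of Thm~\ref{thm:LB_DEP_non-adap}, written $\cN=\cD+\cR_{u,U}$ with $\cR_{u,U}(\rho)=\tfrac{\eps}{\dout\|u\|_2^2}\bra{u}\rho\ket{u}U$, borrowing from that proof the fact that, on $\cG$, $\cN$ is a valid channel with $\dd(\cN,\cD)\ge\eps$ with high probability. Let $Q=\mathds{E}_{\cN\sim\cP}\mathds{P}_\cN$ be the transcript law under the alternative and $\mathds{P}_\cD$ that under the null. LeCam's two-point argument forces $\TV(\mathds{P}_\cD,Q)\ge 1/3$ for any $1/3$-correct tester, and Pinsker's inequality then gives $\KL(\mathds{P}_\cD\,\|\,Q)=\Omega(1)$; it therefore suffices to prove $\KL(\mathds{P}_\cD\,\|\,Q)\le N\cdot\cO(\eps^2/(\din^2\dout))$. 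This is the reason to work with $\KL$ rather than $\TV$: by joint convexity of the relative entropy, $\KL(\mathds{P}_\cD\,\|\,Q)\le\mathds{E}_{\cN\sim\cP}\KL(\mathds{P}_\cD\,\|\,\mathds{P}_\cN)$, and for each fixed $\cN$ the chain rule gives
\[
\KL(\mathds{P}_\cD\,\|\,\mathds{P}_\cN)=\sum_{t=1}^N\mathds{E}_{I_{<t}\sim\mathds{P}_\cD}\big[\KL(\mathds{P}_\cD(\cdot\mid I_{<t})\,\|\,\mathds{P}_\cN(\cdot\mid I_{<t}))\big].
\]
The crucial point is that the past $I_{<t}$ is drawn under the \emph{null}, where every output equals $\tr_B(\rho_t)\otimes\dI/\dout$ regardless of $(u,U)$; hence the transcript, and therefore the adaptively chosen input $\rho_t(I_{<t})$ and POVM $\{M_i^t(I_{<t})\}$, is statistically independent of the hidden pair $(u,U)$. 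This neutralises adaptivity: under the null the strategy cannot have learned anything about $\cN$, so after taking $\mathds{E}_\cN$ I may exchange it with $\mathds{E}_{I_{<t}}$ and bound each term by its worst case over a \emph{fixed} input and measurement, exactly as in a non-adaptive analysis.

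It then remains to bound one step. Writing $b_i=\tr(\id\otimes\cD(\rho_t)M_i^t)$ and $\delta_i=\tr(\id\otimes\cR_{u,U}(\rho_t)M_i^t)$, we have $\mathds{P}_\cN(i\mid I_{<t})=b_i+\delta_i$ and $\sum_i\delta_i=0$ since $\tr(U)=0$. On $\cG$ we have $\|U\|_\infty\le 32$, which together with $\eps\le 1/32$ gives $|\delta_i/b_i|\le 1$, so a second-order expansion of $-\log(1+x)$ bounds the per-step divergence by the classical chi-square $\sum_i\delta_i^2/b_i$; splitting POVM elements into rank-one pieces only increases this sum, so I may take $M_i=\lambda_i\proj{\phi_i}$. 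For $\danc=1$ and a pure input $\proj{\psi}$ this is $\sum_i\delta_i^2/b_i=\tfrac{\eps^2|\spr{u}{\psi}|^4}{\dout\|u\|_2^4}\sum_i\lambda_i\bra{\phi_i}U\ket{\phi_i}^2$. Taking $\mathds{E}_{u,U}$ and using independence of $u$ and $U$, $\mathds{E}_u[|\spr{u}{\psi}|^4/\|u\|_2^4]=\cO(1/\din^2)$, while $\mathds{E}_U[\sum_i\lambda_i\bra{\phi_i}U\ket{\phi_i}^2]=\cO(1)$ because $\mathds{E}_U\bra{\phi}U\ket{\phi}^2=\cO(1/\dout)$ and $\sum_i\lambda_i=\dout$; hence $\mathds{E}_{u,U}[\sum_i\delta_i^2/b_i]=\cO(\eps^2/(\din^2\dout))$. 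Summing over the $N$ steps and combining with $\KL(\mathds{P}_\cD\,\|\,Q)=\Omega(1)$ yields $N=\Omega(\din^2\dout/\eps^2)$.

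The main obstacle is this last estimate carried out in full generality: it must hold uniformly over all ancilla-assisted inputs and measurements — I sketched $\danc=1$, but the argument must absorb the reference register, where $\ket{\psi_t}=A_t\otimes\dI\ket{\Psi_\din}$ and $\ket{\phi_i^t}=B_i^t\otimes\dI\ket{\Psi_\dout}$ as in Thm~\ref{thm:LB_DEP_non-adap} — and it must track the conditioning on $\cG$, which perturbs the Gaussian second moments of $U$ by controlled $\cO(1)$ factors. This same estimate also explains the $\sqrt{\dout}$ gap to the non-adaptive bound: the convexity step $\KL(\mathds{P}_\cD\,\|\,Q)\le\mathds{E}_\cN\KL(\mathds{P}_\cD\,\|\,\mathds{P}_\cN)$ discards the cancellations between two independent channel copies $(u,U),(v,V)$ that the $\TV$/second-moment analysis exploits, and it is exactly those cancellations that are worth a factor $\sqrt{\dout}$.
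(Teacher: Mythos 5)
Your proposal is correct and follows essentially the same route as the paper: the same random channel $\cN = \cD + \frac{\eps}{\dout\|u\|_2^2}\bra{u}\cdot\ket{u}\,U$, a KL-divergence comparison with the chain rule over the $N$ steps, the second-order bound $-\log(1+x)\le -x+x^2$ with the linear term vanishing because $\tr(U)=0$, and a per-step bound of $\cO(\eps^2/(\din^2\dout))$ that holds uniformly over the (null-distributed, hence $(u,U)$-independent) adaptive choices; your LeCam--Pinsker--convexity packaging is a cosmetic variant of the paper's direct data-processing on each fixed $(w,U)$. The one place you go beyond the paper is in actually writing out the $\Omega(\dout^{1.5}/\eps^2)$ term via constant channels $\cN_\sigma(\rho)=\tr(\rho)\sigma$ and the observation that the ancilla register carries no information about $\sigma$, which the paper only asserts as following from the state-certification lower bound of \citep{chen2022tight-mixedness}.
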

The proof of this theorem uses the same construction as the one for the non-adaptive lower bound. The main difference is the use of the $\KL$ divergence to compare the observations $(I_1,\dots, I_N)$ under the two hypotheses instead of the $\TV$ distance. On the one hand, the data processing implies that $\KL\left(\mathds{P}_{\cD}^{I_1,\dots, I_N}\Big\| \mathds{E}_{\cN\sim \cP}\mathds{P}_{\cN}^{I_1,\dots, I_N}\right)
     \ge \frac{1}{3}\log(2).$ On the other hand, using Jensen's inequality and  inequalities of the $\log$ function we upper bound the $\KL $ divergence: $\KL\left(\mathds{P}_{\cD}^{I_1,\dots, I_N}\Big\| \mathds{E}_{\cN\sim \cP}\mathds{P}_{\cN}^{I_1,\dots, I_N}\right)\le  \frac{256N\eps^2}{\din^2\dout}. $ These two inequalities implies $\din^2 \dout$ part of the lower bound. The $\dout^{1.5}$ part of the lower bound follows easily  from the hardness of quantum state certification~\citep{chen2022tight-mixedness}. 
The detailed proof can be found in App.~\ref{thm-app:LB_DEP_adap}.

\section{Conclusion and open problems}
We have generalized the problem of testing identity to quantum channels. We have in particular identified the optimal complexity $\Theta(d/\eps^2)$ for testing identity to a unitary channel in the adaptive setting. Moreover, we have shown that the complexity for testing identity to the depolarizing channel in the non-adaptive setting is $\tilde{\Theta}(\din^2\dout^{1.5}/\eps^2)$.
These results open up several interesting questions: can the gap between non-adaptive and adaptive strategies for certification of the depolarizing channel be closed? How to achieve  the instance optimality (as in \citep{valiant2016instance,chen2022toward})? This would allow to adapt the complexity to the tested process $\cN_0$. Another interesting issue deals with the fact that $4$-designs can replace $\Haar$ distributed unitaries in Alg.~\ref{alg-dep}. But can the same complexity be achieved for $3$ (and lower) designs? 
Finally, it would be interesting to consider general strategies allowing entanglement between the uses of the channel, as was done for states in \citep{o2015quantum}.

\section*{Acknowledgments}
{We acknowledge support from the European Research Council (ERC Grant AlgoQIP, Agreement No. 851716) and from the European Union’s Horizon 2020 research and innovation programme under Grant Agreement No 101017733 within the QuantERA II Programme. 
\\Aurélien Garivier acknowledges the support of the Chaire SeqALO (ANR-20-CHIA-0020-01).
\\Aadil Oufkir would like to thank Guillaume Aubrun for helpful discussions and  Moritz Weber for his hospitality and organization of the Focus Semester on Quantum Information at Saarland University.}

\bibliography{bib}

\appendix

\section{Deferred proofs of the analysis of testing identity to identity Alg.~\ref{Alg}}
\begin{lemma}\label{lem:fid-ent-avg}
    Let $\ket{\phi}$ be a random $\Haar$ vector of dimension $d$. We have
    \begin{align*}
    \mathds{E}_\phi \left[ \rF(\cN(\proj{\phi}),\proj{\phi}))\right] =\frac{1+d~\rF(\cJ_{\cN},\proj{\Psi} ) }{1+d}.
\end{align*}
\end{lemma}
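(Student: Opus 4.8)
The plan is to prove the identity
\[
\mathds{E}_{\ket{\phi}\sim\Haar}\left[\rF(\cN(\proj{\phi}),\proj{\phi})\right]
=\frac{1+d\,\rF(\cJ_{\cN},\proj{\Psi})}{1+d}
\]
by writing both sides in terms of the Kraus operators $\{A_k\}_k$ of $\cN$ and evaluating the Haar average explicitly. Since $\proj{\phi}$ has rank $1$, the fidelity simplifies to $\rF(\cN(\proj{\phi}),\proj{\phi})=\bra{\phi}\cN(\proj{\phi})\ket{\phi}=\sum_k\bra{\phi}A_k\proj{\phi}A_k^\dagger\ket{\phi}=\sum_k|\bra{\phi}A_k\ket{\phi}|^2$. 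Taking the expectation, the left-hand side becomes $\sum_k\mathds{E}_\phi\big[\bra{\phi}A_k\ket{\phi}\overline{\bra{\phi}A_k\ket{\phi}}\big]$, which is a degree-$2$ (second moment) Haar average of the random vector $\ket{\phi}$.

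First I would compute this second moment using the standard formula for the Haar average of $\proj{\phi}^{\otimes 2}$ over the symmetric subspace: $\mathds{E}_\phi\big[\proj{\phi}^{\otimes 2}\big]=\frac{\dI+\F}{d(d+1)}$, where $\F$ is the flip operator introduced in the excerpt. Applying this with the replica trick identities $\tr((A\otimes B)\F)=\tr(AB)$ and $\tr(A\otimes B)=\tr(A)\tr(B)$, I would obtain
\[
\mathds{E}_\phi\big[|\bra{\phi}A_k\ket{\phi}|^2\big]
=\frac{\tr(A_k)\overline{\tr(A_k)}+\tr(A_kA_k^\dagger)}{d(d+1)}
=\frac{|\tr(A_k)|^2+\tr(A_kA_k^\dagger)}{d(d+1)}.
\]
Summing over $k$, the normalization condition $\sum_k A_k^\dagger A_k=\dI$ gives $\sum_k\tr(A_kA_k^\dagger)=\tr(\dI)=d$, so the left-hand side equals $\frac{\sum_k|\tr(A_k)|^2+d}{d(d+1)}$.

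Next I would express the entanglement fidelity in the same Kraus language. By definition $\cJ_{\cN}=\id\otimes\cN(\proj{\Psi})$ and $\rF(\cJ_{\cN},\proj{\Psi})=\bra{\Psi}\cJ_{\cN}\ket{\Psi}$ since $\proj{\Psi}$ has rank $1$. Using $\ket{\Psi}=\frac{1}{\sqrt{d}}\sum_i\ket{i}\otimes\ket{i}$, a direct computation yields $\bra{\Psi}(\id\otimes A_k)\ket{\Psi}=\frac{1}{d}\tr(A_k)$, hence $\rF(\cJ_{\cN},\proj{\Psi})=\sum_k|\bra{\Psi}(\id\otimes A_k)\ket{\Psi}|^2=\frac{1}{d^2}\sum_k|\tr(A_k)|^2$. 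Substituting $\sum_k|\tr(A_k)|^2=d^2\,\rF(\cJ_{\cN},\proj{\Psi})$ into the left-hand side gives $\frac{d^2\rF(\cJ_{\cN},\proj{\Psi})+d}{d(d+1)}=\frac{d\,\rF(\cJ_{\cN},\proj{\Psi})+1}{d+1}$, which is exactly the claim.

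I do not expect a genuine obstacle here, since this is a well-known identity; the only point requiring care is the bookkeeping in the Haar second-moment computation and in the derivation of $\bra{\Psi}(\id\otimes A_k)\ket{\Psi}=\tr(A_k)/d$. The main thing to get right is ensuring the two separate applications of the rank-$1$ fidelity formula and the replica trick are applied consistently so that the factors of $d$ track correctly and the Kraus normalization condition is invoked at the right place.
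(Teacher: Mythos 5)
Your proposal is correct and follows essentially the same route as the paper: reduce the fidelity to $\sum_k|\bra{\phi}A_k\ket{\phi}|^2$, evaluate the degree-two Haar moment to get $\frac{|\tr(A_k)|^2+\tr(A_kA_k^\dagger)}{d(d+1)}$, and identify $\sum_k|\tr(A_k)|^2=d^2\,\rF(\cJ_\cN,\proj{\Psi})$. The only cosmetic difference is that you use the symmetric-subspace formula $\mathds{E}[\proj{\phi}^{\otimes 2}]=\frac{\dI+\F}{d(d+1)}$ where the paper invokes Weingarten calculus; these are the same computation.
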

\begin{proof}Using Lem.~\ref{lem:Wg} and Lem.~\ref{lem:wg3}, we have 
\begin{align*}
    \mathds{E}_\phi \left[ \rF(\cN(\proj{\phi}),\proj{\phi}))\right]
    &= \sum_k \mathds{E}_U \left[ \bra{0}U^\dagger A_k U\proj{0}U^\dagger A_k^\dagger U\ket{0}\right]
    \\&= \sum_k \frac{(\tr(A_kA_k^\dagger)  +\tr(A_k)\tr(A_k^\dagger)   ) }{d(d+1)}
        \\&=  \frac{d  +\sum_k |\tr(A_k)|^2    }{d(d+1)}=\frac{1+d~\rF(\cJ_{\cN},\proj{\Psi} ) }{1+d}
\end{align*}
where we use Lem.~\ref{ent fid expr}.
\begin{lemma}\label{ent fid expr}
Let $\cN$ be a quantum channel of Kraus operators $\{A_k\}_k$. Let $S=\sum_{k} |\tr(A_k)|^2$.
 We can  relate the average fidelity and $S$ as follows:
\begin{align*}
  \rF(\cJ_{\cN} ,\proj{\Psi})=\frac{S}{d^2}. 
\end{align*}
\end{lemma}
\begin{proof} We have:
    \begin{align*}
  \rF(\cJ_{\cN} ,\proj{\Psi})
  &=\frac{1}{d^2}\sum_{i,j,k,l} \bra{ii}\id\otimes \cN( \ket{kk}\bra{ll})\ket{jj}\notag
    = \frac{1}{d^2}\sum_{i,j,k,l} \bra{i}I\ket{k} \bra{l}\ket{j} \bra{i}\cN( \ket{k}\bra{l})\ket{j}\notag
     \\&= \frac{1}{d^2}\sum_{i,j} \bra{i}\cN( \ket{i}\bra{j})\ket{j}
     = \frac{1}{d^2}\sum_{i,j,k} \bra{i}A_k \ket{i}\bra{j}A_k^\dagger\ket{j}
     = \frac{1}{d^2}\sum_{k} |\tr(A_k)|^2=\frac{S}{d^2}. 
\end{align*}
\end{proof}

\end{proof}
\begin{lemma}\label{lem:fid-diamond}
We have for all quantum channels $\cN$:
\begin{align*}
    \rF(\cJ_{\cN} ,\proj{\Psi})\le 1- \frac{\td(\cN, \id)^2}{4d}\le  1-\frac{\dd(\cN, \id)^4}{16d}.
\end{align*}
\end{lemma}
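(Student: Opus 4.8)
The plan is to prove the two inequalities separately. The bound $\rF(\cJ_\cN,\proj{\Psi})\le 1-\td(\cN,\id)^2/(4d)$ is the substantive one, and I would prove it directly with the Kraus operators; the second inequality then reduces to the pure distance comparison $\dd(\cN,\id)^2\le 2\,\td(\cN,\id)$, since combined with the first bound it gives $\dd(\cN,\id)^4\le 4\,\td(\cN,\id)^2\le 16d\,\big(1-\rF(\cJ_\cN,\proj{\Psi})\big)$, i.e. exactly $1-\td^2/(4d)\le 1-\dd^4/(16d)$. So the whole lemma follows from these two facts.

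For the first inequality, write $F_e:=\rF(\cJ_\cN,\proj{\Psi})$ and fix Kraus operators $\{A_k\}_k$ of $\cN$. First I would reduce to a pure input: $\rho\mapsto\|(\cN-\id)(\rho)\|_1$ is convex, so the trace distance is attained at a pure state, $\td(\cN,\id)=\|\cN(\proj{\phi^\star})-\proj{\phi^\star}\|_1$. The Fuchs--van de Graaf inequality \citep{fuchs1999cryptographic} applied to the state $\cN(\proj{\phi^\star})$ and the pure state $\proj{\phi^\star}$ gives $\td(\cN,\id)^2\le 4\big(1-\bra{\phi^\star}\cN(\proj{\phi^\star})\ket{\phi^\star}\big)$. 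The key algebraic step, using $\sum_k A_k^\dagger A_k=\dI$, is the identity
\[
1-\bra{\phi^\star}\cN(\proj{\phi^\star})\ket{\phi^\star}=\sum_k\big\|(\dI-\proj{\phi^\star})A_k\ket{\phi^\star}\big\|^2 .
\]
Setting $\tilde A_k:=A_k-\tfrac{\tr(A_k)}{d}\dI$, the scalar multiple of $\dI$ is killed by $\dI-\proj{\phi^\star}$, so each term equals $\|(\dI-\proj{\phi^\star})\tilde A_k\ket{\phi^\star}\|^2\le\|\tilde A_k\ket{\phi^\star}\|^2\le\|\tilde A_k\|_\infty^2\le\|\tilde A_k\|_2^2$. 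Finally I would evaluate $\sum_k\|\tilde A_k\|_2^2=\sum_k\tr(A_k^\dagger A_k)-\tfrac1d\sum_k|\tr(A_k)|^2=d-dF_e$, using $\sum_k\tr(A_k^\dagger A_k)=\tr(\dI)=d$ together with Lem.~\ref{ent fid expr}, which gives $\sum_k|\tr(A_k)|^2=d^2F_e$. Combining, $\td(\cN,\id)^2\le 4\sum_k\|\tilde A_k\|_2^2=4d(1-F_e)$. This is precisely where the quadratic improvement in $d$ comes from: the deviation is controlled by the operator norms $\|\tilde A_k\|_\infty$ rather than by $\|\cJ_\cN-\proj{\Psi}\|_1$.

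For the second inequality I would prove $\dd(\cN,\id)^2\le 2\,\td(\cN,\id)$. The diamond distance is attained at a pure bipartite input $\psi^\star$, and Fuchs--van de Graaf again yields $\dd(\cN,\id)^2\le 4\big(1-\bra{\psi^\star}\id\otimes\cN(\proj{\psi^\star})\ket{\psi^\star}\big)$, where the deficit equals $\sum_k\|(\dI-\proj{\psi^\star})(\dI\otimes A_k)\ket{\psi^\star}\|^2$, an entangled analogue of the quantity above; the target is then to show this deficit is at most $\tfrac12\,\td(\cN,\id)$. I expect this comparison to be the main obstacle. Unlike the unentangled case, the reduced input state $\tr_1\proj{\psi^\star}$ is generally mixed, and cancellations between its eigencomponents can make the entangled deficit strictly larger than any single-input deficit $1-\bra{\phi}\cN(\proj{\phi})\ket{\phi}$, so one cannot simply bound it by the maximal unentangled deficit $\max_\phi$ (a direct fidelity computation I sketched only yields a worse constant). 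I would therefore lean on the distance inequality $\dd^2\le 2\td$ itself --- either invoking the standard relation between the diamond and unstabilised trace norms specialised to a difference with the identity channel, or arguing directly that the worst mixed reduced state cannot exceed $\tfrac12\td$ --- the clean constant $2$ being exactly what produces the factor $16$ in the statement.
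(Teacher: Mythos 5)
Your proposal is correct, and for the first inequality it takes a mildly but genuinely different route from the paper. Both proofs share the same skeleton: evaluate the trace distance at an optimal pure input $\phi^\star$, apply Fuchs--van de Graaf to get $\td(\cN,\id)^2\le 4\bigl(1-\bra{\phi^\star}\cN(\proj{\phi^\star})\ket{\phi^\star}\bigr)$, and use $\rF(\cJ_\cN,\proj{\Psi})=\frac{1}{d^2}\sum_k|\tr(A_k)|^2$ (Lem.~\ref{ent fid expr}). Where you diverge is the direction of the final comparison. The paper bounds $\sum_k|\tr(A_k)|^2$ \emph{from above} by completing $\phi^\star$ to an orthonormal basis and applying Cauchy--Schwarz, $|\tr(A_k)|^2\le d\sum_i|\bra{\phi_i}A_k\ket{\phi_i}|^2$, then isolating the $i=1$ term. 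You instead bound the fidelity deficit at $\phi^\star$ \emph{from above} by $1-F_e$: writing the deficit as $\sum_k\|(\dI-\proj{\phi^\star})A_k\ket{\phi^\star}\|^2$, centering $\tilde A_k=A_k-\tr(A_k)\dI/d$, and using $\|\tilde A_k\ket{\phi^\star}\|^2\le\|\tilde A_k\|_2^2$ with $\sum_k\|\tilde A_k\|_2^2=d(1-F_e)$. Both are one-line estimates yielding the identical constant $4d$; your version makes the source of the dimension improvement slightly more transparent (the deviation of the Kraus operators from their traceful part is measured in Frobenius norm), while the paper's is marginally shorter. For the second inequality your reduction is exactly the paper's: it simply invokes $\td(\cN,\id)\ge\dd(\cN,\id)^2/2$, citing \citep[Theorem 3.56]{watrous2018theory}, so the difficulties you anticipate with a direct entangled-deficit computation never arise --- the paper makes no attempt at one either.
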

    \begin{proof}
The following inequality permits to prove the second inequality \citep[Theorem 3.56,  rephrased]{watrous2018theory} 
\begin{align*}
  \td(\cN,\id)\ge \frac{ \dd(\cN,\id)^2}{2}.
\end{align*}
\sloppy It remains to prove the first inequality. For this, let $\eps= \td(\cN,\id)=\max_{\ket{\phi} \in \bS^d} \|\cN(\proj{\phi})-\proj{\phi}\|_1$. 
Let $\ket{\phi}$ be a unit vector satisfying the previous maximization, we show that using Fuchs–van de Graaf inequality \citep{fuchs1999cryptographic}:
\begin{align*}
    \bra{\phi}\cN(\proj{\phi})\ket{\phi}&=\rF(\cN(\proj{\phi}),\proj{\phi}) 
    \le 1-\frac{1}{4}\|\cN(\proj{\phi})-\proj{\phi}\|_1^2
    \le 1-\frac{\eps^2}{4}.
\end{align*}
On the other hand, we use the Kraus decomposition to describe the quantum channel $\cN(\rho)= \sum_{k} A_k\rho A_k^\dagger.$
We can write the previous fidelity in terms of the Kraus operators:
\begin{align*}
    \bra{\phi}\cN(\proj{\phi})\ket{\phi}= \sum_{k} \bra{\phi}A_k\proj{\phi}A_k^\dagger\ket{\phi}=\sum_k |\bra{\phi}A_k\ket{\phi}|^2.
\end{align*}
Hence:
\begin{align}\label{inequ-phi1}
    \sum_k |\bra{\phi}A_k\ket{\phi}|^2\le 1-\frac{\eps^2}{4}.
\end{align}
Let $\ket{\phi_1}=\ket{\phi}$ and we can complete it to have an ortho-normal basis $\{\ket{\phi_i}\}_{i=1}^d$. Moreover, 
we have $ \rF(\cJ_{\cN} ,\proj{\Psi})=\frac{1}{d^2}\sum_{k} |\tr(A_k)|^2$ (Lem.~\ref{ent fid expr}).
By applying the Cauchy-Schwarz inequality and using the inequality~\eqref{inequ-phi1}:
\begin{align*}
    \sum_{k} |\tr(A_k)|^2&= \sum_k \left| \sum_{i=1}^d \bra{\phi_i}A_k\ket{\phi_i}\right|^2
    \le  \sum_{k,i} d|\bra{\phi_i}A_k\ket{\phi_i}|^2
    \\&=
    d\sum_k|\bra{\phi_1}A_k\ket{\phi_1}|^2+ d\sum_{i=2}^d\sum_k|\bra{\phi_i}A_k\ket{\phi_i}|^2
    \\&\le d(1-\eps^2/4)+d(d-1)=d(d-\eps^2/4)
\end{align*}
because for all $i\ge 2$:
\begin{align*}
    \sum_k|\bra{\phi_i}A_k\ket{\phi_i}|^2\le \sum_k \bra{\phi_i}A_k^\dagger A_k\ket{\phi_i}=  \bra{\phi_i}\sum_k A_k^\dagger A_k\ket{\phi_i}=1.
\end{align*}
Finally, $\rF(\cJ_{\cN} ,\proj{\Psi})= \frac{1}{d^2}\sum_{k} |\tr(A_k)|^2\le 1-\frac{\eps^2}{4d}.$
\end{proof}

\section{Lower bound for testing identity to identity}\label{sec:LB}
In this section, we establish a  general lower bound  for any algorithm (possibly adaptive) for testing identity to identity using ancilla-assisted strategies.
\begin{theorem}\label{thm:LB-test-id-ind}
  Any incoherent adaptive ancilla-assisted  strategy  requires a number of steps satisfying:
\begin{align*}
    N=\Omega\left( \frac{d}{\eps^2}\right)
\end{align*}
to distinguish between  $\cN=\id$ and $\dd(\cN, \id)\ge\eps$ (or $\td(\cN, \id)\ge\eps$) with a probability at least $2/3$.
\end{theorem}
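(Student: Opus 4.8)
The plan is to prove the lower bound by Le~Cam's two-point method against a \emph{mixture} alternative, measuring indistinguishability with the $\KL$ divergence. Concretely, I will build a distribution $\cP$ over channels that are all $\eps$-far from $\id$ yet look like $\id$ on a typical input, let $P_0$ be the law of the transcript $(x_1,\dots,x_N)$ when $\cN=\id$ and $Q=\mathds{E}_{\cN\sim\cP}P_{\cN}$ be the transcript law under $H_1$. A $1/3$-correct tester separates $P_0$ from $Q$, so $\TV(P_0,Q)\ge 1/3$, and Pinsker's inequality gives $\KL(P_0\,\|\,Q)=\Omega(1)$. The goal is then to upper bound $\KL(P_0\,\|\,Q)$ by $N\cdot\cO(\eps^2/d)$, which forces $N=\Omega(d/\eps^2)$.

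For the construction I would take a channel $\cN\sim\cP$ that coincides with $\id$ except on a neighborhood of a $\Haar$-random direction $\ket{\psi}$, where it is perturbed at scale $\Theta(\eps)$, arranged so that it stays completely positive and trace preserving and so that \emph{every} realization obeys $\td(\cN,\id)\ge\eps$ (the worst-case input being $\proj{\psi}$). Since $\td\le\dd$, each such channel is automatically $\eps$-far in the diamond distance as well, so the same family is a valid $H_1$ instance for both metrics. The subtle design requirement, already flagged in the main text, is the skew-divergence flavor: the perturbation must keep the output distribution's likelihood ratios bounded (so the per-step $\KL$ expansion below is valid even for an adversarial input that happens to align with $\ket{\psi}$), while being mean-zero enough that a learner's typical input, being nearly orthogonal to the hidden $\ket{\psi}$, sees essentially $\id$. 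This is exactly what yields the information-theoretically forced infidelity $\Theta(\eps^2/d)$ dictated by Lem.~\ref{lem:fid-diamond} and Lem.~\ref{lem:fid-ent-avg}.

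The key step handling adaptivity is to reduce the mixture bound to a \emph{prior}-averaged per-step quantity. By convexity of the $\KL$ divergence in its second argument, followed by the chain rule applied with histories sampled from the channel-independent law $P_0$,
\[
\KL(P_0\,\|\,Q)\;\le\;\mathds{E}_{\cN\sim\cP}\,\KL(P_0\,\|\,P_{\cN})\;=\;\sum_{t=1}^{N}\mathds{E}_{x_{<t}\sim P_0}\Big[\mathds{E}_{\cN\sim\cP}\,\KL\big(P_0(\cdot\mid x_{<t})\,\big\|\,P_{\cN}(\cdot\mid x_{<t})\big)\Big].
\]
The crucial gain is that, after taking the expectation over $\cN$ to the outside, the history $x_{<t}$ is drawn from $P_0$ and hence is uncorrelated with the hidden $\cN$: conditioning on $x_{<t}$ fixes the adaptively chosen (ancilla-extended) input $\sigma_t$ and POVM $\{M_x\}$, and we average over the \emph{prior} $\cP$. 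It therefore suffices to prove the single-measurement estimate, valid for every such $\sigma$ and $\{M_x\}$,
\[
\mathds{E}_{\cN\sim\cP}\,\KL\big(\{\tr(\sigma M_x)\}_x\,\big\|\,\{\tr((\id\otimes\cN)(\sigma)M_x)\}_x\big)\;\le\;\cO\!\Big(\frac{\eps^2}{d}\Big).
\]
Writing $\tr((\id\otimes\cN)(\sigma)M_x)=\tr(\sigma M_x)(1+\delta_x)$, the first-order term $\sum_x\tr(\sigma M_x)\,\delta_x$ vanishes identically by normalization, so the leading contribution is $\tfrac12\sum_x\tr(\sigma M_x)\,\delta_x^2$, a quadratic in the channel fluctuation whose average over the random direction $\ket{\psi}$ I would evaluate with Weingarten calculus (Lem.~\ref{lem:Wg},~\ref{lem:wg3}); the $1/d$ suppression arises precisely from integrating out the $d$-dimensional $\Haar$ direction. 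Summing over $t\in[N]$ gives $\KL(P_0\,\|\,Q)\le N\cdot\cO(\eps^2/d)$, and comparing with $\KL(P_0\,\|\,Q)=\Omega(1)$ yields $N=\Omega(d/\eps^2)$.

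The main obstacle is the interplay between the construction and the per-step estimate: one must simultaneously guarantee worst-case $\eps$-farness and a tightly controlled, nearly mean-zero per-measurement fluctuation, and the Weingarten average of the \emph{non-polynomial} $\KL$ must be carried out while keeping the reference probabilities $\tr((\id\otimes\cN)(\sigma)M_x)$ bounded away from zero so that the quadratic expansion (and its remainder) is legitimate for arbitrary adaptive inputs. Getting the $\eps^2$ (rather than $\eps$) scaling is exactly the delicate point, and it is here that the skew-divergence-inspired design and the careful moment computations are essential; a naive perturbation of the Choi state $\proj{\Psi}$ would give a suboptimal $\eps$-dependence.
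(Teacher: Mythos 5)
Your skeleton is the same as the paper's: a Haar-hidden low-rank perturbation mixed with the identity channel (the skew-divergence trick), a reduction of the adaptive transcript $\KL$ to a sum of per-step conditional $\KL$'s with histories drawn from the null law, and a Weingarten evaluation of the per-step term. Your detour through the mixture $Q=\mathds{E}_{\cN\sim\cP}P_\cN$, Pinsker, and joint convexity is equivalent to the paper's more direct opening move (every realization $\cN_V$ is $\eps$-far, so data processing already gives $\KL(P_0\|P_{\cN_V})\ge\frac13\log 2$ pointwise, hence in expectation over $V$), so nothing is gained or lost there.

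The gap is in the per-step estimate, which is the entire technical content of the theorem. The claim that the first-order term vanishes by normalization and that the leading contribution is $\tfrac12\sum_x\tr(\sigma M_x)\delta_x^2$ does not close on its own: averaging $\delta_x^2$ over the Haar direction gives, schematically, $\frac{\eps^4}{d^4}\cdot\frac{\max\{\tr(CC^\dagger),|\tr(C)|^2\}^2}{|\tr(C)|^4}$ with $\tr(\sigma M_x)\propto|\tr(C)|^2$ (here $C=A_t^\dagger B^t_{i_t}$ encodes the chosen input and POVM element), so the weighted sum $\sum_x\tr(\sigma M_x)\,\mathds{E}_V[\delta_x^2]$ still carries a factor $\tr(CC^\dagger)^2/|\tr(C)|^2$ that an adversarial POVM can make arbitrarily large. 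The paper's fix is to truncate on the event that the outcome probability falls below $\frac{\eps^2}{d^2}\tr(CC^\dagger)$ (those outcomes contribute at most $\eps^2/d$ in total, using only $-\log x\le\log 2$ there) and to expand only on the complement. But once you truncate, the first-order term no longer telescopes to zero over the surviving outcomes, and you must bound its expectation over $V$ directly; this is exactly where the construction must have $\tr(M_V)=\Theta(\eps^2)$ (a rotation by angle $\arcsin\eps$, so that $1-\sqrt{1-\eps^2}=\Theta(\eps^2)$), not merely operator norm $\Theta(\eps)$ --- and it is this feature, rather than the mixture with $\id$, that produces the $\eps^2$ rather than $\eps$ scaling (the mixture's role is only to keep the likelihood ratio above $1/2$ so the logarithm can be expanded at all). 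Your sketch names ``keeping the reference probabilities bounded away from zero'' as the obstacle but offers no mechanism for it, and the two ingredients you invoke --- ``the first-order term vanishes identically'' and a quadratic bound valid for arbitrary adaptive inputs --- cannot both be used as written; reconciling them via the truncation-plus-mean-bound argument is the actual work of the proof.
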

This theorem shows that  Alg~\ref{Alg} has an optimal complexity. 
\begin{proof}
 Under the null hypothesis $H_0$, the quantum channel $\cN=\id$. Under the alternate hypothesis $H_1$, we can choose $\cN$ so that $\dd(\cN, \id)\ge \td(\cN, \id)\ge\eps$. A difficult to test  channel is a channel sending almost every vector of a basis to itself. With this intuition, we choose $V\in\Haar(d)$, 
 and construct the channel $\cN_{V}(\rho)= \frac{1}{2}\rho +\frac{1}{2} U_{V} \rho U^\dagger_{V}$ where  $U_{V}$ satisfies:
     \begin{align*}
      U_{V }V\ket{l} =
    \begin{cases*}
      \sqrt{1-\eps^2} V\ket{0}+\eps V\ket{1} & if $l=0$ \\
        \sqrt{1-\eps^2} V\ket{1}-\eps V\ket{0}       & if  $l=1$\\
        V\ket{l}& otherwise.
    \end{cases*}
  \end{align*}
 Taking a mixture of the identity channel and the unitary channel $U_V\cdot U_V^\dagger$ in the definition of $\cN_V$ is crucial in this proof and is inspired by the quantum skew divergence~\citep{skewdivergence}. 
 We need to show first that such a channel is $\eps$-far from the identity channel. Indeed, let $\ket{\phi}= V\ket{0}$, we have:
 \begin{align*}
     \dd(\cN_{V}, \id) &\ge\td(\cN_{V}, \id)\ge   \left\|\ \cN_V(\proj{\phi}) -\id(\proj{\phi})  \right\|_1 \notag
     \\&=   \left\| \frac{1}{2} \proj{\phi} +\frac{1}{2}U_V\proj{\phi}U_V^\dagger- \proj{\phi}   \right\|_1
     \\&= \frac{1}{2} \left\|  V\proj{0}V^\dagger - U_VV\proj{0}V^\dagger U_V ^\dagger \right\|_1
     \\&= \frac{1}{2} \left\|  \proj{0} - \left(\sqrt{1-\eps^2} \ket{0}+\eps \ket{1}\right)\left(\sqrt{1-\eps^2} \bra{0}+\eps \bra{1}\right) \right\|_1
     \\&=  \frac{1}{2}\left\|   \eps^2  \proj{0}       -  \eps\sqrt{1-\eps^2} ( \ket{0}\bra{1} +\ket{1}\bra{0} )-     \eps^2\proj{1}  \right\|_1\notag
     =\eps.
 \end{align*}

Hence a $1/3$-correct algorithm should distinguish between the identity channel and $\cN_{V}$ with at least a probability $2/3$ of success. This algorithm can only choose an input $\rho_t$ which we can suppose (by the convexity of the $\KL$ divergence) of rank $1$, that is $\rho_t= \proj{\psi_t}$   at each step $t$ and perform a measurement using the POVM $\cM_t=\{\lambda_i^t\proj{\phi^t_i}\}_{i\in \cI_t}$ on the output quantum state $\id\otimes \cN(\rho_t)$. Note that the identity channel $\id$ acts on the ancilla space. These choices can depend on the previous observations, that is, the algorithm can be adaptive. Let $I_{\le N}=(I_1,\dots,I_N)$ be the observations of this algorithm where $N$ is a sufficient number of steps to decide correctly with a  probability at least $2/3$. 
We can compare the distributions of the observations under the two hypotheses using the Kullback-Leibler divergence.  
Let $P$ (resp. $Q$) be the distribution of $(I_1,\dots,I_N)$ under $H_0$ (resp. $H_1$). The distribution of $(I_1,\dots,I_N)$ under $H_0$ is: 
 \begin{align*}
     P:=\Bigg\{\prod_{t=1}^N \lambda^t_{i_t}\bra{\phi^t_{i_t}}\rho_t\ket{\phi^t_{i_t}}   \Bigg\}_{i_1,\dots,i_N}=\Bigg\{\prod_{t=1}^N \lambda^t_{i_t}|\spr{\phi^t_{i_t}}{\psi_t}|^2  \Bigg\}_{i_1,\dots,i_N} .
 \end{align*}
 On the other hand, the distribution of $(I_1,\dots,I_N)$ under $H_1$ conditioned on $V$ is:
  \begin{align*}
    Q_V:= \Bigg\{\prod_{t=1}^N \lambda^t_{i_t}\bra{\phi^t_{i_t}} \id\otimes \cN_V(\rho_t)\ket{\phi^t_{i_t}}   \Bigg\}_{i_1, \dots, i_N}.
 \end{align*}
Moreover, we can write each rank one input state and measurement vector as follows:
\begin{align*}
    \ket{\psi_t}= A_t\otimes \dI \ket{w} ~~\text{ and }~~\ket{\phi_{i_t}^t}=  B_{i_t}^t\otimes \dI \ket{w}
\end{align*}
where $\ket{w}=  \sum_{i=1}^{d} \ket{ii}$ and the matrices $A_t \in \dC^{\danc \times d }$ and $B_{i_t}^t \in \dC^{\danc \times d }$ verify: 
\begin{align*}
   \tr(A_tA_t^\dagger )= 1  ~~\text{ and }~~  \tr(B_{i_t}^tB_{i_t}^{t,\dagger} )= 1.
\end{align*}
Note that we have for all $t$, for all $X\in \dC^{\danc\times \danc}$ we have   $\sum_{i_t} \lambda_{i_t}^t B_{i_t}^{t,\dagger}XB_{i_t}^{t} = \tr(X)\dI$. Indeed, the condition of the POVM $\cM_t$ implies: 
\begin{align*}
 X\otimes \dI=    X\otimes \dI\sum_{i_t}\lambda_{i_t}^t \proj{\phi_{i_t}^t} =  \sum_{i_t}\lambda_{i_t}^t XB_{i_t}^t\otimes \dI \proj{w} B_{i_t}^{t, \dagger}\otimes\dI 
\end{align*}
hence by taking the partial trace on the first system we obtain
\begin{align*}
    \tr(X) \dI = \sum_{i_t, i, j}\lambda_{i_t}^t \bra{j}  B_{i_t}^{t, \dagger}X B_{i_t}^{t} \ket{i}  \ket{i}\bra{j}.
\end{align*}
Finally
\begin{align*}
    \sum_{i_t} \lambda_{i_t}^t B_{i_t}^{t,\dagger}XB_{i_t}^{t} = \tr(X)\dI.
\end{align*}
By taking $X=\dI$  and the partial trace on the second system we obtain 
\begin{align*}
    \sum_{i_t} \lambda_{i_t}^t B_{i_t}^{t} B_{i_t}^{t, \dagger} = d~\dI_{\danc}.
\end{align*}
Recall that for an adaptive strategy, for all $t\in [N]$,  $\rho_t$ and $\cM_t=\{\lambda_{i_t}^t\proj{\phi_{i_t}^t}\}_{i_t}$ depend on $(i_1, \dots, i_{t-1})$. 
So, the $\KL$ divergence between $P$ and $Q_V$ can be expressed as follows:
 \begin{align*}
     \KL(P\|Q_V)&=\mathds{E}_{i\sim P} (-\log)\left( \frac{Q_{V,i}}{P_i}\right)
     \\&= \sum_{t=1}^N \mathds{E}_{i\le N} (-\log) \left( \frac{\bra{\phi_{i_t}^t}\id\otimes \cN_V(\rho_t)\ket{\phi_{i_t}^t}}{\bra{\phi_{i_t}^t}\rho_t\ket{\phi_{i_t}^t}}\right)
    \\& = \sum_{t=1}^N \mathds{E}_{i\le t} (-\log) \left( \frac{\bra{\phi_{i_t}^t}\id\otimes \cN_V(\rho_t)\ket{\phi_{i_t}^t}}{\bra{\phi_{i_t}^t}\rho_t\ket{\phi_{i_t}^t}}\right)
 \end{align*}
where we use the notation for $t\in [N]$,  $\mathds{E}_{i\le t}(X(i_1, \dots, i_t)) = \sum_{i_1, \dots, i_t} \prod_{k=1}^t \lambda_{i_k}^k\bra{\phi_{i_k}^k}\rho_k\ket{\phi_{i_k}^k} X(i_1, \dots, i_t) $ and the fact that  the term $\frac{\bra{\phi_{i_t}^t}\id\otimes \cN_V(\rho_t)\ket{\phi_{i_t}^t}}{\bra{\phi_{i_t}^t}\rho_t\ket{\phi_{i_t}^t}}$  depends only on $(i_1, \dots, i_t)$. 
\\Let $\cE$ be the event that the algorithm accepts $H_0$, we apply the Data-Processing inequality on the $\KL$ divergence (see Prop.~\ref{KL-sec}):
 \begin{align*}
     \KL(P\| Q_V)
     &\ge \KL(P\left(\cE\right)\|Q_V\left(\cE\right))
     \\&\ge \KL\left(\frac{2}{3}\Big\|\frac{1}{3}\right)=\frac{2}{3}\log(2)-\frac{1}{3}\log(2)=\frac{1}{3}\log(2)
 \end{align*}
where $\KL(p||q)= \KL(\Ber(p)\|\Ber(q)) $. Hence 
\begin{align*}
      \mathds{E}_{V\sim \Haar(d)}\KL(P\| Q_V)
     \ge \frac{1}{3}\log(2).
\end{align*}

 Let $M_{V}= \dI- U_{V}$  and $S_V= \dI-\frac{1}{2}M_V$. 
 We can write the logarithmic term in the expression of $\KL(P\| Q_V)$ as follows: 
 \begin{align*}
    & (-\log) \left( \frac{\bra{\phi_{i_t}^t}\id \otimes \cN_V(\rho_t)\ket{\phi_{i_t}^t}}{\bra{\phi_{i_t}^t}\rho_t\ket{\phi_{i_t}^t}}\right)
     \\&= (-\log) \left( \frac{\bra{\phi_{i_t}^t}(\frac{1}{2}\rho_t+ \frac{1}{2}(\dI\otimes U_V)\rho_t (\dI\otimes U_V^\dagger))\ket{\phi_{i_t}^t}}{\bra{\phi_{i_t}^t}\rho_t\ket{\phi_{i_t}^t}}\right)
     \\&=(-\log) \left(1-  \frac{\Re(\bra{\phi_{i_t}^t}(\dI\otimes M_V) \rho_t\ket{\phi_{i_t}^t})}{\bra{\phi_{i_t}^t}\rho_t\ket{\phi_{i_t}^t}} + \frac{1}{2}\frac{\bra{\phi_{i_t}^t}(\dI\otimes M_V)\rho_t(\dI\otimes M_V^\dagger)\ket{\phi_{i_t}^t}}{\bra{\phi_{i_t}^t}\rho_t\ket{\phi_{i_t}^t}}     \right)
     \\&=(-\log) \left(1 - \frac{1}{2} \frac{\bra{\phi_{i_t}^t}(\dI\otimes M_V)\rho_t (\dI\otimes S_V^\dagger) \ket{\phi_{i_t}^t}}{\bra{\phi_{i_t}^t}\rho_t\ket{\phi_{i_t}^t}}   -\frac{1}{2} \frac{\bra{\phi_{i_t}^t}(\dI\otimes S_V)\rho_t( \dI\otimes M_V^\dagger) \ket{\phi_{i_t}^t}}{\bra{\phi_{i_t}^t}\rho_t\ket{\phi_{i_t}^t}}    \right).
 \end{align*}
For $t\in [N]$ and $i_{\le t}=(i_1, \dots, i_t)$, define the event $\cG(t, i_{\le t})= \left\{ \bra{\phi_{i_t}^t}\rho_t\ket{\phi_{i_t}^t} \le \frac{\eps^2}{d^2} \tr(A_t^\dagger B_{i_t}^{t}B_{i_t}^{t, \dagger} A_t)\right\}$. 
We can distinguish whether the event $\cG$ is satisfied or not:
\begin{align*}
    \mathds{E}_{V\sim \Haar(d)}\KL(P\| Q_V)&= \sum_{t=1}^N \mathds{E}_{V\sim \Haar(d)} \mathds{E}_{i\le t}(\bm{1}\{\cG(t, i_{\le t})\}+ \bm{1}\{\cG^c(t, i_{\le t})\})(-\log) \left( \frac{\bra{\phi_{i_t}^t}\id \otimes\cN_V(\rho_t)\ket{\phi_{i_t}^t}}{\bra{\phi_{i_t}^t}\rho_t\ket{\phi_{i_t}^t}}\right).
\end{align*}
Let us first analyze the setting when the event $\cG$ holds. Fix $t\in [N]$, observe that we have the inequality:
\begin{align*}
     (-\log) \left( \frac{\bra{\phi_{i_t}^t}\id \otimes\cN_V(\rho_t)\ket{\phi_{i_t}^t}}{\bra{\phi_{i_t}^t}\rho_t\ket{\phi_{i_t}^t}}\right)
     &= (-\log) \left( \frac{\bra{\phi_{i_t}^t}(\frac{1}{2}\rho_t+ \frac{1}{2}(\dI\otimes U_V) \rho_t(\dI\otimes U_V^\dagger))\ket{\phi_{i_t}^t}}{\bra{\phi_{i_t}^t}\rho_t\ket{\phi_{i_t}^t}}\right)\notag
     \\&= (-\log) \left( \frac{1}{2}+ \frac{\bra{\phi_{i_t}^t}(\frac{1}{2}(\dI\otimes U_V) \rho_t (\dI\otimes U_V^\dagger))\ket{\phi_{i_t}^t}}{\bra{\phi_{i_t}^t}\rho_t\ket{\phi_{i_t}^t}}\right)\le \log(2)\le 1.
\end{align*}
Then  we can control the expectation under the event $\cG$ as follows:
\begin{align}\label{eq: E1}
    &\mathds{E}_{V\sim \Haar(d)}\mathds{E}_{i\le t}\bm{1}\{\cG(t, i_{\le t})\}(-\log) \left( \frac{\bra{\phi_{i_t}^t}\id \otimes\cN_V(\rho_t)\ket{\phi_{i_t}^t}}{\bra{\phi_{i_t}^t}\rho_t\ket{\phi_{i_t}^t}}\right) \notag
    \\&\le \mathds{E}_{V\sim \Haar(d)}\mathds{E}_{i\le t-1} \sum_{i_t}\lambda_{i_t}^t \bra{\phi_{i_t}^t}\rho_t\ket{\phi_{i_t}^t}  \bm{1}\{\cG(t, i_{\le t})\}    \notag
    \\&\le \mathds{E}_{V\sim \Haar(d)}\mathds{E}_{i\le t-1}\sum_{i_t}\lambda_{i_t}^t \left( \frac{\eps^2}{d^2}\tr(A_t^\dagger B_{i_t}^{t}B_{i_t}^{t, \dagger} A_t)\right) \bm{1}\{\cG(t, i_{\le t})\}~~~~ 
    \notag
    \\&\le \mathds{E}_{V\sim \Haar(d)}\mathds{E}_{i\le t-1}\sum_{i_t}\lambda_{i_t}^t \left( \frac{\eps^2}{d^2}\tr(A_t^\dagger B_{i_t}^{t}B_{i_t}^{t, \dagger} A_t)\right) \notag
    \\&= \mathds{E}_{V\sim \Haar(d)}\mathds{E}_{i\le t-1}\left( \frac{\eps^2}{d^2}\tr(A_t^\dagger  d A_t)\right)= \mathds{E}_{V\sim \Haar(d)}\mathds{E}_{i\le t-1} \left( \frac{\eps^2}{d}\right) =    \frac{\eps^2}{d}  
\end{align}
where we use the fact that under the event $\cG$ we have $\bra{\phi_{i_t}^t}\rho_t\ket{\phi_{i_t}^t} \le \frac{\eps^2}{d^2}\tr(A_t^\dagger B_{i_t}^{t}B_{i_t}^{t, \dagger} A_t) $,  $\tr(A_t^\dagger   A_t)=1$ and  $\sum_{i_t} \lambda_{i_t}^t B_{i_t}^{t}B_{i_t}^{t, \dagger} = d\,\dI$ which is an implication of the fact that  $\cM_t=\{\lambda_i^t\proj{\phi_i^t}\}_{i\in \cI_t}$ is a POVM. 
\\On the other hand under $\cG^c(t, i_{\le t})$, we will use instead the inequality 
\begin{align*}
    (-\log)(x)\le (1-x)+(x-1)^2\quad \text{valid for all}\quad x\in \Big[\frac{1}{2},+\infty\Big).
\end{align*}
We apply this inequality for $x=\frac{\bra{\phi_{i_t}^t}\id \otimes \cN_V(\rho_t)\ket{\phi_{i_t}^t}}{\bra{\phi_{i_t}^t}\rho_t\ket{\phi_{i_t}^t}}=\frac{1}{2}+ \frac{\bra{\phi_{i_t}^t}(\frac{1}{2}(\dI\otimes U_V)\rho_t (\dI\otimes U_V^\dagger))\ket{\phi_{i_t}^t}}{\bra{\phi_{i_t}^t}\rho_t\ket{\phi_{i_t}^t}}\ge \frac{1}{2} $, the first term of the upper bound is:
\begin{align}\label{eq: E2}
    (1-x)&= 1-\frac{\bra{\phi_{i_t}^t}\id \otimes \cN_V(\rho_t)\ket{\phi_{i_t}^t}}{\bra{\phi_{i_t}^t}\rho_t\ket{\phi_{i_t}^t}} \notag
    \\&= \frac{1}{2} \frac{\bra{\phi_{i_t}^t}(\dI\otimes M_V)\rho_t (\dI\otimes S_V^\dagger) \ket{\phi_{i_t}^t}}{\bra{\phi_{i_t}^t}\rho_t\ket{\phi_{i_t}^t}}   +\frac{1}{2} \frac{\bra{\phi_{i_t}^t}(\dI\otimes S_V)\rho_t (\dI\otimes M_V^\dagger) \ket{\phi_{i_t}^t}}{\bra{\phi_{i_t}^t}\rho_t\ket{\phi_{i_t}^t}} \notag
    \\&=\frac{\Re\bra{\phi_{i_t}^t}(\dI\otimes M_V)\ket{\psi_t} \bra{\psi_t} (\dI\otimes S_V^\dagger) \ket{\phi_{i_t}^t}}{\bra{\phi_{i_t}^t}\rho_t\ket{\phi_{i_t}^t}}  \notag   
     \\&=\frac{\Re\bra{w}( B_{i_t}^{t, \dagger}  A_t \otimes M_V)\ket{w} \bra{w} ( A_t^\dagger  B_{i_t}^{t} \otimes S_V^\dagger) \ket{w}}{\bra{w}( B_{i_t}^{t, \dagger}  A_t \otimes \dI)\ket{w} \bra{w} ( A_t^\dagger  B_{i_t}^{t} \otimes \dI) \ket{w}}  \notag
      \\&=\frac{\Re\,\tr(B_{i_t}^{t, \dagger}  A_t  M_V^\top) \tr( A_t^\dagger  B_{i_t}^{t}  \bar{S}_V)}{\tr(B_{i_t}^{t, \dagger}  A_t)\tr( A_t^\dagger  B_{i_t}^{t})} 
\end{align}
and by using first the inequality $(x+y)^2\le 2(x^2+y^2)$ and then the Cauchy Schwarz inequality applied for the vectors $\sqrt{\rho_t}\ket{\phi_{i_t}^t }$ and $\sqrt{\rho_t}M_V^{\dagger}\ket{\phi_{i_t}^t }$  we can upper bound the second term as follows:
\begin{align}\label{eq: E3}
    (x-1)^2&= \left(\frac{\bra{\phi_{i_t}^t}\id \otimes \cN_V(\rho_t)\ket{\phi_{i_t}^t}}{\bra{\phi_{i_t}^t}\rho_t\ket{\phi_{i_t}^t}}-1\right)^2\notag
    \\&=  \left(  \frac{\Re(\bra{\phi_{i_t}^t}(\dI \otimes M_V) \rho_t\ket{\phi_{i_t}^t})}{\bra{\phi_{i_t}^t}\rho_t\ket{\phi_{i_t}^t}} - \frac{1}{2}\frac{\bra{\phi_{i_t}^t}(\dI \otimes M_V)\rho_t (\dI \otimes M_V^\dagger)\ket{\phi_{i_t}^t}}{\bra{\phi_{i_t}^t}\rho_t\ket{\phi_{i_t}^t}} \right)^2 \notag
     \\&\le 2\left(  \frac{|\bra{\phi_{i_t}^t}(\dI \otimes M_V) \rho_t\ket{\phi_{i_t}^t}|}{\bra{\phi_{i_t}^t}\rho_t\ket{\phi_{i_t}^t}}\right)^2 + 2\left(\frac{1}{2}\frac{\bra{\phi_{i_t}^t}(\dI \otimes M_V)\rho_t (\dI \otimes M_V^\dagger)\ket{\phi_{i_t}^t}}{\bra{\phi_{i_t}^t}\rho_t\ket{\phi_{i_t}^t}}     \right)^2 \notag
      \\&\le 2\left(   \frac{\bra{\phi_{i_t}^t}(\dI \otimes M_V) \rho_t (\dI \otimes M_V^\dagger)\ket{\phi_{i_t}^t}}{\bra{\phi_{i_t}^t}\rho_t\ket{\phi_{i_t}^t}}\right) + 2\left(\frac{1}{2}\frac{\bra{\phi_{i_t}^t}(\dI \otimes M_V)\rho_t  (\dI \otimes M_V^\dagger)\ket{\phi_{i_t}^t}}{\bra{\phi_{i_t}^t}\rho_t\ket{\phi_{i_t}^t}}     \right)^2  \notag
      \\&= 2\frac{\tr(B_{i_t}^{t, \dagger}  A_t  M_V^\top) \tr( A_t^\dagger  B_{i_t}^{t}  \bar{M}_V)}{\tr(B_{i_t}^{t, \dagger}  A_t)\tr( A_t^\dagger  B_{i_t}^{t})} + \frac{1}{2}\left(\frac{\tr(B_{i_t}^{t, \dagger}  A_t  M_V^\top) \tr( A_t^\dagger  B_{i_t}^{t}  \bar{M}_V)}{\tr(B_{i_t}^{t, \dagger}  A_t)\tr( A_t^\dagger  B_{i_t}^{t})}     \right)^2
\end{align}
 Let us compute the expectation of~\eqref{eq: E2}. Let $M, S$ such that $M_V=VMV^\dagger $ and  $S_V=VSV^\dagger $. Concretely 
	
\[
M=\left(\begin{array}{@{}c|c@{}}
  \begin{matrix}
	1-\sqrt{1-\eps^2} & -\eps    \\ 
	\eps &  1-\sqrt{1-\eps^2}    
	\end{matrix} 
  & \bigzero \\
\hline \\
  \bigzero &
  \bigzero_{d-2}  
\end{array}\right) ~~\text{ and }~~ S=\left(\begin{array}{@{}c|c@{}}
 \begin{matrix}
	\frac{1}{2}+\frac{\sqrt{1-\eps^2}}{2} & \frac{\eps}{2}    \\ 
	\frac{- \eps}{2} & \frac{1}{2}+\frac{\sqrt{1-\eps^2}}{2}    
	\end{matrix} 
  & \bigzero \\
\hline \\
  \bigzero &
  \bigeye_{d-2}  
\end{array}\right). 
\]
 Note that $\tr(M)= 2(1-\sqrt{1-\eps^2})$, $\tr(S)=d-1+\sqrt{1-\eps^2} $, 
 $\tr(MS^\dagger)=\tr( M^\dagger S)=0$ and  $MM^\dagger = M+M^\dagger= M^\dagger M $. 
Let $\eps'=(1-\sqrt{1-\eps^2})= \Theta(\eps^2) $, we have by  Weingarten calculus (Lem.~\ref{lem:Wg}):
\begin{align*}
    &\left|\mathds{E}_{V\sim \Haar(d)} \left( \frac{\Re\,\tr(B_{i_t}^{t, \dagger}  A_t  M_V^\top) \tr( A_t^\dagger  B_{i_t}^{t}  \bar{S}_V)}{\tr(B_{i_t}^{t, \dagger}  A_t)\tr( A_t^\dagger  B_{i_t}^{t})}  \right)\right|
    \\&=\left|\frac{1}{\tr(B_{i_t}^{t, \dagger}  A_t)\tr( A_t^\dagger  B_{i_t}^{t})}\Re \mathds{E}_{V\sim \Haar(d)} \left( \sum_{x,y}\bra{x} B_{i_t}^{t, \dagger}  A_t VM^\top V^\dagger \ket{x}\bra{y} A_t^\dagger  B_{i_t}^{t} V S V^\dagger \ket{y}\right)\right|\notag
    \\&=\left|\frac{1}{\tr(B_{i_t}^{t, \dagger}  A_t)\tr( A_t^\dagger  B_{i_t}^{t})}\Re \sum_{\alpha, \beta \in \fS_2} \sum_{x,y}\W(\alpha\beta) \tr_\alpha(M^\top,S)\tr_{\beta (12)}(\ket{x}\bra{y} A_t^\dagger  B_{i_t}^{t}, \ket{y}\bra{x} B_{i_t}^{t, \dagger}  A_t )\right|\notag
    \\&=\left|\Re\left(\frac{\tr(MS^\dagger)(d\tr( A_t^\dagger  B_{i_t}^{t}  B_{i_t}^{t, \dagger}  A_t )- |\tr( A_t^\dagger  B_{i_t}^{t})|^2  ) + \tr(M)\tr(S)(d|\tr( A_t^\dagger  B_{i_t}^{t})|^2-\tr( A_t^\dagger  B_{i_t}^{t}  B_{i_t}^{t, \dagger}  A_t ) ) }{d(d^2-1)\tr(B_{i_t}^{t, \dagger}  A_t)\tr( A_t^\dagger  B_{i_t}^{t})}\right) \right|\notag
   \\&=\left|\Re\left(\frac{ 2\eps'(d-\eps')(d|\tr( A_t^\dagger  B_{i_t}^{t})|^2-\tr( A_t^\dagger  B_{i_t}^{t}  B_{i_t}^{t, \dagger}  A_t ) ) }{d(d^2-1)\tr(B_{i_t}^{t, \dagger}  A_t)\tr( A_t^\dagger  B_{i_t}^{t})}\right) \right|\notag
    \\&\le \frac{2\eps^2 }{d} + \frac{2\eps^2\tr( A_t^\dagger  B_{i_t}^{t}  B_{i_t}^{t, \dagger}  A_t )}{(d^2-1) \tr(B_{i_t}^{t, \dagger}  A_t)\tr( A_t^\dagger  B_{i_t}^{t}) }
\end{align*}
Recall the notation $\mathds{E}_{i\le t}(X(i_1, \dots, i_t)) = \sum_{i_1, \dots, i_t} \prod_{k=1}^t \lambda_{i_k}^k\bra{\phi_{i_k}^k}\rho_k\ket{\phi_{i_k}^k} X(i_1, \dots, i_t) $. If we take the expectation $\mathds{E}_{i\le t}$ under the event $\cG^c(t, i_{\le t})$, we obtain
\begin{align}\label{eq: E2 UB}
     & \mathds{E}_{V\sim \Haar(d)} \mathds{E}_{i\le t} \bm{1}\{\cG^c(t, i_t)\} \left( \frac{\Re\bra{\phi_{i_t}^t}(\dI\otimes M_V)\rho_t (\dI\otimes S_V^\dagger) \ket{\phi_{i_t}^t}}{\bra{\phi_{i_t}^t}\rho_t\notag\ket{\phi_{i_t}^t}}    \right)\notag 
     \\&\le \mathds{E}_{i\le t} \bm{1}\{\cG^c(t, i_t)\} \left|\mathds{E}_{V\sim \Haar(d)} \left( \frac{\Re\bra{\phi_{i_t}^t}(\dI \otimes M_V)\rho_t (\dI\otimes S_V^\dagger) \ket{\phi_{i_t}^t}}{\bra{\phi_{i_t}^t}\rho_t\notag\ket{\phi_{i_t}^t}}    \right)\right|\notag 
     \\&\le \mathds{E}_{i\le t} \bm{1}\{\cG^c(t, i_t)\}\left( \frac{2\eps^2 }{d}+  \frac{2\eps^2\tr( A_t^\dagger  B_{i_t}^{t}  B_{i_t}^{t, \dagger}  A_t )}{(d^2-1) \tr(B_{i_t}^{t, \dagger}  A_t)\tr( A_t^\dagger  B_{i_t}^{t}) }\right)\notag
     \\&\le  \mathds{E}_{i\le t} \left( \frac{2\eps^2 }{d}+  \frac{2\eps^2\tr( A_t^\dagger  B_{i_t}^{t}  B_{i_t}^{t, \dagger}  A_t )}{(d^2-1) \tr(B_{i_t}^{t, \dagger}  A_t)\tr( A_t^\dagger  B_{i_t}^{t}) } \right)\notag
     \\&= \frac{2\eps^2}{d}+ \mathds{E}_{i_{\le t-1}} \sum_{i_t}  \lambda_{i_t}^t \bra{\phi_{i_t}^t}\rho_t\ket{\phi_{i_t}^t}\times  \frac{2\eps^2\tr( A_t^\dagger  B_{i_t}^{t}  B_{i_t}^{t, \dagger}  A_t )}{(d^2-1) \bra{\phi_{i_t}^t}\rho_t\ket{\phi_{i_t}^t} } \notag
    \\ &= \frac{2\eps^2}{d} + \mathds{E}_{i_{\le t-1}} \frac{2\eps^2}{d^2-1 }\sum_{i_t}  \lambda_{i_t}^t\tr( A_t^\dagger  B_{i_t}^{t}  B_{i_t}^{t, \dagger}  A_t )  \notag
 \\ &= \frac{2\eps^2}{d} + \mathds{E}_{i_{\le t-1}} \frac{2\eps^2}{d^2-1 }\tr( A_t^\dagger  d  A_t )  
\le \frac{5\eps^2}{d}
\end{align}
where we use  $\sum_{i_t} \lambda_{i_t}^t B_{i_t}^{t}B_{i_t}^{t, \dagger} = d\,\dI$ and $\tr(A_tA_t^\dagger)=1$. We move to the expectation $\mathds{E}_{i\le t} $ of the first term of~\eqref{eq: E3}, it is non negative so we can safely remove the condition $\bm{1}\{\cG^c(t, i_t)\} $:
 \begin{align}\label{eq: E3-1 UB}
 & \mathds{E}_V\mathds{E}_{i\le t} \bm{1}\{\cG^c(t, i_t)\} \frac{2\bra{\phi_{i_t}^t}(\dI\otimes M_{V})\rho_t (\dI\otimes M_{V}^\dagger)\ket{\phi_{i_t}^t}}{\bra{\phi_{i_t}^t}\rho_t\ket{\phi_{i_t}^t}} \notag
 \\&\le \mathds{E}_V\mathds{E}_{i\le t} \frac{2\bra{\phi_{i_t}^t}(\dI\otimes M_{V})\rho_t (\dI\otimes M_{V}^\dagger)\ket{\phi_{i_t}^t}}{\bra{\phi_{i_t}^t}\rho_t\ket{\phi_{i_t}^t}}\notag
  \\&= \mathds{E}_V\mathds{E}_{i\le t-1}\sum_{i_t}2\lambda_{i_t}^t  \bra{\phi_{i_t}^t}(\dI\otimes M_{V})\rho_t (\dI\otimes  M_{V}^\dagger)\ket{\phi_{i_t}^t}\notag
  \\&= \mathds{E}_{i\le t-1}\mathds{E}_V 2\tr((\dI\otimes M_{V})\rho_t(\dI\otimes M_{V}^\dagger))\notag
  \\& = \mathds{E}_{i\le t-1}\mathds{E}_V 2\tr\big[\left(\dI\otimes (M_{V}+M_{V}^\dagger)\right)\rho_t\big] \notag
   \\&=\mathds{E}_{i\le t-1} \frac{8\eps'}{d}\le  \frac{8\eps^2}{d}
 \end{align}
because $\mathds{E}_{V} M_{V} = \frac{2(1-\sqrt{1-\eps^2})}{d} \dI\preccurlyeq \frac{2\eps^2}{d} \dI$. 
\\Concerning the expectation of the second term of~\eqref{eq: E3}, we apply again the Weingarten calculus (Lem.~\ref{lem:Wg}) after denoting $C=A_t^\dagger B^t_{i_t}$:
\begin{align*}
   & \mathds{E}_{V\sim \Haar(d)}  \frac{1}{2}\left( \frac{\tr(B_{i_t}^{t, \dagger}  A_t  M_V^\top) \tr( A_t^\dagger  B_{i_t}^{t}  \bar{M}_V)}{\bra{\phi_{i_t}^t}\rho_t\ket{\phi_{i_t}^t}}      \right)^2
   =   \frac{1}{2}\frac{\mathds{E}_{V\sim \Haar(d)}\tr(B_{i_t}^{t, \dagger}  A_t  M_V^\top)^2 \tr( A_t^\dagger  B_{i_t}^{t}  \bar{M}_V)^2}{\bra{\phi_{i_t}^t}\rho_t\ket{\phi_{i_t}^t}^2} 
   \\&= \frac{1}{2}\sum_{x,y,z,t}\frac{\mathds{E}_{V\sim \Haar(d)}\tr(\ket{t}        \bra{x}C^\dagger VM^\top V^\dagger\ket{x}\bra{z}C VM V^\dagger\ket{z}\bra{y}C^\dagger VM^\top V^\dagger\ket{y} \bra{t}C VM V^\dagger)  }{\bra{\phi_{i_t}^t}\rho_t\ket{\phi_{i_t}^t}^2} 
   \\&= \frac{1}{2\bra{\phi_{i_t}^t}\rho_t\ket{\phi_{i_t}^t}^2}\sum_{x,y,z,t} \sum_{\alpha, \beta \in \fS_4}\W(\alpha\beta) \tr_{\beta}(M^\top, M, M^\top,M)\tr_{\alpha \gamma} (\ket{t}\bra{x}C^\dagger, \ket{x}\bra{z}C, \ket{z}\bra{y}C^\dagger, \ket{y} \bra{t}C)
\end{align*}
We can remark that for all $\alpha\in \fS_4$ there is $\beta \in \fS_4$ such that :  
\begin{align*}
    \left| \sum_{x,y,z,t} \tr_{\alpha } (\ket{t}\bra{x}C^\dagger, \ket{x}\bra{z}C, \ket{z}\bra{y}C^\dagger, \ket{y} \bra{t}C)\right|&=\left|  \tr_{\beta }(C^\dagger, C, C^\dagger, C)\right|
     \\&\le \max\{\tr(CC^\dagger), |\tr(C)|^2\}^2
\end{align*}
where the last inequality is only non trivial for $\beta=(ijk)$. For instance if $\beta= (123)$ we have by the Cauchy Schwarz inequality:
\begin{align*}
    \tr_\beta(C^\dagger, C, C^\dagger,C)&= \tr(C^\dagger C C^\dagger)\tr(C)
    \le \sqrt{\tr(CC^\dagger CC^\dagger)\tr(CC^\dagger)} |\tr(C)| 
    \\&\le \tr(CC^\dagger)^{3/2} |\tr(C)| 
    \\&\le \max\{\tr(CC^\dagger), |\tr(C)|^2\}^{3/2}\max\{\tr(CC^\dagger), |\tr(C)|^2\}^{1/2}
    \\&= \max\{\tr(CC^\dagger), |\tr(C)|^2\}^2.
\end{align*}
 Moreover, it is clear that when $\beta $ is not a $4$-cycle, we have $|\tr_{\beta}(M, M^\top, M, M^\top)|\le \cO(\eps^4)$ since it can be written as a product of at least two elements each of them is $\cO(\eps^2)$. In the case $\beta$ is a $4$ cycle we have $\tr(MM^\top MM^\top )= \tr(MMM^\top M^\dagger )= \tr((M+M^\top)^2)= 2(2-2\sqrt{1-\eps^2})^2\le 8\eps^4$. On the other hand,  we know that for all $(\alpha, \beta)\in \fS_4^2$, $|\W(\alpha\beta)|\le \frac{2}{d^4}$~\citep{collins2006integration} so 
 \begin{align*}
     &\sum_{x,y,z,t} \sum_{\alpha, \beta \in \fS_4}\W(\alpha\beta) \tr_{\beta}(M^\top, M, M^\top,M)\tr_{\alpha \gamma} (\ket{t}\bra{x}C^\dagger, \ket{x}\bra{z}C, \ket{z}\bra{y}C^\dagger, \ket{y} \bra{t}C)
     \\&\le \cO\left(\frac{\eps^4}{d^4}\max\{\tr(CC^\dagger), |\tr(C)|^2\}^2\right).
 \end{align*}
 Therefore we have:
\begin{align*}
   &  \mathds{E}_{V\sim \Haar(d)}  \frac{1}{2}\left( \frac{\tr(B_{i_t}^{t, \dagger}  A_t  M_V^\top) \tr( A_t^\dagger  B_{i_t}^{t}  \bar{M}_V)}{\bra{\phi_{i_t}^t}\rho_t\ket{\phi_{i_t}^t}}      \right)^2
   \le \cO\left( \frac{\eps^4\max\{\tr(CC^\dagger), |\tr(C)|^2\}^2}{ d^4\bra{\phi_{i_t}^t}\rho_t\ket{\phi_{i_t}^t}^2}\right).
\end{align*}
 Recall that $C=A_t^\dagger B^t_{i_t}$, now, if we take the expectation $\mathds{E}_{i\le t} $ under the event \[\cG^c(t, i_{\le t})= \left\{ \bra{\phi_{i_t}^t}\rho_t\ket{\phi_{i_t}^t} > \frac{\eps^2}{d^2} \tr(A_t^\dagger B_{i_t}^{t}B_{i_t}^{t, \dagger} A_t)\right\} = \left\{ |\tr(C)|^2 > \frac{\eps^2}{d^2} \tr(CC^\dagger)\right\},\] we obtain:
\begin{align}\label{eq: E3-2 UB}
   & \mathds{E}_{i_{\le t}}\mathds{E}_{V\sim \Haar(d)}\bm{1}\{\cG^c(t, i_{\le t})\} \frac{1}{2}\left( \frac{\tr(B_{i_t}^{t, \dagger}  A_t  M_V^\top) \tr( A_t^\dagger  B_{i_t}^{t}  \bar{M}_V)}{\bra{\phi_{i_t}^t}\rho_t\ket{\phi_{i_t}^t}} \right)^2\notag
   \\&\le \mathds{E}_{i_{\le t}}\bm{1}\{\cG^c(t, i_{\le t})\}\cO\left( \frac{\eps^4\max\{\tr(CC^\dagger), |\tr(C)|^2\}^2}{ d^4\bra{\phi_{i_t}^t}\rho_t\ket{\phi_{i_t}^t}^2} \right)\notag
   \\&\le \mathds{E}_{i_{\le t}}\bm{1}\{\cG^c(t, i_{\le t})\}\cO\left( \frac{\eps^4|\tr(C)|^4}{d^4\bra{\phi_{i_t}^t}\rho_t\ket{\phi_{i_t}^t}^2}+\frac{\eps^4}{ d^4\bra{\phi_{i_t}^t}\rho_t\ket{\phi_{i_t}^t}}\times \frac{\tr(CC^\dagger)^2}{ \bra{\phi_{i_t}^t}\rho_t\ket{\phi_{i_t}^t}}  \right)\notag
    \\&= \mathds{E}_{i_{\le t}}\bm{1}\{\cG^c(t, i_{\le t})\}\cO\left( \frac{\eps^4}{d^4}+\frac{\eps^4}{ d^4\bra{\phi_{i_t}^t}\rho_t\ket{\phi_{i_t}^t}}\times \frac{\tr(CC^\dagger)^2}{ |\tr(C)|^2}\right)\notag
   \\&\le \mathds{E}_{i_{\le t}}\bm{1}\{\cG^c(t, i_{\le t})\}\cO\left( \frac{\eps^4}{d^4}+\frac{\eps^4}{ d^4\bra{\phi_{i_t}^t}\rho_t\ket{\phi_{i_t}^t}}\times \frac{d^2\tr(CC^\dagger)}{\eps^2}\right) ~~~~  \left(\text{under } \cG^c : |\tr(C)|^2 > \frac{\eps^2}{d^2} \tr(CC^\dagger)\right)\notag  
   \\&\le\mathds{E}_{i_{\le t-1}} \sum_{i_t}  \lambda_{i_t}^t \bra{\phi_{i_t}^t}\rho_t\ket{\phi_{i_t}^t}\cO\left(\frac{\eps^4}{d^4}+ \frac{\eps^2\tr(A_t^\dagger B^t_{i_t}B_{i_t}^{t, \dagger} A_t)}{ d^2\bra{\phi_{i_t}^t}\rho_t\ket{\phi_{i_t}^t}}\right) \notag
   \\&\le \mathds{E}_{i_{\le t-1}}  \cO\left( \frac{\eps^4}{ d^4}\right)+ \cO\left( \frac{\eps^2\sum_{i_t}  \lambda_{i_t}^t\tr(A_t^\dagger B^t_{i_t}B_{i_t}^{t, \dagger} A_t)}{ d^2}\right) \notag
    \\&\le \mathds{E}_{i_{\le t-1}}  \cO\left( \frac{\eps^4}{ d^4}\right)+ \cO\left( \frac{\eps^2\tr(A_t^\dagger d A_t)}{ d^2}\right) 
   = \mathds{E}_{i_{\le t-1}}  \cO\left( \frac{d\eps^2}{ d^2}\right)=\cO\left( \frac{\eps^2}{ d}\right)
\end{align}
where we use  $\sum_{i_t} \lambda_{i_t}^t B_{i_t}^{t}B_{i_t}^{t, \dagger} = d\,\dI$ and $\tr(A_tA_t^\dagger)=1$. By adding up \eqref{eq: E2 UB}, \eqref{eq: E3-1 UB} and \eqref{eq: E3-2 UB}, we obtain:
\begin{align*}
   & \mathds{E}_{i_{\le t}}\mathds{E}_{V\sim \Haar(d)}\bm{1}\{\cG^c(t, i_{\le t})\}(-\log) \left( \frac{\bra{\phi_{i_t}^t}\id \otimes \cN_V(\rho_t)\ket{\phi_{i_t}^t}}{\bra{\phi_{i_t}^t}\rho_t\ket{\phi_{i_t}^t}}\right)
    \\&\le \mathds{E}_{i_{\le t}}\mathds{E}_{V\sim \Haar(d)} \bm{1}\{\cG^c(t, i_{\le t})\}\left(\frac{\Re(\bra{\phi_{i_t}^t}(\dI\otimes M_V)\rho_t(\dI\otimes S_V^\dagger) \ket{\phi_{i_t}^t})}{\bra{\phi_{i_t}^t}\rho_t\ket{\phi_{i_t}^t}}    + \frac{1}{2}\frac{\bra{\phi_{i_t}^t}(\dI\otimes M_V)\rho_t(\dI\otimes M_V^\dagger)\ket{\phi_{i_t}^t}^2}{\bra{\phi_{i_t}^t}\rho_t\ket{\phi_{i_t}^t}^2}     \right)
    \\&+\mathds{E}_{i_{\le t}}\mathds{E}_{V\sim \Haar(d)} \bm{1}\{\cG^c(t, i_{\le t})\}  \left(   2\frac{\bra{\phi_{i_t}^t}(\dI\otimes M_V) \rho_t(\dI\otimes M_V^\dagger)\ket{\phi_{i_t}^t}}{\bra{\phi_{i_t}^t}\rho_t\ket{\phi_{i_t}^t}}\right)=  \cO\left(\frac{\eps^2}{ d}\right).
\end{align*}
Therefore using this upper bound and the upper bound  \eqref{eq: E1} we get an upper bound on the expected $\KL$ divergence:
\begin{align*}
\mathds{E}_{V\sim \Haar(d)}\KL(P\| Q_V)&=  \sum_{t=1}^N\mathds{E}_{i\le t}\mathds{E}_{V\sim \Haar(d)}(\bm{1}\{\cG(t, i_{\le t})\}+ \bm{1}\{\cG^c(t, i_{\le t})\})(-\log) \left( \frac{\bra{\phi_{i_t}^t}\id \otimes \cN_V(\rho_t)\ket{\phi_{i_t}^t}}{\bra{\phi_{i_t}^t}\rho_t\ket{\phi_{i_t}^t}}\right)
     \\&\le\sum_{t=1}^N \cO\left(\frac{\eps^2}{d}\right)+\cO\left(\frac{\eps^2}{d}\right)= \cO\left(\frac{N\eps^2}{d}\right).
\end{align*}
Finally since $\mathds{E}_{V\sim \Haar(d)}\KL(P\| Q_V)\ge \frac{\log(2)}{3}$ we conclude:
\begin{align*}
    N=\Omega\left( \frac{d}{\eps^2}\right).
\end{align*}
\end{proof}

\section{Deferred proofs of the analysis of testing identity to the depolarizing channel Alg.~\ref{alg-dep}}\label{app:analysis-dep}
\subsection{Proof of Lem.~\ref{lem:diamond-2-choi}} \label{app-lem:diamond-2-choi}
\begin{lemma}
    Let $\cN_1$ and $\cN_2$ be two  quantum channels and $\din$ (resp. $\dout$) be the dimension of their input (resp. output) states. We have:
\begin{align*}
 d_\diamond(\cN_1,\cN_2)\le   \din\sqrt{\dout}\|\cJ_{\cN_1}-\cJ_{\cN_2}  \|_2.
\end{align*}
\end{lemma}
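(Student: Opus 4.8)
The plan is to bound the diamond distance directly by the Hilbert--Schmidt norm of the Choi difference, carefully tracking dimension factors, rather than going through the $1$-norm of the Choi states (which would lose an extra $\sqrt{\din}$). Write $\Delta := \cN_1 - \cN_2$, a Hermiticity-preserving map, and $\cJ_\Delta := \cJ_{\cN_1} - \cJ_{\cN_2} = \id\otimes\Delta(\proj{\Psi})$. Since the trace norm is convex and the ancilla may be taken of dimension $\din$, the maximum defining $\dd(\cN_1,\cN_2)$ is attained at a pure state $\ket{\psi}\in\mathbb{C}^{\din}\otimes\mathbb{C}^{\din}$, which I would parametrize as $\ket{\psi} = (\dI\otimes X)\ket{w}$ with $\ket{w} = \sum_{i=1}^{\din}\ket{ii}$ and $\tr(X^\dagger X) = \spr{\psi}{\psi} = 1$; this parametrization is a bijection onto unit vectors.

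The first step is the key algebraic identity. Using the transpose trick $(\dI\otimes X)\ket{w} = (X^\top\otimes\dI)\ket{w}$, I would rewrite $\proj{\psi} = (X^\top\otimes\dI_{\din})\proj{w}(\bar X\otimes\dI_{\din})$ and push the ancilla-side factors through $\id\otimes\Delta$ (which acts trivially there) to obtain
\begin{align*}
   (\id\otimes\Delta)(\proj{\psi}) = \din\,(X^\top\otimes\dI_{\dout})\,\cJ_\Delta\,(\bar X\otimes\dI_{\dout}),
\end{align*}
using that $\id\otimes\Delta(\proj{w}) = \din\,\cJ_\Delta$ since $\ket{w} = \sqrt{\din}\ket{\Psi}$.

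The second step is the norm estimate. Setting $P := X^\top\otimes\dI_{\dout}$, so that the right-hand side is $\din\,P\cJ_\Delta P^\dagger$, I would split the trace norm \emph{asymmetrically} via Hölder for Schatten norms: $\|P\cJ_\Delta P^\dagger\|_1 \le \|P\|_2\,\|\cJ_\Delta P^\dagger\|_2 \le \|P\|_2\,\|\cJ_\Delta\|_2\,\|P^\dagger\|_{\infty}$. Here $\|P\|_2 = \|X^\top\|_2\,\|\dI_{\dout}\|_2 = \sqrt{\dout}$ (since $\|X\|_2 = 1$), while $\|P^\dagger\|_{\infty} = \|X\|_{\infty} \le \|X\|_2 = 1$. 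Combining the three pieces yields $\|(\id\otimes\Delta)(\proj{\psi})\|_1 \le \din\sqrt{\dout}\,\|\cJ_\Delta\|_2$, and taking the maximum over $\ket{\psi}$ gives $\dd(\cN_1,\cN_2)\le\din\sqrt{\dout}\,\|\cJ_{\cN_1}-\cJ_{\cN_2}\|_2$, which is the claim.

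The main obstacle is precisely the asymmetric distribution of norms in the Hölder step: a symmetric split $\|P\|_2\|\cJ_\Delta\|_2\|P\|_2$, or a detour through $\|\cJ_\Delta\|_1 \le \sqrt{\din\dout}\,\|\cJ_\Delta\|_2$, would produce the suboptimal factor $\din^{3/2}\sqrt{\dout}$. The point is that the $\sqrt{\dout}$ must come entirely from $\|\dI_{\dout}\|_2$ in a single factor $\|P\|_2$, while the other factor contributes only its operator norm $\|X\|_\infty\le 1$; the free tensor identity on the output space is what supplies the desired $\sqrt{\dout}$ with no spurious $\din$ dependence.
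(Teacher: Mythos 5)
Your proof is correct and follows essentially the same route as the paper's: both reduce the diamond norm to $\|(A\otimes\dI)\,\cJ_\Delta\,(A^\dagger\otimes\dI)\|_1$ for an operator $A$ acting on the ancilla side of the maximally entangled state (your normalized $X^\top$), and then trade the trace norm for $\|\cJ_\Delta\|_2$ at a cost of $\din\sqrt{\dout}$. The only difference is the final estimate --- the paper eigendecomposes $\cJ_\Delta$ and applies Cauchy--Schwarz together with $\tr\big((A^\dagger A)^2\big)\le(\tr(A^\dagger A))^2$, whereas you use the asymmetric H\"older split $\|P\|_2\,\|\cJ_\Delta\|_2\,\|P\|_\infty$ with $\|X\|_\infty\le\|X\|_2$ --- which extracts the same dimension factors in a slightly cleaner way.
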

\begin{proof}
    Denote by $\cM=\cN_1-\cN_2$ and $\cJ=\cJ_\cM= \cJ_{\cN_1}-\cJ_{\cN_2} $. Let $\ket{\phi}$ be a maximizing unit vector of the diamond norm, i.e., $\|\id \otimes \cM(\proj{\phi})\|_1= d_\diamond(\cN_1,\cN_2)$.
We can write $\ket{\phi}= A\otimes \dI \ket{\Psi} $ where $\ket{\Psi}=\frac{1}{\sqrt{\din}}\sum_{i=1}^{\din} \ket{i}\otimes \ket{i}$ is the maximally entangled state. $\ket{\phi}$ has norm $1$ so $\tr(A^\dagger A)=\din\bra{\Psi}A^\dagger A \otimes \dI \ket{\Psi} =  \din\spr{\phi}{\phi}=\din. $ We can write the diamond distance as follows: 
\begin{align*}
d_\diamond(\cN_1,\cN_2)  &= \|\id \otimes \cM(\proj{\phi})\|_1 =  \|\id\otimes \cM(A\otimes \dI\proj{\Psi} A^\dagger \otimes \dI)\|_1 
 \\&= \|(A\otimes \dI) \id\otimes \cM(\proj{\Psi})( A^\dagger \otimes \dI)\|_1 
 = \|(A\otimes \dI)\cJ_{\cM}( A^\dagger \otimes \dI)\|_1.
\end{align*}
$\cJ_{\cM}$ is Hermitian so can be written as : $\cJ_{\cM}=\sum_i \lambda_i \proj{\psi_i}$. Using the triangle inequality and the Cauchy Schwarz inequality, we obtain:
\begin{align*}
&\|(A\otimes \dI)\cJ_{\cM}( A^\dagger \otimes \dI)\|_1
= \left\|(A\otimes \dI)\sum_i \lambda_i \proj{\psi_i}( A^\dagger \otimes \dI)\right\|_1
\\&\le \sum_i |\lambda_i| \|(A\otimes \dI) \proj{\psi_i}( A^\dagger \otimes \dI)\|_1
\le \sqrt{\sum_i \lambda_i^2}\sqrt{\sum_i \bra{\psi_i} ( A^\dagger A \otimes \dI)\ket{\psi_i}^2}
\\&\le \|\cJ\|_2\sqrt{\sum_i \bra{\psi_i} ( A^\dagger A A^\dagger A \otimes \dI)\ket{\psi_i}}
= \|\cJ_\cM\|_2\sqrt{\tr ( A^\dagger A A^\dagger A \otimes \dI)}
\\&= \|\cJ_\cM\|_2\sqrt{\dout\tr ( A^\dagger A A^\dagger A )}
\le \|\cJ_\cM\|_2\sqrt{\dout (\tr ( A^\dagger A))^2}= \din\sqrt{\dout}\|\cJ_\cM\|_2.
\end{align*}
\end{proof}

\subsection{Proof of Lem.~\ref{lem:2-2 norms}}
\begin{lemma}\label{lem-app:2-2 norms}
Let $\cJ= \id\otimes \cN(\proj{\Psi})$ be the Choi state corresponding to the channel $\cN$.
\begin{align*}
    \ex{\left\|\cN(\proj{\phi}) -\frac{\dI}{\dout}\right\|_2^2}&=\frac{\left( \tr\left(\cN(\dI) -\frac{\din}{\dout}\dI\right)^2  +\din^2\left\|\cJ -\frac{\dI}{\din\dout}\right\|_2^2\right)}{\din(\din+1)} 
   \ge \frac{\din}{\din+1}\left\|\cJ -\frac{\dI}{\din\dout}\right\|_2^2 .
\end{align*}
\end{lemma}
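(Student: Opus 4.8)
The plan is to pass to the difference map $\cM=\cN-\cD$, for which $\cM(\proj{\phi})=\cN(\proj{\phi})-\frac{\dI}{\dout}$, $\cM(\dI)=\cN(\dI)-\frac{\din}{\dout}\dI$ and $\cJ_\cM=\cJ-\frac{\dI}{\din\dout}$, so that the claim reduces to proving $\ex{\|\cM(\proj{\phi})\|_2^2}=\frac{\|\cM(\dI)\|_2^2+\din^2\|\cJ_\cM\|_2^2}{\din(\din+1)}$. Since $\cM$ is a difference of two Hermitian-preserving maps, $\cM(\proj{\phi})$ is Hermitian, so $\|\cM(\proj{\phi})\|_2^2=\tr(\cM(\proj{\phi})^2)$. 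I would then apply the replica trick $\tr(A^2)=\tr\big((A\otimes A)\F\big)$ with $A=\cM(\proj{\phi})$, giving $\ex{\|\cM(\proj{\phi})\|_2^2}=\tr\big(\ex{\cM(\proj{\phi})\otimes\cM(\proj{\phi})}\,\F\big)$.

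The next step uses linearity to pull the expectation through $\cM\otimes\cM$, namely $\ex{\cM(\proj{\phi})^{\otimes2}}=(\cM\otimes\cM)\big(\ex{(\proj{\phi})^{\otimes2}}\big)$, and then the standard second-moment identity for a $\Haar$ vector on $\mathbb{C}^{\din}$ (equivalently Weingarten calculus, Lem.~\ref{lem:Wg}): $\ex{(\proj{\phi})^{\otimes2}}=\frac{\dI+\F}{\din(\din+1)}$, the normalized projector onto the symmetric subspace. This decomposes the trace into two contributions, each carrying the prefactor $\frac{1}{\din(\din+1)}$.

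For the $\dI$ contribution I would use $(\cM\otimes\cM)(\dI)=\cM(\dI)\otimes\cM(\dI)$ and the replica trick once more to get $\tr\big((\cM(\dI)\otimes\cM(\dI))\F\big)=\tr(\cM(\dI)^2)=\|\cM(\dI)\|_2^2$. For the $\F$ contribution I would expand $\F=\sum_{i,j}\ket{i}\bra{j}\otimes\ket{j}\bra{i}$ and apply the replica trick termwise to obtain $\tr\big((\cM\otimes\cM)(\F)\,\F\big)=\sum_{i,j}\tr\big(\cM(\ket{i}\bra{j})\cM(\ket{j}\bra{i})\big)$. The decisive identification is that this sum equals $\din^2\|\cJ_\cM\|_2^2$: starting from $\cJ_\cM=\id\otimes\cM(\proj{\Psi})=\frac1\din\sum_{i,j}\ket{i}\bra{j}\otimes\cM(\ket{i}\bra{j})$, squaring $\cJ_\cM$ and tracing over the ancilla register collapses the indices precisely into $\frac{1}{\din^2}\sum_{i,j}\tr\big(\cM(\ket{i}\bra{j})\cM(\ket{j}\bra{i})\big)$, i.e.\ $\tr(\cJ_\cM^2)=\|\cJ_\cM\|_2^2$. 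Adding the two contributions gives the stated equality, and the inequality follows at once since $\|\cM(\dI)\|_2^2\ge0$ and $\frac{\din^2}{\din(\din+1)}=\frac{\din}{\din+1}$.

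I expect the main obstacle to be the index bookkeeping in the $\F$ contribution, specifically matching the flip operator against the definition of the Choi state so that $\sum_{i,j}\tr\big(\cM(\ket{i}\bra{j})\cM(\ket{j}\bra{i})\big)$ is cleanly recognized as $\din^2\tr(\cJ_\cM^2)$. Once the symmetric-subspace second moment and the replica trick are in hand, the remaining computations are routine.
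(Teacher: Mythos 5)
Your argument is correct, and it reaches the identity by a genuinely different route from the paper. The paper works with $\cN$ itself: it expands $\cN(\rho)=\sum_k A_k\rho A_k^\dagger$ in Kraus form, applies Weingarten calculus to $\sum_{k,l}\ex{\tr(A_k\proj{\phi}A_k^\dagger A_l\proj{\phi}A_l^\dagger)}$ to get $\frac{1}{\din(\din+1)}\big(\tr(\cN(\dI)^2)+\sum_{k,l}|\tr(A_kA_l^\dagger)|^2\big)$, then separately identifies $\sum_{k,l}|\tr(A_k^\dagger A_l)|^2=\din^2\tr(\cJ^2)$ and performs the shifts $\tr(\cN(\dI)^2)=\|\cM(\dI)\|_2^2+\din^2/\dout$ and $\ex{X}=\ex{\tr(\cN(\proj{\phi})^2)}-1/\dout$ at the end. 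You instead center the whole computation on the difference map $\cM=\cN-\cD$ from the outset, replace the Kraus-level Weingarten computation by the symmetric-subspace second moment $\ex{\proj{\phi}^{\otimes 2}}=\frac{\dI+\F}{\din(\din+1)}$ (which is the same $k=2$ moment in disguise), and use the replica trick to convert the $\dI$ and $\F$ contributions into $\|\cM(\dI)\|_2^2$ and $\sum_{i,j}\tr(\cM(\ket{i}\bra{j})\cM(\ket{j}\bra{i}))=\din^2\|\cJ_\cM\|_2^2$ respectively; the latter identification is exactly the paper's Lem.~\ref{lem: eta= }, and it does require the (easily supplied) observation that $\cM$ is Hermiticity-preserving so that $\tr(\cM(\ket{i}\bra{j})\cM(\ket{j}\bra{i}))=\|\cM(\ket{i}\bra{j})\|_2^2$. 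What your approach buys is cleaner bookkeeping --- no Kraus operators, no end-of-proof recentering, and a decomposition that transparently attributes the two terms to the $\dI$ and $\F$ parts of the symmetric projector; what the paper's approach buys is that the Kraus-level quantities $\tr(\cN(\dI)^2)$ and $\sum_{k,l}|\tr(A_kA_l^\dagger)|^2$ it produces are reused verbatim in the much harder fourth-moment computation of Thm.~\ref{thm:PZ-D}, so the two proofs share notation and intermediate objects.
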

\begin{proof}
We write
\begin{align*}
    \ex{\left\|\cN(\proj{\phi}) -\frac{\dI}{\dout}\right\|_2^2}= \ex{\tr(\cN(\proj{\phi})^2)} -\frac{1}{\dout}.
\end{align*}
then if we use the Kraus decomposition of the quantum channel  $\cN(\rho)=\sum_k A_k \rho A_k^\dagger$, we can compute the following expectation using Weingarten calculus (Lem.~\ref{lem:Wg} and Lem.~\ref{lem:wg3}):
\begin{align*}
    \ex{\tr(\cN(\proj{\phi})^2) }&= \ex{\tr\left(\sum_k A_k \proj{\phi}A_k^\dagger\right)^2}= \sum_{k,l} \ex{\tr(A_k\proj{\phi}A_k^\dagger A_l \proj{\phi}A_l^\dagger) }
    \\&= \frac{1}{\din(\din+1)} \left( \sum_{k,l} \tr(A_k^\dagger A_l A_l^\dagger A_k) +\sum_{k,l}|\tr(A_kA_l^\dagger)|^2\right)
    \\&= \frac{1}{\din(\din+1)} \left(  \tr\left(\sum_{k}A_kA_k^\dagger \right)^2 +\sum_{k,l}|\tr(A_kA_l^\dagger)|^2\right)
    \\&= \frac{1}{\din(\din+1)} \left( \tr(\cN(\dI)^2)+\sum_{k,l}|\tr(A_kA_l^\dagger)|^2\right)
\end{align*}
Observe that $\tr(\cN(\dI)^2)= \tr\left(\cN(\dI) -\frac{\din}{\dout}\dI\right)^2 +\frac{\din^2}{\dout}$. 
Moreover, 
\begin{align*}
    \tr(\cJ^2)&= \tr\left(\frac{1}{\din }\sum_{i,j} \ket{i}\bra{j}\otimes \cN(\ket{i}\bra{j})  \right)^2
   = \frac{1}{\din^2} \sum_{i,j} \tr(\cN(\ket{i}\bra{j}) \cN(\ket{j}\bra{i}))
    \\&= \frac{1}{\din^2} \sum_{i,j,k,l} \tr(A_k\ket{i}\bra{j}A_k^\dagger A_l\ket{j}\bra{i}A_l^\dagger)= \frac{1}{\din^2} \sum_{k,l} |\tr(A_k^\dagger A_l)|^2
\end{align*}
hence: 
\begin{align*}
    \ex{\tr(\cN(\proj{\phi})^2) }&= \ex{\tr\left(\sum_k A_k \proj{\phi}A_k^\dagger\right)^2}
    \\&=  \frac{1}{\din(\din+1)} \left(\tr\left(\cN(\dI) -\frac{\din}{\dout}\dI\right)^2 +\frac{\din^2}{\dout} +\din^2\tr(\cJ^2)\right)
\end{align*}
Finally, 
\begin{align*}
    \ex{\left\|\cN(\proj{\phi}) -\frac{\dI}{\dout}\right\|_2^2}&= \ex{\tr(\cN(\proj{\phi})^2)} -\frac{1}{\dout}
    \\&= \frac{1}{\din(\din+1)} \left( \tr\left(\cN(\dI) -\frac{\din}{\dout}\dI\right)^2 +\frac{\din^2}{\dout} +\din^2\tr(\cJ^2)\right) -\frac{1}{\dout}
    \\&= \frac{1}{\din(\din+1)} \left( \tr\left(\cN(\dI) -\frac{\din}{\dout}\dI\right)^2  +\din^2\tr(\cJ^2)-\frac{\din}{\dout}\right) 
    \\&=\frac{1}{\din(\din+1)} \left( \tr\left(\cN(\dI) -\frac{\din}{\dout}\dI\right)^2  +\din^2\left\|\cJ -\frac{\dI}{\din\dout}\right\|_2^2\right).
\end{align*}
\end{proof}
\subsection{Proof of Thm~\ref{thm:PZ-D}}\label{app-thm:PZ-D}
\begin{theorem}\label{thm-app:PZ-D} Let $\ket{\phi} $ be  a  $\Haar$  distributed  vector in $\bS^d$. Let  $X=\left\|\cN(\proj{\phi}) -\frac{\dI}{\dout}\right\|_2^2$. We have:
    \begin{align*}
    \var(X)= \cO\left( \ex{X}^2\right). 
\end{align*}
\end{theorem}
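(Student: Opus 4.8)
The plan is to turn $\var(X)$ into a Haar fourth-moment computation, expand it by Weingarten calculus over $\fS_4$, and show that after subtracting the square of the mean only terms of order $(m^2+\eta^2)^2/\din^4$ survive, where $m=\|\cN(\dI)-\tfrac{\din}{\dout}\dI\|_2$ and $\eta=\din\|\cJ_\cN-\tfrac{\dI}{\din\dout}\|_2$. First I would use that $\cN(\proj{\phi})$ has unit trace, so that $X=\tr[\cN(\proj{\phi})^2]-1/\dout$ and hence $\var(X)=\var\big(\tr[\cN(\proj{\phi})^2]\big)$, removing the additive constant. Writing $\cN(\rho)=\sum_k A_k\rho A_k^\dagger$, the quantity $\big(\tr[\cN(\proj{\phi})^2]\big)^2$ is homogeneous of degree $4$ in each of $\ket{\phi}$ and $\bra{\phi}$, so I would apply the Haar fourth-moment formula (Lem.~\ref{lem:Wg}, \ref{lem:wg3}), which writes $\ex{(\proj{\phi})^{\otimes 4}}$ as a normalised sum of the permutation operators indexed by $\fS_4$, to obtain $\ex{(\tr[\cN(\proj{\phi})^2])^2}=\tfrac{1}{\din(\din+1)(\din+2)(\din+3)}\sum_{\alpha\in\fS_4}F(\alpha)$ with $F$ as in the statement.

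The core of the argument is to estimate each of the $24$ terms $F(\alpha)$ in terms of $m$ and $\eta$, as collected in Table~\ref{tab:summary-F}. I would group the permutations by cycle type and evaluate using the replica trick $\tr((A\otimes B)\F)=\tr(AB)$ and $\tr(A\otimes B)=\tr(A)\tr(B)$, which convert the cyclic contractions into genuine traces. The recurring manoeuvre is to pass to a partial transpose so that the positive operators $M^\dagger M=\sum_{k,l}A_k^\dagger A_l\otimes A_k^\top\bar A_l$ and $MM^\dagger$ appear, with $M=\sum_k A_k\otimes\bar A_k$, and then invoke the approximation $\|M^\dagger M-\tfrac{\din}{\dout}\proj{\Psi}\|_1\le 5\eta^2$ (Lem.~\ref{lem: M-Psi}) together with its data-processed consequence $\|\tr_2(M^\dagger M)-\tfrac{1}{\dout}\dI\|_1\le 5\eta^2$; these control the families $\alpha\in\{\id,(14),(142),(24)\}$ and their relatives.

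I expect the main obstacle to be $F((12)(34))$ and its partners $(14)(23),(234),(124)$, where the $M^\dagger M$ approximation is too lossy: this term rewrites as $\tr\big((MM^\dagger)^{T_2}\,\cM(\dI)^{\otimes 2}\,\F\big)+\tfrac{\din^3}{\dout^2}+2\tfrac{\din}{\dout}\tr(\cM(\dI)^2)$ and depends on $MM^\dagger$ rather than on $M^\dagger M$. To handle it I would project $MM^\dagger$ onto the orthogonal complement of $\ket{\Psi}$: since $\proj{\Psi}$ and $\dI-\proj{\Psi}$ are precisely the two irreducible components of $(\dC^d)^{\otimes 2}$ for the action of $U\otimes\bar U$, and since the Haar measure is invariant under the replacements $A_k\mapsto U A_k$ and $A_k\mapsto A_k U$, I can symmetrise and isolate the off-$\Psi$ part, whose Hilbert--Schmidt weight is again governed by $m$ and $\eta$.

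Finally I would sum the entries of Table~\ref{tab:summary-F}, divide by $\din(\din+1)(\din+2)(\din+3)=\Theta(\din^4)$, and subtract $\mu^2$, where $\mu=\ex{\tr[\cN(\proj{\phi})^2]}=\tfrac{m^2+\eta^2}{\din(\din+1)}+\tfrac{1}{\dout}$ is read off from Lem.~\ref{lem-app:2-2 norms}. The delicate bookkeeping step, which I would track explicitly, is that every contribution of degree at most $2$ in $(m,\eta)$ --- in particular the dominant $\din^4/\dout^2$ piece from the $(1234)$ term together with the $\din^3/\dout^2$ and $\din^2/\dout^2$ pieces --- must reproduce $\mu^2$ exactly and cancel; a reassuring consistency check is that $\var(X)=0$ when $\cN=\cD$, i.e.\ when $m=\eta=0$. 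After this cancellation the surviving terms are all of order $(m^2+\eta^2)^2/\din^4$, and since $\ex{X}=\tfrac{m^2+\eta^2}{\din(\din+1)}$ gives $[\ex{X}]^2=\Theta\big((m^2+\eta^2)^2/\din^4\big)$, we conclude $\var(X)=\cO([\ex{X}]^2)$.
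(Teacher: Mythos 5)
Your proposal follows essentially the same route as the paper's proof: the reduction to $\var\big(\tr[\cN(\proj{\phi})^2]\big)$, the Weingarten expansion over $\fS_4$ into the $24$ terms $F(\alpha)$, the replica-trick and partial-transpose manipulations leading to the approximation $\|M^\dagger M-\tfrac{\din}{\dout}\proj{\Psi}\|_1\le 5\eta^2$, the special treatment of the $(12)(34)$ family via projection of $MM^\dagger$ onto the complement of $\ket{\Psi}$, and the final cancellation of the low-degree terms against $\mu^2$ leaving only $\cO\big((m^2+\eta^2)^2/\din^4\big)$. This matches the paper's argument step for step, so the plan is sound.
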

\begin{proof}
 Recall that $X=\left\|\cN(\proj{\phi}) -\frac{\dI}{\dout}\right\|_2^2$.
We can observe that:
\begin{align*}
    \var(X)= \var\left(\tr \left(\cN(\proj{\phi}) \right)^2 -\frac{1}{\dout} \right)= \var\left(\tr \left(\cN(\proj{\phi}) \right)^2\right).
\end{align*}
    We use the Kraus notation $\cN(\rho)= \sum_{k} A_k\rho A_k^\dagger$ and $d=\din$.  By the Weingarten calculus (Lem.~\ref{lem:Wg} and Lem.~\ref{lem:wg3}), we can compute the expectation:
\begin{align*}
    &\ex{\left(\tr \left(\cN(\proj{\phi}) \right)^2 \right)^2 } = \ex{ \left(\tr \sum_{k,l } A_k \proj{\phi} A_k^\dagger A_l \proj{\phi} A_l^\dagger \right)^2 }
    \\&=\sum_{i, j} \sum_{k,l,k',l'} \ex{  \tr A_{l'}^\dagger \ket{j}\bra{i}A_k \proj{\phi} A_k^\dagger A_l \proj{\phi} A_l^\dagger \ket{i}\bra{j} A_{k'} \proj{\phi} A_{k'}^\dagger  A_{l' }\proj{\phi}   }
    \\&= \sum_{i, j} \sum_{k,l,k',l'} \frac{1}{d(d+1)(d+2)(d+3)}\sum_{\alpha\in \fS_4} \tr_\alpha(A_{l'}^\dagger \ket{j}\bra{i}A_k, A_k^\dagger A_l , A_l^\dagger \ket{i}\bra{j} A_{k'} , A_{k'}^\dagger  A_{l' } )
    \\&=\frac{1}{d(d+1)(d+2)(d+3)} \sum_{\alpha\in \fS_4} \sum_{i,j,k,l,k',l'}  \tr_\alpha(A_{l'}^\dagger \ket{j}\bra{i}A_k, A_k^\dagger A_l , A_l^\dagger \ket{i}\bra{j} A_{k'} , A_{k'}^\dagger  A_{l' } )
    \\&= \frac{1}{d(d+1)(d+2)(d+3)} \sum_{\alpha\in \fS_4} F(\alpha)
\end{align*}
where for $\alpha\in \fS_4$ we adopt the notation $F(\alpha)= \sum_{i,j,k,l,k',l'}\tr_\alpha(A_{l'}^\dagger \ket{j}\bra{i}A_k, A_k^\dagger A_l , A_l^\dagger \ket{i}\bra{j} A_{k'} , A_{k'}^\dagger  A_{l'} )$ where  $\tr_{\alpha}(M_1,\dots,M_n)=\Pi_j \tr(\Pi_{i\in C_j} M_i)$ for $\alpha=\Pi_j C_j $ and $C_j$ are cycles. It is thus necessary to control each of these $24$ terms in order to upper bound the variance. Furthermore, we need to be careful so that our upper bounds on $\{F(\alpha)\}_{\alpha\in \fS_4}$ depend on the actual parameters of the testing problem. Recall that the expected value of $X$ can be expressed as follows: 
\begin{align*}
    \ex{X}=\frac{1}{\din(\din+1)} \left( \tr\left(\cN(\dI) -\frac{\din}{\dout}\dI\right)^2  +\din^2\left\|\cJ -\frac{\dI}{\din\dout}\right\|_2^2\right).
\end{align*}
Let us define $\cM=\cN-\cD$, $m=\|\cM(\dI_{\din})\|_2= \left\|\cN(\dI)-\frac{\din}{\dout}\dI\right\|_2$ and $\eta= \din\left\|\cJ -\frac{\dI}{\din\dout}\right\|_2$. We state a useful Lemma relating $\eta$, $\cM$ and the Kraus operators $\{A_k\}_k$:
\begin{lemma}\label{lem: eta= }
	Let $\eta=\din\|\cJ-\dI/(\din\dout)\|_2$ and $\cM=\cN-\cD$, 
	we have:
		\begin{itemize}
		\item $\eta^2=\sum_{k,l} |\tr(A_k^\dagger A_l)|^2 -\frac{\din}{\dout}$,
		\item $\eta^2=\sum_{x,y=1}^{\din} \|\cM(\ket{x}\bra{y})\|_2^2$.
	\end{itemize}
\end{lemma}
\begin{proof}
	Recall that we use the Kraus representation of the channel $\cN(\rho)=\sum_k A_k \rho A_k^\dagger$. We can express $\eta^2$:
	 \begin{align*}
	\eta^2&= \left\| \din\cJ-\frac{\dI}{\dout}\right\|_2^2= \din^2\tr(\cJ^2)-\frac{\din}{\dout}= \tr\left( \sum_{i,j,k} \ket{i}\bra{j}\otimes A_k \ket{i}\bra{j}A_k^\dagger\right)^2-\frac{\din}{\dout}
	\\&= \sum_{i,j,k,l} \tr( A_k \ket{i}\bra{j}A_k^\dagger A_l  \ket{j}\bra{i}A_l^\dagger  )-\frac{\din}{\dout}= \sum_{k,l} |\tr(A_k^\dagger A_l)|^2 -\frac{\din}{\dout}.
	\end{align*}
	We move to the second point, we have $\cJ-\dI/(\din\dout)=\cJ_\cN-\cJ_\cD= \cJ_\cM = \id\otimes\cM(\proj{\Psi} )$ so
	\begin{align*}
	\eta^2&= \din^2 \tr( \id \otimes \cM (\proj{\Psi} ) )^2 =  \tr( \id \otimes \cM (\din\proj{\Psi} ) )^2
	= \tr\left(\sum_{x,y} \ket{x}\bra{y}\otimes \cM(\ket{x}\bra{y})\right)^2
	\\&= \sum_{x,y} \tr(\cM(\ket{x}\bra{y})\cM(\ket{y}\bra{x})        )=\sum_{x,y} \|   \cM(\ket{x}\bra{y}) \|_2^2.
	\end{align*} 
\end{proof}
On the other hand, when dealing with some $F(\alpha)$'s, we will need to have some properties of the matrix $\sum_k A_k\otimes \Bar{A}_k$.
\begin{lemma}\label{lem: MM*}
	Let $M= \sum_k A_k \otimes \bar{A}_k$. Let $\{\lambda_i\}_i$ be the set of the eigenvalues of $M^\dagger M$ (in a decreasing order) corresponding to the eigenstates $\{\ket{\phi_i}\}_i$. 
 We have:
	\begin{itemize}
		\item $\sum_i \lambda_i = \frac{\din}{\dout}+\eta^2$,
		\item $\lambda_1\ge \frac{\din}{\dout}, \sum_{i>1}\lambda_i \le \eta^2$,
		\item $ \frac{\din^2}{\dout^2}(1-|\spr{\phi_1}{\Psi}|^2)\le  \frac{2m^2\eta^2}{\din}+2\eta^4$,
    \item $\bra{\Psi} MM^\dagger \ket{\Psi}= \frac{\din}{\dout}$.
	\end{itemize}
 \end{lemma}
\begin{proof}
 We have $\sum_i \lambda_i = \tr(M^\dagger M) = \sum_{k,l} \tr(A_k^\dagger A_l)\tr(A_k^{\top} \bar{A}_l) = \sum_{k,l} |\tr(A_k^\dagger A_l)|^2 = \frac{\din}{\dout}+\eta^2 $ by Lem.~\ref{lem: eta= }.
 Recall the definition of the unnormalized maximally entangled state $\sqrt{\din}\ket{\Psi}= \sum_i \ket{i}\otimes \ket{i}$, we can compute its image by the matrix $M^\dagger M$:
  \begin{align*}
  M^\dagger M(\sqrt{\din}\ket{\Psi}) &=\sum_{i,k,l} A_k^\dagger A_l \ket{i}\otimes A_k^{\top} \bar{A}_l\ket{i} 
  = \sum_{x,y,i,k,l} \bra{x}A_k^\dagger A_l \ket{i} \bra{y}A_k^{\top} \bar{A}_l \ket{i} \ket{xy}
  \\&= \sum_{x,y,i,k,l} \bra{x}A_k^\dagger A_l \ket{i} \bra{i}A_l^\dagger A_k\ket{y} \ket{xy}=\sum_{x,y,k} \bra{x}A_k^\dagger \cN(\dI)A_k\ket{y} \ket{xy}
  \end{align*}
  therefore 
  \begin{align*}
  \bra{\Psi}M^\dagger M\ket{\Psi}= \frac{1}{\din}\sum_{i,k} \bra{i}A_k^\dagger \cN(\dI)A_k\ket{i}= \frac{\tr(\cN(\dI)^2)}{\din}\ge \frac{\din^2}{\din\dout}=\frac{\din}{\dout}
  \end{align*}
  where we used the Cauchy-Schwarz inequality. This implies that the largest eigenvalue verifies $\lambda_1\ge \frac{\din}{\dout}$ thus $\sum_{i>1}\lambda_i = \frac{\din}{\dout}+\eta^2-\lambda_1\le \eta^2$. 
  
  We move to prove the third point. Recall the notation $\cM(\rho)= (\cN- \cD)(\rho)=\sum_k A_k\rho A_k^\dagger  -\tr(\rho)\frac{\dI}{\dout}$. We have on the one hand:
   \begin{align*}
  M^\dagger M(\sqrt{\din}\ket{\Psi}) &=\sum_{x,y,k} \bra{x}A_k^\dagger \cN(\dI)A_k\ket{y} \ket{xy}=\sum_{x,y} \tr(\cN(\dI)\cN(\ket{y}\bra{x})) \ket{xy}
  \\&= \sum_{x,y} \tr(\cN(\dI)\cM(\ket{y}\bra{x})) \ket{xy}+ \sum_{x,y} \tr(\cN(\dI)\cD(\ket{y}\bra{x})) \ket{xy}
    \\&= \sum_{x,y} \tr(\cM(\dI)\cM(\ket{y}\bra{x})) \ket{xy}+ \sum_{x} \frac{1}{\dout}\tr(\cN(\dI)\dI) \ket{xx}
     \\&= \sum_{x,y} \tr(\cM(\dI)\cM(\ket{y}\bra{x})) \ket{xy}+ \frac{\din}{\dout}\sqrt{\din}\ket{\Psi}.
  \end{align*}
  On the other hand, using the spectral decomposition of $M^\dagger M$, we can write:
    \begin{align*}
 \sum_{x,y} \tr(\cM(\dI)\cM(\ket{y}\bra{x})) \ket{xy}+ \frac{\din}{\dout}\sqrt{\din}\ket{\Psi}= M^\dagger M(\sqrt{\din}\ket{\Psi}) &=\sum_{i} \lambda_i \sqrt{\din}\spr{\phi_i}{\Psi}\ket{\phi_i}. 
  \end{align*}
  Therefore 
  \begin{align*}
  \lambda_1 \spr{\phi_1}{\Psi}\ket{\phi_1}- \frac{\din}{\dout}\ket{\Psi}= \frac{1}{\sqrt{\din}}  \sum_{x,y} \tr(\cM(\dI)\cM(\ket{y}\bra{x})) \ket{xy} - \sum_{i>1} \lambda_i \spr{\phi_i}{\Psi}\ket{\phi_i}.
  \end{align*}
  Taking the $2$-norm squared on both sides, then applying the triangle inequality, the fact $(x+y)^2\le 2(x^2+y^2)$  and the Cauchy-Schwarz inequality we obtain:
  \begin{align*}
   &\lambda_1^2| \spr{\phi_1}{\Psi}|^2+\frac{\din^2}{\dout^2}-2\frac{\din}{\dout} \lambda_1 |\spr{\phi_1}{\Psi}|^2\\&=\left\|   \frac{1}{\sqrt{\din}}  \sum_{x,y} \tr(\cM(\dI)\cM(\ket{y}\bra{x})) \ket{xy} - \sum_{i>1} \lambda_i \spr{\phi_i}{\Psi}\ket{\phi_i}     \right\|_2^2
   \\&\le  \left(  \left\|   \frac{1}{\sqrt{\din}}  \sum_{x,y} \tr(\cM(\dI)\cM(\ket{y}\bra{x})) \ket{xy}  \right\|_2+\left\|   \sum_{i>1} \lambda_i \spr{\phi_i}{\Psi}\ket{\phi_i}     \right\|_2\right)^2
   \\&\le 2 \left\|   \frac{1}{\sqrt{\din}}  \sum_{x,y} \tr(\cM(\dI)\cM(\ket{y}\bra{x})) \ket{xy}  \right\|_2^2+2\left\|   \sum_{i>1} \lambda_i \spr{\phi_i}{\Psi}\ket{\phi_i}     \right\|_2^2
   \\&\le \frac{2}{\din} \sum_{x,y} |\tr(\cM(\dI)\cM(\ket{y}\bra{x}))|^2 +2\sum_{i>1} \lambda_i^2
   \\&\le  \frac{2}{\din} \sum_{x,y} \|\cM(\dI)\|_2^2 \|\cM(\ket{y}\bra{x})\|_2^2 +2\left(\sum_{i>1} \lambda_i\right)^2
   \le \frac{2m^2\eta^2}{\din}+2\eta^4.
  \end{align*}
  Observe that the LHS can be lower bounded as follows:
  \begin{align*}
    \lambda_1^2| \spr{\phi_1}{\Psi}|^2+\frac{\din^2}{\dout^2}-2\frac{\din}{\dout} \lambda_1 |\spr{\phi_1}{\Psi}|^2&= \left(\lambda_1-\frac{\din}{\dout}\right)^2 |\spr{\phi_1}{\Psi}|^2 +\frac{\din^2}{\dout^2}- \frac{\din^2}{\dout^2}|\spr{\phi_1}{\Psi}|^2\\&\ge\frac{\din^2}{\dout^2}( 1-|\spr{\phi_1}{\Psi}|^2).
  \end{align*}
  Finally, we deduce from the two previous inequalities:
  \begin{align*}
 \frac{\din^2}{\dout^2}( 1-|\spr{\phi_1}{\Psi}|^2)\le  \frac{2m^2\eta^2}{\din}+2\eta^4.
  \end{align*}
  We move to the fourth point.  We have:
	\begin{align*}
	MM^\dagger(\sqrt{\din}\ket{\Psi}) &=\sum_{i,k,l} A_kA_l^\dagger  \ket{i}\otimes \bar{A}_kA_l^{\top}\ket{i} = \sum_{x,y,i,k,l} \bra{x}A_kA_l^\dagger  \ket{i} \bra{y}\bar{A}_kA_l^{\top}\ket{i} \ket{xy}
	\\&= \sum_{x,y,i,k,l} \bra{x}A_kA_l^\dagger  \ket{i} \bra{i}A_lA_k^\dagger \ket{y} \ket{xy}=\sum_{x,y} \bra{x}\cN(\dI)\ket{y} \ket{xy}
	\\&= \sum_{x,y} \bra{x}\cM(\dI)\ket{y} \ket{xy}+  \frac{\din}{\dout}\sqrt{\din}\ket{\Psi}.
	\end{align*}
 Hence:
 \begin{align*}
	\bra{\Psi}MM^\dagger\ket{\Psi} &=\frac{1}{\din}\sum_{x} \bra{x}\cM(\dI)\ket{x} +  \frac{\din}{\dout}\spr{\Psi}{\Psi}=\frac{1}{\din}\tr(\cM(\dI))+ \frac{\din}{\dout}= \frac{\din}{\dout}.
	\end{align*}
	\end{proof}
 The first and fourth points will be used in the proof of Lemma~\ref{lem: F(12)(34))}. The first three points  imply that the matrix $M^\dagger M$ (when normalized) is close to the maximally entangled state in the $1$-norm.
\begin{lemma}\label{lem: M-Psi}Let $M= \sum_k A_k \otimes \bar{A}_k$ and $\ket{\Psi}=\frac{1}{\sqrt{\din}}\sum_{i=1}^{\din} \ket{i}\otimes \ket{i}$.
 We have:
\begin{align*}
\left\|M^\dagger M-  \frac{\din}{\dout}\proj{\Psi}\right\|_1
&\le 5\eta^2.
\end{align*}
\end{lemma}
\begin{proof} Let $\ket{\phi_1}$ the eigenvector of $M^\dagger M$ corresponding to the largest eigenvalue.
	Using  the Fuchs–van de Graaf inequality \citep{fuchs1999cryptographic} and Lem.~\ref{lem: MM*}:
	\begin{align*}
	\|\proj{\phi_1}-\proj{\Psi}\|_1\le 2\sqrt{1-|\spr{\phi_1}{\Psi}|^2}\le 2\frac{\dout}{\din}\sqrt{\frac{2m^2\eta^2}{\din}+2\eta^4} \le 4\frac{\dout}{\din}\eta^2
	\end{align*}
	where we use the Cauchy Schwarz inequality and Lem.~\ref{lem: eta= }:
	\begin{align*}
	m^2&=\tr(\cM(\dI)^2) =\sum_{i,j} \tr(\cM(\proj{i})    \cM(\proj{j}) ) 
 \\&\le \sum_{i,j} \tr(\cM(\proj{i})^2) = \din \sum_{i} \tr(\cM(\proj{i})^2) \le \din\eta^2.
	\end{align*}
	By the triangle inequality and Lem.~\ref{lem: MM*} we deduce:
\begin{align*}
 \left\|M^\dagger M- \frac{\din}{\dout}\proj{\Psi}\right\|_1
&\le \left\| M^\dagger M- \frac{\din}{\dout}\proj{\phi_1}\right\|_1+  \frac{\din}{\dout}\big\| \proj{\phi_1}- \proj{\Psi}\big\|_1
\\&\le  \lambda_1- \frac{\din}{\dout} +\sum_{i>1} \lambda_i +4\eta^2= 5\eta^2.
\end{align*}
\end{proof}
This Lemma will be used in the proofs of Lemmas~\ref{lem: F(142)}, \ref{lem: F(24)}, \ref{lem: F(id)} and \ref{lem: F(14)}. 
We move now to upper bound different values of the function $F$.
\begin{lemma}\label{lem: F(142)}
	We can upper bound $F((142))$ and $F((243))$ as follows:
 \begin{align*}
 F((142))&= F((243))\le \frac{\din/\dout+\eta^2}{\dout}+ \frac{\eta^2}{\dout}+5 \eta^4.
 \end{align*} 
\end{lemma}
\begin{proof}
Recall the notation $\cM(\rho)= (\cN- \cD)(\rho)=\sum_k A_k\rho A_k^\dagger -\tr(\rho)\frac{\dI}{\dout}$. We will first write $F((142))$ as a sum of an ideal term reflecting the null hypothesis ($\cN=\cD$) and an error term reflecting the difference between $\cN$ and $\cD$. The ideal term is computed exactly and depends on dimensions $\din $ and $\dout$. The error term can also be splited to a simple error depending on $\eta$ and $\dout$ and a more involved term that depends on the Kraus operators $\{A_k\}_k$ and the difference of channels $\cM$. To control this latter error, we first write it in a closed form in terms of $\cM$, $M=\sum_{k}A_k\otimes A_k^\dagger $ and the flip operator $\F$. Then we can use the spectral decomposition of the matrix $M^\dagger M - \frac{\din}{\dout}\proj{\Psi}$ in order to decompose this error term into a combination of negligible elements. The final step requires to control the $\ell_1$ norm of the coefficients of this combination which is done using Lemma~\ref{lem: M-Psi}.

We have:
\begin{align*}
F((142)) &=  \sum_{k,l,k',l'}  \tr(A_k   A_{k'}^\dagger  A_{l' }     A_k^\dagger  A_lA_{l'}^\dagger A_{k'} A_l^\dagger  )
= \sum_{k,l} \tr(\cN(A_l^\dagger A_k)\cN(A_k^\dagger A_l)) 
\\&= \sum_{k,l} \tr(\cM(A_l^\dagger A_k)\cN(A_k^\dagger A_l))  + \tr(\cD(A_l^\dagger A_k)\cN(A_k^\dagger A_l))
\\&= \sum_{k,l} \tr(\cM(A_l^\dagger A_k)\cM(A_k^\dagger A_l))  + \tr(\cD(A_l^\dagger A_k)\cN(A_k^\dagger A_l))+ \tr(\cM(A_l^\dagger A_k)\cD(A_k^\dagger A_l)) 
\\&= \sum_{k,l} \tr(\cM(A_l^\dagger A_k)\cM(A_k^\dagger A_l))  + \frac{\tr(A_l^\dagger A_k)}{\dout}\tr(\cN(A_k^\dagger A_l))+ \frac{\tr(A_k^\dagger A_l)}{\dout}\tr(\cM(A_l^\dagger A_k)) 
\\&= \sum_{k,l} \tr(\cM(A_l^\dagger A_k)\cM(A_k^\dagger A_l))  + \frac{\tr(A_l^\dagger A_k)}{\dout}\tr(A_k^\dagger A_l)
\\&= \sum_{k,l} \tr(\cM(A_l^\dagger A_k)\cM(A_k^\dagger A_l))  +  \frac{\eta^2+\din/\dout}{\dout}.
\end{align*}
It remains to control the sum $\sum_{k,l} \tr(\cM(A_l^\dagger A_k)\cM(A_k^\dagger A_l))  $ .   Let $\T_2: X\otimes Y \mapsto X\otimes Y^{\top} $ be the partial transpose operator and $M=\sum_l A_l\otimes \bar{A}_l$. Let $\F=\sum_{i,j=1}^d  (\ket{i}\otimes \ket{j})(  \bra{j}\otimes \bra{i})$ be the flip operator, we have $\tr(A\otimes B \F)=\sum_{i,j,k,l} A_{i,j} B_{k,l} \tr(\ket{i}\bra{j}\otimes \ket{k}\bra{l} \F)= \sum_{i,j,k,l} A_{i,j} B_{k,l} \tr(\ket{i}\bra{l}\otimes\ket{k}\bra{j} )= \sum_{i,j }A_{i,j}B_{j,i}=\tr(AB)$
which is known as the replica trick. We have using the replica trick:
\begin{align*}
\sum_{k,l} \tr(\cM(A_l^\dagger A_k)\cM(A_k^\dagger A_l))   &= \sum_{k,l} \tr(\cM(A_l^\dagger A_k)\otimes \cM(A_k^\dagger A_l) \F)  
\\&= \tr\left(\cM\otimes \cM\left(\sum_{k,l}  A_l^\dagger A_k\otimes A_k^\dagger A_l\right) \F\right)  
\\&=  \tr\left(\cM\otimes \cM\circ\T_2\left(\sum_{k,l}  A_l^\dagger A_k\otimes A_l^{\top} \bar{A}_k\right) \F\right) 
\\&=  \tr\left(\cM\otimes \cM\circ\T_2(M^\dagger M)\F\right) 
\end{align*}
Let $\ket{\phi} $ be a unit vector, we can write $\ket{\phi}= \sum_{x,y} \phi_{x,y} \ket{x}\otimes \ket{y}$ then we can express:
\begin{align*}
|\tr\left(\cM\otimes \cM\circ\T_2(\proj{\phi})\F\right) | &= \left| \sum_{x,y,z,t} \phi_{x,y}\bar{\phi}_{z,t} \tr\left(\cM\otimes \cM\circ\T_2(\ket{x}\bra{z}\otimes \ket{y}\bra{t} )\F\right)\right|
\\&=  \left| \sum_{x,y,z,t} \phi_{x,y}\bar{\phi}_{z,t} \tr\left(\cM\otimes \cM (\ket{x}\bra{z}\otimes \ket{t}\bra{y} )\F\right)\right|
\\&=  \left| \sum_{x,y,z,t} \phi_{x,y}\bar{\phi}_{z,t} \tr\left(\cM(\ket{x}\bra{z})\otimes \cM(\ket{t}\bra{y} )\F\right)\right|
\\&=  \left| \sum_{x,y,z,t} \phi_{x,y}\bar{\phi}_{z,t} \tr\left(\cM(\ket{x}\bra{z}) \cM(\ket{t}\bra{y} )\right)\right|
\\&\le   \sqrt{\sum_{x,y,z,t} |\phi_{x,y}\bar{\phi}_{z,t}|^2 \sum_{x,y,z,t}  |\tr\left(\cM(\ket{x}\bra{z}) \cM(\ket{t}\bra{y} )\right) |^2}
\\&=  \sqrt{ \sum_{x,y,z,t} \| \cM(\ket{x}\bra{z})\|_2^2  \| \cM(\ket{t}\bra{y})\|_2^2         }=\eta^2
\end{align*}
where we use the Cauchy Schwarz inequality and Lem.~\ref{lem: eta= }.
On the other hand, we can compute:
\begin{align*}
\tr(\cM\otimes \cM \circ \T_2(\proj{\Psi}) \F)&= \frac{1}{\din}\sum_{i,j} \tr(\cM\otimes\cM \circ \T_2(\ket{ii}\bra{jj} ) \F)
\\&= \frac{1}{\din}\sum_{i,j} \tr(\cM(\ket{i}\bra{j})  \cM(\ket{j}\bra{i}))
= \frac{\eta^2}{\din}.\end{align*}
Then, we can decompose the Hermitian matrix $M^\dagger M-\frac{\din}{\dout} \proj{\Psi}=\sum_i \mu_i \proj{\psi_i}$.
Hence Lem.~\ref{lem: M-Psi} implies:
\begin{align*}
&\sum_{k,l} \tr(\cM(A_l^\dagger A_k)\cM(A_k^\dagger A_l)) =   \tr\left(\cM\otimes \cM\circ\T_2(M^\dagger M)\F\right) 
\\&= \frac{\din}{\dout}\tr\left(\cM\otimes \cM\circ\T_2(\proj{\Psi})\F\right) +  \sum_{i} \mu_i \tr\left(\cM\otimes \cM\circ\T_2(\proj{\psi_i})\F\right) 
\\&\le \frac{\eta^2}{\dout}+ \sum_i |\mu_i| \eta^2 
=  \frac{\eta^2}{\dout}+ \tr\left|M^\dagger M-\frac{\din}{\dout}\proj{\Psi}  \right| \eta^2 
\le\frac{\eta^2}{\dout}+5 \eta^4.
\end{align*}
Finally,
\begin{align*}
F((142)) &= \sum_{k,l} \tr(\cM(A_l^\dagger A_k)\cM(A_k^\dagger A_l))  + \frac{\frac{\din}{\dout}+\eta^2}{\dout}
\le \frac{\frac{\din}{\dout}+\eta^2}{\dout}+ \frac{\eta^2}{\dout}+ 5\eta^4.
\end{align*}
\end{proof}


\begin{lemma}\label{lem: F(id)}
	We can upper bound $F(\id),F((132)),F((314))$ and $F((24)(13))$ as follows:
	\begin{align*}
	F(\id)&= F((132))=F((314))=F((24)(13)) \le \frac{\din/\dout+\eta^2}{\dout} + \frac{\eta^2}{\dout}+5\eta^4.
	\end{align*} 
\end{lemma}
\begin{proof}
For the  identity permutation, the function $F$ can be expressed as follows:
\begin{align*}
&F(\id) = \sum_{k,l,k',l'} \tr(A_kA_{l'}^\dagger  A_{k'}A_l^\dagger )\tr( A_{k'}^\dagger  A_{l' } ) \tr( A_k^\dagger  A_l)
\\&=  \sum_{k,l,k',l'}\sum_{i,j} \tr(  \ket{j}\bra{i} A_{l'}^\dagger  A_{k'}A_l^\dagger A_k    \ket{i}\bra{j} A_k^\dagger A_l )\tr( A_{k'}^\dagger  A_{l' } ) 
\\&= \sum_{k',l'}\sum_{i,j} \tr(  \cN(\ket{j}\bra{i} A_{l'}^\dagger  A_{k'}) \cN( \ket{i}\bra{j} ) )\tr( A_{k'}^\dagger  A_{l' } ) 
\\&= \sum_{k',l'}\sum_{i,j} \tr(  \cN(\ket{j}\bra{i} A_{l'}^\dagger  A_{k'}) \cM( \ket{i}\bra{j} ) )\tr( A_{k'}^\dagger  A_{l' } ) +\sum_{k',l'}\sum_{i,j} \tr(  \cN(\ket{j}\bra{i} A_{l'}^\dagger  A_{k'}) \cD( \ket{i}\bra{j} ) )\tr( A_{k'}^\dagger  A_{l' } ) 
\\&= \sum_{k',l'}\sum_{i,j} \tr(  \cN(\ket{j}\bra{i} A_{l'}^\dagger  A_{k'}) \cM( \ket{i}\bra{j} ) )\tr( A_{k'}^\dagger  A_{l' } ) +\sum_{k',l'}\sum_{i}\frac{1}{\dout} \tr(  \cN(\ket{i}\bra{i} A_{l'}^\dagger  A_{k'}) )\tr( A_{k'}^\dagger  A_{l' } ) 
\\&= \sum_{k',l'}\sum_{i,j} \tr(  \cN(\ket{j}\bra{i} A_{l'}^\dagger  A_{k'}) \cM( \ket{i}\bra{j} ) )\tr( A_{k'}^\dagger  A_{l' } ) +\sum_{k',l'}\frac{1}{\dout} \tr(A_{l'}^\dagger  A_{k'} )\tr( A_{k'}^\dagger  A_{l' } ) 
\\&= \sum_{k',l'}\sum_{i,j} \tr(  \cM(\ket{j}\bra{i} A_{l'}^\dagger  A_{k'}) \cM( \ket{i}\bra{j} ) )\tr( A_{k'}^\dagger  A_{l' } ) +\frac{\din/\dout+\eta^2}{\dout}                 \text{   (because $\cM$ is a  trace less channel)}
\\&= \sum_{i,j} \tr(  \cM(\ket{j}\bra{i} N) \cM( \ket{i}\bra{j} ) )+\frac{\din/\dout+\eta^2}{\dout}  
\end{align*}
where $N =  \sum_{k,l}   \tr(A_k^\dagger A_l)A_l^\dagger A_k  $. 
Let us introduce $\tilde{N} = N-\frac{\dI}{\dout} = \sum_i \mu_x \proj{\psi_x}$ (this is possible because $N$ is Hermitian) so that we can write using Lem.~\ref{lem: eta= }:
\begin{align*}
\sum_{i,j} \tr(  \cM(\ket{j}\bra{i} N) \cM( \ket{i}\bra{j} ) ) &= \sum_{i,j} \tr(  \cM(\ket{j}\bra{i} \tilde{N}) \cM( \ket{i}\bra{j} ) ) + \frac{1}{\dout} \sum_{i,j} \tr(  \cM(\ket{j}\bra{i} ) \cM( \ket{i}\bra{j} ) )
\\&=\sum_{i,j,x} \mu_x \tr(  \cM(\ket{j}\bra{i} \proj{\psi_x}) \cM( \ket{i}\bra{j} ) ) + \frac{\eta^2}{\dout}
\\&= \sum_{i,j,x} \mu_x \spr{i}{\psi_x}\tr(  \cM(\ket{j}\bra{\psi_x}) \cM( \ket{i}\bra{j} ) ) + \frac{\eta^2}{\dout}
\\&= \sum_{i,j,x,k} \mu_x   \tr(  \spr{i}{\psi_x}\cM(\ket{j}\bra{k}) \spr{\psi_x}{k}  \cM( \ket{i}\bra{j} ) ) + \frac{\eta^2}{\dout}
\\&\le \frac{1}{2}\sum_{i,j,x,k} |\mu_x| (  \| \spr{i}{\psi_x}\cM(\ket{j}\bra{k})  \|_2^2 + \| \spr{\psi_x}{k}  \cM( \ket{i}\bra{j} )\|_2^2)+ \frac{\eta^2}{\dout}
\\&= \tr|\tilde{N}| \eta^2 + \frac{\eta^2}{\dout}.
\end{align*}
We can see the matrix $N= \sum_{k,l}   \tr(A_k^\dagger A_l)A_l^\dagger A_k $ as a partial trace of $M^\dagger M$: 
\begin{align*}
\tr_2(M^\dagger M)&= \tr_2 \left(\sum_{k,l}  A_l^\dagger A_k\otimes A_l^{\top}\bar{A}_k\right)= \sum_{k,l}  \tr(A_l^{\top}\bar{A}_k)A_l^\dagger A_k
\\&= \sum_{k,l}  \tr(A_lA^*_k)A_l^\dagger A_k=N.
\end{align*}
Moreover $\frac{\dI}{\din}= \tr_2(\proj{\Psi})$ so by the data processing inequality (the partial trace is a valid quantum channel) and Lem.~\ref{lem: M-Psi}, we deduce:
\begin{align*}
\tr|\tilde{N}| &= \left\|\tr_2(M^\dagger M )-\frac{\din}{\dout}\tr_2(\proj{\Psi})\right\|_1 \le  \left\|M^\dagger M -\frac{\din}{\dout}\proj{\Psi}\right\|_1\le 5\eta^2.
\end{align*}
Finally, 
\begin{align*}
F(\id) &= \sum_{i,j} \tr(  \cM(\ket{j}\bra{i} N) \cM( \ket{i}\bra{j} ) )+\frac{\din/\dout +\eta^2}{\dout}   \le \frac{\din/\dout +\eta^2}{\dout} + \frac{\eta^2}{\dout}+5\eta^4. 
\end{align*}
\end{proof}

\begin{lemma}\label{lem: F(24)}
We can upper bound $F((24))$ and $F((1432))$ as follows:
\begin{align*}
F((24))&=F((1432)) \le \frac{\din^2}{\dout^2}+\frac{2m^2}{\dout}+25\eta^4. 
\end{align*} 
\end{lemma}
\begin{proof}
Recall that $\bra{\Psi}M^\dagger M\ket{\Psi} = \frac{\tr(\cN(\dI)^2)}{\din}= \frac{\din}{\dout}+\frac{m^2}{\din}$. We use the fact that $\tr(X)\tr(Y)=\tr(X\otimes Y)$: 
\begin{align*}
F((24))
&= \sum_{k,l,k',l'} \tr(A_k^\dagger  A_lA_{k'}^\dagger  A_{l' })\tr(A_kA_{l'}^\dagger  A_{k'}A_l^\dagger )
\\&= \sum_{k,l,k',l'} \tr(A_k^\dagger  A_lA_{k'}^\dagger  A_{l' } \otimes A_l^\dagger A_kA_{l'}^\dagger  A_{k'})
\\&= \tr\left( \sum_{k,l} A_k^\dagger A_l\otimes A_l^\dagger A_k\right)\left( \sum_{k,l} A_k^\dagger A_l\otimes A_l^\dagger A_k\right)
\\&= \tr\left( \sum_{k,l} A_k^\dagger A_l\otimes A_k^{\top}\bar{A_l}\right)^{\T_2}\left( \sum_{k,l} A_k^\dagger A_l\otimes A_k^{\top}\bar{A_l}\right)^{\T_2}
\\&= \tr(M^\dagger M )^{\T_2}(M^\dagger M )^{\T_2}
= \tr(M^\dagger M )(M^\dagger M )
\\&=  \tr\left(M^\dagger M -\frac{\din}{\dout}\proj{\Psi}\right)M^\dagger M  + \frac{\din}{\dout}\tr(\proj{\Psi}M^\dagger M )
\\&=  \tr\left(M^\dagger M -\frac{\din}{\dout}\proj{\Psi}\right)\left(M^\dagger M -\frac{\din}{\dout}\proj{\Psi}\right)
\\&+\frac{\din}{\dout}\bra{\Psi}\left(M^\dagger M -\frac{\din}{\dout}\proj{\Psi} \right)\ket{\Psi} + \frac{\din^2}{\dout^2}+\frac{m^2}{\dout}
\\&\le\left(\tr\left|M^\dagger M -\frac{\din}{\dout}\proj{\Psi}\right|\right)^2  +\frac{\din^2}{\dout^2}+\frac{2m^2}{\dout}
\\&\le  \frac{\din^2}{\dout^2}+\frac{2m^2}{\dout}+25\eta^4   
\end{align*}
where we have used the fact that $\tr\left(A^{\T_2}  B^{\T_2}\right)=\tr(AB)  $ and Lem.~\ref{lem: M-Psi}.
\end{proof}

\begin{lemma} \label{lem: F(12)(34))}
We can upper bound $F((12)(34)), F((14)(23)), F((234))$ and $ F((124))$ as follows:
	\begin{align*}
	F((12)(34))&= F((14)(23))=F((234))= F((124))
\le \frac{\din^3}{\dout^2}+2\frac{\din}{\dout}m^2+\eta^2 m^2.
		\end{align*}
\end{lemma} 
\begin{proof}
Using the expression of $F(\alpha)$ for the permutation $\alpha=(12)(34)$ and the fact that $\sum_k A_k^\dagger A_k=\dI$, we have: 
\begin{align*}
&F((12)(34))=   \sum_{k,l,k',l'}  \tr(A_k    A_k^\dagger  A_l A_{l'}^\dagger A_{k'}    A_{k'}^\dagger  A_{l' }A_l^\dagger ) 
=   \sum_{k,l}  \tr(A_l^\dagger  \cN(\dI) A_l A_{k}^\dagger \cN(\dI) A_{k }) 
\\&=  \sum_{k,l}  \tr(A_l^\dagger  \cM(\dI) A_l A_{k}^\dagger \cN(\dI) A_{k })  +  \frac{\din}{\dout}\sum_{k,l}  \tr(A_l^\dagger   A_l A_{k}^\dagger \cN(\dI) A_{k })  
\\&= \sum_{k,l}  \tr(A_l^\dagger  \cM(\dI) A_l A_{k}^\dagger \cM(\dI) A_{k }) +  \frac{\din}{\dout}\sum_{k,l}  \tr(A_l^\dagger  \cM(\dI) A_l A_{k}^\dagger A_{k })      +  \frac{\din}{\dout} \sum_{k}  \tr(A_{k}^\dagger \cN(\dI) A_{k })  
\\&= \sum_{k,l}  \tr(A_l^\dagger  \cM(\dI) A_l A_{k}^\dagger \cM(\dI) A_{k }) + \frac{\din}{\dout}  \tr(\cM(\dI)\cN(\dI))  +  \frac{\din}{\dout}\tr(\cN(\dI)^2)  
\\&= \sum_{k,l}  \tr(A_l^\dagger  \cM(\dI) A_l A_{k}^\dagger \cM(\dI) A_{k }) +   \frac{\din^3}{\dout^2}  +  2\frac{\din}{\dout}\tr(\cM(\dI)^2)  
\end{align*}
Then, if we focus on the first term,

we can use the replica trick again to obtain:
\begin{align*}
& \sum_{k,l}  \tr(A_l^\dagger  \cM(\dI) A_l A_{k}^\dagger \cM(\dI) A_{k }) = \sum_{k,l}\tr(A_kA_l^\dagger \cM(\dI) \otimes A_lA_k^\dagger  \cM(\dI) \F)
\\&=
\tr\left( \sum_{k,l}A_kA_l^\dagger \otimes A_lA_k^\dagger   \cdot\cM(\dI)^{\otimes 2} \F\right)
= \tr\left(\T_2\left( \sum_{k,l}A_kA_l^\dagger \otimes \bar{A_k}A_l^{\top}\right)  \cdot\cM(\dI)^{\otimes 2} \F\right) 
\\&= \tr\left(\left( MM^\dagger \right)^{T_2}  \cdot\cM(\dI)^{\otimes 2} \F\right) 
\\&= \tr\left(\left( (\dI-\proj{\Psi}) MM^\dagger(\dI-\proj{\Psi}) \right)^{T_2}  \cdot\cM(\dI)^{\otimes 2} \F\right)  
+ \tr\left(\left(  MM^\dagger\proj{\Psi} \right)^{T_2}  \cdot\cM(\dI)^{\otimes 2} \F\right) 
\\&+\tr\left(\left( \proj{\Psi} MM^\dagger \right)^{T_2}  \cdot\cM(\dI)^{\otimes 2} \F\right) -\tr\left(\left( \proj{\Psi} MM^\dagger\proj{\Psi} \right)^{T_2}  \cdot\cM(\dI)^{\otimes 2} \F\right).
\end{align*}
We can simplify the latter terms. First we have $\proj{\Psi}^{T_2}=\F$  and $\F^2=\dI$ so
\begin{align*}
   & \tr\left(\left( \proj{\Psi} MM^\dagger\proj{\Psi} \right)^{T_2}  \cdot\cM(\dI)^{\otimes 2} \F\right)= \bra{\Psi}MM^\dagger \ket{\Psi} \tr\left(\left( \proj{\Psi}\right)^{T_2}  \cdot\cM(\dI)^{\otimes 2} \F\right)
    \\&= \bra{\Psi}MM^\dagger \ket{\Psi} \tr\left( \F \cM(\dI)^{\otimes 2} \F\right)=\bra{\Psi}MM^\dagger \ket{\Psi} \tr\left(  \cM(\dI)^{\otimes 2} \right)
    = \bra{\Psi}MM^\dagger \ket{\Psi} \tr\left(  \cM(\dI) \right)^2=0.
\end{align*}
Next by Lem.~\ref{lem: MM*} we have $MM^\dagger\ket{\Psi} = \frac{1}{\sqrt{\din}} \sum_{x,y} \bra{x}\cN(\dI)\ket{y} \ket{xy}$ so
\begin{align*}
    \tr\left(\left(  MM^\dagger\proj{\Psi} \right)^{T_2}  \cdot\cM(\dI)^{\otimes 2} \F\right) &=\frac{1}{\din}\sum_{x,y, z} \bra{x}\cN(\dI)\ket{y}  \tr\left(\left(  \ket{xy}\bra{zz} \right)^{T_2}  \cdot\cM(\dI)^{\otimes 2} \F\right) 
    \\&= \frac{1}{\din}\sum_{x,y, z} \bra{x}\cN(\dI)\ket{y}  \tr\left( \ket{x}\bra{z} \otimes \ket{z}\bra{y}   \cdot\cM(\dI)^{\otimes 2} \F\right) 
    \\&= \frac{1}{\din}\sum_{x,y, z} \bra{x}\cN(\dI)\ket{y}  \tr\left( \ket{x}\bra{z}\cM(\dI) \ket{z}\bra{y} \cM(\dI) \right)
     \\&= \frac{1}{\din}\sum_{x,y} \bra{x}\cN(\dI)\ket{y}  \tr\left( \ket{x}\tr(\cM(\dI)) \bra{y} \cM(\dI) \right)=0. 
\end{align*}
Similarly we prove:
\begin{align*}
    \tr\left(\left(  \proj{\Psi}MM^\dagger \right)^{T_2}  \cdot\cM(\dI)^{\otimes 2} \F\right) &=\frac{1}{\din}\sum_{x,y, z} \bra{x}\cN(\dI)\ket{y}  \tr\left(\left(  \ket{zz}\bra{xy} \right)^{T_2}  \cdot\cM(\dI)^{\otimes 2} \F\right) 
    \\&= \frac{1}{\din}\sum_{x,y, z} \bra{x}\cN(\dI)\ket{y}  \tr\left( \ket{z}\bra{x} \otimes \ket{y}\bra{z}   \cdot\cM(\dI)^{\otimes 2} \F\right) 
    \\&= \frac{1}{\din}\sum_{x,y, z} \bra{x}\cN(\dI)\ket{y}  \tr\left( \ket{z}\bra{x}\cM(\dI) \ket{y}\bra{z} \cM(\dI) \right)
     \\&= \frac{1}{\din}\sum_{x,y} \bra{x}\cN(\dI)\ket{y}   \bra{y} \cM(\dI)\ket{x} \tr(\cM(\dI)) =0. 
\end{align*}
Now the matrix $(\dI-\proj{\Psi}) MM^\dagger(\dI-\proj{\Psi}) $ is Hermitian and positive semi-definite so can be written as $(\dI-\proj{\Psi}) MM^\dagger(\dI-\proj{\Psi}) =\sum_i\lambda_i \proj{\phi_i}$, and for each $i$, 
we can write the Schmidt's decomposition of $\ket{\phi}= \sum_{x,y} \phi_{x,y} \ket{xy}$. Therefore
\begin{align*}
    & \sum_{k,l}  \tr(A_l^\dagger  \cM(\dI) A_l A_{k}^\dagger \cM(\dI) A_{k }) = \tr\left(\left( (\dI-\proj{\Psi}) MM^\dagger(\dI-\proj{\Psi}) \right)^{T_2}  \cdot\cM(\dI)^{\otimes 2} \F\right) 
    \\&=\sum_i\lambda_i \tr\left(\T_2\left(\proj{\phi_i} \right)  \cdot\cM(\dI)^{\otimes 2} \F\right)
    = \sum_i\sum_{x,y,z,t}\lambda_i \phi_{x,y} \bar{\phi}_{z,t}\tr\left(\T_2\left(\ket{xy}\bra{zt} \right)  \cdot\cM(\dI)^{\otimes 2} \F\right)
    \\&= \sum_i\sum_{x,y,z,t}\lambda_i \phi_{x,y} \bar{\phi}_{z,t}\tr\left(\ket{x}\bra{z}\otimes \ket{t}\bra{y} \cdot\cM(\dI)^{\otimes 2} \F\right)
    \\&= \sum_i\sum_{x,y,z,t}\lambda_i \phi_{x,y} \bar{\phi}_{z,t}\tr\left(\ket{x}\bra{z}\cM(\dI)\otimes \ket{t}\bra{y} \cM(\dI) \F\right)
    \\&= \sum_i\sum_{x,y,z,t}\lambda_i \phi_{x,y} \bar{\phi}_{z,t}\tr\left(\ket{x}\bra{z}\cM(\dI) \ket{t}\bra{y} \cM(\dI) \right)
     \\&= \sum_i\sum_{x,y,z,t}\lambda_i \phi_{x,y} \bar{\phi}_{z,t}\bra{z}\cM(\dI) \ket{t}\bra{y} \cM(\dI) \ket{x}
      \\&\le \sum_i\sum_{x,y,z,t}\lambda_i |\phi_{x,y}|^2 |\bra{z}\cM(\dI) \ket{t}|^2+ \sum_i\sum_{x,y,z,t}\lambda_i |\phi_{z,t}|^2|\bra{y} \cM(\dI) \ket{x}|^2
       \\&\le \sum_i\sum_{x,y,z,t}\lambda_i |\phi_{x,y}|^2 |\bra{z}\cM(\dI) \ket{t}|^2+ \sum_i\sum_{x,y,z,t}\lambda_i |\phi_{z,t}|^2|\bra{y} \cM(\dI) \ket{x}|^2
       \\&=  \tr\left|(\dI-\proj{\Psi}) MM^\dagger(\dI-\proj{\Psi}) \right| \tr(\cM(\dI)^2)
\end{align*}

By Lem.~\ref{lem: MM*} we have $\bra{\Psi}MM^\dagger\ket{\Psi}=\frac{\din}{\dout}$ and $\tr(MM^\dagger)=\frac{\din}{\dout}+\eta^2$ so we have:
\begin{align*}
    \tr\left|(\dI-\proj{\Psi}) MM^\dagger(\dI-\proj{\Psi}) \right|&= \tr(\dI-\proj{\Psi}) MM^\dagger(\dI-\proj{\Psi}) 
=\eta^2.
\end{align*}
Finally
\begin{align*}
F((12)(34))\le \frac{\din^3}{\dout^2}+2\frac{\din}{\dout}m^2+\eta^2m^2.
\end{align*}
This concludes the proof for $F((12)(34))$.
\end{proof}

\begin{lemma}\label{lem: F(14)}
We can upper bound $F((14)), F((12)),  F((23)) $ and $F((34))$ as follows:
	\begin{align*}
F((14))&= F((12))= F((23))= F((34)) \le \frac{\din^2}{\dout^2}+\frac{\din}{\dout}\eta^2+ \frac{m^2}{\dout}+5m\eta^3.
	\end{align*}
\end{lemma} 
\begin{proof}
Recall the notation $N=\sum_{k,l} \tr(A_kA_l^\dagger)A_k^\dagger A_l$. We can write the spectral decomposition of the Hermitian matrix:
\begin{align*}
    N- \frac{\dI}{\dout}=\tr_2\left( \sum_{k,l} A_l^\dagger A_k\otimes A_l^{\top}\bar{A}_k -\frac{\din}{\dout} \proj{\Psi}\right) =\sum_i \lambda_i \proj{\phi_i}.
\end{align*}
Then using the triangle inequality and the fact that $\sum_k A_k^\dagger A_k=\dI$:
\begin{align*}
F(14)&= \sum_{k,l,k',l'}  \tr(A_k^\dagger  A_l )\tr(A_l^\dagger A_k A_{k'}^\dagger  A_{l' }A_{l'}^\dagger A_{k'}) 
=  \sum_{k'}  \tr(N A_{k'}^\dagger  \cN(\dI)A_{k'}) 
\\&=  \sum_{k'}  \tr(N A_{k'}^\dagger  \cM(\dI)A_{k'}) + \frac{\din}{\dout}\tr(N)
=   \tr(\cN(N)  \cM(\dI)) +  \frac{\din}{\dout}\tr(N)
\\&=   \tr(\cM(N)  \cM(\dI)) +  \frac{\din}{\dout}\tr(N)=   \tr\left(\cM\left(N-\frac{\dI}{\dout}\right)  \cM(\dI)\right) +\frac{\tr(\cM(\dI)^2)}{\dout}+  \frac{\din}{\dout}\tr(N)
\\&=  \frac{\din}{\dout}\left(  \frac{\din}{\dout}+\eta^2\right) +\frac{\tr(\cM(\dI)^2)}{\dout}+ \sum_{i} \lambda_i \tr(\cM(\proj{\phi_i}) \cM(\dI)) 
\\&\le \frac{\din}{\dout}\left(  \frac{\din}{\dout}+\eta^2\right) +\frac{\tr(\cM(\dI)^2)}{\dout}+ \sum_{i} |\lambda_i|  \|\cM(\proj{\phi_i}) \|_2\|\cM(\dI)\|_2
\\&\le  \frac{\din}{\dout}\left(  \frac{\din}{\dout}+\eta^2\right)  +\frac{m^2}{\dout}+ m\eta \sum_{i} |\lambda_i|
=  \frac{\din}{\dout}\left(  \frac{\din}{\dout}+\eta^2\right)  +\frac{m^2}{\dout}+ m\eta\cdot \tr\left|  N- \frac{\dI}{\dout}   \right|
\end{align*}

because for all unit vector $\ket{\phi}=\sum_i \phi_i \ket{i}$ we have using the Cauchy Schwarz inequality, the AM-GM inequality, and Lem.~\ref{lem: eta= }:
\begin{align*}
\|\cM(\proj{\phi})\|_2^2 &=\sum_{i,j,k,l } \phi_i \bar{\phi}_j \phi_k\bar{\phi}_l  \tr\left( \cM(\ket{i}\bra{j} )  \cM(\ket{k}\bra{l})    \right)
\le \sum_{i,j,k,l } |\phi_i \bar{\phi}_j \phi_k\bar{\phi}_l| \|\cM(\ket{i}\bra{j} ) \|_2\| \cM(\ket{k}\bra{l}) \|_2
\\ &\le \frac{1}{2}\sum_{i,j,k,l}  |\phi_i |^2 |\phi_j|^2 \|\cM(\ket{k}\bra{l} ) \|_2^2+\frac{1}{2}\sum_{i,j,k,l}  |\phi_k |^2 |\phi_l|^2 \|\cM(\ket{i}\bra{j} ) \|_2^2
= \eta^2. 
\end{align*}
Moreover, using the data processing inequality and  Lem.~\ref{lem: M-Psi} :
\begin{align*}
\tr\left|  N- \frac{\dI}{\dout}   \right|&=\tr\left| \tr_2\left(\sum_{k,l} A_l^\dagger A_k\otimes A_l^{\top}\bar{A}_k -\frac{\din}{\dout}\proj{\Psi}\right)  \right|
\\&\le \tr\left|    \sum_{k,l} A_l^\dagger A_k\otimes A_l^{\top}\bar{A}_k -\frac{\din}{\dout}\proj{\Psi}     \right|
= \tr\left|M^\dagger M-\frac{\din}{\dout}\proj{\Psi}\right|
\le 5\eta^2.
\end{align*}
This concludes the proof.
\end{proof}
For the remaining permutations, we can obtain a closed form for the function $F$. For the transposition $(13)$, the image of the  function $F$ can be expressed as follows:
\begin{align*}
F((13)) &= \sum_{i,j,k,l,k',l'}  \tr(A_{l'}^\dagger \ket{j}\bra{i}A_kA_l^\dagger \ket{i}\bra{j} A_{k'}) \tr(A_k^\dagger  A_l)\tr(A_{k'}^\dagger  A_{l' } )
\\&= \sum_{k,l,k',l'}  \tr(A_{k'}A_{l'}^\dagger ) \tr(A_kA_l^\dagger ) \tr(A_k^\dagger  A_l)\tr(A_{k'}^\dagger  A_{l' } )
\\&= \left( \sum_{k,l} |\tr(A_kA_l^\dagger )|^2 \right)^2=\left(\frac{\din}{\dout}+\eta^2\right)^2.
\end{align*}
Then, we remark that the permutations $(312)$ and $(134)$ have the same image:
\begin{align*}
    F((312))=F((134)) &=   \sum_{i,j,k,l,k',l'}  \tr(A_l^\dagger \ket{i}\bra{j} A_{k'}A_{l'}^\dagger \ket{j}\bra{i}A_kA_k^\dagger  A_l) \tr( A_{k'}^\dagger  A_{l' })
    \\&=  \sum_{k,l,k',l'}\tr(A_{k'}A_{l'}^\dagger ) \tr(A_kA_k^\dagger  A_lA_l^\dagger ) \tr( A_{k'}^\dagger  A_{l' })
    \\&= \sum_{k,l } \tr(A_kA_k^\dagger  A_lA_l^\dagger ) \sum_{k,l }|\tr(A_kA_l^\dagger )|^2
    \\&= \tr(\cN(\dI)^2) \left(\frac{\din}{\dout}+\eta^2\right) = \left(\frac{\din^2}{\dout}+m^2\right)\left(\frac{\din}{\dout}+\eta^2\right).
\end{align*}
Next, the image of the cycle $(1234)$ has also a closed expression:
\begin{align*}
    F((1234))&=  \sum_{i,j,k,l,k',l'}  \tr( A_{l'}^\dagger \ket{j}\bra{i}A_kA_k^\dagger  A_lA_l^\dagger \ket{i}\bra{j} A_{k'}  A_{k'}^\dagger  A_{l' })
    \\&=  \sum_{k,l,k',l'} \tr(A_kA_k^\dagger  A_lA_l^\dagger  ) \tr(A_{k'}  A_{k'}^\dagger  A_{l' }A_{l'}^\dagger )
    \\&= (\tr(\cN(\dI)^2))^2=\left(\frac{\din^2}{\dout}+m^2\right)^2.
\end{align*}
To sum up, we have proved so far that:
\begin{lemma}\label{lem: F(..;)} Let $m=\|\cM(\dI)\|_2= \|\cN(\dI)-\cD(\dI)\|_2$ and $\eta= \din\left\|\cJ -\frac{\dI}{\din\dout}\right\|_2$. We have:

 \begin{table}[ht]
\centering
\begin{tabular}{  |c  |  c| c|} 
\hline
  \textbf{Permutation } $\alpha$  & \textbf{Upper bound on } $F(\alpha)$ &\textbf{Reference}\\
  \hline\hline
   $(13)$ & $ \left(\frac{\din}{\dout}+\eta^2\right)^2 $&
   \\\hline
   $\id, (132), (314), (24)(13)$ & $ \frac{\din/\dout+\eta^2}{\dout} + \frac{\eta^2}{\dout}+5\eta^4$ &(Lem.~\ref{lem: F(id)})
	\\\hline 
$(312), (134) $& $ \left(\frac{\din^2}{\dout}+m^2\right)\left(\frac{\din}{\dout}+\eta^2\right)$&
	\\\hline
$ (1234) $& $ \left(\frac{\din^2}{\dout}+m^2\right)^2$&
	\\ \hline
$ (24), (1432)$ &$ \frac{\din^2}{\dout^2}+\frac{2m^2}{\dout}+25\eta^4$ &(Lem.~\ref{lem: F(24)})
	\\\hline
 $(142), (243)$& $\frac{\din/\dout+\eta^2}{\dout} + \frac{\eta^2}{\dout}+5\eta^4$ &(Lem.~\ref{lem: F(142)})
	\\\hline
  $(14), (12), (23), (34), (1324), $&\multirow{2}{*}{$  \frac{\din^2}{\dout^2}+\frac{\din}{\dout}\eta^2+ \frac{m^2}{\dout}+5m\eta^3$}& \multirow{2}{*}{(Lem.~\ref{lem: F(14)})}\\
$(1423), (1243), (1342)$& &
		\\\hline
  $(12)(34), (14)(23), (234), (124)$& $\frac{\din^3}{\dout^2}+2\frac{\din}{\dout}m^2+m^2\eta^2 $ &(Lem.~\ref{lem: F(12)(34))})
  \\\hline
\end{tabular}
\end{table}
\end{lemma}
Therefore we have  the following upper bound on the second moment using the inequality $m^2\le d\eta^2$: 
\begin{align*}
&\ex{\left( \tr( \cN(\proj{\phi  }    ) ^2)  \right)^2  } = \frac{1}{\din(\din+1)(\din+2)(\din+3)} \sum_{\alpha \in \fS_4} F(\alpha)
\\&\le \frac{\dout^4+ 6\din^3+2\din^2\dout m^2+2\din^2\dout \eta^2+ 11\din^2+ 10\din \dout m^2+10\din\dout \eta^2 }{\dout^2\din(\din+1)(\din+2)(\din+3)}
\\&+ \frac{6\din + \dout^2m^4+2\dout^2\eta^2+ 12\dout m^2+ 40\dout^2m\eta^3 +81\dout^2\eta^4+12 \dout \eta^2   }{\dout^2\din(\din+1)(\din+2)(\din+3)}
\end{align*} 
Recall that for the random variable $X= \tr \left(\cN(\proj{\phi}) -\frac{\dI}{\dout}\right)^2= \tr \left(\cN(\proj{\phi}) \right)^2-\frac{1}{\dout}$ we have:
\begin{align*}
\ex{X}= \frac{\left( \tr\left(\cN(\dI) -\frac{\din}{\dout}\dI\right)^2  +\din^2\left\|\cJ -\frac{\dI}{\din\dout}\right\|_2^2\right)}{\din(\din+1)} =\frac{1}{\din(\din+1)}\left( \tr(\cM(\dI)^2)+\eta^2\right)=\frac{m^2+\eta^2}{\din(\din+1)}.
\end{align*}
Since $\var(X)=\var\left(X+\frac{1}{\dout}\right)$, it can be upper bounded as follows:
\begin{align*}
&\var(X) =\ex{\left( \tr( \cN(\proj{\phi  }    ) ^2)  \right)^2  } -\left(\ex{ \tr( \cN(\proj{\phi  }    ) ^2)   }\right)^2 
\\&\le \frac{\dout^4+ 6\din^3+2\din^2\dout m^2+2\din^2\dout \eta^2+ 11\din^2+ 10\din \dout m^2+10\din\dout \eta^2 }{\dout^2\din(\din+1)(\din+2)(\din+3)}
\\&+ \frac{6\din + \dout^2m^4+2\dout^2\eta^2+ 12\dout m^2+ 40\dout^2m\eta^3 +81\dout^2\eta^4+12 \dout \eta^2   }{\dout^2\din(\din+1)(\din+2)(\din+3)}- \left(\frac{m^2+\eta^2}{\din(\din+1)}+\frac{1}{\dout}\right)^2
\\&\le \left(\frac{80\din^2\eta^4+10\din^2m^2\eta^2+40\din^2\eta^3m}{\din^2(\din+1)^2(\din+2)(\din+3)}\right).
\end{align*}
Therefore the upper bound on the variance becomes using the inequalities $(m^2+\eta^2)^2\ge 4m^2\eta^2$ and  $(m^2+\eta^2)^2\ge 2m\eta^{3}$ 
 (successive AM-GM):
	\begin{align*}
\frac{\var(X)}{\ex{X}^2}&\le  \left( \frac{80\din^2\eta^4+10\din^2m^2\eta^2+40\din^2\eta^3m}{(\din+2)(\din+3)(\eta^2+m^2  )^2} \right)
\\&\le 80\frac{\din^2\eta^4}{\din^2\eta^4}+ 10\frac{\din^2m^2\eta^2}{4\din^2m^2\eta^2}+40\frac{\din^2\eta^3m}{2\din^2m\eta^{3}}
\le 105.
\end{align*} 
\end{proof}

\begin{theorem}\label{app:thm:Test-D}
 There is an incoherent ancilla-free algorithm using a number of measurements satisfying
 \begin{align*}
     N=\cO\left(\frac{\din^2\dout^{1.5}}{\eps^2}\right)
 \end{align*} 
to distinguish between $\cN=\cD$ or $\dd(\cN,\cD)>\eps$ with a probability of success at least $2/3$.
\end{theorem}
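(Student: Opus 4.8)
The plan is to establish correctness and complexity of Alg.~\ref{alg-dep}, which draws a constant number $M$ of Haar-random input states and reduces each to an instance of quantum state certification in the $2$-norm (Alg.~\ref{alg-states}). First I would dispose of the null hypothesis. If $\cN=\cD$ then for every input $\cN(\proj{\phi_k})=\cD(\proj{\phi_k})=\dI/\dout$ exactly, so each inner call faces precisely hypothesis $h_0$ and returns $h_0$ except with probability $\delta=1/(3M)$; a union bound over the $M$ calls gives overall success probability at least $1-M\delta=2/3$.

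The substance of the argument lies in the alternate hypothesis, where the key device is anti-concentration of $X=\|\cN(\proj{\phi})-\dI/\dout\|_2^2$ for $\ket{\phi}\sim\Haar$. From \eqref{ineq-expectation} (itself a consequence of Lem.~\ref{lem:2-2 norms} and Lem.~\ref{lem:diamond-2-choi}) I have $\ex{X}\ge \eps^2/(2\din^2\dout)$ whenever $\dd(\cN,\cD)\ge\eps$. The quantitative input that makes the test work is the variance bound $\var(X)=\cO(\ex{X}^2)$ of Thm.~\ref{thm:PZ-D}: writing $\ex{X^2}=\var(X)+\ex{X}^2\le(1+C)\ex{X}^2$ for an absolute constant $C$, the Paley–Zygmund inequality with threshold $\theta=1/2$ yields
\begin{align*}
\pr{X\ge \tfrac12\ex{X}}\ge \frac{(1-\theta)^2\,\ex{X}^2}{\ex{X^2}}\ge \frac{1}{4(1+C)}=:p_0,
\end{align*}
a strictly positive absolute constant. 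Thus for each sampled $\ket{\phi_k}$, with probability at least $p_0$ the output satisfies $\|\cN(\proj{\phi_k})-\dI/\dout\|_2\ge \eps/(2\sqrt{\dout}\din)$, i.e.\ the inner call faces hypothesis $h_1$. I would then amplify: the events that $\phi_1,\dots,\phi_M$ are ``good'' are i.i.d., so at least one is good with probability $1-(1-p_0)^M$, and conditioning on a good index the inner call outputs $h_1$ except with probability $\delta$. Hence the tester returns ``$\eps$-far'' with probability at least $(1-(1-p_0)^M)(1-\delta)$, which exceeds $2/3$ once $M$ is taken to be a sufficiently large absolute constant (the value $M=2200$ in Alg.~\ref{alg-dep} is calibrated to the constant $C$ of Thm.~\ref{thm:PZ-D}).

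For the measurement count, each inner call certifies a $\dout$-dimensional state to precision $\eps/(\sqrt{\dout}\din)$ in the $2$-norm with constant error probability, which by the guarantee of Alg.~\ref{alg-states} costs $\cO\!\left(\sqrt{\dout}/(\eps/(\sqrt{\dout}\din))^2\right)=\cO(\din^2\dout^{1.5}/\eps^2)$ measurements; multiplying by the constant $M$ preserves this bound. The main obstacle is not any individual computation—the hard estimate $\var(X)=\cO(\ex{X}^2)$ has already been secured in Thm.~\ref{thm:PZ-D}—but rather the careful composition of three independent layers of randomness, namely the Paley–Zygmund anti-concentration of $X$, the amplification over the $M$ random inputs, and the internal error of the state-certification subroutine, so that the final success probability clears $2/3$ simultaneously under both hypotheses while the constants $M$ and $\delta$ remain dimension-independent.
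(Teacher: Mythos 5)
Your proposal is correct and follows essentially the same route as the paper's proof: a union bound over the $M$ subroutine calls under $H_0$, and under $H_1$ the combination of the expectation lower bound \eqref{ineq-expectation} with the variance bound of Thm.~\ref{thm:PZ-D} via Paley--Zygmund to get a constant-probability good input, amplified over $M=\cO(1)$ repetitions, with the measurement count inherited from Alg.~\ref{alg-states} at precision $\eps/(2\sqrt{\dout}\din)$. The only cosmetic deviations are your use of the standard form of Paley--Zygmund (the paper uses the equivalent variance-based form, yielding the explicit constant $10^{-3}$) and your packaging of the amplification as ``at least one good index succeeds'' rather than the paper's product of per-call failure probabilities; both yield the same conclusion.
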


\begin{proof}
    \textbf{Complexity.} We start by showing that  Alg.~\ref{alg-dep} uses $\cO(\din^2\dout^{1.5}/\eps^2)$  measurements. Note that Alg.~\ref{alg-states} requires $\cO(\sqrt{\dout}\log(3M)/\eta^2)$ to test whether a $\dout$-dimensional quantum state $\rho$ satisfies $\rho= \frac{\dI}{\dout}$ or $\left\|\rho-\frac{\dI}{\dout}\right\|_2\ge \eta$ with an error probability $1/{(3M)}$. We apply this test for the state $\cN(\proj{\phi})$ , which is a $\dout$-dimensional quantum state, and $\eta= \frac{\eps}{2\sqrt{\dout}\din}$. So the testing identity of states algorithm will use a number of measurements $\cO(\din^2\dout^{1.5}\log(3M)/\eps^2)$. Since we repeat this test a number of $M=\cO(1)$ times, the total number of measurements is $\cO(\din^2\dout^{1.5}/\eps^2)$.

\textbf{Correctness.} It remains to show that Alg.~\ref{alg-dep} is $1/3$-correct. For this, we need to control two error probabilities. The first one, under the null hypothesis $H_0$, the channel $\cN=\cD$ so for all $k\in [M]$, $\cN(\proj{\phi_k}))=\frac{\dI}{\dout} $. Hence, for all the quantum states test we are under the null hypothesis $h_0$. Therefore 
\begin{align*}
    \mathds{P}_{H_0}(\text{error}) &= \mathds{P}_{H_0}(\exists k\in [M]: i_k=1)
    \\&\le\sum_{k=1}^M\mathds{P}_{h_0}(i_k=1)\le\sum_{k=1}^M\frac{1}{3M}=\frac{1}{3}.
\end{align*}
The second error probability concerns the alternate hypothesis $H_1$. In this case, the channel $\cN$ is $\eps$-far from the depolarizing channel $\cD$. For a given $k\in [M]$, let us lower bound the probability of being under $h_1$ at step $k$ of the algorithm. Let   $X_k=\left\|\cN(\proj{\phi_k}) -\frac{\dI}{\dout}\right\|_2^2$. Recall that Thm.~\ref{thm:PZ-D} says:
    \begin{align*}
    \var(X_k)\le 105\ex{X_k}^2  
\end{align*}
Therefore, the  Paley-Zygmund inequality implies
\begin{align*}
    \pr{X_k\ge \frac{1}{2}\ex{X_k}}&\ge 1-\frac{1}{\frac{1}{4}\frac{\ex{X_k}^2}{\var(X_k)}+1}\ge 1-\frac{1}{\frac{1}{8\times 105}+1}
    \ge 10^{-3}.
\end{align*}
Recall also that $\ex{X_k}= \ex{ \left\|\cN(\proj{\phi_k}) -\frac{\dI}{d}\right\|_2^2}\ge \frac{\eps^2}{2\din^2\dout}$ (Ineq.~\ref{ineq-expectation}). Therefore, with a probability at least $10^{-3}$, we have $\left\|\cN(\proj{\phi_k}) -\frac{\dI}{\dout}\right\|_2\ge \frac{\eps}{2\din\sqrt{\dout}}$ which means we are under the alternate hypothesis $h_1$. Denote the good event of $E_k=\left\{\left\|\cN(\proj{\phi_k}) -\frac{\dI}{\dout}\right\|_2
    \ge \frac{\eps}{2\din\sqrt{\dout} } \right\}$, the probability of error under $H_1$ can be controlled as follows:
    \begin{align*}
      \mathds{P}_{H_1}(\text{error})&= \mathds{P}_{H_1}(\forall k \in [M]: i_k=0) 
     = \prod_{k=1}^{M} \mathds{P}_{H_1}(i_k=0) 
      \\&\le \prod_{k=1}^{M} \left( \mathds{P}_{H_1}(i_k=0 | E_k)\pr{E_k}+\mathds{P}_{H_1}(i_k=0| \bar{E}_k)\pr{\bar{E}_k}\right)
       \\&\le \prod_{k=1}^{M} \left( \mathds{P}_{h_1}(i_k=0 )\pr{E_k}+\pr{\bar{E}_k}\right)
       \\&\le \prod_{k=1}^{M} \left( \mathds{P}_{H_1}(i_k=0 | E_k)\pr{E_k}+1-\pr{E_k}\right)
       \\&=\prod_{k=1}^{M} \left( 1-\pr{E_k}\mathds{P}_{h_1}(i_k=1 | E_k)\right)
      \\&\le \left( 1-10^{-3}\cdot \left(1-\frac{1}{3M}\right)  \right)^{M}
      \le \left(   1-\frac{10^{-3}}{2} \right)^{M}
     \le \frac{1}{3}
    \end{align*}
    for $M=2200=\cO(1)$. This concludes the correctness of  Alg.~\ref{alg-dep}.
\end{proof}

\section{Lower bounds for testing identity to the depolarizing channel}
\subsection{Proof of Thm.~\ref{thm:LB_DEP_non-adap}}\label{thm-app:LB_DEP_non-adap}
\begin{theorem}
  Let $\eps\le 1/32$, $\din \ge 80$ and $\dout\ge 10$. \footnote{We didn't try to optimize these conditions.}  Any incoherent ancilla-assisted non-adaptive algorithm for testing identity to the depolarizing channel (in both the trace and diamond distances) requires, in the worst case, a number of measurements
    \begin{align*}
        N=\Omega\left(\frac{\din^2\dout^{1.5}}{\log(\din\dout/\eps)^2\eps^2}\right).
    \end{align*}
 
\end{theorem}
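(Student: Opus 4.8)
The plan is to apply Le Cam's two-point method in its mixture (fuzzy) form: against the simple null $H_0:\cN=\cD$ I pit a \emph{random} alternative $\cN\sim\cP$ supported, with high probability, on channels that are $\eps$-far from $\cD$. Following the announced construction I draw a standard Gaussian direction $\ket{u}$ and an independent traceless Hermitian Gaussian matrix $U$ (off-diagonal entries $\cN_c(0,16/\dout)$), condition on $\cG=\{\|U\|_1\ge\dout,\ \|U\|_\infty\le32\}$, and set $\cN(\rho)=\cD(\rho)+\frac{\eps}{\dout\|u\|_2^2}\bra{u}\rho\ket{u}\,U$. Since any $1/3$-correct tester must separate the two hypotheses, it suffices to show $\TV\bigl(\mathds{P}_{\cD}^{I_{\le N}},\ \mathds{E}_{\cN\sim\cP}\mathds{P}_{\cN}^{I_{\le N}}\bigr)<1/3$ whenever $N$ is below the claimed threshold, and conclude by contraposition; here $I_{\le N}=(I_1,\dots,I_N)$ are the outcomes.

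First I would check that the construction is legitimate. Trace preservation is immediate because $U$ is traceless, and complete positivity holds because $\eps\le1/32$ together with $\|U\|_\infty\le32$ keeps the perturbation small enough that $\cN$ remains inside the set of channels surrounding $\cD$. For $\eps$-farness I feed in the pure state $\proj{u}/\|u\|_2^2$: one computes $\cN(\proj{u}/\|u\|_2^2)-\cD(\proj{u}/\|u\|_2^2)=\tfrac{\eps}{\dout}U$, whose trace norm is $\tfrac{\eps}{\dout}\|U\|_1\ge\eps$ on $\cG$, so $\td(\cN,\cD)\ge\eps$ and a fortiori $\dd(\cN,\cD)\ge\td(\cN,\cD)\ge\eps$, covering both distances. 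That $\cG$ (and thus the farness) occurs with high probability I would establish through Lipschitz concentration of Gaussian random matrices \citep{wainwright2019high}.

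For the indistinguishability bound I would pass to the $\chi^2$-divergence, $\TV^2\le\frac14\chi^2\bigl(\mathds{E}_{\cN}\mathds{P}_{\cN}\,\|\,\mathds{P}_{\cD}\bigr)$, and exploit non-adaptivity: the inputs $\proj{\psi_t}=(A_t\otimes\dI)\proj{\Psi_{\din}}(A_t^\dagger\otimes\dI)$ and the POVMs $\{\lambda_{i_t}^t\proj{\phi_{i_t}^t}\}$ are fixed in advance, so the likelihood ratio factorizes. Introducing two independent copies $(u,U)$ and $(v,V)$ of the alternative and writing $p_{i_t}^{\cN}$ for the probability of outcome $i_t$ at step $t$, one gets
\[1+\chi^2=\mathds{E}_{u,v}\,\mathds{E}_{U,V}\prod_{t=1}^N\bigl(1+Z_t\bigr),\qquad Z_t=\eps^2\,\mathds{E}_{i_t}\bigl[\Phi^{t,i_t}_{u,U}\,\Phi^{t,i_t}_{v,V}\bigr],\]
where each factor $1+Z_t=\sum_{i_t}p_{i_t}^{\cN}p_{i_t}^{\cN'}/p_{i_t}^{\cD}\ge0$ and $\mathds{E}_U Z_t=\mathds{E}_V Z_t=0$, the latter because $\Phi$ is linear in $U$ (resp. $V$) and $\cG$ is invariant under $U\mapsto-U$. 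I then condition $u,v$ on the event $\cE$ of \eqref{eq:Pu} — that $\|u\|_2\ge\sqrt{\din/6}$ and the degree-$8$ overlap polynomials $P^t_u$ all satisfy $P^t_u\le(7\log N)^4\din^2\dout^3$ — which holds with probability $\ge9/10$ by the hypercontractive concentration of Gaussian polynomials \citep[Corollary 5.49]{aubrun2017alice} and a union bound over the $N$ fixed measurements; this union bound is precisely what injects the $\log(N)^4$ factor, and off $\cE$ I pay only an additive $\TV$ error of at most $1/10$.

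On $\cE$ the crux is a second-moment estimate. Since each $Z_t$ is a mean-zero Gaussian polynomial of degree $2$ in $(U,V)$, hypercontractivity \citep[Proposition $5.48$]{aubrun2017alice} gives $\ex{|Z_t|^k}\le k^k\,\ex{Z_t^2}^{k/2}$ for every $k\le N$, reducing everything to $\ex{Z_t^2}$; computing this by Gaussian Wick contraction collapses it onto the overlap polynomials, and the bound $P^t_u\le(7\log N)^4\din^2\dout^3$ yields $\ex{Z_t^2}\le\cO\bigl(\eps^4\log(N)^4/(\din^4\dout^3)\bigr)=:\sigma^2$. Expanding $\prod_t(1+Z_t)=\sum_{S\subseteq[N]}\prod_{t\in S}Z_t$, generalized Hölder together with the moment bounds gives $\bigl|\mathds{E}\prod_{t\in S}Z_t\bigr|\le|S|^{|S|}\sigma^{|S|}$, so $\chi^2\le\sum_{k\ge1}\binom{N}{k}k^k\sigma^k\le\sum_{k\ge1}(eN\sigma)^k=\cO(N\sigma)$ once $eN\sigma<1$. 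Forcing $\chi^2=\Omega(1)$ then requires $N\sigma=\Omega(1)$, i.e. $N\log(N)^2\ge\Omega(\din^2\dout^{1.5}/\eps^2)$; since a successful $N$ is polynomial in $\din,\dout,1/\eps$ I may replace $\log N$ by $\log(\din\dout/\eps)$, obtaining the stated bound. The main obstacle I anticipate is exactly the second-moment computation of $\ex{Z_t^2}$ and its tight reduction to $P^t_u$ — performing the Wick contractions over $U,V$ with the correct $\din^{-4}\dout^{-3}$ bookkeeping, and handling the conditioning of $U$ on $\cG$ by comparing to unconditioned Gaussian moments at the cost of the constant $1/\Pr[\cG]=\cO(1)$; the $\cE$-conditioning and its logarithmic price, though the genuinely new ingredient relative to the $\din=1$ case of \citep{bubeck2020entanglement}, is comparatively routine once this moment is controlled.
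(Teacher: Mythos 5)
Your proposal follows essentially the same route as the paper's proof: the identical random construction $(u,U)$ conditioned on $\cG$, the Le Cam/second-moment (two-replica) bound — which the paper implements via Cauchy--Schwarz rather than naming it $\chi^2$, but the computation is the same — the conditioning on the event $\cE$ with hypercontractive concentration of the degree-$8$ polynomials $P_u^t$ and a union bound supplying the $\log(N)^4$ factor, the reduction of all moments of the mean-zero degree-$2$ polynomial $Z_t$ to $\ex{Z_t^2}$ via \citep[Proposition 5.48]{aubrun2017alice}, and the binomial expansion leading to $N\log(N)^2=\Omega(\din^2\dout^{1.5}/\eps^2)$. The plan is correct and matches the paper's argument, including the acknowledged technical crux of the Wick-contraction bound on $\ex{Z_t^2}$ and the $1/\Pr[\cG]$ comparison to unconditioned Gaussian moments.
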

 \begin{proof}\textbf{Construction.}
    Under the null hypothesis $H_0$ the quantum channel is $\cN(\rho)=\cD(\rho)= \tr(\rho)\frac{\dI}{\dout}$. Under the alternate hypothesis $H_1$, we choose the quantum channel $\cN\sim \cP$ of the form:
	 \begin{align*}
	 	\cN(\rho)= \tr(\rho) \frac{\dI}{\dout} +\frac{\eps}{\dout} \bra{\bar{w}}\rho \ket{\bar{w}} U
	 \end{align*}
	 where $\ket{w}= W\ket{0}$, $W \sim \Haar(\din)$, and $U$ satisfies 
\begin{align*}
   \begin{cases*}
     U_{x,x}=0  & for all   $x\in [\dout]$, \\
       U_{y,x}=\bar{U}_{x,y} \sim  \cN_c(0, \sigma^2) =  \cN(0, \sigma^2/2)+i\, \cN(0, \sigma^2/2) &  for all   $x<y\in [\dout]$, 
\end{cases*} 
\end{align*}
 where the parameter $\sigma$ would be chosen later and we condition on the event \[\cG=\{\|U\|_1\ge \dout, \|U\|_\infty \le 32\}.\] 
  We call this distribution $\cP$ and use the notation $(w,U)\sim \cP$. If we don't condition on the event $\cG$, the distribution of $U$ is denoted $\cP_0$ and we write $U\sim \cP_0$. Random constructions with Gaussian random variables were used for proving lower bounds by \citet{chen2022tight-mixedness,chen2022tight}.
	 Note that $\cN$ is trace preserving since $\tr(U)=0$. In order to prove that $\cN$ is a quantum channel, it remains to show that $\cN$ is completely positive which is equivalent to show that  the corresponding Choi matrix is positive semi-definite. For this we can express the Choi state of the channel $\cN$:
	 \begin{align*}
	 	\cJ_\cN&= \frac{\dI}{\din \dout}+ \frac{\eps}{\din \dout} \sum_{i,j=1}^{\din} \ket{i}\bra{j} \otimes\bra{0}W^\top \ket{i}\bra{j} \bar{W}\ket{0} U
	 	\\&=\frac{\dI}{\din \dout}+ \frac{\eps}{\din \dout} \sum_{i,j} \ket{i}\bra{j} \otimes\bra{i}W \ket{0}\bra{0} W^{\dagger}\ket{j} U
	 		\\&=\frac{\dI}{\din \dout}+ \frac{\eps}{\din \dout} W \ket{0}\bra{0} W^{\dagger}\otimes U
    \\&=\frac{\dI}{D}+ \frac{\eps}{D}  \ket{w}\bra{w} \otimes U
	 \end{align*} 
  where $D=\din \dout$.
\\Observe that $ \|\proj{w}\otimes U  \|_\infty  = \|U \|_\infty \le 32 $  thus $\cJ_\cN\ge 0$ if $\eps \le 1/32$. So under the event $\cG$, the map $\cN$ is a quantum channel. The parameter $\sigma$ should be chosen so that $\dd(\cN,\cD)\ge \eps$. Recall that $\ket{w}= W\ket{0}$, the definition of the diamond distance implies
\begin{align*}
    \dd(\cN,\cD)&\ge \td(\cN, \cD)= \max_{\rho }\|(\cN-\cD)(\rho) \|_1
    \\&\ge \|(\cN-\cD)(\proj{\bar{w}})\|_1
    =\frac{\eps}{\dout}\|U\|_1\ge \eps
\end{align*}
where we use the fact that under the event $\cG$ we have $\|U\|_1\ge \dout$.
Note that the lower bound we prove in this theorem holds for the stronger condition given by the  trace distance. 
\\Now we move to show that the event $\cG$ occurs with high probability.
\begin{lemma}\label{lem:concentration ||U||_1}
There is a constant $c>0$ such that we have:
\begin{align*}
        \pr{|\|U\|_1-\ex{\|U\|_1}|>s }\le \exp\left(-\frac{cs^2}{\dout\sigma^2}\right) 
\end{align*}
\end{lemma}
\begin{proof}
    The function $U\mapsto \|U\|_1$ is $\sqrt{\dout}$-Lipschitz w.r.t. the Hilbert-Schmidt norm. Indeed, by the triangle inequality and the Cauchy Schwarz inequality $|\|U\|_1-\|V\|_1|\le \|U-V\|_1\le \sqrt{\dout}\|U-V\|_2$. The concentration of Lipschitz functions of Gaussian random variables \citep[Theorem 2.26]{wainwright2019high}  yields exactly the desired statement.
\end{proof}
Next, we need a lower bound on the expectation of $\|U\|_1$. By the Hölder's inequality:
\begin{align*}
    \ex{\|U\|_1}\ge \sqrt{\frac{\ex{\|U\|_2^2}^3}{\ex{\|U\|_4^4}}}\ge \sqrt{\frac{(\dout^2\sigma^2)^3}{4\dout^3\sigma^4}}= \frac{\dout\sqrt{\dout}\sigma}{2}. 
\end{align*}
It is sufficient to choose $\sigma=\frac{4}{\sqrt{\dout}}$ so that $\ex{\|U\|_1}\ge 2\dout$
and by Lem.~\ref{lem:concentration ||U||_1}, we have $\|U\|_1\ge \dout$ with a probability  $1-\exp(-\Omega(\dout^2))$. 
It remains to see that, for this choice of $\sigma= \frac{4}{\sqrt{\dout}}$, the event $\{\|U\|_\infty \le 32\}$ 
also occurs with high probability.  Indeed, 
let $\cS$ be a $1/4$-net of $\bS^{\dout}$ of size at most $8^{\dout}$. By the union bound: 
\begin{align*}
    \pr{\|U\|_\infty >32}&=\pr{\exists \phi\in \bS^{\dout}: \bra{\phi} U\ket{\phi}=\|U\|_\infty, \|U\|_\infty >32 }
    \\&\le \pr{\exists \phi\in \cS: \bra{\phi} U\ket{\phi}>\frac{1}{2}\|U\|_\infty, \|U\|_\infty >32 }
    \\&\le |\cS|\pr{ \bra{\phi} U\ket{\phi}>16}
    \le 8^{\dout} e^{-8\dout}\le e^{-4\dout}.
\end{align*}
Finally, with a probability at least $1-\exp(-\Omega(\dout^2))-\exp(-\Omega(\dout))$  the event $\cG$ is satisfied and we have a quantum channel $\cN$ that is $\eps$-far in the diamond distance  from the depolarizing channel $\cD$. A $1/3$-correct algorithm $\cA$ should distinguish between the channels $\cN$ and $\cD$ with a probability of error at most $1/3$. Let $N$ be a sufficient number of measurements for this task and $I_1,\dots, I_N$ be  the observations of the algorithm $\cA$. The Data-Processing inequality applied on the $\TV$-distance gives LeCam's method \citep{lecam1973convergence}:
	\begin{align*}
\TV\left(\mathds{P}_{H_0}^{I_1,\dots, I_N}\Big\| \mathds{P}_{H_1}^{I_1,\dots, I_N}\right)&\ge  \TV\left(\Ber(\mathds{P}_{H_0}(\cA=1))\| \Ber(\mathds{P}_{H_1}(\cA=1))\right) 
\\&\ge \TV(\Ber(1/3)\| \Ber(2/3))= \frac{1}{3}.
\end{align*}
Now, we need to upper bound this $\TV$ distance with an expression involving $N, \din, \dout$ and $\eps$.

\textbf{Upper bound on the $\TV$ distance.} 
The non-adaptive algorithm $\cA$ would choose at step $t$ the input $\rho_t$ and the measurement device $\cM_t=\{\lambda_i^t \proj{\phi_i^t}\}_{i\in \cI_t}$. Observe that we can always reduce w.l.o.g. to such a POVM. 
Under the null hypothesis $H_0$, the quantum channel $\cN=\cD$ so the probability distribution of the outcomes is exactly:
\begin{align*}
   \mathds{P}_{H_0}^{I_1,\dots, I_N}= \left\{\prod_{t=1}^N\lambda_{i_t}^t \bra{\phi_{i_t}^t}\id\otimes \cD (\rho_t) \ket{\phi_{i_t}^t}\right\}_{i_1,\dots, i_N} .
\end{align*}
On the other hand, under the alternate hypothesis $H_1$, the probability of the outcomes is exactly:
\begin{align*}
   \mathds{P}_{H_1}^{I_1,\dots, I_N}= \left\{\prod_{t=1}^N\lambda_{i_t}^t \bra{\phi_{i_t}^t}\id\otimes \cN (\rho_t) \ket{\phi_{i_t}^t}\right\}_{i_1,\dots, i_N}
\end{align*}
Recall that $D=\din \dout$ and
\begin{align*}
   \cJ_\cD= \frac{\dI}{D} ~~\text{ and }~~\cJ_\cN= \frac{\dI}{D}+ \frac{\eps}{D}  \proj{w} \otimes U.
\end{align*}
We can suppose w.l.o.g. that each input state is pure because of the convexity of the $\TV$ distance. That is, for $t\in [N]$, we write $\rho_t= \proj{\psi_t}$. Moreover, we can write each rank one input state and measurement vector as follows:
\begin{align*}
    \ket{\psi_t}= A_t\otimes \dI \ket{\Psi_{\din}} ~~\text{ and }~~\ket{\phi_{i_t}^t}=  B_{i_t}^t\otimes \dI \ket{\Psi_{\dout}}
\end{align*}
where $\ket{\Psi_d}= \frac{1}{\sqrt{d} } \sum_{i=1}^{d} \ket{ii}$ and the matrices $A_t \in \dC^{\danc \times \din }$ and $B_{i_t}^t \in \dC^{\danc \times \dout }$ verify: 
\begin{align*}
   \tr(A_tA_t^\dagger )= \din  ~~\text{ and }~~  \tr(B_{i_t}^tB_{i_t}^{t,\dagger} )= \dout.
\end{align*}
Note that we have for all $t$, for all $X\in \dC^{\danc\otimes \danc}$ wa have $\sum_{i_t} \lambda_{i_t}^t B_{i_t}^{t,\dagger}XB_{i_t}^{t} = \dout\tr(X)\dI_{\dout}$. Indeed, the condition of the POVM $\cM_t$ implies: 
\begin{align*}
 X\otimes \dI=    X\otimes \dI\sum_{i_t}\lambda_{i_t}^t \proj{\phi_{i_t}^t} =  \sum_{i_t}\lambda_{i_t}^t XB_{i_t}^t\otimes \dI \proj{\Psi_{\dout}} B_{i_t}^{t, \dagger}\otimes\dI 
\end{align*}
hence by taking the partial trace
\begin{align*}
    \tr(X) \dI = \frac{1}{\dout}\sum_{i_t, i, j}\lambda_{i_t}^t \bra{j}  B_{i_t}^{t, \dagger}X B_{i_t}^{t} \ket{i}  \ket{i}\bra{j}.
\end{align*}
Finally
\begin{align*}
    \sum_{i_t} \lambda_{i_t}^t B_{i_t}^{t,\dagger}XB_{i_t}^{t} = \dout\tr(X)\dI_{\dout}.
\end{align*}
By taking $X=\dI$  and the partial trace on the second system we have 
\begin{align*}
    \sum_{i_t} \lambda_{i_t}^t B_{i_t}^{t} B_{i_t}^{t, \dagger} =\dout^2 \dI_{\danc}.
\end{align*}
Now we can re-write the  distribution  of the observations under the null hypothesis as follows:
\begin{align*}
   \mathds{P}_{H_0}^{I_1,\dots, I_N}&= \left\{\prod_{t=1}^N\lambda_{i_t}^t \bra{\phi_{i_t}^t}\id\otimes \cD (\rho_t) \ket{\phi_{i_t}^t}\right\}_{i_1,\dots, i_N} 
   \\&=\left\{\prod_{t=1}^N\lambda_{i_t}^t \bra{\Psi_{\dout}}(B_{i_t}^{t,\dagger}A_t \otimes \dI)\cJ_\cD (A_t^\dagger B_{i_t}^{t}\otimes \dI) \ket{\Psi_{\dout}}
   \right\}_{i_1,\dots, i_N} 
   \\&= \left\{\prod_{t=1}^N\frac{\lambda_{i_t}^t}{D} \bra{\Psi_{\dout}}(B_{i_t}^{t,\dagger}A_tA_t^\dagger B_{i_t}^{t} \otimes \dI) \ket{\Psi_{\dout}}
   \right\}_{i_1,\dots, i_N} 
   \\&= \left\{\prod_{t=1}^N\frac{\lambda_{i_t}^t}{D\dout} \tr(B_{i_t}^{t,\dagger}A_tA_t^\dagger B_{i_t}^{t})
   \right\}_{i_1,\dots, i_N} ,
\end{align*}
and under the alternate hypothesis as follows:
\begin{align*}
   \mathds{P}_{H_1}^{I_1,\dots, I_N}&= \left\{\prod_{t=1}^N\lambda_{i_t}^t \bra{\phi_{i_t}^t}\id\otimes \cN (\rho_t) \ket{\phi_{i_t}^t}\right\}_{i_1,\dots, i_N}
  \\&=\left\{\prod_{t=1}^N\lambda_{i_t}^t \bra{\Psi_{\dout}}(B_{i_t}^{t,\dagger}A_t \otimes \dI)\cJ_\cN (A_t^\dagger B_{i_t}^{t}\otimes \dI) \ket{\Psi_{\dout}}
   \right\}_{i_1,\dots, i_N} 
    \\&=\left\{\prod_{t=1}^N\lambda_{i_t}^t \bra{\Psi_{\dout}}(B_{i_t}^{t,\dagger}A_t \otimes \dI)\left(\frac{\dI}{D}+ \frac{\eps}{D}  \proj{w} \otimes U \right)(A_t^\dagger B_{i_t}^{t}\otimes \dI) \ket{\Psi_{\dout}}
   \right\}_{i_1,\dots, i_N}
    \\&=\left\{\prod_{t=1}^N \left( \frac{\lambda_{i_t}^t}{D\dout} \tr(B_{i_t}^{t,\dagger}A_tA_t^\dagger B_{i_t}^{t})+ \frac{\eps\lambda_{i_t}^t}{D} \bra{\Psi_{\dout}}(B_{i_t}^{t,\dagger}A_t  \proj{w} A_t^\dagger B_{i_t}^{t}  \otimes U) \ket{\Psi_{\dout}}\right)
   \right\}_{i_1,\dots, i_N}
\end{align*}
So, we can express the $\TV$ distance as follows:
\begin{align*}
   & 2\TV\left(\mathds{P}_{H_0}^{I_1,\dots, I_N}\Big\| \mathds{P}_{H_1}^{I_1,\dots, I_N}\right)
   \\&=\sum_{i_1,\dots, i_N}\Bigg| \mathds{E}_{(w,U)\sim\cP}\prod_{t=1}^N \left( \frac{\lambda_{i_t}^t}{D\dout} \tr(B_{i_t}^{t,\dagger}A_tA_t^\dagger B_{i_t}^{t})+ \frac{\eps\lambda_{i_t}^t}{D} \bra{\Psi_{\dout}}(B_{i_t}^{t,\dagger}A_t  \proj{w} A_t^\dagger B_{i_t}^{t}  \otimes U) \ket{\Psi_{\dout}}\right)
   \\& \qquad\qquad \qquad\quad\ \ -\prod_{t=1}^N\frac{\lambda_{i_t}^t}{D\dout} \tr(B_{i_t}^{t,\dagger}A_tA_t^\dagger B_{i_t}^{t})\Bigg|
   \\&= \mathds{E}_{\le N}\left| \mathds{E}_{(w,U)\sim\cP}\prod_{t=1}^N \left(1+\eps\dout \frac{\bra{\Psi_{\dout}}(B_{i_t}^{t,\dagger}A_t  \proj{w} A_t^\dagger B_{i_t}^{t}  \otimes U) \ket{\Psi_{\dout}}}{\tr(B_{i_t}^{t,\dagger}A_tA_t^\dagger B_{i_t}^{t})} \right) -1\right|
   \\&= \mathds{E}_{\le N}\left| \mathds{E}_{(w,U)\sim\cP}\prod_{t=1}^N \left(1+\eps\cdot  \frac{\tr(B_{i_t}^{t,\dagger}A_t  \proj{w} A_t^\dagger B_{i_t}^{t}  U)}{\tr(B_{i_t}^{t,\dagger}A_tA_t^\dagger B_{i_t}^{t})} \right) -1\right|
\end{align*}
where we use the notation $\mathds{E}_{\le N}(X(i_1,\dots, i_N))= \sum_{i_1,\dots, i_N}\left(\prod_{t=1}^N \frac{\lambda_{i_t}^t}{D\dout} \tr(B_{i_t}^{t,\dagger}A_tA_t^\dagger B_{i_t}^{t})\right)X(i_1,\dots, i_N)$. 
Let $u$ be a standard Gaussian vector  such that $\ket{w}=\frac{\ket{u}}{\|u\|_2}$. We condition on the event $\cE$ that $u$ satisfies: 
\begin{enumerate}
    \item $\|u\|_2 \;\ge \sqrt{\frac{1}{6}\din}$ ~~ and 
    \item  $\forall t\in [N]:P_u^t\le \frac{(7\log(N))^4}{\din^4 \dout}$
\end{enumerate}
where $P_u^t$ is defined as 
\begin{align*}
P^t_w&= \sum_{i_t,j_t}\frac{\lambda^t_{i_t}\lambda^t_{j_t}}{D^2\dout^2}\left(\frac{|\bra{w}A_t^\dagger B_{i_t}^t  B_{j_t}^{t,\dagger}A_t  \ket{w}|^4}{\tr(B_{i_t}^{t,\dagger}A_tA_t^\dagger B_{i_t}^{t})\tr(B_{j_t}^{t,\dagger}A_tA_t^\dagger B_{j_t}^{t})} \right)
   \\ P_u^t&= \sum_{i_t,j_t}\frac{\lambda^t_{i_t}\lambda^t_{j_t}}{D^2\dout^2\din^4}\left(\frac{|\bra{u}A_t^\dagger B_{i_t}^t  B_{j_t}^{t,\dagger}A_t  \ket{u}|^4}{\tr(B_{i_t}^{t,\dagger}A_tA_t^\dagger B_{i_t}^{t})\tr(B_{j_t}^{t,\dagger}A_tA_t^\dagger B_{j_t}^{t})} \right).
\end{align*}
Note that under the event $\cE$, we have $P_w^t\le 6^4 P_u^t $. We claim that for  $\din \ge 80$, we have with probability at least $9/10$, the event $\cE$ is satisfied. The proof of this claim is deferred to Lemma~\ref{lem: event E sat}.
\\Now, we can distinguish whether the event $\cE$ is verified or not in the $\TV$ distance. Let $$\Psi_{i,w, U}=\prod_{t=1}^N \left(1+\eps\cdot \Phi^{t,i_t}_{w, U} \right)\quad \text{ where }\quad  \Phi^{t,i_t}_{w, U}=\frac{\tr(B_{i_t}^{t,\dagger}A_t  \proj{w} A_t^\dagger B_{i_t}^{t}  U)}{\tr(B_{i_t}^{t,\dagger}A_tA_t^\dagger B_{i_t}^{t})}.  $$
By the triangle inequality:
\begin{align*}
     &\mathds{E}_{\le N}\left| \mathds{E}_{(w,U)\sim\cP}\prod_{t=1}^N \left(1+\eps\cdot \frac{\tr(B_{i_t}^{t,\dagger}A_t  \proj{w} A_t^\dagger B_{i_t}^{t}  U)}{\tr(B_{i_t}^{t,\dagger}A_tA_t^\dagger B_{i_t}^{t})}  \right) -1\right|
     \\ &\le    \mathds{E}_{\le N}\left| \mathds{E}_{\substack{(w,U)\sim\cP}}\left[ \bm{1}\{\cE\}(\Psi_{i,w, U} -1)\right]\right|+  \mathds{E}_{\le N}
     \left| \mathds{E}_{\substack{(w,U)\sim\cP}}\left[ \bm{1}\{\bar{\cE}\}(\Psi_{i,w, U} -1)\right]\right|
      \\ &\le    \mathds{E}_{\le N}\left| \mathds{E}_{\substack{(w,U)\sim\cP}}\left[ \bm{1}\{\cE\}(\Psi_{i,w, U} -1)\right]\right|+  \mathds{E}_{\le N}
     \mathds{E}_{\substack{(w,U)\sim\cP}}\left[ \bm{1}\{\bar{\cE}\}(\Psi_{i,w, U} +1)\right]
     \\&= \mathds{E}_{\le N}\left| \mathds{E}_{\substack{(w,U)\sim\cP}}\left[ \bm{1}\{\cE\}(\Psi_{i,w, U} -1)\right]\right|+ 2\pr{\Bar{\cE}}
\end{align*}
where we use the fact that $\sum_{i_t} \lambda_{i_t}^t B_{i_t}^{t,\dagger}XB_{i_t}^{t} = \dout\tr(X)\dI$ and
\begin{align*}
    \mathds{E}_{\le N}\Psi_{i,w, U} &= \sum_{i_1,\dots, i_N}\left( \prod_{t=1}^N\frac{\lambda_{i_t}^t}{D\dout} \cdot \tr(B_{i_t}^{t,\dagger}A_tA_t^\dagger B_{i_t}^{t})\right)\prod_{t=1}^N \left(1+\eps\cdot \frac{\tr(B_{i_t}^{t,\dagger}A_t  \proj{w} A_t^\dagger B_{i_t}^{t}  U)}{\tr(B_{i_t}^{t,\dagger}A_tA_t^\dagger B_{i_t}^{t})} \right) 
    \\&= \prod_{t=1}^N \sum_{i_t}\left(  \frac{\lambda_{i_t}^t}{D\dout} \cdot \tr(B_{i_t}^{t,\dagger}A_tA_t^\dagger B_{i_t}^{t})+ \frac{\eps\lambda_{i_t}^t}{D\dout}\cdot \tr(B_{i_t}^{t,\dagger}A_t  \proj{w} A_t^\dagger B_{i_t}^{t}  U) \right) 
    \\&= \prod_{t=1}^N \left(1+\frac{\eps}{D} \cdot\tr(A_t  \proj{w} A_t^\dagger )\tr(  U)\right)=1. 
\end{align*}
It remains to upper bound the expectation $\mathds{E}_{\le N}\left| \mathds{E}_{\substack{(w,U)\sim\cP}}\left[ \bm{1}\{\cE\}(\Psi_{i,w, U} -1)\right]\right|$. For this, we follow \citep{bubeck2020entanglement} and apply the Cauchy Schwarz inequality and Hölder's inequality: 
\begin{align*}
  & \left( \mathds{E}_{\le N}\left| \mathds{E}_{\substack{(w,U)\sim\cP}}\left[\bm{1}\{\cE\}(\Psi_{i,w, U}-1)\right]\right|\right)^2+\pr{\cE}^2
   \\&\le \mathds{E}_{\le N}\left( \mathds{E}_{\substack{(w,U)\sim\cP}}\left[\bm{1}\{\cE(w)\}(\Psi_{i,w, U}-1)\right]\right)^2+\pr{\cE}^2
   \\&= \mathds{E}_{\le N} \mathds{E}_{\substack{(w,U)\sim\cP}}\mathds{E}_{\substack{(z,V)\sim\cP}}\bm{1}\{\cE(w)\}\Psi_{i,w, U}\bm{1}\{\cE(z)\}\Psi_{i,z, V} 
    \\&\le  \mathds{E}_{\substack{(w,U)\sim\cP}}\bm{1}\{\cE(w)\}\mathds{E}_{\substack{(z,V)\sim\cP}}\bm{1}\{\cE(z)\} \prod_{t=1}^N \mathds{E}_{i_t}\left(1+\eps \cdot\Phi^{t,i_t}_{w, U}\right)\left(1+\eps \cdot\Phi^{t,i_t}_{z, V} \right) 
    \\&=  \mathds{E}_{\substack{(w,U)\sim\cP}}\bm{1}\{\cE(w)\}\mathds{E}_{\substack{(z,V)\sim\cP}}\bm{1}\{\cE(z)\}\prod_{t=1}^N \left(1+\eps^2\cdot \mathds{E}_{i_t} \Phi^{t,i_t}_{w, U}\Phi^{t,i_t}_{z, V} \right) ~~~~  (\text{ because } ~~\mathds{E}_{i_t} \Phi^{t,i_t}_{w, U}=0)
    \\&\le \max_{1\le t \le N} \mathds{E}_{\substack{(w,U)\sim\cP}}\bm{1}\{\cE(w)\}\mathds{E}_{\substack{(z,V)\sim\cP}}\bm{1}\{\cE(z)\}\left(1+\eps^2 \cdot \mathds{E}_{i_t} \Phi^{t,i_t}_{w, U}\Phi^{t,i_t}_{z, V} \right)^N
     \\&\le  \frac{1}{\pr{\cG}^2}\max_{1\le t\le N} \mathds{E}_{U,V\sim \cP_0} \mathds{E}_{\substack{w,z}}\bm{1}\{\cE(w)\}\bm{1}\{\cE(z)\}\left(1+\eps^2 \cdot \left|\mathds{E}_{i_t} \Phi^{t,i_t}_{w, U}\Phi^{t,i_t}_{z, V}\right| \right)^N
      \\&\le  \frac{1}{(1-e^{-\Omega(\dout)})^2}\max_{1\le t\le N} \mathds{E}_{U,V\sim \cP_0} \mathds{E}_{\substack{w,z}}\bm{1}\{\cE(w)\}\bm{1}\{\cE(z)\}\left(1+\eps^2\cdot \left|\mathds{E}_{i_t} \Phi^{t,i_t}_{w, U}\Phi^{t,i_t}_{z, V}\right| \right)^N
\end{align*}
Note that at the last two inequalities, we don't require anymore that $U$ satisfies $\|U\|_\infty\le 32 $ and $\|U\|_1\ge \dout$.  This is possible because the integrand is positive and $\pr{\cG}\ge 1-e^{-\Omega(\dout)}$. 

For $t\in [N]$, let $Z_t$ be the polynomial in $\{U_{i,j},V_{i,j}\}_{i,j=1}^{\dout}$ defined as follows:
\begin{align*}
    Z_t&=\eps^2 \mathds{E}_{i_t} \Phi^{t,i_t}_{w, U}\Phi^{t,i_t}_{z, V}
    \\&=\eps^2 \sum_{i_t }\left( \frac{\lambda_{i_t}^t}{D\dout} \tr(B_{i_t}^{t,\dagger}A_tA_t^\dagger B_{i_t}^{t})\right) \left(\frac{\tr(B_{i_t}^{t,\dagger}A_t  \proj{w} A_t^\dagger B_{i_t}^{t}  U)}{\tr(B_{i_t}^{t,\dagger}A_tA_t^\dagger B_{i_t}^{t})}\right)\left(\frac{\tr(B_{i_t}^{t,\dagger}A_t  \proj{z} A_t^\dagger B_{i_t}^{t}  V)}{\tr(B_{i_t}^{t,\dagger}A_tA_t^\dagger B_{i_t}^{t})}\right)
     \\&=\eps^2 \sum_{i_t }\left( \frac{\lambda_{i_t}^t}{D\dout} \right) \left(\frac{\tr(B_{i_t}^{t,\dagger}A_t  \proj{w} A_t^\dagger B_{i_t}^{t}  U)\tr(B_{i_t}^{t,\dagger}A_t  \proj{z} A_t^\dagger B_{i_t}^{t}  V)}{\tr(B_{i_t}^{t,\dagger}A_tA_t^\dagger B_{i_t}^{t})}\right)
\end{align*}

Note that $Z_t$ is a polynomial of degree $2$ of expectation $0$. The Hypercontractivity \citep[Proposition $5.48$]{aubrun2017alice} implies for all $k\in \{1,\dots,N\}$:
\begin{align*}
    \ex{|Z_t|^{k}}\le k^k  \ex{Z_t^{2}}^{k/2}. 
\end{align*}
This means that we only need to control the second moment of $Z_t$. We have:
\begin{align*}
    \mathds{E}_{U,V}(Z_t^{2})= \eps^4  \mathds{E}_{U,V}\Bigg[\sum_{i_t,j_t}\frac{\lambda^t_{i_t}\lambda^t_{j_t}}{D^2\dout^2}  &\left(\frac{\tr(B_{i_t}^{t,\dagger}A_t  \proj{w} A_t^\dagger B_{i_t}^{t}  U)\tr(B_{i_t}^{t,\dagger}A_t  \proj{z} A_t^\dagger B_{i_t}^{t}  V)}{\tr(B_{i_t}^{t,\dagger}A_tA_t^\dagger B_{i_t}^{t})}\right) 
    \\&\cdot\left(\frac{\tr(B_{j_t}^{t,\dagger}A_t  \proj{w} A_t^\dagger B_{j_t}^{t}  U)\tr(B_{j_t}^{t,\dagger}A_t  \proj{z} A_t^\dagger B_{j_t}^{t}  V)}{\tr(B_{j_t}^{t,\dagger}A_tA_t^\dagger B_{j_t}^{t})}\right)\Bigg]. 
\end{align*}
Let $\Xi_{i_t, j_t}(w)= \mathds{E}_U \big[\tr(B_{i_t}^{t,\dagger}A_t  \proj{w} A_t^\dagger B_{i_t}^{t}  U)\tr(B_{j_t}^{t,\dagger}A_t  \proj{w} A_t^\dagger B_{j_t}^{t}  U)\big]$, we have :
\begin{align*}
       \mathds{E}_{U,V}(Z_t^{2}) =  \eps^4 \sum_{i_t,j_t}\frac{\lambda^t_{i_t}\lambda^t_{j_t}}{D^2\dout^2}  \left(\frac{\Xi_{i_t, j_t}(w)\Xi_{i_t, j_t}(z)}{\tr(B_{i_t}^{t,\dagger}A_tA_t^\dagger B_{i_t}^{t})\tr(B_{j_t}^{t,\dagger}A_tA_t^\dagger B_{j_t}^{t})}\right)
\end{align*}
For given $i_t, j_t$, we can upper bound the expectation $\Xi_{i_t, j_t}$:
\begin{align*}
   \Xi_{i_t, j_t}( w ) &= \mathds{E}_U \big[\tr(B_{i_t}^{t,\dagger}A_t  \proj{w} A_t^\dagger B_{i_t}^{t}  U)\tr(B_{j_t}^{t,\dagger}A_t  \proj{w} A_t^\dagger B_{j_t}^{t}  U)\big] \notag
   \\&= \sum_{x,y,x',y'} \ex{U_{x,y}U_{x',y'}} \tr(B_{i_t}^{t,\dagger}A_t  \proj{w} A_t^\dagger B_{i_t}^{t} \ket{x}\bra{y})\tr(B_{j_t}^{t,\dagger}A_t  \proj{w} A_t^\dagger B_{j_t}^{t}  \ket{x'}\bra{y'}) \notag
    \\&\le  \frac{16}{\dout}\sum_{x,y} \bra{w}A_t^\dagger B_{i_t}^t \proj{x} B_{j_t}^{t,\dagger}A_t  \ket{w} \bra{w} A_t^\dagger B_{j_t}^{t}  \proj{y}B_{i_t}^{t,\dagger}A_t  \ket{w} \notag  
    \\&= \frac{16}{\dout}\bra{w}A_t^\dagger B_{i_t}^t  B_{j_t}^{t,\dagger}A_t  \ket{w} \bra{w} A_t^\dagger B_{j_t}^{t}B_{i_t}^{t,\dagger}A_t  \ket{w} \notag
      \\&= \frac{16}{\dout}|\bra{w}A_t^\dagger B_{i_t}^t  B_{j_t}^{t,\dagger}A_t  \ket{w}|^2 
\end{align*}
Therefore we can  upper bound the expectation of $Z_t^2$ as follows:  
\begin{align*}
     \mathds{E}_{U,V}(Z_t^{2}) &=  \eps^4 \sum_{i_t,j_t}\frac{\lambda^t_{i_t}\lambda^t_{j_t}}{D^2\dout^2}  \left(\frac{\Xi_{i_t, j_t}(w)\Xi_{i_t, j_t}(z)}{\tr(B_{i_t}^{t,\dagger}A_tA_t^\dagger B_{i_t}^{t})\tr(B_{j_t}^{t,\dagger}A_tA_t^\dagger B_{j_t}^{t})}\right)
    \\&= \eps^4   \left(\frac{ 16^2}{\dout^2}\right)\sum_{i_t,j_t}\frac{\lambda^t_{i_t}\lambda^t_{j_t}}{D^2\dout^2}\left(\frac{|\bra{w}A_t^\dagger B_{i_t}^t  B_{j_t}^{t,\dagger}A_t  \ket{w}|^2|\bra{z}A_t^\dagger B_{i_t}^t  B_{j_t}^{t,\dagger}A_t  \ket{z}|^2}{\tr(B_{i_t}^{t,\dagger}A_tA_t^\dagger B_{i_t}^{t})\tr(B_{j_t}^{t,\dagger}A_tA_t^\dagger B_{j_t}^{t})} \right)
\end{align*}
This implies an upper bound for every moment, that is for $k=1, \dots, N$, we have:
\begin{align*}
   & \mathds{E}_{\substack{w,z}}\bm{1}\{\cE(w)\}\bm{1}\{\cE(z)\}\mathds{E}_{U,V}(|Z_t|^{k})    \\&\le \mathds{E}_{\substack{w,z}}\bm{1}\{\cE(w)\}\bm{1}\{\cE(z)\} k^k  \Bigg[\eps^4   \left(\frac{16^2}{\dout^2}\right)\sum_{i_t,j_t}\frac{\lambda^t_{i_t}\lambda^t_{j_t}}{D^2\dout^2}\left(\frac{|\bra{w}A_t^\dagger B_{i_t}^t  B_{j_t}^{t,\dagger}A_t  \ket{w}|^2|\bra{z}A_t^\dagger B_{i_t}^t  B_{j_t}^{t,\dagger}A_t  \ket{z}|^2}{\tr(B_{i_t}^{t,\dagger}A_tA_t^\dagger B_{i_t}^{t})\tr(B_{j_t}^{t,\dagger}A_tA_t^\dagger B_{j_t}^{t})} \right)\Bigg]^{k/2}
    \\&\le k^k  \mathds{E}_{\substack{w,z}}\bm{1}\{\cE(w)\}\bm{1}\{\cE(z)\}\Bigg[\eps^4   \left(\frac{ 16^2}{\dout^2}\right)\sum_{i_t,j_t}\frac{\lambda^t_{i_t}\lambda^t_{j_t}}{D^2\dout^2}\left(\frac{|\bra{w}A_t^\dagger B_{i_t}^t  B_{j_t}^{t,\dagger}A_t  \ket{w}|^4+|\bra{z}A_t^\dagger B_{i_t}^t  B_{j_t}^{t,\dagger}A_t  \ket{z}|^4}{2\tr(B_{i_t}^{t,\dagger}A_tA_t^\dagger B_{i_t}^{t})\tr(B_{j_t}^{t,\dagger}A_tA_t^\dagger B_{j_t}^{t})} \right)\Bigg]^{k/2}
    \\&= k^k \mathds{E}_{\substack{w,z}}\bm{1}\{\cE(w)\}\bm{1}\{\cE(z)\} \Bigg[\eps^4   \left(\frac{ 16^2}{\dout^2}\right)\frac{(P_w^t +P_z^t)}{2} \Bigg]^{k/2}
    \\&\le k^k \mathds{E}_{\substack{w,z}}\bm{1}\{\cE(w)\}\bm{1}\{\cE(z)\} \Bigg[\eps^4   \left(\frac{ 16^2\cdot 6^4}{\dout^2}\right)\frac{(P_u^t +P_v^t)}{2} \Bigg]^{k/2}
    \\&\le k^k \mathds{E}_{\substack{w,z}}\bm{1}\{\cE(w)\}\bm{1}\{\cE(z)\} \Bigg[\eps^4   \left(\frac{ 16^2\cdot 6^4}{\dout^2}\right)  \left( \frac{(7\log(N))^4}{\din^4\dout}\right) \Bigg]^{k/2}
    \le k^k  \Bigg[ \frac{C\eps^4\log(N)^4}{\din^4\dout^3}\Bigg]^{k/2}
\end{align*}
where $C>0$ is a universal constant and  we used that we are under the events $\cE(w)$ and $\cE(z)$.
\\Now, grouping the lower bound and upper bounds on the $\TV$ distance, we obtain:
\begin{align*}
   (1-e^{-\Omega(\dout)})^2\left( \frac{7^2}{15^2}+\frac{9^2}{10^2}\right)&\le  (1-e^{-\Omega(\dout)})^2 \left[\left(2\TV\left(\mathds{P}_{H_0}^{I_1,\dots, I_N}\Big\| \mathds{P}_{H_1}^{I_1,\dots, I_N}\right)-2\pr{\bar{E}}   \right)^2+\pr{\cE}^2\right]
   \\&\le (1-e^{-\Omega(\dout)})^2 \left[\left( \mathds{E}_{\le N}\left| \mathds{E}_{\substack{(w,U)\sim\cP}}\left[\bm{1}\{\cE\}(\Psi_{i,w, U}-1)\right]\right|\right)^2+\pr{\cE}^2\right]
   \\&\le \max_t\ex{(1+|Z_t|)^N}\\&\le  \max_t \sum_{k=0}^N\binom{N}{k} k^k \left(\frac{C\eps^4\log(N)^4}{\din^4\dout^3}\right)^{k/2}
    \le \sum_{k=0}^N  \left(\frac{C' N\eps^2\log(N)^2}{\din^{2}\dout^{1.5}}\right)^{k}
\end{align*}
where we used $\binom{N}{k}\le \frac{N^ke^k}{k^k}$ and $C'=\sqrt{C}e$. If $N\log(N)^2\le \frac{\din^{2}\dout^{1.5}}{101C'\eps^2}$ the  RHS is upper bounded by $\sum_{k\ge 0}\frac{1}{101^k}= 1.01$ but the LHS is at least $(1- e^{-\Omega(\dout)})^2\left( \frac{7^2}{15^2}+\frac{9^2}{10^2}\right)\ge 1.02$ for $\dout\ge \Omega(1)$  which is a contradiction. Hence $N\log(N)^2\ge \frac{\din^{2}\dout^{1.5}}{101C'\eps^2}$ and finally:
\begin{align*}
    N\ge \Omega\left(\frac{\din^{2}\dout^{1.5}}{\log(\din\dout/\eps)^2\eps^2}\right). 
\end{align*}
\end{proof}
We move to prove our claim:
\begin{lemma}\label{lem: event E sat}
  Let $\din \ge 80$.   We have with probability at least $9/10$, the event $\cE$ is satisfied.
\end{lemma}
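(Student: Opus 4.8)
The plan is to write the $\Haar$ vector as $\ket{w}=\ket{u}/\|u\|_2$ with $u\in\dC^{\din}$ a standard complex Gaussian, and to control the two defining conditions of $\cE$ separately, combining them by a union bound; I aim for the norm condition to fail with probability at most $1/100$ and each of the $N$ polynomial conditions to fail with probability at most $1/(20N)$, so that $\pr{\bar{\cE}}\le 1/100+N\cdot 1/(20N)<1/10$. The first condition, $\|u\|_2\ge\sqrt{\din/6}$, is the easy part: since $\|u\|_2^2=\sum_{i=1}^{\din}|u_i|^2$ is a sum of $\din$ i.i.d.\ mean-one variables (equivalently $u\mapsto\|u\|_2$ is $1$-Lipschitz with $\ex{\|u\|_2}\approx\sqrt{\din}$), Gaussian concentration \citep[Theorem 2.26]{wainwright2019high} gives $\pr{\|u\|_2<\sqrt{\din/6}}\le e^{-c\din}$, which is below $1/100$ once $\din\ge 80$.

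The substance lies in the second condition. First I would compute the expectation $\ex{P_u^t}$ via Weingarten calculus (Lem.~\ref{lem:Wg}, \ref{lem:wg3}): factoring each Haar integral as a direction times an independent norm ($\ex{\|u\|_2^8}=\din(\din+1)(\din+2)(\din+3)$), the quantity $\ex{|\bra{u} A_t^\dagger B_{i_t}^t B_{j_t}^{t,\dagger}A_t\ket{u}|^4}$ reduces to a fixed combination of the degree-four trace invariants of $M_{i_tj_t}=A_t^\dagger B_{i_t}^t B_{j_t}^{t,\dagger}A_t$ (terms such as $|\tr M_{i_tj_t}|^4$, $\tr(M_{i_tj_t}M_{i_tj_t}^\dagger)|\tr M_{i_tj_t}|^2$ and $(\tr M_{i_tj_t}M_{i_tj_t}^\dagger)^2$). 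Summing against $\lambda^t_{i_t}\lambda^t_{j_t}$ and dividing by the normalizers $\tr(B_{i_t}^{t,\dagger}A_tA_t^\dagger B_{i_t}^{t})$, I would then collapse all the ancilla dependence using the two completeness relations $\sum_{i_t}\lambda^t_{i_t}B_{i_t}^t B_{i_t}^{t,\dagger}=\dout^2\dI_{\danc}$ and $\sum_{i_t}\lambda^t_{i_t}B_{i_t}^{t,\dagger}XB_{i_t}^{t}=\dout\tr(X)\dI_{\dout}$, together with $\tr(A_tA_t^\dagger)=\din$, to arrive at a bound $\ex{P_u^t}=\cO(1/(\din^4\dout))$.

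With the expectation under control the tail follows from hypercontractivity. Each $P_u^t$ is a nonnegative polynomial of degree $8$ in the (real and imaginary parts of the) entries of $u$, so by \citep[Proposition $5.48$]{aubrun2017alice} its moments satisfy $\ex{|P_u^t|^q}^{1/q}\le(q-1)^4\,\ex{(P_u^t)^2}^{1/2}$, and by equivalence of moments of fixed-degree Gaussian polynomials $\ex{(P_u^t)^2}^{1/2}\le C\,\ex{P_u^t}$. Markov's inequality with $q=\Theta(\log N)$ then gives $\pr{P_u^t>(7\log N)^4/(\din^4\dout)}\le 1/(20N)$; this is exactly where the degree $8$ forces the deviation factor $(\log N)^4$, hence the logarithmic loss in the final lower bound (cf.\ the concentration form \citep[Corollary 5.49]{aubrun2017alice}). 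A union bound over $t\in[N]$ costs only the factor $N$—here non-adaptivity is essential, since there are at most $N$ fixed input/measurement pairs—yielding $\sum_{t}\pr{P_u^t>\cdots}\le 1/20$, which combined with the norm estimate proves $\pr{\cE}\ge 9/10$.

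The main obstacle is the expectation computation $\ex{P_u^t}$. A naive pointwise Cauchy--Schwarz bound on the invariants of $M_{i_tj_t}$ (for instance $|\tr M_{i_tj_t}|^2\le\tr(B_{i_t}^{t,\dagger}A_tA_t^\dagger B_{i_t}^{t})\tr(B_{j_t}^{t,\dagger}A_tA_t^\dagger B_{j_t}^{t})$) loses an extra factor of $\dout$ and can even reintroduce the ancilla dimension $\danc$, which must cancel out of the final answer. Obtaining the sharp numerator $\din^2\dout^3$ of \eqref{eq:Pu} therefore requires exploiting both completeness relations simultaneously inside the exact Weingarten expansion, rather than bounding the invariants term by term.
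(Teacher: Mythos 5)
Your proof is sound and reaches the same conclusion, but the key technical step is handled by a genuinely different route than the paper's. The paper never computes $\ex{P_u^t}$: it applies the Gaussian-polynomial concentration bound around the mean and then controls $\sqrt{\ex{(P_u^t)^2}}$ directly, which forces an $\fS_8$ Weingarten expansion, a dedicated lemma bounding $|\tr_\alpha|$ over all permutations of eight factors, and a four-case split of the index set $(i,j,k,l)$ according to which of $|\tr(X_i^\dagger X_j)|^2$ and $\tr(X_i^\dagger X_jX_j^\dagger X_i)$ dominates, each case collapsed by the completeness relations. You instead bound the first moment (an $\fS_4$/Wick computation) and upgrade to the second moment via reverse Hölder for degree-$8$ Gaussian polynomials ($\|P\|_2\le 3^8\|P\|_1$, legitimate since $P_u^t\ge 0$), then apply hypercontractive moment growth and Markov with $q\asymp\log N$. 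This is a real simplification — the $\fS_4$ invariants are bounded exactly as in the paper's Lemma on $\tr_\alpha$ but with half the factors, and the same completeness relations give $\ex{P_u^t}=\cO(1/(\din^4\dout))$; indeed the "obstacle" you flag at the end is not one: the pointwise Cauchy--Schwarz $|\tr(X_i^\dagger X_j)|^2\le\|X_i\|_2^2\|X_j\|_2^2$ combined with $\sum_{i}\lambda_iB_iB_i^\dagger=\dout^2\dI$ and $\sum_i\lambda_iB_i^\dagger XB_i=\dout\tr(X)\dI$ already cancels the ancilla dimension and loses no factor of $\dout$ (the dominant contribution comes from the $\tr(MM^\dagger)^2$-type terms, not $|\tr M|^4$). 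The one price you pay is quantitative: the reverse-Hölder constant $3^8$ propagates into the tail bound, so with the threshold $(7\log N)^4/(\din^4\dout)$ fixed in the definition of $\cE$ your Markov step does not close numerically; you would establish the lemma only for $\cE$ defined with a larger universal constant in place of $7$. Since that constant is irrelevant to the final $\Omega(\din^2\dout^{1.5}/(\log(\din\dout/\eps)^2\eps^2))$ bound, this is a cosmetic rather than substantive defect, but it should be stated (and the constant in $\cE$ adjusted) if you adopt this route. Your treatment of the norm condition and the union bound over the $N$ non-adaptive input/measurement pairs matches the paper's.
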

\begin{proof}
Fix $t\in [N]$,  denote $(i,j)=(i_t,j_t)$, $X_i= B^\dagger_i A_t$ and recall that $\ket{w}=\frac{\ket{u}}{\|u\|_2}$ such that $u$ is a  standard Gaussian vector. 
Recall that the event $\cE$ is  $u$ satisfies: 
\begin{enumerate}
    \item $\|u\|_2\; \ge \frac{1}{2}\sqrt{\frac{2}{3}\din} $ \quad \text{ and }
    \item $\forall t\in [N]:P_u^t\le \frac{(7\log(N))^4}{\din^4 \dout}$
\end{enumerate}
where $P_u^t$ is defined as 
\begin{align*}
    P_u^t
    &=\sum_{i_t,j_t}\frac{\lambda^t_{i_t}\lambda^t_{j_t}}{D^2\dout^2\din^4}\left(\frac{|\bra{u}A_t^\dagger B_{i_t}^t  B_{j_t}^{t,\dagger}A_t  \ket{u}|^4}{\tr(B_{i_t}^{t,\dagger}A_tA_t^\dagger B_{i_t}^{t})\tr(B_{j_t}^{t,\dagger}A_tA_t^\dagger B_{j_t}^{t})} \right)=\sum_{i,j} \frac{\lambda_{i}\lambda_{j}}{D^2\dout^2\din^4}\left(  \frac{|\bra{u}X_i^\dagger  X_j   \ket{u}|^4}{\|X_i\|_2^2 \|X_j\|_2^2} \right)
\end{align*}
First note that  with high probability $\|u\|_2\ge  \frac{1}{2}\sqrt{\frac{2}{3}\din}$. Indeed, the function $u\mapsto \|u\|_2$ is Lipschitz so $\pr{ \|u\|_2\le \ex{\|u\|_2} -s }\le e^{-s^2/2}$ \citep[Theorem 2.26]{wainwright2019high}. On the other hand we have by Hölder's inequality $\ex{\|u\|_2}\ge \sqrt{\frac{\ex{\|u\|_2^2}^3}{\ex{\|u\|_2^4}}}=  \sqrt{\frac{\din^3}{\din^2+2\din}}\ge \sqrt{\frac{2}{3}\din }  $ hence 
\begin{align*}
    \pr{\|u\|_2\le\frac{1}{2}\sqrt{\frac{2}{3}\din} }\le   \pr{\|u\|_2-\ex{\|u\|_2}\le- \frac{1}{2}\sqrt{\frac{2}{3}\din} } \le e^{-\din/32}. 
\end{align*}
For the second point in the event $\cE$, we can see that $P_u$ is a  Gaussian polynomial in the entries of $u$  of degree $8$. So we can use the concentration inequality of Gaussian polynomials \cite[Corollary 5.49]{aubrun2017alice}, for any $s>(2e)^{1/4}$ we have:
\begin{align*}
    \pr{|P-\ex{P}| \ge s \sqrt{\var(P)}} \le \exp\left(-\frac{4}{e} s^{1/4} \right)
\end{align*}
so with probability at least $1-\frac{1}{N^2}$ we have: 
\begin{align*}
    P&\le \ex{P} + (e\log(N)/2)^4 \sqrt{\var(P)}
    \le  5\log^4(N) \sqrt{\ex{P^2}}.
\end{align*}
Let us control the expectation of $P^2=P_u^2$, recall that we denote $(i,j)=(i_t,j_t)$ and $X_i= B^\dagger_i A_t$,
\begin{align*}
    &\ex{P^2} = \mathds{E}\Bigg[ \Bigg(\sum_{i,j} \frac{\lambda_{i}\lambda_{j}}{D^2\dout^2\din^4}\left( \|X_i\|_2^{-2} \|X_j\|_2^{-2}|\bra{u}X_i^\dagger  X_j   \ket{u}|^4 \right)\Bigg)^2\Bigg]
    \\&= \mathds{E}\Bigg[ \sum_{i,j,k,l} \frac{\lambda_{i}\lambda_{j}\lambda_k\lambda_l}{D^4\dout^4\din^8}\left( \|X_i\|_2^{-2} \|X_j\|_2^{-2}\|X_k\|_2^{-2} \|X_l\|_2^{-2} |\bra{w}X_i^\dagger  X_j   \ket{w}|^4|\bra{w}X_k^\dagger  X_l   \ket{w}|^4 \right)\Bigg] \ex{\|u\|_2^{16}}
     \\&\le  \sum_{i,j,k,l} \frac{\lambda_{i}\lambda_{j}\lambda_k\lambda_l}{D^4\dout^4\din^8}\cdot\frac{ \sum_{\alpha\in \fS_8} \tr_\alpha(X_i^\dagger  X_j, X_j^\dagger X_i, X_i^\dagger  X_j, X_j^\dagger X_i, X_k^\dagger  X_l, X_l^\dagger X_k, X_k^\dagger  X_l, X_l^\dagger X_k)}{(\din+7)![(\din-1)!]^{-1}\|X_i\|_2^{2} \|X_j\|_2^{2}\|X_k\|_2^{2} \|X_l\|_2^{2}}\cdot\din^8 
      \\&\le \max_{\alpha\in \fS_8} \sum_{i,j,k,l} \frac{8!\lambda_{i}\lambda_{j}\lambda_k\lambda_l}{D^4\dout^4\din^8}\cdot\frac{  |\tr_\alpha(X_i^\dagger  X_j, X_j^\dagger X_i, X_i^\dagger  X_j, X_j^\dagger X_i, X_k^\dagger  X_l, X_l^\dagger X_k, X_k^\dagger  X_l, X_l^\dagger X_k)|}{\|X_i\|_2^{2} \|X_j\|_2^{2}\|X_k\|_2^{2} \|X_l\|_2^{2}}
\end{align*}
Then we claim that:
\begin{lemma}\label{lem: tr_alpha}
    Let $A= X_i^\dagger  X_j$ and $B= X_k^\dagger  X_l$, we have for all $\alpha\in \fS_8$,
\begin{align*}
    &|\tr_\alpha(A,A^\dagger,A,A^\dagger, B, B^\dagger, B, B^\dagger)|^{1/2} 
  \\&\le \max\{ |\tr(A)|^4, \tr(AA^\dagger)^2\}  \max\{ |\tr(B)|^4, \tr(BB^\dagger)^2   \}
    \\&\le \|X_i\|_2^{2} \|X_j\|_2^{2}\|X_k\|_2^{2} \|X_l\|_2^{2}.
\end{align*}
\end{lemma}
\textbf{Sketch of the proof.}
The strategy to prove this lemma depends on $\alpha$, we can give an outline:
\begin{itemize}
    \item Traces containing one element remain unchanged.
    \item If we have two successive $A$ (or $B$) in a trace term,  we  upper bound this trace with a $2$-norm $\tr(AA^\dagger)$ times the operator norm of the remaining elements. Then we can upper bound the operator norms with the $2$-norms.
    \item Otherwise, we have successive $(A,B)$ (with possible adjoint) inside the trace term. We can apply a Cauchy Schwarz inequality then we find two successive $A$ or $B$ and apply the previous strategy.
\end{itemize}
For instance we have for $\alpha=(1)(5372648)$:
\begin{align*}
    \tr(A)\tr(BABA^\dagger B^\dagger A^\dagger B^\dagger )&\le |\tr(A)| \tr(BB^\dagger) \| ABA^\dagger B^\dagger A^\dagger \|_\infty
    \\&\le  |\tr(A)| \tr(BB^\dagger) \|BB^\dagger \|_\infty\|AA^\dagger\|^{3/2}_\infty
    \\&\le  |\tr(A)| \tr(BB^\dagger) \tr(BB^\dagger )\tr(AA^\dagger)^{3/2}
     \\&\le \max\{ |\tr(A)|^4, \tr(AA^\dagger)^2\}  \max\{ |\tr(B)|^4, \tr(BB^\dagger)^2   \}.
\end{align*}
Another example for $\alpha=(15372648)$:
\begin{align*}
    \tr(ABABA^\dagger B^\dagger A^\dagger B^\dagger) &\le \tr(ABABB^\dagger A^\dagger B^\dagger A^\dagger )^{1/2}\tr(BABA A^\dagger B^\dagger A^\dagger B^\dagger )^{1/2}
    \\&\le \tr(BB^\dagger) \|AA^\dagger\|^2_\infty \|BB^\dagger\|_\infty
    \le \tr(BB^\dagger)^2 \tr(AA^\dagger)^2.
\end{align*}
Using Lemma~\ref{lem: tr_alpha}, we have:
\begin{align*}
    &\ex{P^2} \le \max_{\alpha\in \fS_8} \sum_{i,j,k,l} \frac{8!\lambda_{i}\lambda_{j}\lambda_k\lambda_l}{D^4\dout^4\din^8}\cdot\frac{  |\tr_\alpha(X_i^\dagger  X_j, X_j^\dagger X_i, X_i^\dagger  X_j, X_j^\dagger X_i, X_k^\dagger  X_l, X_l^\dagger X_k, X_k^\dagger  X_l, X_l^\dagger X_k)|}{\|X_i\|_2^{2} \|X_j\|_2^{2}\|X_k\|_2^{2} \|X_l\|_2^{2}}
    \\&\le  \max_{\alpha\in \fS_8} \sum_{i,j,k,l} \frac{8!\lambda_{i}\lambda_{j}\lambda_k\lambda_l}{D^4\dout^4\din^8}\cdot \sqrt{|\tr_\alpha(X_i^\dagger  X_j, X_j^\dagger X_i, X_i^\dagger  X_j, X_j^\dagger X_i, X_k^\dagger  X_l, X_l^\dagger X_k, X_k^\dagger  X_l, X_l^\dagger X_k)|}
    \\&\le  \sum_{i,j,k,l} \frac{8!\lambda_{i}\lambda_{j}\lambda_k\lambda_l}{D^4\dout^4\din^8}\cdot \max\{ |\tr(X_i^\dagger X_j)|^2, \tr(X_i^\dagger X_j X_j^\dagger X_i)\}  \max\{ |\tr(X_k^\dagger X_l)|^2, \tr(X_k^\dagger X_lX_l^\dagger X_k) \} 
\end{align*}
So we have $4$ cases to consider, two of them are similar. Let us define four subsets of indices: 
\begin{align*}
    C_1&=\{(i,j,k,l): |\tr(X_i^\dagger X_j)|^2>\tr(X_i^\dagger X_j X_j^\dagger X_i), |\tr(X_k^\dagger X_l)|^2>\tr(X_k^\dagger X_lX_l^\dagger X_k)\}, \\
     C_2&=\{(i,j,k,l):|\tr(X_i^\dagger X_j)|^2\le \tr(X_i^\dagger X_j X_j^\dagger X_i), |\tr(X_k^\dagger X_l)|^2\le \tr(X_k^\dagger X_lX_l^\dagger X_k)\},\\
    C_3&=\{(i,j,k,l):|\tr(X_i^\dagger X_j)|^2> \tr(X_i^\dagger X_j X_j^\dagger X_i), |\tr(X_k^\dagger X_l)|^2\le\tr(X_k^\dagger X_lX_l^\dagger X_k)\}, \\
    C_4&=\{(i,j,k,l):|\tr(X_i^\dagger X_j)|^2\le \tr(X_i^\dagger X_j X_j^\dagger X_i), |\tr(X_k^\dagger X_l)|^2>\tr(X_k^\dagger X_lX_l^\dagger X_k)\}.
\end{align*}
For the first case where the indices $(i,j,k,l)\in C_1$, we have:
\begin{align*}
    &\sum_{i,j,k,l\in C_1}  \frac{8!\lambda_{i}\lambda_{j}\lambda_k\lambda_l}{D^4\dout^4\din^8}\cdot \max\{ |\tr(X_i^\dagger X_j)|^2, \tr(X_i^\dagger X_j X_j^\dagger X_i)\}  \max\{ |\tr(X_k^\dagger X_l)|^2, \tr(X_k^\dagger X_lX_l^\dagger X_k) \} 
    \\&\le\sum_{i,j,k,l} \frac{8!\lambda_{i}\lambda_{j}\lambda_k\lambda_l}{D^4\dout^4\din^8}\cdot |\tr (X_i^\dagger  X_j)|^2\tr( X_k^\dagger  X_l)|^2 
    \\&=\sum_{i,j,k,l}\sum_{x,y,x',y'} \frac{8!\lambda_{i}\lambda_{j}\lambda_k\lambda_l}{D^4\dout^4\din^8}\cdot \bra{x}X_i^\dagger  X_j\ket{x}\bra{y}X_j^\dagger  X_i\ket{y}   \bra{x'}X_k^\dagger  X_l\ket{x'}\bra{y'}X_l^\dagger  X_k\ket{y'}  
    \\&= \sum_{x,y,x',y'} \frac{8!}{D^4\dout^4\din^8} \cdot \dout^4\tr[\tr(A_t\ket{x}\bra{y} A_t^\dagger)\dI_{\dout} \tr(A_t\ket{y}\bra{x} A_t^\dagger)\dI_{\dout} ]   
    \\&\qquad\qquad\qquad\qquad\qquad\quad \cdot \tr[\tr(A_t\ket{x'}\bra{y'} A_t^\dagger)\dI_{\dout} \tr(A_t\ket{y'}\bra{x'} A_t^\dagger)\dI_{\dout} ] 
    \\&= \sum_{x,y,x',y'} \frac{8!}{D^4\dout^4\din^8}\cdot \dout^6 \bra{y} A_t^\dagger A_t \ket{x}\bra{x} A_t^\dagger A_t \ket{y} \bra{y'} A_t^\dagger A_t \ket{x'}\bra{x'} A_t^\dagger A_t \ket{y'}
     \\&= \frac{8!}{D^4\dout^4\din^8}\cdot \dout^6  \tr( A_t^\dagger A_t A_t^\dagger A_t)^2
\le \frac{8!}{D^4\dout^4\din^8}\cdot \dout^6  \tr( A_t^\dagger A_t )^4
    \\& =  \frac{8!}{D^4\dout^4\din^8}\cdot \dout^6 \din^4= \frac{8!}{\din^8\dout^2}
\end{align*}
because $\sum_{i} \lambda_i X_i M X_i^\dagger = \sum_{i} \lambda_i B_i^\dagger A_t M A_t^\dagger B_i= \dout\tr(A_t MA_t^\dagger )\dI_{\dout}$ and $\tr(A_tA_t^\dagger )=\din$.
\\ For the second case where the indices $(i,j,k,l)\in C_2$, we have:
\begin{align*}
  &\sum_{i,j,k,l\in C_2}  \frac{8!\lambda_{i}\lambda_{j}\lambda_k\lambda_l}{D^4\dout^4\din^8}\cdot \max\{ |\tr(X_i^\dagger X_j)|^2, \tr(X_i^\dagger X_j X_j^\dagger X_i)\}  \max\{ |\tr(X_k^\dagger X_l)|^2, \tr(X_k^\dagger X_lX_l^\dagger X_k) \} 
  \\&\le \sum_{i,j,k,l} \frac{\lambda_{i}\lambda_{j}\lambda_k\lambda_l}{D^4\dout^4\din^8}\cdot\tr(X_i^\dagger  X_j X_j^\dagger X_i)\tr( X_k^\dagger  X_l X_l^\dagger X_k)
     \\&= \frac{8!}{D^4\dout^4\din^8} \tr(A_tA_t^\dagger ) ^2 \tr(\dI) \tr(A_tA_t^\dagger ) ^2 \tr(\dI) \dout^4
    \\& = \frac{8!}{D^4\dout^4\din^8}  \din^4 \dout^6= \frac{8!}{\din^8\dout^2}. 
\end{align*}
\\ For the third case where the indices $(i,j,k,l)\in C_3$, we have: 
\begin{align*}
    &\sum_{i,j,k,l\in C_3}  \frac{8!\lambda_{i}\lambda_{j}\lambda_k\lambda_l}{D^4\dout^4\din^8}\cdot \max\{ |\tr(X_i^\dagger X_j)|^2, \tr(X_i^\dagger X_j X_j^\dagger X_i)\}  \max\{ |\tr(X_k^\dagger X_l)|^2, \tr(X_k^\dagger X_lX_l^\dagger X_k) \} 
  \\&\le  \sum_{i,j,k,l} \frac{\lambda_{i}\lambda_{j}\lambda_k\lambda_l}{D^4\dout^4\din^8}\cdot|\tr(X_i^\dagger  X_j)|^2\tr( X_k^\dagger  X_l X_l^\dagger X_k)
   \\&= \sum_{i,j,k,l} \frac{\lambda_{i}\lambda_{j}\lambda_k\lambda_l}{D^4\dout^4\din^8}\cdot \sum_{x,y}\bra{x}X_i^\dagger  X_j\ket{x}\bra{y}X_j^\dagger  X_i\ket{y}\tr( X_k^\dagger  X_l X_l^\dagger X_k)
   \\&= \sum_{i,j,k,l} \frac{\lambda_{i}\lambda_{j}\lambda_k\lambda_l}{D^4\dout^4\din^8}\cdot \sum_{x,y} \dout^3\bra{y}A_t^\dagger A_t\ket{x} \bra{x}A_t^\dagger A_t\ket{y}  \dout^2 \tr(A_tA_t^\dagger)^2 \tr(\dI)
   \\&\le  \sum_{i,j,k,l} \frac{\lambda_{i}\lambda_{j}\lambda_k\lambda_l}{D^4\dout^4\din^8}\cdot  \dout^6 \tr(A_tA_t^\dagger)^4 
     = \frac{8!}{D^4\dout^4\din^8} \dout^6\din^4 
   = \frac{8!}{\din^8\dout^2}. 
\end{align*}
The fourth case, where the indices $(i,j,k,l)\in C_4$, is similar to the previous one.
\\To sum up, we have 
\begin{align*}
    \ex{P^2} 
    &\le  \sum_{i,j,k,l} \frac{8!\lambda_{i}\lambda_{j}\lambda_k\lambda_l}{D^4\dout^4\din^8}\cdot \max\{ |\tr(X_i^\dagger X_j)|^2, \tr(X_i^\dagger X_j X_j^\dagger X_i)\}  \max\{ |\tr(X_k^\dagger X_l)|^2, \tr(X_k^\dagger X_lX_l^\dagger X_k) \} 
    \\&\le \frac{4\times 8!}{\din^8\dout^2}.
\end{align*}
So by the union bound, we have with probability at least $1-\frac{1}{N}- e^{-\din/32}\ge \frac{9}{10}$, for all $t\in [N]$:
\begin{align*}
    P_u\le 5\log^4(N)^4 \sqrt{\ex{P_u^2}} \le 5\log^4(N) \sqrt{\frac{4\times 8!}{\din^8\dout^2}}\le \frac{(7\log(N))^4 }{\din^4\dout}
\end{align*}
and $\|u\|_2\ge \sqrt{\din/6}$, that is the event $\cE$ is satisfied.

\end{proof}
\subsection{Proof of Thm.~\ref{thm:LB_DEP_adap}}\label{thm-app:LB_DEP_adap}
\begin{theorem}
    Any incoherent ancilla-assisted adaptive algorithm for testing identity to the depolarizing channel requires, in the worst case, 
    \begin{align*}
        N=\Omega\left(\frac{\din^2\dout+ \dout^{1.5}}{\eps^2}\right)
    \end{align*}
  measurements.
\end{theorem}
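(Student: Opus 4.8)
The plan is to recycle wholesale the random construction of Theorem~\ref{thm:LB_DEP_non-adap}: under $H_1$ we draw $\cN\sim\cP$ with Choi state $\cJ_\cN=\frac{\dI}{D}+\frac{\eps}{D}\proj{w}\otimes U$, and we keep the same reductions (pure inputs $\ket{\psi_t}=A_t\otimes\dI\ket{\Psi_\din}$, POVM vectors $\ket{\phi_{i_t}^t}=B_{i_t}^t\otimes\dI\ket{\Psi_\dout}$, and the identity $\sum_{i_t}\lambda_{i_t}^tB_{i_t}^tB_{i_t}^{t,\dagger}=\dout^2\dI$). The only structural change is to measure distinguishability with the $\KL$ divergence rather than the $\TV$ distance, which tensorizes gracefully even for adaptive strategies. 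First I would note that a $1/3$-correct tester distinguishes $\cD$ from the mixture $\mathds{E}_{\cN\sim\cP}\mathds{P}_\cN$, so the data-processing inequality applied to the acceptance event gives $\KL\!\big(\mathds{P}_{\cD}^{I_{\le N}}\,\big\|\,\mathds{E}_{\cN\sim\cP}\mathds{P}_{\cN}^{I_{\le N}}\big)\ge\KL(\Ber(2/3)\|\Ber(1/3))=\tfrac13\log 2$. The whole task is then to upper bound this $\KL$ by $\cO\!\big(N\eps^2/(\din^2\dout)\big)$.

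For the upper bound I would use the same per-step likelihood ratio computed in the proof of Theorem~\ref{thm:LB_DEP_non-adap}, namely $\mathds{P}_\cN(I_t\mid I_{<t})/\mathds{P}_\cD(I_t\mid I_{<t})=1+\eps\,\Phi^{t,i_t}_{w,U}$, so that writing $P=\mathds{P}_\cD^{I_{\le N}}$ and $Q=\mathds{E}_{\cN\sim\cP}\mathds{P}_\cN^{I_{\le N}}$ one has $Q/P=\mathds{E}_{(w,U)\sim\cP}\prod_{t=1}^N(1+\eps\,\Phi^{t,i_t}_{w,U})$. Since $-\log$ is convex, Jensen's inequality applied to the average over $(w,U)$ pulls the expectation outside the logarithm and turns the product into a sum, giving
\begin{align*}
\KL(P\|Q)=\mathds{E}_{I\sim P}\Big[-\log\tfrac{Q}{P}\Big]\le \mathds{E}_{I\sim P}\,\mathds{E}_{(w,U)\sim\cP}\Big[-\sum_{t=1}^N\log\big(1+\eps\,\Phi^{t,i_t}_{w,U}\big)\Big].
\end{align*}
Using $\log(1+x)\ge x-x^2$ (valid once $x=\eps\Phi$ is bounded away from $-1$, which the conditioning $\cG$ with $\|U\|_\infty\le 32$ and $\eps\le 1/32$ ensures), the linear term vanishes because $\mathds{E}_U[U]=0$ and $\cG$ is symmetric under $U\mapsto-U$, leaving $\KL(P\|Q)\le \eps^2\sum_t\mathds{E}_{I\sim P}\mathds{E}_{(w,U)}\big[(\Phi^{t,i_t}_{w,U})^2\big]$.

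Next I would bound the second moment, reusing the Gaussian computation already present in Lemma~\ref{lem: event E sat}: with $v=B_{i_t}^{t,\dagger}A_t\ket w$ one gets $\mathds{E}_U[(\bra v U\ket v)^2]\le\sigma^2\|v\|_2^4$, hence $\mathds{E}_U[(\Phi^{t,i_t})^2]\le \sigma^2\|v\|_2^4/\big(\pr{\cG}\,\|B_{i_t}^{t,\dagger}A_t\|_2^4\big)$ with $\sigma^2=16/\dout$. Averaging over the Haar vector $w$ through $\mathds{E}_w[\bra w M\ket w^2]\le 2\tr(M)^2/\din^2$, then summing over $i_t$ with $\sum_{i_t}\lambda_{i_t}^tB_{i_t}^tB_{i_t}^{t,\dagger}=\dout^2\dI$ and $\tr(A_tA_t^\dagger)=\din$, and finally over $t\in[N]$, collapses everything to $\KL(P\|Q)\le \frac{256\,N\eps^2}{\din^2\dout}$. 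Combining with $\KL\ge\tfrac13\log2$ yields $N=\Omega(\din^2\dout/\eps^2)$. The additive $\dout^{1.5}/\eps^2$ term I would get for free by restricting to replacement channels $\cN_\rho(\sigma)=\tr(\sigma)\rho$: here the ancilla decouples as $\id_\danc\otimes\cN_\rho(\sigma)=\tr_{\din}(\sigma)\otimes\rho$ and $\dd(\cN_\rho,\cD)=\|\rho-\dI/\dout\|_1$, so the problem is exactly testing mixedness of a $\dout$-dimensional state, for which \citep{chen2022tight-mixedness} gives an incoherent lower bound $\Omega(\dout^{1.5}/\eps^2)$; taking the maximum of the two bounds gives the stated $\Omega\big((\din^2\dout+\dout^{1.5})/\eps^2\big)$. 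The main obstacle is conceptual rather than computational: the $\KL$ route only exploits the \emph{second} moment of the signal $\Phi$, whereas the non-adaptive bound used hypercontractivity to control \emph{all} moments of $Z_t$ and thereby gained an extra factor $\sqrt{\dout}$. This is precisely why the adaptive argument can only certify $\din^2\dout$ rather than $\din^2\dout^{1.5}$, and the delicate point in executing it is keeping $\eps\Phi$ away from $-1$ so the logarithmic inequality applies, which is what forces the conditioning on $\cG$ and the hypothesis $\eps\le 1/32$.
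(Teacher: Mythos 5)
Your proposal is correct and follows essentially the same route as the paper's proof: the same random construction, replacing the $\TV$ distance by the $\KL$ divergence, the data-processing lower bound $\tfrac13\log 2$, the inequality $-\log(1+x)\le -x+x^2$, a vanishing linear term, a second-moment bound of order $\eps^2/(\din^2\dout)$ per step, and the reduction to mixedness testing for the $\dout^{1.5}$ term. The only cosmetic difference is that you annihilate the linear term via the symmetry $U\mapsto -U$ of the conditioned Gaussian ensemble, whereas the paper does so for each fixed $U$ using the POVM completeness relation together with $\tr(U)=0$; both are valid.
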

\begin{proof}
    We use the same construction as in the proof of Thm.~\ref{thm:LB_DEP_non-adap}.  Mainly, under the null hypothesis $H_0$ the quantum channel is $\cN(\rho)=\cD(\rho)= \tr(\rho)\frac{\dI}{\dout}$. Under the alternate hypothesis $H_1$, a quantum channel $\cN\sim \cP$ has the form:
	 \begin{align*}
	 	\cN(\rho)= \tr(\rho) \frac{\dI}{\dout} +\frac{\eps}{\dout} \bra{w}\rho \ket{w} U
	 \end{align*}
	 where $\ket{w}= W\ket{0}$ and $W \sim \Haar(\din)$ and for all $i,j\in [\dout]$,  $U_{j,i}=U_{i,j} \sim \mathbf{1}\{i\neq j\}  \cN_c\left(0, \sigma^2=\frac{16}{\dout}\right)$ and we condition on the event $\cG=\{\|U\|_1\ge \dout, \|U\|_\infty \le 32\}$. We call this distribution $\cP$ and use the notation $(w,U)\sim \cP$. Recall that $\pr{\cG}\ge 1-\exp(-\Omega(\dout))-\exp(-\Omega(\dout^2))$ and under $\cG$, the map $\cN$ is a valid quantum channel $\eps$-far from the quantum channel $\cD$. 
  
  Now, given a set of observations $i_{<t}=(i_1,\dots, i_{t-1})$. The adaptive algorithm $\cA$ would choose at step $t$ the input $\rho_t^{i_{<t}}$ and the measurement device $\cM_t^{i_{<t}}=\left\{\lambda_{i_t}^{t, i_{<t} } \proj{\phi_{i_t}^{t, i_{<t}}}\right\}_{i_t\in \cI_t}$. 

Under the null hypothesis $H_0$, the quantum channel $\cN=\cD$ so the probability of the outcomes is exactly:
\begin{align*}
   \mathds{P}_{H_0}^{I_1,\dots, I_N}= \left\{\prod_{t=1}^N\lambda_{i_t}^{t, i_{<t} } \bra{\phi_{i_t}^{t, i_{<t}}}\id\otimes \cD (\rho_t^{i_{<t}}) \ket{\phi_{i_t}^{t, i_{<t}}}\right\}_{i_1,\dots, i_N} .
\end{align*}
On the other hand, under the alternate hypothesis $H_1$, the probability of the outcomes is exactly:
\begin{align*}
   \mathds{P}_{H_1, (w,U)}^{I_1,\dots, I_N}= \left\{\prod_{t=1}^N\lambda_{i_t}^{t, i_{<t} } \bra{\phi_{i_t}^{t, i_{<t}}}\id\otimes \cN_{(w,U)} (\rho_t^{i_{<t}}) \ket{\phi_{i_t}^{t, i_{<t}}}\right\}_{i_1,\dots, i_N}
\end{align*}
Recall that
\begin{align*}
   \cJ_\cD= \frac{\dI}{D} ~~\text{ and }~~\cJ_\cN= \frac{\dI}{D}+ \frac{\eps}{D}  \proj{w} \otimes U.
\end{align*}
We can suppose w.l.o.g. that each input state is pure because of the convexity of the $\KL$ divergence. For $t\in [N]$ and ${i_{<t}}$, let $\rho_t^{i_{<t}}= \proj{\psi_t^{i_{<t}}}$. Moreover, we can write each rank one input state and measurement vector as follows:
\begin{align*}
    \ket{\psi_t^{i_{<t}}}= A_t^{i_{<t}}\otimes \dI \ket{\Psi_{\din}} ~~\text{ and }~~\ket{\phi_{i_t}^{t, i_{<t}}}=  B_{i_t}^{i_{<t}}\otimes \dI \ket{\Psi_{\dout}}
\end{align*}
where $\ket{\Psi_d}= \frac{1}{\sqrt{d} } \sum_{i=1}^{d} \ket{ii}$ and the matrices $A_t^{i_{<t}} \in \dC^{\danc \times \din }$ and $B_{i_t}^{i_{<t}} \in \dC^{\danc \times \dout }$ verify: 
\begin{align*}
   \tr(A_t^{i_{<t}}A_t{i_{<t}, \dagger} )= \din  ~~\text{ and }~~  \tr(B_{i_t}^{i_{<t}} B_{i_t}{i_{<t}, \dagger} )= \dout.
\end{align*}
Note that we have for all $t$, $\sum_{i_t} \lambda_{i_t}^{t, i_{<t} }B_{i_t}^{i_{<t}, \dagger}XB_{i_t}^{i_{<t}} = \dout\tr(X)\dI$. Indeed, the condition of the POVM $\cM_t$ implies: 
\begin{align*}
 X\otimes \dI=    X\otimes \dI\sum_{i_t}\lambda_{i_t}^t \proj{\phi_{i_t}^t} =  \sum_{i_t}\lambda_{i_t}^t XB_{i_t}^t\otimes \dI \proj{\Psi_{\dout}} B_{i_t}^{t, \dagger}\otimes\dI 
\end{align*}
hence by taking the partial trace
\begin{align*}
    \tr(X) \dI = \frac{1}{\dout}\sum_{i_t, i, j}\lambda_{i_t}^t \bra{j}  B_{i_t}^{t, \dagger}X B_{i_t}^{t} \ket{i}  \ket{i}\bra{j}.
\end{align*}
Finally
\begin{align*}
    \sum_{i_t} \lambda_{i_t}^t B_{i_t}^{t,\dagger}XB_{i_t}^{t} = \dout\tr(X)\dI.
\end{align*}
Now we can re-write the  distribution  of the observations under the null hypothesis as:
\begin{align*}
   \mathds{P}_{H_0}^{I_1,\dots, I_N}&= \left\{\prod_{t=1}^N\lambda_{i_t}^t \bra{\phi_{i_t}^t}\id\otimes \cD (\rho_t) \ket{\phi_{i_t}^t}\right\}_{i_1,\dots, i_N} 
   \\&=\left\{\prod_{t=1}^N\lambda_{i_t}^t \bra{\Psi_{\dout}}(B_{i_t}^{t,\dagger}A_t \otimes \dI)\cJ_\cD (A_t^\dagger B_{i_t}^{t}\otimes \dI) \ket{\Psi_{\dout}}
   \right\}_{i_1,\dots, i_N} 
   \\&= \left\{\prod_{t=1}^N\frac{\lambda_{i_t}^t}{D} \bra{\Psi_{\dout}}(B_{i_t}^{t,\dagger}A_tA_t^\dagger B_{i_t}^{t} \otimes \dI) \ket{\Psi_{\dout}}
   \right\}_{i_1,\dots, i_N} 
   \\&= \left\{\prod_{t=1}^N\frac{\lambda_{i_t}^t}{D\dout} \tr(B_{i_t}^{t,\dagger}A_tA_t^\dagger B_{i_t}^{t})
   \right\}_{i_1,\dots, i_N} ,
\end{align*}
and under the alternate hypothesis as: 
\begin{align*}
   \mathds{P}_{H_1}^{I_1,\dots, I_N}&= \left\{\prod_{t=1}^N\lambda_{i_t}^t \bra{\phi_{i_t}^t}\id\otimes \cN (\rho_t) \ket{\phi_{i_t}^t}\right\}_{i_1,\dots, i_N}
  \\&=\left\{\prod_{t=1}^N\lambda_{i_t}^t \bra{\Psi_{\dout}}(B_{i_t}^{t,\dagger}A_t \otimes \dI)\cJ_\cN (A_t^\dagger B_{i_t}^{t}\otimes \dI) \ket{\Psi_{\dout}}
   \right\}_{i_1,\dots, i_N} 
    \\&=\left\{\prod_{t=1}^N\lambda_{i_t}^t \bra{\Psi_{\dout}}(B_{i_t}^{t,\dagger}A_t \otimes \dI)\left(\frac{\dI}{D}+ \frac{\eps}{D}  \proj{w} \otimes U \right)(A_t^\dagger B_{i_t}^{t}\otimes \dI) \ket{\Psi_{\dout}}
   \right\}_{i_1,\dots, i_N}
    \\&=\left\{\prod_{t=1}^N \left( \frac{\lambda_{i_t}^t}{D\dout} \tr(B_{i_t}^{t,\dagger}A_tA_t^\dagger B_{i_t}^{t})+ \frac{\eps\lambda_{i_t}^t}{D} \bra{\Psi_{\dout}}(B_{i_t}^{t,\dagger}A_t  \proj{w} A_t^\dagger B_{i_t}^{t}  \otimes U) \ket{\Psi_{\dout}}\right)
   \right\}_{i_1,\dots, i_N}
\end{align*}

We can express the $\KL$ divergence as follows: 
\begin{align*}
   & \KL\left(\mathds{P}_{H_0}^{I_1,\dots, I_N}\Big\| \mathds{P}_{H_1, (w,U)}^{I_1,\dots, I_N}\right)
    \\&=\sum_{i_1,\dots, i_N}\prod_{t=1}^N \frac{\lambda_{i_t}^t}{D\dout} \tr(B_{i_t}^{t,\dagger}A_tA_t^\dagger B_{i_t}^{t})
    \\&\quad\quad\quad\quad\quad\quad \times\log\left(\frac{\prod_{t=1}^N \frac{\lambda_{i_t}^t}{D\dout} \tr(B_{i_t}^{t,\dagger}A_tA_t^\dagger B_{i_t}^{t})}{\prod_{t=1}^N \left( \frac{\lambda_{i_t}^t}{D\dout} \tr(B_{i_t}^{t,\dagger}A_tA_t^\dagger B_{i_t}^{t})+ \frac{\eps\lambda_{i_t}^t}{D} \bra{\Psi_{\dout}}(B_{i_t}^{t,\dagger}A_t  \proj{w} A_t^\dagger B_{i_t}^{t}  \otimes U) \ket{\Psi_{\dout}}\right)}\right)
   \\&=\sum_{t=1}^N  \mathds{E}_{\le N}(-\log)\left(1+\frac{\frac{\eps\lambda_{i_t}^t}{D} \bra{\Psi_{\dout}}(B_{i_t}^{t,\dagger}A_t  \proj{w} A_t^\dagger B_{i_t}^{t}  \otimes U) \ket{\Psi_{\dout}}}{\frac{\lambda_{i_t}^t}{D\dout} \tr(B_{i_t}^{t,\dagger}A_tA_t^\dagger B_{i_t}^{t})} \right)  
     \\&=\sum_{t=1}^N  \mathds{E}_{\le N}(-\log)\left(1+\eps\frac{ \tr((B_{i_t}^{t,\dagger}A_t  \proj{w} A_t^\dagger B_{i_t}^{t} U)) }{ \tr(B_{i_t}^{t,\dagger}A_tA_t^\dagger B_{i_t}^{t})} \right)  
\end{align*}
where we use the notation $\mathds{E}_{\le N}(X(i_1,\dots, i_N))= \sum_{i_1,\dots, i_N}\left(\prod_{t=1}^N \frac{\lambda_{i_t}^t}{D\dout} \tr(B_{i_t}^{t,\dagger}A_tA_t^\dagger B_{i_t}^{t})\right)X(i_1,\dots, i_N)$. 
Using the inequality $(-\log(1+x))\le -x+x^2 $ valid for $x\ge -1/2$ and since for $\eps \le 64$, we have $\eps\frac{ \tr((B_{i_t}^{t,\dagger}A_t  \proj{w} A_t^\dagger B_{i_t}^{t} U)) }{ \tr(B_{i_t}^{t,\dagger}A_tA_t^\dagger B_{i_t}^{t})} \ge -1/2$, we can upper bound the previous integrand as follows:
\begin{align*}
    &(-\log) \left(1+ \eps\frac{ \tr((B_{i_t}^{t,\dagger}A_t  \proj{w} A_t^\dagger B_{i_t}^{t} U)) }{ \tr(B_{i_t}^{t,\dagger}A_tA_t^\dagger B_{i_t}^{t})}\right)
    \\&\le \eps\frac{ \tr((B_{i_t}^{t,\dagger}A_t  \proj{w} A_t^\dagger B_{i_t}^{t} U)) }{ \tr(B_{i_t}^{t,\dagger}A_tA_t^\dagger B_{i_t}^{t})} +\eps^2\frac{ \tr((B_{i_t}^{t,\dagger}A_t  \proj{w} A_t^\dagger B_{i_t}^{t} U))^2 }{ \tr(B_{i_t}^{t,\dagger}A_tA_t^\dagger B_{i_t}^{t})^2}. 
\end{align*}
Observe that the first term vanishes under the expectation:
\begin{align*}
 & \mathds{E}_{\le N} \left(\eps\frac{ \tr((B_{i_t}^{t,\dagger}A_t  \proj{w} A_t^\dagger B_{i_t}^{t} U)) }{ \tr(B_{i_t}^{t,\dagger}A_tA_t^\dagger B_{i_t}^{t})}\right)
  \\&=   \mathds{E}_{\le t-1}\sum_{i_t} \frac{\lambda_{i_t}^t}{D\dout} \tr(B_{i_t}^{t,\dagger}A_tA_t^\dagger B_{i_t}^{t})\cdot \eps\frac{ \tr((B_{i_t}^{t,\dagger}A_t  \proj{w} A_t^\dagger B_{i_t}^{t} U)) }{ \tr(B_{i_t}^{t,\dagger}A_tA_t^\dagger B_{i_t}^{t})}
  \\&=   \mathds{E}_{\le t-1}\sum_{i_t} \frac{\lambda_{i_t}^t}{D\dout} \cdot\eps\tr((B_{i_t}^{t,\dagger}A_t  \proj{w} A_t^\dagger B_{i_t}^{t} U)) 
  \\&= \mathds{E}_{\le t-1} \frac{\eps}{D} \tr(A_t  \proj{w} A_t^\dagger)\tr(U)=0
\end{align*}
where we used $  \sum_{i_t} \lambda_{i_t}^t B_{i_t}^{t,\dagger}XB_{i_t}^{t} = \dout\tr(X)\dI.$
For the second term, we will instead upper bound its expectation under $U$. Observe that this term is nonnegative, so we can safely remove the condition on the event $\cG$ and then we compute the expectation under $\Haar$  distributed vector $w$ and Gaussians $\{U_{i,j}\}$.
\begin{align*}
    \ex{\eps^2 \frac{ \tr((B_{i_t}^{t,\dagger}A_t  \proj{w} A_t^\dagger B_{i_t}^{t} U))^2 }{ \tr(B_{i_t}^{t,\dagger}A_tA_t^\dagger B_{i_t}^{t})^2}}
    &\le \frac{32}{\dout}\eps^2\ex{ \frac{\bra{w}A_t^{ \dagger }B_{i_t}^t B_{i_t}^{t, \dagger}  A_t^{ }\ket{w}^2}{\tr(B_{i_t}^{t,\dagger}A_tA_t^\dagger B_{i_t}^{t})^2} }
    \\&\le \frac{64\eps^2}{\dout \din^2} \cdot \frac{\tr(A_t^{ \dagger }B_{i_t}^{t} B_{i_t}^{t, \dagger}  A_t^{ })^2}{\tr(B_{i_t}^{t,\dagger}A_tA_t^\dagger B_{i_t}^{t})^2}
    = \frac{64\eps^2}{\din^2\dout}. 
\end{align*}
Therefore 
\begin{align*}
    \mathds{E}_{(w,U)\sim\cP}\KL\left(\mathds{P}_{H_0}^{I_1,\dots, I_N}\Big\| \mathds{P}_{H_1, (w,U)}^{I_1,\dots, I_N}\right)&\le  \sum_{t=1}^N\mathds{E}_{\le N} \mathds{E}_{(w,U)\sim\cP}(-\log) \left(1+\eps\frac{ \tr((B_{i_t}^{t,\dagger}A_t  \proj{w} A_t^\dagger B_{i_t}^{t} U)) }{ \tr(B_{i_t}^{t,\dagger}A_tA_t^\dagger B_{i_t}^{t})} \right)
    \\&\le \sum_{t=1}^N\mathds{E}_{\le N} \frac{64\eps^2}{\din^2\dout}= \frac{64N\eps^2}{\din^2\dout}. 
\end{align*}
On the other hand, the Data-Processing inequality applied on the $\KL$ divergence (see Prop.~\ref{KL-sec}) writes:
 \begin{align*}
   \mathds{E}_{(w,U)\sim\cP}  \KL\left(\mathds{P}_{H_0}^{I_1,\dots, I_N}\Big\| \mathds{P}_{H_1, (w,U)}^{I_1,\dots, I_N}\right)
     &\ge \mathds{E}_{(w,U)\sim\cP}\KL(\mathds{P}_{H_0}\left(\cA=0\right)\|\mathds{P}_{H_1, (w,U)}\left(\cA=0\right))
     \\&\ge \KL(2/3\|1/3)=\frac{2}{3}\log(2)-\frac{1}{3}\log(2)=\frac{1}{3}\log(2).
 \end{align*}
 Grouping the lower and upper bounds on the $\KL$ divergence:
 \begin{align*}
  \frac{128N\eps^2}{\din^2\dout}  \ge \mathds{E}_{(w,U)\sim\cP} \KL\left(\mathds{P}_{H_0}^{I_1,\dots, I_N}\Big\| \mathds{P}_{H_1, (w,U)}^{I_1,\dots, I_N}\right)\ge \frac{1}{3}\log(2)
 \end{align*}
which yields the lower bound:
\begin{align*}
    N=\Omega\left(\frac{\din^2\dout}{\eps^2}\right). 
\end{align*}
\end{proof}

\section{Testing identity of quantum states}\label{app: testing states}
In this section, we show how to reduce testing quantum states to testing discrete distributions. The same result with another proof can be found in \citep{chen2022toward}. 
For a POVM $\cM$ and a quantum state $\rho$, let  $\rho(\cM)$ denotes the classical probability distribution $\{\tr(\rho M_i)\}_{i}$.
The following lemma captures the main ingredient of the reduction:
\begin{lemma}\label{TV}
For all $\delta>0$, let $l=\frac{1}{4}\log(2/\delta)$ and  $U^1,U^2,\dots,U^l\in \mathbb{C}^{d\times d} $ be $\Haar$ -random unitary matrices of columns $\left\{\ket{U_i^j}\right\}_{1\le i\le d, 1\le j \le l} $, $\cM=\left\{\frac{1}{l}\proj{U_i^j}\right\}_{i,j}$   is a POVM 
and for all quantum states $\rho$ and $\sigma$ we have with a probability at least $1-\delta$:
\begin{align*}
    \TV(\rho(\cM),\sigma(\cM)) \ge\frac{\|\rho-\dI/d\|_2}{20}.
\end{align*}
\end{lemma}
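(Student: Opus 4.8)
The plan is to reduce the statement to a one–dimensional moment computation together with an exponential concentration argument across the $l$ independent bases. Write $\Delta = \rho-\sigma$; it suffices to prove $\TV(\rho(\cM),\sigma(\cM))\ge \|\Delta\|_2/20$ with probability $1-\delta$ (the stated bound is the special case $\sigma=\dI/d$), and the only properties of $\Delta$ I use are that it is Hermitian with $\tr(\Delta)=0$. First I would check that $\cM$ is a POVM: since the columns of each $U^j$ form an orthonormal basis, $\sum_i \proj{U_i^j}=\dI$, so $\sum_{i,j}\frac1l\proj{U_i^j}=\frac1l\sum_{j=1}^l\dI=\dI$. Since $\rho(\cM)$ assigns mass $\frac1l\bra{U_i^j}\rho\ket{U_i^j}$ to outcome $(i,j)$, the total variation distance is
\[
\TV(\rho(\cM),\sigma(\cM)) = \frac{1}{2l}\sum_{j=1}^l\sum_{i=1}^d \big|\bra{U_i^j}\Delta\ket{U_i^j}\big| = \frac1l\sum_{j=1}^l Y_j, \qquad Y_j := \frac12\sum_{i=1}^d \big|\bra{U_i^j}\Delta\ket{U_i^j}\big|.
\]
The $Y_j$ are i.i.d. nonnegative random variables (one per independent unitary $U^j$), so the goal becomes a lower-tail bound for an average of i.i.d. nonnegative variables.

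Second, I would lower bound $\mu:=\ex{Y_1}$ by the moment method. Each column of a Haar unitary is a Haar-random unit vector, so $\ex{Y_1}=\frac{d}{2}\,\mathds{E}_{\ket{\phi}\sim\Haar}\big[|\bra{\phi}\Delta\ket{\phi}|\big]$. Writing $X=\bra{\phi}\Delta\ket{\phi}$ (real, since $\Delta$ is Hermitian), Weingarten calculus gives $\ex{X^2}=\frac{\|\Delta\|_2^2}{d(d+1)}$ (using $\tr(\Delta)=0$) and $\ex{X^4}=\frac{3\|\Delta\|_2^4+6\tr(\Delta^4)}{d(d+1)(d+2)(d+3)}$, where only the derangements of $\fS_4$ survive because $\tr(\Delta)=0$; bounding $\tr(\Delta^4)=\|\Delta\|_4^4\le\|\Delta\|_2^4$ yields $\ex{X^4}\le \frac{9\|\Delta\|_2^4}{d(d+1)(d+2)(d+3)}$. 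The reverse Hölder inequality $\ex{|X|}\ge \ex{X^2}^{3/2}/\ex{X^4}^{1/2}$ then gives $\ex{|X|}\ge \frac{\|\Delta\|_2}{3d}$, hence $\mu\ge \|\Delta\|_2/6$ and $\ex{\TV}\ge \|\Delta\|_2/6$.

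Third, I would upgrade this in-expectation bound to a high-probability bound using the independence across the $l$ bases. A Cauchy--Schwarz estimate on each cross term, $\mathds{E}\big[|\bra{U_i}\Delta\ket{U_i}|\,|\bra{U_{i'}}\Delta\ket{U_{i'}}|\big]\le \ex{X^2}=\frac{\|\Delta\|_2^2}{d(d+1)}$, summed over the $d^2$ pairs gives $\ex{Y_1^2}\le \frac{\|\Delta\|_2^2}{4}\le 9\mu^2$. With the elementary inequality $e^{-x}\le 1-x+x^2/2$ valid for $x\ge 0$, this controls the exponential moment $\ex{e^{-sY_1}}\le 1-s\mu+\tfrac{9}{2}s^2\mu^2$, which is $\le e^{-\Omega(1)}$ at the scale $s\asymp 1/\mu$. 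The Chernoff lower tail $\pr{\frac1l\sum_j Y_j\le a}\le e^{sla}\,\ex{e^{-sY_1}}^l$ then decays like $e^{-\Omega(l)}$ for $a$ a small enough constant multiple of $\|\Delta\|_2$, so taking $l=\frac14\log(2/\delta)$ makes the failure probability at most $\delta$.

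The main obstacle is the third step: making the constants come out exactly as stated ($\tfrac{1}{20}$ together with $l=\frac14\log(2/\delta)$). The crude Cauchy--Schwarz second moment only shows $\ex{Y_1^2}=O(\|\Delta\|_2^2)$, i.e. fluctuations of the same order as the mean, which suffices for a bound of the form $\TV\ge c\|\Delta\|_2$ with $l=O(\log(1/\delta))$ but with a worse constant (the best per-block Chernoff rate achievable from $\ex{Y_1^2}\le 9\mu^2$ is only of order $1/16$, forcing $l\gtrsim 16\log(1/\delta)$). To reach the sharp constant one must instead bound $\var(Y_1)$ by exploiting the weak correlation between distinct columns of a single Haar unitary, so that the off-diagonal covariances are of lower order and $\ex{Y_1^2}$ is much closer to $\mu^2$ than Cauchy--Schwarz suggests; this refined variance estimate, which requires a more careful Weingarten analysis of the pair correlations $\mathrm{Cov}\big(|\bra{U_i}\Delta\ket{U_i}|,|\bra{U_{i'}}\Delta\ket{U_{i'}}|\big)$, is what feeds the exact Chernoff constant and hence the stated $\tfrac1{20}$.
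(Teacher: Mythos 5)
Your proof is correct in substance, and the first two steps (checking the POVM condition, the Weingarten computation of $\ex{X^2}$ and $\ex{X^4}$ for a traceless Hermitian $\Delta$, and the reverse H\"older bound $\ex{|X|}\ge \ex{X^2}^{3/2}/\ex{X^4}^{1/2}$ yielding $\ex{\TV}\ge c\|\Delta\|_2$) coincide with the paper's. Where you genuinely diverge is the concentration step. The paper does \emph{not} decompose into i.i.d.\ blocks and apply a scalar Chernoff bound; instead it views $f(U^1,\dots,U^l)=\TV(\rho(\cM),\sigma(\cM))$ as a single function on $\bU(d)^l$, shows it is $L$-Lipschitz in the Hilbert--Schmidt metric with $L^2=\Theta\bigl(\tfrac{d}{l}\tr(\xi^2)\bigr)$, and invokes the concentration inequality for Lipschitz functions of Haar unitaries (Thm.~\ref{thm:concentration}), getting a deviation probability $e^{-dt^2/12L^2}=e^{-\Theta(l)}$. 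Your i.i.d.\ Chernoff route, with $\ex{Y_1^2}\le 9\mu^2$ from Cauchy--Schwarz feeding $\ex{e^{-sY_1}}\le 1-s\mu+\tfrac92 s^2\mu^2$, is a valid and arguably more elementary alternative: it avoids the unitary concentration-of-measure machinery entirely and in fact yields a better per-block exponent than the paper's (whose rate is $c^2/24$ with $c\approx 1/(4\sqrt3)$).

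On the "main obstacle" you identify: the refined covariance analysis you propose is not actually needed to match the paper, because the paper's own proof does not achieve the constant $l=\tfrac14\log(2/\delta)$ either --- its final line reads "for $l=24\log(2/\delta)/c^2$," which is $O(\log(1/\delta))$ with a much larger constant than $\tfrac14$. The lemma statement's specific constants ($\tfrac14$ and $\tfrac1{20}$) are not consistent with the constants tracked in the paper's argument; both your proof and the paper's prove the statement with $l=C\log(2/\delta)$ for some absolute constant $C$ and a correspondingly adjusted constant in place of $\tfrac1{20}$, which is all that is used downstream (Alg.~\ref{alg-states} only needs $l=\Theta(\log(1/\delta))$). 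So you should not regard the failure to hit $\tfrac14$ exactly as a gap in your method relative to the paper's.
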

A similar statement can be found in \citep{matthews2009distinguishability} where the authors 
analyze the uniform POVM and a POVM defined by a spherical $4$-designs. However, for our reduction, it is important to minimize the number of outcomes of the POVM.
\begin{proof}
Let $\xi=\rho -\sigma$, we have $U\ket{e_i}=\ket{U_i}$ and we use  Weingarten Calculus \ref{lem:Wg} and \ref{lem:wg2} to calculate 
\begin{align*}
\mathds{E}\left[\bra{U_i}\xi\ket{U_i}^2\right]&=\mathds{E}\left[\bra{U_i}\xi\ket{U_i}\bra{U_i}\xi\ket{U_i}\right]
=\mathds{E}\left[\tr(\xi\proj{U_i}\xi\proj{U_i})\right]
\\&=\mathds{E}\left[\tr(\xi U\proj{e_i}U^*\xi U\proj{e_i}U^*)\right]
=\mathds{E}\left[\tr(U^*\xi U\proj{e_i}U^*\xi U\proj{e_i})\right]
\\&=\sum_{\alpha,\beta \in \cS_2}\W(\beta\alpha^{-1},d)\tr_{\beta^{-1}}(\xi,\xi)\tr_{\alpha}(\proj{e_i},\proj{e_i})
=\frac{1}{d(d+1)}\tr(\xi^2).
\end{align*}
Similarly 
\begin{align*}
\mathds{E}\left[\bra{U_i}\xi\ket{U_i}^4\right]&=\mathds{E}\left[\bra{U_i}\xi\ket{U_i}\bra{U_i}\xi\ket{U_i}\bra{U_i}\xi\ket{U_i}\bra{U_i}\xi\ket{U_i}\right]
\\&=\mathds{E}\left[\tr(\xi\proj{U_i}\xi\proj{U_i}\xi\proj{U_i}\xi\proj{U_i})\right]
\\&=\mathds{E}\left[\tr(\xi U\proj{e_i}U^*\xi U\proj{e_i}U^*\xi U\proj{e_i}U^*\xi U\proj{e_i}U^*)\right]
\\&=\mathds{E}\left[\tr(U^*\xi U\proj{e_i}U^*\xi U\proj{e_i}U^*\xi U\proj{e_i}U^*\xi U\proj{e_i})\right]
\\&=\sum_{\alpha,\beta \in \fS_4}\W(\beta\alpha^{-1},d)\tr_{\beta^{-1}}(\xi,\xi,\xi,\xi)\tr_{\alpha}(\proj{e_i},\proj{e_i},\proj{e_i},\proj{e_i})
\\&=\frac{1}{d(d+1)(d+2)(d+3)}(6\tr(\xi^2)^2+6\tr(\xi^4)) .
\\&\le\frac{c'}{d(d+1)(d+2)(d+3)}\tr(\xi^2)^2 .
\end{align*}
We can now conclude by Hölder's inequality:
\begin{align*}
2\mathds{E}\left[ \TV(\rho(\cM),\sigma(\cM)) \right]&=\sum_{i=1}^d \mathds{E}\left[ |\bra{U_i}\xi\ket{U_i}| \right]
\ge \sum_{i=1}^d  \sqrt{\frac{\left(\mathds{E}\left[\bra{U_i}\xi\ket{U_i}^2\right]\right)^3}{\mathds{E}\left[\bra{U_i}\xi\ket{U_i}^4\right]}}
\\&\ge \sum_{i=1}^d  \sqrt{\frac{(d^{-1}(d+1)^{-1}\tr(\xi^2))^3}{c' d^{-1}(d+1)^{-1}(d+2)^{-1}(d+3)^{-1}\tr(\xi^2)^2}}
\\&\ge \sum_{i=1}^d c \frac{\sqrt{\tr{(\xi ^2)}}}{d}
\ge  c \sqrt{\tr{(\rho -\sigma)^2}}.
\end{align*}

Let $f(U)=\TV(\rho(\cM),\sigma(\cM))$, we first show that $f$ is Lipschitz by using the triangle and Cauchy Schwarz inequalities:
\begin{align*}
2|f(U)-f(V)|&=\left| \sum_{1\le i\le d, 1\le j\le l} \frac{1}{l}|\tr(\proj{U_i^j}\xi)|-|\tr(\proj{V_i^j}\xi)| \right|
\\&\le  \sum_{1\le i\le d, 1\le j\le l} \frac{1}{l}\left|\tr((\proj{U_i^j}-\proj{V_i^j})\xi)\right|
\\&\le  \sum_{1\le i\le d, 1\le j\le l} \frac{1}{l}\sqrt{\tr(\xi^2)}\sqrt{\tr((\proj{U_i^j}-\proj{V_i^j})^2)} 
\\&\le  \sqrt{\frac{d}{l}}\sqrt{\tr(\xi^2)}\sqrt{\sum_{1\le i\le d, 1\le j\le l} \tr((\proj{U_i^j}-\proj{V_i^j})^2)} 
\\&\le  \sqrt{\frac{d}{l}}\sqrt{\tr(\xi^2)}\sqrt{\sum_{ 1\le j\le l} \tr((U^j-V^j)^2)} 
\\&\le  \sqrt{\frac{d}{l}}\sqrt{\tr(\xi^2)}\|U-V\|_{2,\text{HS}},
\end{align*}
hence $f$ is $L=\sqrt{\frac{d}{2l}}\sqrt{\tr(\xi^2)}$-Lipschitz, therefore by Thm.~\ref{thm:concentration}: 
\begin{align*}
\pr{\left|f(U)-\ex{f(U)}\right|>\frac{c}{2}\sqrt{\tr{(\xi^2)}}} \le e^{-\frac{dc^2\tr(\xi^2)}{48L^2}}=e^{-lc^2/24}=\delta/2,
\end{align*}
for $l=24\log(2/\delta)/c^2$. Finally with high probability (at least $ 1-\delta/2$) we have 
\begin{align*}
\TV(\rho(\cM),\sigma(\cM))&\ge \ex{\TV(\rho(\cM),\sigma(\cM))}-|\TV(\rho(\cM),\sigma(\cM))-\ex{\TV(\rho(\cM),\sigma(\cM))}|
\\&\ge c\sqrt{\tr(\xi^2)}-\frac{c}{2}\sqrt{\tr(\xi^2)}
\ge \frac{c}{2}\sqrt{\tr(\xi^2)}.
\end{align*}

\end{proof}
Let $\eta = \|\rho-\dI/d\|_2$.  Under the alternative hypothesis $H_2$, the $\TV$ distance between $P$ and $U_n$ can be lower bounded by $\TV(P,U_n)\ge \frac{1}{20}\|\rho-\dI/d\|_2$. So Lem.~\ref{TV} gives a POVM for which our problem reduces to testing identity: $P=U_n$ vs $\TV(P,U_n)\ge \frac{\eta}{20}$  with high probability, where $n=\frac{1}{4}d\log(2/\delta)$ and $P=\cM(\rho)$.
Therefore we can apply the classical testing uniform result of \citep{diakonikolas2017optimal} to obtain a copy complexity: 
\begin{align*}
\mathcal{O}\left( \frac{\sqrt{d}\log(1/\delta)}{\eta^2}\right).
\end{align*}

\begin{algorithm}
\caption{Testing identity of quantum states  in the $2$-norm with an error probability at most $\delta$.}\label{alg-states}
\begin{algorithmic}[t!]
\STATE $l=\frac{1}{4}\log(2/\delta)$.
\STATE Sample   $U^1,U^2,\dots,U^l\in \mathbb{C}^{d\times d} $  from  $\Haar(d)$ distribution. 
\STATE Let  $\left\{\ket{U_i^j}\right\}_{1\le i\le d, 1\le j \le l} $ be the columns of the unitary matrices $U^1,U^2,\dots,U^l$
\STATE Measure the quantum state $\rho$ using the POVM $\cM=\left\{\frac{1}{l}\proj{U_i^j}\right\}_{i,j}$ and observe $\mathcal{O}\left( \sqrt{d}\log(1/\delta)/\eta^2\right)$
samples from $\rho(\cM)$. 
 \STATE Test whether $h_0:\rho(\cM)= U_{ld}$ or $h_1:\TV(\rho(\cM), U_{ld})\ge \eta/20$ using the testing identity of discrete distributions of \citep{diakonikolas2017optimal}, with an error probability $\delta$, and answer accordingly.
\end{algorithmic}
\end{algorithm}

\section{Technical lemmas}\label{sec:technical}
In this section, we group technical lemmas useful for the previous proofs of this article.
\subsection{Weingarten Calculus}
Since we use generally a uniform POVM, which consists in sampling a $\Haar$ -unitary matrix, we need some facts from Weingarten calculus in order to compute $\Haar$ -unitary integrals.  If $\pi$ a permutation of $[n]$, let $\W(\pi,d)$ denotes the Weingarten function of dimension $d$. The following lemma is crucial for our results. 
\begin{lemma}\label{lem:Wg}\citep{gu2013moments} Let $U$ be a $d\times d$ $\Haar$ -distributed unitary matrix and $\{A_i,B_i\}_i$ a sequence of $d\times d$ complex matrices. We have the following formula 
\begin{align*}
&\ex{\tr(UB_1U^*A_1U\dots UB_nU^*A_n)}
\\&=\sum_{\alpha,\beta \in \fS_n}\W(\beta\alpha^{-1},d)\tr_{\beta^{-1}}(B_1,\dots,B_n)\tr_{\alpha\gamma_n}(A_1,\dots,A_n),
\end{align*}where $\gamma_n=(12\dots n)$ and $\tr_{\sigma}(M_1,\dots,M_n)=\Pi_j \tr(\Pi_{i\in C_j} M_i)$ for $\sigma=\Pi_j C_j $ and $C_j$ are cycles. 

\end{lemma}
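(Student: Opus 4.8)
The plan is to reduce the statement to the fundamental (entrywise) Weingarten formula for the mixed moments of the entries of a Haar unitary, and then carry out the index contractions by hand. The only nontrivial input I would invoke is the defining identity of the Weingarten function (see \citep{collins2006integration}): for a $d\times d$ Haar unitary $U$,
\[
\ex{\prod_{k=1}^n U_{a_k b_k}\overline{U}_{c_k d_k}}=\sum_{\sigma,\tau\in\fS_n}\W(\tau\sigma^{-1},d)\prod_{k=1}^n \delta_{a_k,c_{\sigma(k)}}\,\delta_{b_k,d_{\tau(k)}}.
\]
Everything else is bookkeeping.

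First I would expand the left-hand side in coordinates. Writing each block $UB_kU^*A_k$ entrywise and using the cyclic trace convention $p_{n+1}=p_1$, the quantity $\tr(UB_1U^*A_1\cdots UB_nU^*A_n)$ becomes
\[
\sum_{p,i,j,l}\ \prod_{k=1}^n U_{p_k i_k}\,(B_k)_{i_k j_k}\,\overline{U}_{l_k j_k}\,(A_k)_{l_k p_{k+1}},
\]
in which $U$ occurs exactly $n$ times (rows $p_k$, columns $i_k$) and $\overline{U}$ exactly $n$ times (rows $l_k$, columns $j_k$). Applying the entrywise formula to these $2n$ unitary factors pulls out the weight $\W(\tau\sigma^{-1},d)$ and imposes the constraints $p_k=l_{\sigma(k)}$ and $i_k=j_{\tau(k)}$, decoupling the remaining sum into a pure $B$-contraction and a pure $A$-contraction.

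The heart of the argument, and the step I expect to be most delicate, is matching these two contractions to the cycle structure of the correct permutations. For the $B$-factors, substituting $i_k=j_{\tau(k)}$ gives $\prod_k (B_k)_{j_{\tau(k)},\,j_k}$; summing over each free index $j_m$ glues the column of $B_k$ to the row of $B_{\tau^{-1}(k)}$, so the product factorizes along the cycles of $\tau^{-1}$ and equals $\tr_{\tau^{-1}}(B_1,\dots,B_n)$. For the $A$-factors, substituting $l_m=p_{\sigma^{-1}(m)}$ gives $\prod_k (A_k)_{p_{\sigma^{-1}(k)},\,p_{k+1}}$; summing over each $p_m$ now glues the column of $A_{m-1}$ to the row of $A_{\sigma(m)}$, so the successor of $A_j$ is $A_{\sigma(j+1)}=A_{\sigma\gamma_n(j)}$ and the product equals $\tr_{\sigma\gamma_n}(A_1,\dots,A_n)$. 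The extra long cycle $\gamma_n$ is precisely the imprint of the cyclic structure of the trace on the $A$-side, and getting its placement and orientation right (as opposed to $\gamma_n^{-1}$, or a factor on the wrong side) is where I would be most careful, since a single confusion of a row index with a column index propagates into a spurious inverse or conjugation in the final permutations.

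Finally I would relabel $\sigma\mapsto\alpha$, $\tau\mapsto\beta$ and note $\W(\tau\sigma^{-1},d)=\W(\beta\alpha^{-1},d)$, assembling the three factors into $\sum_{\alpha,\beta\in\fS_n}\W(\beta\alpha^{-1},d)\,\tr_{\beta^{-1}}(B_1,\dots,B_n)\,\tr_{\alpha\gamma_n}(A_1,\dots,A_n)$, which is the claimed formula.
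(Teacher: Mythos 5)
Your derivation is correct. Note, however, that the paper itself offers no proof of this lemma: it is stated as a black-box citation to \citep{gu2013moments}, so there is no "paper proof" to compare against; what you have written is essentially the standard derivation one finds in that reference, reducing the trace formula to the entrywise Collins--\'Sniady moment formula. I checked the index bookkeeping at the two places you flagged as delicate, and both come out right: on the $B$-side the constraint $i_k=j_{\tau(k)}$ glues the column index $j_m$ of $B_m$ to the row index of $B_{\tau^{-1}(m)}$, so the successor map is $\tau^{-1}$ and one gets $\tr_{\tau^{-1}}(B_1,\dots,B_n)$; on the $A$-side the constraint $l_m=p_{\sigma^{-1}(m)}$ glues the column of $A_{m-1}$ to the row of $A_{\sigma(m)}$, so $A_j$ is followed by $A_{\sigma(\gamma_n(j))}$ and one gets $\tr_{\sigma\gamma_n}(A_1,\dots,A_n)$, with $\gamma_n$ appearing on the right of $\alpha$ and with the correct orientation. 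As a sanity check, the $n=1$ and $n=2$ cases of your formula reproduce $\frac{1}{d}\tr(B_1)\tr(A_1)$ and the expression $\frac{1}{d(d+1)}\left(\tr(A_1A_2)+\tr(A_1)\tr(A_2)\right)$ used in Lemma~\ref{lem:fid-ent-avg}, which confirms the conventions. The one point worth making explicit in a final write-up is the composition convention ($\alpha\gamma_n$ meaning "apply $\gamma_n$ first"), since with the opposite convention the long cycle would land on the wrong side.
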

We need also some values of Weingarten function:
\begin{lemma}\label{lem:wg2}
\begin{itemize}
    \item $\W((1),d)=\frac{1}{d}$,
    \item $\W((12),d)=\frac{-1}{d(d^2-1)}$,
    \item $\W((1)(2),d)=\frac{1}{d^2-1}$,
    \item $\W((123),d)=\frac{2}{d(d^2-1)(d^2-4)}$,
    \item $\W((12)(3),d)=\frac{-1}{(d^2-1)(d^2-4)}$,
    \item $\W((1)(2)(3),d)=\frac{d^2-2}{d(d^2-1)(d^2-4)}$.
\end{itemize}
\end{lemma}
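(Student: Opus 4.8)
The plan is to obtain all six values from the character (Collins--\'Sniady) formula for the unitary Weingarten function, which reduces the computation to finite data about $\fS_1,\fS_2,\fS_3$. Recall that $\W(\cdot,d)$ is characterized as the coefficient function of the inverse, inside the group algebra $\dC[\fS_n]$, of the central element $\Sigma_n=\sum_{\sigma\in\fS_n} d^{c(\sigma)}\,\sigma$, where $c(\sigma)$ denotes the number of cycles of $\sigma$; this is the standard statement that the Gram matrix $G_{\sigma,\tau}=d^{c(\sigma^{-1}\tau)}$ of the permutation operators on $(\dC^d)^{\otimes n}$ is inverted by the Weingarten matrix for $d\ge n$. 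Since $\Sigma_n$ is central, I would diagonalize it against the minimal central idempotents $e_\lambda=\frac{f^\lambda}{n!}\sum_\sigma\chi^\lambda(\sigma)\,\sigma$ indexed by partitions $\lambda\vdash n$; using the Schur--Weyl identity $\sum_\sigma d^{c(\sigma)}\chi^\lambda(\sigma)=n!\,s_\lambda(1^d)$, its eigenvalue on the $\lambda$-block is $n!\,s_\lambda(1^d)/f^\lambda$, where $f^\lambda=\chi^\lambda(e)$ and $s_\lambda(1^d)$ is the dimension of the $GL(d)$-irrep. Inverting block by block gives
\[
\W(\sigma,d)=\frac{1}{(n!)^2}\sum_{\lambda\vdash n}\frac{(f^\lambda)^2\,\chi^\lambda(\sigma)}{s_\lambda(1^d)}.
\]

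First I would dispatch the easy cases. For $n=1$ there is one partition with $s_{(1)}(1^d)=d$, so $\W((1),d)=1/d$. For $n=2$ the partitions $(2),(1^2)$ have $f=1$ and $s_{(2)}(1^d)=\binom{d+1}{2}$, $s_{(1^2)}(1^d)=\binom{d}{2}$; feeding the trivial and sign characters into the formula (equivalently, directly inverting $G=\bigl(\begin{smallmatrix}d^2&d\\ d&d^2\end{smallmatrix}\bigr)$) yields $\W((1)(2),d)=1/(d^2-1)$ and $\W((12),d)=-1/(d(d^2-1))$.

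The substantive case is $n=3$, with partitions $(3),(2,1),(1^3)$ of dimensions $f=1,2,1$ and $GL(d)$-dimensions $s_{(3)}(1^d)=\binom{d+2}{3}$, $s_{(2,1)}(1^d)=\tfrac13(d-1)d(d+1)$, $s_{(1^3)}(1^d)=\binom{d}{3}$ (hook--content formula). Tabulating the $\fS_3$ characters on the three classes ($\chi^{(3)}\equiv1$; $\chi^{(2,1)}=(2,0,-1)$ on identity, transpositions, $3$-cycles; $\chi^{(1^3)}$ the sign), I would substitute into the displayed formula for $\sigma\in\{(1)(2)(3),(12)(3),(123)\}$ and clear denominators over the common $(d-2)(d-1)(d+1)(d+2)$. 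After the $1/(6d)$ prefactor the three-term numerators collapse to $6(d^2-2)$, $-6d$ and $12$, which reproduce exactly $\frac{d^2-2}{d(d^2-1)(d^2-4)}$, $\frac{-1}{(d^2-1)(d^2-4)}$ and $\frac{2}{d(d^2-1)(d^2-4)}$ after using $(d-1)(d+1)=d^2-1$ and $(d-2)(d+2)=d^2-4$.

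There is no conceptual difficulty; the only delicate point is the $n=3$ bookkeeping---recording each $s_\lambda(1^d)$ and the character table without error, and verifying that the $d$-dependent parts of the three-term partial fractions cancel so the numerators are constant or linear. As a safeguard I would check the defining relation $\sum_\tau d^{c(\sigma^{-1}\tau)}\W(\tau,d)=\delta_{\sigma,e}$ on one or two classes, which detects any sign or normalization slip. A fully elementary alternative, avoiding characters entirely, is to invert the $6\times6$ Gram matrix directly: since it is constant on conjugacy classes one reduces to a $3\times3$ system for the three unknown class-values, but the character route is cleaner and exposes why the method is uniform in $n$.
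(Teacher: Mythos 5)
Your proposal is correct. The paper states these values without proof, treating them as standard facts from the Weingarten calculus literature (the neighbouring lemmas cite \citet{gu2013moments} and \citet{collins2012integration}), so there is no in-paper argument to compare against; your derivation via the Collins--\'Sniady character formula $\W(\sigma,d)=\frac{1}{(n!)^2}\sum_{\lambda\vdash n}\frac{(f^\lambda)^2\chi^\lambda(\sigma)}{s_\lambda(1^d)}$ is the standard route, and I verified the $n=3$ bookkeeping: with $s_{(3)}=\binom{d+2}{3}$, $s_{(2,1)}=\frac{(d-1)d(d+1)}{3}$, $s_{(1^3)}=\binom{d}{3}$ and the $\fS_3$ character table, the numerators over $(d^2-1)(d^2-4)$ after the $1/(6d)$ prefactor do collapse to $6(d^2-2)$, $-6d$ and $12$ as you claim, reproducing all six listed values.
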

In particular it is  known that the sum of the Weingarten function has a closed expression:
\begin{lemma}\citep{collins2012integration}\label{lem:wg3}
Let $d,k\in \mathbb{N}^*$. We have 
\begin{align*}
    \sum_{\alpha\in \fS_k} \W(\alpha, d)= \frac{1}{d(d+1)\cdots(d+k-1)}.
\end{align*}
    
\end{lemma}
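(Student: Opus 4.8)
The plan is to feed a single, highly degenerate configuration into the general Weingarten formula of Lemma~\ref{lem:Wg} and read off the desired sum. Concretely, I would take $n=k$ and set every matrix equal to the same rank-one projector, $A_i=B_i=\proj{1}$ for all $i\in[k]$, where $\ket{1}$ is a fixed basis vector. The point of this choice is that for a rank-one projector the combinatorial traces trivialize: since $\proj{1}^m=\proj{1}$ and $\tr\proj{1}=1$, one has $\tr_{\sigma}(\proj{1},\dots,\proj{1})=1$ for every $\sigma\in\fS_k$ regardless of its cycle structure. Hence both the $\tr_{\beta^{-1}}$ and the $\tr_{\alpha\gamma_k}$ factors on the right-hand side of Lemma~\ref{lem:Wg} collapse to $1$ simultaneously.

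With this substitution the right-hand side becomes $\sum_{\alpha,\beta\in\fS_k}\W(\beta\alpha^{-1},d)$. For each fixed $\gamma\in\fS_k$ there are exactly $k!$ pairs $(\alpha,\beta)$ with $\beta\alpha^{-1}=\gamma$ (choose $\alpha$ freely and set $\beta=\gamma\alpha$), so the right-hand side equals $k!\sum_{\gamma\in\fS_k}\W(\gamma,d)$. For the left-hand side, writing $\ket{U_1}=U\ket{1}$ and $M=U\proj{1}U^{*}\proj{1}=\proj{U_1}\proj{1}$, the matrix product telescopes: $M=\langle U_1|1\rangle\,\ket{U_1}\bra{1}$ satisfies $M^{k}=|\langle U_1|1\rangle|^{2(k-1)}M$, whence $\tr(M^{k})=|\langle U_1|1\rangle|^{2k}=|U_{11}|^{2k}$. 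Therefore Lemma~\ref{lem:Wg} yields the scalar identity $\ex{|U_{11}|^{2k}}=k!\sum_{\gamma\in\fS_k}\W(\gamma,d)$.

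It then remains to evaluate $\ex{|U_{11}|^{2k}}$ by an argument \emph{independent} of the Weingarten function, to avoid circularity. I would use the symmetric subspace: $\ex{\proj{U_1}^{\otimes k}}$ is a density operator on $(\dC^{d})^{\otimes k}$ that commutes with $V^{\otimes k}$ for every unitary $V$, hence by Schur's lemma it is proportional to the projector $\Pi_{\mathrm{sym}}$ onto the symmetric subspace; normalizing the trace to $1$ gives $\ex{\proj{U_1}^{\otimes k}}=\Pi_{\mathrm{sym}}/\binom{d+k-1}{k}$. Sandwiching between $\bra{1}^{\otimes k}$ and $\ket{1}^{\otimes k}$ and using $\Pi_{\mathrm{sym}}\ket{1}^{\otimes k}=\ket{1}^{\otimes k}$ (the vector is already symmetric) gives $\ex{|U_{11}|^{2k}}=1/\binom{d+k-1}{k}=k!/\big(d(d+1)\cdots(d+k-1)\big)$. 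Equating with the previous display and cancelling $k!$ produces exactly $\sum_{\gamma\in\fS_k}\W(\gamma,d)=1/\big(d(d+1)\cdots(d+k-1)\big)$.

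The individual steps are all short; the only genuine subtlety is the bookkeeping that keeps the argument non-circular. The moment $\ex{|U_{11}|^{2k}}$ must be obtained without invoking Weingarten calculus itself — either through the symmetric-subspace/Schur's-lemma route above, or equivalently by recalling that $|U_{11}|^2$ follows a $\mathrm{Beta}(1,d-1)$ law (the first coordinate of a uniform unit vector in $\dC^d$), whose $k$-th moment is $\tfrac{\Gamma(1+k)\Gamma(d)}{\Gamma(1)\Gamma(d+k)}=k!/\big(d(d+1)\cdots(d+k-1)\big)$. Beyond that, the remaining care is purely in verifying the collapse $\tr_\sigma(\proj{1},\dots,\proj{1})=1$ and the count of pairs $(\alpha,\beta)$ mapping to a given $\gamma$.
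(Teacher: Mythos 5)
Your proof is correct. Note that the paper does not actually prove this lemma: it is stated as a known identity with a citation to the literature (Collins--Matsumoto), so there is no internal argument to compare against. What you supply is a genuine, self-contained derivation, and every step checks out: the collapse $\tr_\sigma(\ket{1}\bra{1},\dots,\ket{1}\bra{1})=1$ for every $\sigma$ is valid because each cycle contributes $\tr\bigl((\ket{1}\bra{1})^{m}\bigr)=1$; the count of $k!$ pairs $(\alpha,\beta)$ with $\beta\alpha^{-1}=\gamma$ is right; the telescoping $\tr(M^{k})=|U_{11}|^{2k}$ for $M=\langle U_1|1\rangle\,\ket{U_1}\bra{1}$ is right; and the independent evaluation $\ex{|U_{11}|^{2k}}=1/\binom{d+k-1}{k}=k!/\bigl(d(d+1)\cdots(d+k-1)\bigr)$ via the symmetric-subspace/Schur's-lemma argument (or equivalently the $\mathrm{Beta}(1,d-1)$ law of $|U_{11}|^2$) avoids the circularity you correctly flag as the only real danger. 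The trade-off relative to the paper's approach is clear: the citation is shorter, but your argument is arguably preferable in context because it derives the identity purely from Lemma~\ref{lem:Wg}, which the paper already assumes, plus one elementary representation-theoretic fact, making the technical appendix self-contained. One small presentational point: when invoking Schur's lemma you should say explicitly that $\ex{\ket{U_1}\bra{U_1}^{\otimes k}}$ is \emph{supported on} the symmetric subspace (since $\ket{U_1}^{\otimes k}$ is a symmetric vector) and that the restriction of $V^{\otimes k}$ to that subspace is irreducible; as written, "commutes with $V^{\otimes k}$" alone would only place the operator in the commutant of the full tensor representation, which is spanned by all permutation operators rather than by $\Pi_{\mathrm{sym}}$.
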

\subsection{Concentration inequalities for Haar-random unitary matrices}
The following concentration inequality is important for our results.
\begin{theorem}\label{thm:concentration}\citep{meckes2013spectral}
Let $M=\bU(d)^k$ endowed by the $L_2$-norm of Hilbert-Schmidt metric. If $F:M\rightarrow \mathbb{R}$ is $L$-Lipschitz, then for any $t>0$ 
\begin{align*}
\pr{|F(U_1,\dots,U_k)-\ex{F(U_1,\dots,U_k)}|\ge t} \le e^{-dt^2/12L^2},
\end{align*}
where $U_1,\dots,U_k$ are independent $\Haar$-distributed unitary matrices. 
\end{theorem}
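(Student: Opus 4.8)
The plan is to deduce this from the standard theory of concentration of measure on compact Lie groups, via the curvature $\to$ log-Sobolev $\to$ Herbst pipeline, proving the estimate first for a single factor and then tensorizing.

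First I would reduce to the case $k=1$. The product $\bU(d)^k$ carries the $\ell_2$-sum of the $k$ copies of the Hilbert--Schmidt metric, and a logarithmic Sobolev inequality (equivalently, the resulting subgaussian Laplace-transform bound obtained from it) tensorizes over products with \emph{the same} constant. Hence it suffices to prove that Haar measure on a single copy $\bU(d)$ satisfies $\pr{|F-\ex{F}|\ge t}\le e^{-dt^2/12}$ for every $1$-Lipschitz $F$; since tensorization does not degrade the constant, this exponent survives unchanged to $\bU(d)^k$, matching the $k$-independent rate in the statement.

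For the single copy I would equip $\bU(d)$ with the bi-invariant Riemannian structure induced by the inner product $\langle X,Y\rangle=\tr(X^\dagger Y)$ on the Lie algebra $\mathfrak{u}(d)$, noting that the geodesic distance dominates the extrinsic distance $\|U-V\|_{2}$, so that an extrinsically $1$-Lipschitz $F$ is $1$-Lipschitz for the geodesic metric. The core computation is the Ricci curvature: for a bi-invariant metric $\mathrm{Ric}(X,X)=-\tfrac14 B(X,X)$ with $B$ the Killing form, and on $\mathfrak{su}(d)$ one has $B(X,X)=2d\,\tr(X^2)$, giving $\mathrm{Ric}\ge \tfrac{d}{2}\,g$ on $SU(d)$. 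The Bakry--Émery $\Gamma_2$ criterion then yields a log-Sobolev inequality with constant $O(1/d)$, and Herbst's argument converts this into the claimed Gaussian tail, the numerical constant $\tfrac{1}{12}$ arising from tracking the metric normalization through the curvature-dimension estimate.

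The genuine obstacle is the abelian center: the direction $iI\in\mathfrak{u}(d)$ satisfies $\mathrm{ad}_{iI}=0$, so $\mathrm{Ric}$ vanishes there and the \emph{global} lower bound on $\bU(d)$ is only $0$, not $\Omega(d)$; a naive Bakry--Émery argument on $\bU(d)$ therefore fails. To handle this I would use the covering $SU(d)\times U(1)\to\bU(d)$, $(V,e^{i\theta})\mapsto e^{i\theta}V$, under which Haar on $\bU(d)$ lifts to the product of Haar on $SU(d)$ and the uniform measure on the circle, and a $1$-Lipschitz $F$ lifts with the same constant. The $SU(d)$ factor is controlled by the curvature bound above, while the central factor is controlled by the observation that $\det:\bU(d)\to U(1)$ is $\sqrt{d}$-Lipschitz with uniform image, so any dependence of $F$ on the central phase is $1/\sqrt{d}$-Lipschitz in $\arg\det$ and contributes only $O(1/\sqrt{d})$ fluctuation; crucially, the $\mathbb{Z}_d$ deck structure forces the effective central circle to have arc-length $2\pi/\sqrt{d}$ rather than $2\pi\sqrt{d}$, which is exactly what preserves the factor $d$ in the exponent. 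Combining the two factors by tensorization and optimizing yields the stated bound, the delicate point throughout being the bookkeeping of normalizations so that the final constant is $\tfrac{1}{12}$ rather than merely $\Theta(1)$.
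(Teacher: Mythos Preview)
The paper does not give a proof of this statement; it is simply quoted from \citep{meckes2013spectral} as a black-box tool, so there is no ``paper's own proof'' to compare against.

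Your outline is the standard route and is essentially correct: Bakry--\'Emery on $SU(d)$ to get a log-Sobolev constant of order $1/d$, handle the flat central direction separately, and then tensorize over the $k$ factors (which keeps the constant). You correctly identify the obstruction coming from the center $\{e^{i\theta}I\}$, where the Ricci tensor degenerates, and the key observation---that a $1$-Lipschitz $F$ on $\bU(d)$ is only $1/\sqrt{d}$-Lipschitz as a function of $\arg\det U$, so the central direction contributes at most $O(1/\sqrt{d})$ fluctuation and hence $O(1/d)$ to the variance proxy---is exactly the right one.

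One point in the write-up is imprecise. The sentence invoking ``the $\mathbb{Z}_d$ deck structure'' to get an ``effective central circle of arc-length $2\pi/\sqrt{d}$'' does not work as a direct tensorization argument: in your cover $(V,e^{i\theta})\mapsto e^{i\theta}V$ the lifted function is $\sqrt{d}$-Lipschitz in $\theta$ (standard arc-length on $U(1)$), so naive tensorization of LSI on $SU(d)\times U(1)$ gives constant $\max(O(1/d),O(1))=O(1)$, which is too weak, and the diagonal $\mathbb{Z}_d$-invariance of the lift does not by itself improve this. What actually saves you is that the \emph{averaged} function $G(\theta)=\mathds{E}_V[F(e^{i\theta}V)]$ is $(2\pi/d)$-periodic (this is where $\mathbb{Z}_d$ genuinely enters, since $e^{2\pi i/d}V$ is again Haar on $SU(d)$), hence has oscillation at most $(\pi/d)\cdot\sqrt{d}=O(1/\sqrt{d})$. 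Equivalently and more cleanly, switch to the bijective parametrization $U=e^{i\varphi/d}V$ with $\varphi=\arg\det U\in[0,2\pi)$ and $V\in SU(d)$: then $F$ is $1$-Lipschitz in $V$ and $1/\sqrt{d}$-Lipschitz in $\varphi$, and a two-step Herbst argument on $SU(d)\times S^1$ gives combined subgaussian variance proxy $O(1/d)\cdot 1^2+O(1)\cdot(1/\sqrt{d})^2=O(1/d)$. Either route yields your conclusion; when you write it up in full, make the role of $\mathbb{Z}_d$ (or the choice of parametrization) explicit so that the $d$ in the exponent is visibly preserved.
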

\subsection{Kullback-Leibler divergence}\label{KL-sec}
\begin{definition}[Kullback Leibler divergence]
The Kullback Leibler divergence is defined for two distributions $P$ and $Q$ on $[d]$ as 
\begin{align*}
\KL(P\|Q)=\sum_{i=1}^d P_i\log\left(\frac{P_i}{Q_i}\right)\;.
\end{align*}We denote by $\KL(p\|q)=\KL(\Ber(p)\|\Ber(q))$.

\end{definition}
Kullback-Leibler’s divergence is non-negative and satisfies the Data-Processing  property: 
\begin{proposition} [\citep{van2014renyi}]
 Let $P,P',Q$ and $Q'$ distributions on $[d]$, we have 
 \begin{itemize}
     \item \textbf{Non negativity} $\KL(P\|Q)\ge0$. 
     \item \textbf{Data processing} Let $X$ a random variable and $g$ a function. Define the random variable $Y=g(X)$,  we have 
\begin{align*}
\KL\left(P^X\|Q^X\right)\ge \KL\left(P^Y\|Q^Y\right).
\end{align*}
 \end{itemize}
 
\end{proposition}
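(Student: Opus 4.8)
Both assertions will follow from the single elementary inequality $\log t \le t-1$ (valid for all $t>0$, with equality iff $t=1$), supplemented by the log-sum inequality for the data-processing part. Throughout I adopt the standard conventions $0\log 0 = 0$ and $0\log(0/0)=0$, and I agree that $\KL(P\|Q)=+\infty$ whenever there is an index $i$ with $P_i>0$ but $Q_i=0$; in that degenerate case non-negativity is trivial and the data-processing bound holds vacuously, so I may restrict all sums to the indices where $P_i>0$ (and hence $Q_i>0$).

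First I would prove non-negativity. Writing
\begin{align*}
  \KL(P\|Q)=\sum_{i:\,P_i>0} P_i\log\frac{P_i}{Q_i}=-\sum_{i:\,P_i>0} P_i\log\frac{Q_i}{P_i},
\end{align*}
I apply $-\log t\ge 1-t$ with $t=Q_i/P_i$ to each term, which gives
\begin{align*}
  \KL(P\|Q)\ge \sum_{i:\,P_i>0} P_i\Big(1-\frac{Q_i}{P_i}\Big)=\sum_{i:\,P_i>0}(P_i-Q_i)\ge 1-1=0,
\end{align*}
since $\sum_i P_i=1$ and $\sum_{i:\,P_i>0}Q_i\le \sum_i Q_i=1$. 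This settles the first bullet.

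For the data-processing property, the key step is the log-sum inequality: for nonnegative reals $(a_j)_j$ and strictly positive $(b_j)_j$,
\begin{align*}
  \sum_j a_j\log\frac{a_j}{b_j}\;\ge\;\Big(\sum_j a_j\Big)\log\frac{\sum_j a_j}{\sum_j b_j},
\end{align*}
which I would derive by applying Jensen's inequality to the convex function $f(t)=t\log t$ with the probability weights $b_j/\sum_k b_k$ evaluated at the points $a_j/b_j$. Granting this, I note that $Y=g(X)$ has distribution $P^Y_y=\sum_{x\in g^{-1}(y)}P^X_x$ and $Q^Y_y=\sum_{x\in g^{-1}(y)}Q^X_x$, so the fibers $g^{-1}(y)$ partition the index set. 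Applying the log-sum inequality on each fiber with $a_x=P^X_x$ and $b_x=Q^X_x$ yields
\begin{align*}
  \sum_{x\in g^{-1}(y)}P^X_x\log\frac{P^X_x}{Q^X_x}\;\ge\;P^Y_y\log\frac{P^Y_y}{Q^Y_y},
\end{align*}
and summing this over all values $y$ recombines the left-hand side into $\KL(P^X\|Q^X)$ and the right-hand side into $\KL(P^Y\|Q^Y)$, giving the claimed inequality.

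The only genuinely delicate point—hence the step I would treat most carefully—is the bookkeeping of degenerate terms: fibers on which some $Q^X_x=0$, or on which $\sum_{x}Q^X_x=0$ while $\sum_x P^X_x>0$. These are handled by the convention above (any such configuration forces $\KL(P^X\|Q^X)=+\infty$, so the bound is immediate) and by restricting the log-sum inequality to the indices with positive weight, exactly as in the non-negativity argument. All the remaining manipulations are routine consequences of the convexity of $t\mapsto t\log t$ and Jensen's inequality.
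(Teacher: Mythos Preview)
Your proof is correct and entirely standard: non-negativity via $-\log t\ge 1-t$, and data processing via the log-sum inequality applied fiberwise over the preimages $g^{-1}(y)$, with the degenerate cases handled by the usual conventions.

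There is nothing to compare against, however: the paper does not prove this proposition at all. It is stated as a background fact with a citation to \citep{van2014renyi} and used as a black box in the lower-bound arguments (e.g., in the proofs of Theorems~\ref{thm:LB-test-id-ind} and~\ref{thm:LB_DEP_adap}). Your write-up therefore supplies what the paper deliberately omits.
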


\end{document}